%% file: 0.main.tex
\def\llncs{0}
\def\fullpage{1}
\def\anonymous{0}
\def\draft{0}
\def\submission{0}

\ifnum\submission=1
\def\llncs{1}
\def\draft{0}
\def\anonymous{1}
\def\fullpage{0}
\fi

\ifnum\llncs=1
    \documentclass{llncs}
    \ifnum\fullpage=1
    \usepackage{fullpage}
    \fi
\else
    \documentclass[letterpaper]{article}
    \ifnum\fullpage=1
    \usepackage{fullpage}
    \fi
    \usepackage{microtype}
\fi

\usepackage{authblk}
\usepackage{xcolor} %
\usepackage{amsmath}
\usepackage{amssymb}
\usepackage{mathrsfs}

 \usepackage{amsthm}
\usepackage{enumitem} %
\usepackage[colorlinks=true,linkcolor=magenta,citecolor=blue,pagebackref=true,hypertexnames=false,pdftex,pdfpagelabels,bookmarks,hyperindex,hyperfigures]{hyperref}
\IfFileExists{breakcites.sty}{\usepackage{breakcites}}{}
\let\subparagraph\paragraph
\usepackage{titlesec}
\titleformat*{\paragraph}{\normalsize\bfseries}

\usepackage[capitalise,nameinlink]{cleveref}

\input{macro}

\crefname{algorithm}{Algorithm}{Algorithms}
\Crefname{algorithm}{Algorithm}{Algorithms}
\newenvironment{algorithmblock}[1]{%
  \refstepcounter{algorithm}%
  \par\addvspace{0.8em}%
  \noindent\begin{minipage}{\linewidth}%
  \hrule\smallskip
  \textbf{Algorithm \thealgorithm: #1}\par\smallskip
  \begin{enumerate}[
    leftmargin=1.62em,
    label=\arabic*.,
    itemsep=0.13em,
    topsep=0.15em
  ]
}{%
  \end{enumerate}\smallskip\hrule
  \end{minipage}\par\addvspace{0.8em}%
}

\usepackage{adjustbox}
\usepackage{placeins}
\usepackage{graphicx}
\usepackage{tikz}
\usetikzlibrary{positioning,arrows.meta,positioning,calc}
\title{Finding a Shortest Vector and More in $2^{n/2+o(n)}$ Time\\using $q$-ary Coset Difference Tree}
\hypersetup{
  pdftitle={Finding a Shortest Vector and More in 2^{n/2+o(n)} Time using q-ary Coset Difference Tree}
}
\ifnum\anonymous=1
  \hypersetup{pdfauthor={}}
\else
  \hypersetup{pdfauthor={Minki Hhan}}
\fi
\date{}

\ifnum\llncs=1
    \ifnum\anonymous=1
        \author{}
        \institute{}
    \else
        \author{
        Minki Hhan\inst{1} 
        }
        \institute{
        KAIST, Daejeon, Korea
        \\\email{minkihhan@kaist.ac.kr}
        }
    \fi
\else
    \author{Minki Hhan}
    \affil{{\small KAIST, Daejeon, Korea}
    \authorcr{\small minkihhan@kaist.ac.kr}}
    
\fi

\begin{document}

\maketitle

\begin{abstract}
This paper presents a new randomized algorithm for solving the exact shortest vector problem. For the $n$-dimensional lattice $\mathcal L$, our algorithm runs in time and space $2^{n/2+o(n)}$. 

Our algorithm can be viewed as a $q$-ary analogue of the midpoint Hessian for an odd prime $q$; more precisely, we use the fact that, for a shortest vector $v$, the gradient (rather than Hessian) of the periodic Gaussian function at $v/q$ is nearly proportional to $v$ (up to sign), even after aggregation over a relatively large random affine coset. We compute the relevant coset gradient along a chain of intermediate lattices using a combinatorial procedure inspired by Wagner's generalized birthday algorithm, yielding the $2^{n/2+o(n)}$ time and space complexity.

A variant of the algorithm solves the
exact closest vector problem on every input $(y,\mathcal L)$ with a distance guarantee $\operatorname{dist}(y,\mathcal L)\le 1.039\lambda_1(\mathcal L)$ within the same time and space complexity.
This guarantee holds for a random target and a random lattice drawn according to the Haar-Siegel measure. Thus, this algorithm solves a closest vector problem on such random instances in time and space $2^{n/2+o(n)}$.
\end{abstract}

\vfill
\paragraph{Author note \& AI use disclosure.}
The algorithm presented in this paper was suggested by ChatGPT using GPT-5.6 Sol in Ultra mode, building on a number of unsuccessful approaches previously explored by the author.
All of the technical analyses were carried out by ChatGPT using GPT-5.6 Sol and verified by the author.
The manuscript was written mostly by the author, based on the initial draft generated by AI.
The author takes full responsibility for the content;
any remaining errors are probably due to human error on the author's part.

The author is pleased that GPT-5.6 Sol's astonishing capabilities made it possible to obtain the present results through a combinatorial approach. This is particularly gratifying to him, since he had been exploring such approaches to lattice problems, including those based on generalized birthday and $k$-SUM algorithms, since reading \cite{AS17} as a graduate student.

\clearpage

\ifnum\submission=1
\else
    \newpage
    \setcounter{tocdepth}{2}
    \tableofcontents
    \newpage
\fi

\input{1.introduction}

\input{2.prel}
\input{3.derivatives}
\input{4.svp}
\input{5.cvp}

\bibliographystyle{plain}
\bibliography{ref,abbrev3,crypto}

\end{document}

%% file: macro.tex
\ifnum\draft=1
	\newcommand{\minki}[1]{\textcolor{blue}{$\langle\langle$Minki: #1$\rangle\rangle$}}
\else
	\newcommand{\minki}[1]{}
\fi

\ifnum\llncs=0
	\newtheorem{theorem}{Theorem}[section]
	\newtheorem{lemma}[theorem]{Lemma}
	\newtheorem{corollary}[theorem]{Corollary}
	
	\newtheorem{definition}[theorem]{Definition}

	\newtheorem{claim}[theorem]{Claim}

	\theoremstyle{remark}
	\newtheorem{remark}[theorem]{Remark}

\else
	\spnewtheorem{algorithm}{Algorithm}{\bfseries}{\rmfamily}

	\spnewtheorem{claim}{Claim}{\bfseries}{\itshape}
\fi
\crefname{appendix}{Appendix}{Appendices}
\Crefname{appendix}{Appendix}{Appendices}

\crefname{claim}{Claim}{Claims}
\Crefname{claim}{Claim}{Claims}

\newcommand{\N}{\mathbb{N}}

\renewcommand{\epsilon}{\varepsilon}

\newcommand{\R}{\mathbb{R}}
\renewcommand{\Re}{\mathsf{Re}}
\renewcommand{\Im}{\mathsf{Im}}
\newcommand{\Z}{\mathbb{Z}}
\newcommand{\F}{\mathbb{F}}
\newcommand{\E}{\mathbb{E}}
\newcommand{\Prob}{\mathbb{P}}

\newcommand{\cL}{\mathcal{L}}

\providecommand{\ip}[2]{}
\renewcommand{\ip}[2]{\left\langle #1,#2\right\rangle}

\newcommand{\dist}{\operatorname{dist}}

\newcommand{\polylog}{\operatorname{polylog}}

\newcommand{\ind}{\mathbf{1}}

\newcommand{\SVP}{\mathsf{SVP}}

\newcommand{\CVP}{\mathsf{CVP}}

\newcommand{\BDD}{\mathsf{BDD}}
\newcommand{\DGS}{\mathsf{DGS}}

\DeclareMathOperator{\GL}{GL}
\DeclareMathOperator{\SL}{SL}

\newcommand{\bit}{\{0,1\}}

\newcommand{\C}{\mathbb{C}}

\newcommand{\norm}[1]{\left\lVert #1 \right\rVert}

%% file: 1.introduction.tex
\section{Introduction}
\label{sec:introduction}
An $n$-dimensional full-rank lattice $\cL=\cL(B)$ is the set of all integer combinations of the columns of a matrix $B\in\R^{n\times n}$. The shortest vector problem, $\SVP$, asks to find a shortest nonzero vector of $\cL$, and the closest vector problem, $\CVP$, asks for a lattice vector closest to a given target. 
These are central problems in the geometry of numbers, computational complexity, and the concrete security of lattice-based cryptography.

The study of computational lattice problems is shaped by a discrepancy between efficient coarse approximation algorithms and hardness of exact or tight approximations. Polynomial-time lattice-reduction algorithms give useful exponential approximations to $\SVP$ and $\CVP$ and have led to applications in polynomial factorization, Diophantine approximation, integer programming, and cryptanalysis
\cite{LLL82,Len83,STOC:Kannan83,FOCS:DadPeiVem11,FOCS:Shamir82,LO85}.
In contrast, exact and sufficiently accurate versions of $\SVP$ and $\CVP$ exhibit strong NP-hardness \cite{STOC:Ajtai98,SICOMP:Mic01,JACM:Khot05,STOC:HavReg07},
and serve as security foundation of lattice-based cryptography.
The complexity to solve these lattice problems is therefore a central benchmark for lattice algorithms, resulting in the ceaseless improvements in worst-case lattice algorithms
\cite{STOC:Kannan83,C:HanSte07,STOC:AjtKumSiv01,EPRINT:PujSte09,STOC:MicVou10,STOC:ADRS15,STOC:ADS15,SICOMP:ACKS25,CCL18,AS17}. Very recently, $2^{0.5596n+o(n)}$, $2^{0.6039n+o(n)}$ and $2^{0.7314n+o(n)}$ time worst-case $\SVP$ algorithms were suggested \cite{Kim26,cryptoeprint:2026/1587,Hhan26,cryptoeprint:2026/1844}, breaking $2^n$ time barrier.

Our main results are improved randomized algorithms for these lattice problems.
\begin{theorem}[Informal]
\label{thm:intro-results}
There are randomized classical algorithms with success probability at least $2/3$ and worst-case time and space $2^{n/2+o(n)}$ for the following problems:
\begin{enumerate}[nosep]
\item $\SVP$ on arbitrary lattices;
\item $\CVP$ on targets $y$ satisfying $\dist(y,\cL)\le\alpha\lambda_1(\cL)$ for $\alpha<1.039\cdots=\frac1{2\sqrt{t_0}}$ where $t_0=0.23147\cdots.$
\end{enumerate}
\end{theorem}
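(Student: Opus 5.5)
The informal theorem follows by combining the two main components the abstract describes: an analytic statement about the gradient of the periodic Gaussian, and a Wagner-style combinatorial procedure that computes that gradient over a random affine coset. I would first reduce $\SVP$ to finding a shortest vector $v$ modulo $q\cL$ for an odd prime $q$. Using sparsification, I would pick a random index-$q$ superlattice chain
\[
\cL=\cL_0\subset\cL_1\subset\cdots\subset\cL_k,\qquad k=\Theta(\sqrt n),
\]
or a similar chain of intermediate lattices. The coset of $v/q$ is then encoded by a nonzero vector of $\F_q^n$, up to scaling.

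The analytic core will be a lemma proved in the derivatives section. Define
\[
f(x)=\sum_{w\in\cL}\rho_s(x-w).
\]
Then at $x=v/q$ the gradient $\nabla f(v/q)$ is proportional to $\pm v$ up to a small error. This holds even after averaging $f$ over a random affine coset $C+v/q$ of $q^{-1}\cL/\cL$ of size $2^{n/2+o(n)}$. To prove it, I would write
\[
\nabla f(x)=-\tfrac{2\pi}{s^2}\sum_{w}(x-w)\rho_s(x-w)
\]
and choose $s$ near $\lambda_1$ so that the dominant terms are $w\in\{0,v\}$. For odd $q$ these contribute asymmetric weights, giving a net multiple of $v$. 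The remaining terms I would bound by Banaszczyk tail bounds together with a counting argument on lattice points in balls of radius about $\sqrt{t}\lambda_1$. I expect the parameter $t_0=0.23147\cdots$ to come out of optimizing this tail-versus-signal tradeoff. Averaging over the random coset should preserve the signal, because $v/q$ plus coset elements only shifts by elements whose contributions are pairwise independent.

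For the algorithmic part, I would compute the coset-aggregated gradient along the chain. At each level, I would sample $2^{n/2+o(n)}$ discrete Gaussian vectors, using a known $2^{n/2}$ DGS-above-smoothing sampler or sums of such samples. I would then combine pairs whose residues match modulo the next lattice in the chain, Wagner-style, so that every level keeps list size $2^{n/2+o(n)}$. The final list estimates $\nabla f$ at all candidate points in the coset simultaneously. Reading off the direction gives candidate multiples of $v$, and I would take the shortest nonzero candidate. Repeating $O(1)$ times amplifies the success probability to $2/3$.

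The main obstacle is to show that the Wagner-style combination keeps the output distribution close enough to the right Gaussian that the gradient signal survives sampling noise with only $2^{n/2+o(n)}$ samples. My first attempt would be a second-moment (Chebyshev) bound on the empirical gradient estimator, relying on the smoothing parameter of each intermediate lattice to guarantee near-uniform residues.

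For $\CVP$, I would replace $v/q$ by $(y-c)/q$, where $c$ is the closest lattice vector. The same gradient statement holds provided $\dist(y,\cL)\le\frac{1}{2\sqrt{t_0}}\lambda_1$, since the tail terms are then dominated at the same threshold. Recovering $c$ one $q$-ary digit at a time, by an embedding or iterative refinement, then solves the instance.
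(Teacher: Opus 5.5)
\textbf{There is a genuine gap at the heart of your plan.} You never identify a quantity that can be computed within $2^{n/2+o(n)}$ time and still carries the signal.

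A plain coset average of $\nabla F_{1/s}$ does not do this. Estimated from dual samples, the sum over $\kappa+H$ forces a single sample to satisfy $[X]_q\in H^\perp$, which happens with probability about $1/|H|$. Meanwhile there are $q^n/|H|$ cosets to search, so balancing the two gives roughly $q^{n/2}\ge 3^{n/2}$. Your appeal to ``pairwise independence'' does not control the other terms either.

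The missing idea is to estimate $G(w)f(w)^{p-1}$ instead. Because $f$ is even and the samples are independent, this equals $\E[2\pi iX_0\,\omega_q^{\ip{w}{\sum_r\varepsilon_r[X_r]_q}}]$. A phase-weighted coset sum of it then turns into constraints $(Y_{j,a})_{H_j}=\delta_{j,a}$ on \emph{differences} of samples. These differences are arranged along a binary tree with $p=2^{m-1}$ leaves, with uniformly random offsets $\delta_{j,a}$, a random $U\in\GL_n(\F_q)$, and one FFT over $H_m$ at the end.

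The Wagner-style matching evaluates this empirical sum over $N^p$ index tuples exactly. It is not a device for producing Gaussian samples along a chain. The relevant chain also has $m=\Theta(\log n)$ levels, each of index $Q=q^{n/m}$, not $\Theta(\sqrt n)$ steps of index $q$.

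The parameters matter here. Taking $p\approx n/\log^3 n$ and $q\approx 2\sqrt p$ keeps the signal at $f(w_v)^p=2^{-(t-o(1))n}$, while $Q=2^{n/2+o(n)}$ and the estimation error is $Q^{-1/2}2^{o(n)}=2^{-n/4+o(n)}$. This is why $t<1/4$ is required.

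\textbf{Your analysis plan also aims at the wrong target.} You do not need the combined lists to be close to a Gaussian, or their residues to be near uniform. The signal is precisely the bias $f(w_v)=\E[\omega_q^{\ip{[X]_q}{w_v}}]$ of the leaf residues. The paper needs three things:
\begin{itemize}
\item exact unbiasedness of each summand;
\item a second-moment bound organized by the overlap set $I(\boldsymbol i,\boldsymbol i')$, where each pattern costs $Q^{|I|-1}$ and this is absorbed by taking $N=Qp^4$;
\item a bound $(s^2\lambda/q)^2 2^{-n/2-\Omega(n)}$ on the total squared mass of all non-principal terms, obtained from the average--difference transformation and Gaussian masses of the random intermediate lattices $K_j$.
\end{itemize}

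\textbf{For $\CVP$ you misplace the threshold.} The direction statement for $G_y$ holds for every target. The hypothesis $\dist(y,\cL)\le\alpha\lambda_1$ with $\alpha<1/(2\sqrt{t_0})$ enters only through signal versus noise, $2^{-t\alpha^2 n}\gg 2^{-n/4}$. Here $t>t_0=\beta^2/(4e\ln 2)$ is forced by the lattice-point bound that makes $\rho_{\sqrt2/s}(\cL\setminus\{0\})$ exponentially small.

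\textbf{Finally, a direction is not a lattice vector, so you need a decoder.} The paper guesses the length on a polynomial grid and queries the preprocessing $n^{-1/3}$-BDD algorithm of ACKS25, whose $2^{n/2+o(n)}$ preprocessing fits the budget. For $\SVP$ it queries at $\pm du$. For $\CVP$ it first applies a random shift $\widetilde y=y+Ba$ and queries at $\widetilde y\pm jhu$. Your ``candidate multiples of $v$'' and digit-by-digit recovery leave this step unspecified.
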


\input{table}

The first result improves the best known provable complexity for $\SVP$ problem of the recent best approaches \cite{Kim26,cryptoeprint:2026/1587,Hhan26,cryptoeprint:2026/1844} building on the various $\SVP$ algorithms mentioned above. 
It also solves constant-factor approximate $\SVP$ in time $2^{n/2+o(n)}$, which in turn gives $\widetilde O(n^c)$-$\SVP$ algorithm in time $2^{n/(2c+2)+o(n)}$ using \cite[Theorem~5.3]{EC:AggLiSte21}.
We emphasize that heuristic algorithms discussed below have better complexity than ours, and our algorithm does not have any practical implication in lattice-based cryptography.

The second result improves the distance guarantee achievable for $\CVP$ in time $2^{n/2+o(n)}$, where the previous best was $\dist(y,\cL)<0.422\lambda_1(\cL)$ \cite{CCC:DRS14,STOC:ADRS15}.
The dimension-preserving reduction of
\cite[Theorem~6.1]{CCC:DRS14} with our $\CVP$ algorithm gives a randomized algorithm for $3.67312\text{-}\CVP$
in worst-case time and space $2^{n/2+o(n)}$.
This gives a randomized $2^{n/2+o(n)}$ time and space algorithm for the approximate shortest independent vector problem and the successive minima problem with the same approximate factor $3.67312$ following \cite{SODA:Micciancio08}.

We also find an application to random $\CVP$ (\cref{cor:random-cvp}); for a Haar-random lattice $\cL$ due to Siegel \cite{Sie45,EC:PouShe26} and random target vector $y$, we show that $\dist(y,\cL)/\lambda_1(\cL) \le 1+ O(n^{-1/3})$ holds with overwhelming probability, meaning that we can solve $\CVP$ for random target in random lattice in time and space $2^{n/2+o(n)}$.

\paragraph{More related work.}

Heuristic lattice sieving gives the best known asymptotic estimates for exact $\SVP$ and underlies concrete security estimates of lattice-based cryptography
\cite{NguVid08,SODA:MicVou10,C:Laarhoven15,EPRINT:BecGamJou15,BLS16,PKC:HerKirLaa18}.
The best heuristic complexities for $\SVP$ and $\CVP$ are both about $2^{0.2925n+o(n)}$ classically \cite{SODA:BDGL16,SAC:Laarhoven16}. 
Heuristic quantum algorithms for 
$\SVP$ have been studied extensively, and the best complexity is $2^{0.2563n+o(n)}$ \cite{AC:AonNguShe18,AC:KMPM19,AC:ChaLoy21,EC:BCSS23}.

Recent work has also strengthened deterministic hardness. An OpenAI technical report gave deterministic $n^{1/400}$-factor hardness for $\CVP$ \cite{OpenAI2026}. 
Wan proved deterministic hardness of Euclidean $\SVP$ within every constant factor \cite{Wan26}.
Fine-grained hardness of both problems are also well-studied.

Fine-grained hardness results are well-studied for $\SVP$ and $\CVP$ and approximate versions
\cite{FOCS:BenGolSte17,STOC:AggSte18,SODA:ABGS21,ICALP:AggGupMorZha26},
and several barriers to strengthening them have been established
\cite{SODA:ABGS21,FOCS:AggKum23,ICALP:HuaKoWan26}.
No explicit constant lower bound in the exponent is known for both $\SVP$ and $\CVP$ in the Euclidean norm.

\paragraph{Acknowledgment.} The author is grateful to Thijs Laarhoven for discussion about random lattices.

\subsection{Technical overview}
We present a high-level idea of our algorithm, which is inspired by the recent midpoint Hessian algorithm \cite{Hhan26}, where the main idea is that, for a shortest vector $v$, the Hessian of the periodic Gaussian at $v/2$ reveals the direction of $v$. 
We will use the gradient of the periodic Gaussian instead, which has a clearer interpretation in our case.
In both cases, knowing the direction of $v$ up to an inverse polynomial error suffices to find $v$ itself in time $2^{o(n)}$ using the preprocessing bounded distance decoding (BDD) algorithm from \cite{SICOMP:ACKS25} at the preprocessing cost of $2^{0.5n+o(n)}$.

The starting point of our algorithm is similar, that for a shortest vector $v$, the gradient of the periodic Gaussian at $v/q$ for some odd prime $q$ is almost proportional to $v$.\footnote{For more details, we believe our algorithm should also work with the Hessian instead of gradient, but with a more complicated analysis. On the other hand, for the midpoint $v/2$ as in \cite{Hhan26}, the gradient vanishes so the choice of Hessian is unavoidable.} This choice increases the search space from $2^n$ midpoints to $q^n$ $q$-ary points, but also introduces a room for combinatorial optimizations using several cosets as described below. 
We discuss the barriers to go below $2^{0.5n}$ based on our algorithm and why this approach does not improve the midpoint Hessian algorithm \cite{Hhan26} in \cref{rem:belowhalf,rem:midpointmaynot}.

We focus on the algorithmic structure and the complexity in the overview. 
The rigorous analysis of the several bounds on estimations will be postponed to the main body of this paper.

\paragraph{Discrete and periodic Gaussians.}
Let $\cL$ be a full-rank lattice of dimension $n$, and let $B$ be its basis.
The dual lattice of $\cL$ is defined by $\cL^*=\{y: \ip{y}{x} \in \Z \text{ for every }x \in \cL\}$, whose basis is $B^{-T}.$

For the width parameter $s>0$, the discrete Gaussian is defined by
\begin{align}
\label{eqn: intro_DGS}
D_{\cL,s}(x):=\frac{\rho_s(x)}{\rho_s(\cL)}
\quad (x\in\cL)
\qquad\text{where }
\rho_s(x):=\exp\left(-\pi\norm{x}^2/s^2\right).
\end{align}
The periodic Gaussian of $\cL$ is a $\cL$-periodic function defined by
\[
F_{1/s}(z):=
\frac{\rho_{1/s}(\cL+z)}{\rho_{1/s}(\cL)}
\]
which satisfies $F_{1/s}(z) = F_{1/s}(w)$ if $z-w \in \cL$.
Let $q$ be an odd prime and identify
$q^{-1}\cL/\cL$ with $\F_q^n$ using a basis $B$ of $\cL$, or more formally by the map $w\bmod q \mapsto \frac{Bw}{q}+\cL.$
The dual basis $B^{-T}$ similarly identifies $\cL^*/q\cL^*$ with
$\F_q^n$ via $z\bmod q\mapsto B^{-T}z+q\cL^*$.
This gives
the following formula, which we will use extensively:
\[
\exp\left(2\pi i\left\langle\frac{Bw}{q},B^{-T}z\right\rangle\right)
=
\omega_q^{\langle w,z\rangle}.
\]

For
$w\in\F_q^n$, we define the $q$-ary periodic Gaussian and its gradient, which will be our main tools, by
\[
f(w):=
F_{1/s}\left(\frac{Bw}{q}\right),
\qquad
G(w):= \nabla_z F_{1/s}(z)|_{z=Bw/q} = 
\nabla F_{1/s}\left(\frac{Bw}{q}\right),
\]
where any integral lift of $w$ may be used.

We use two different representations of the $q$-ary periodic Gaussian function and its gradient.
Expanding the definition of $F_{1/s}$ and differentiating gives the
primal representations:
\begin{align*}
f(w)
&=
\frac{1}{\rho_{1/s}(\cL)}
\sum_{x\in\cL}
\rho_{1/s}\left(x+\frac{Bw}{q}\right),
\notag\\
G(w)
&=
-\frac{2\pi s^2}{\rho_{1/s}(\cL)}
\sum_{x\in\cL}
\left(x+\frac{Bw}{q}\right)
\rho_{1/s}\left(x+\frac{Bw}{q}\right).
\end{align*}
This will be used to observe the geometric behavior of these values. On the other hand, 
Poisson summation formula gives the dual representations, which will be used in the estimations using discrete Gaussian samples obtained from the sampler in \cite{STOC:ADRS15}:
\[
f(w)
=
\E_{X\sim D_{\cL^*,s}}
\left[
\omega_q^{\ip{[X]_q}{w}}
\right],\qquad
G(w)
=
2\pi i\,
\E_{X\sim D_{\cL^*,s}}
\left[
X\omega_q^{\ip{[X]_q}{w}}
\right],
\]
where $[X]_q:=B^TX\bmod q$ denotes the representations of $X\in \cL^*/q\cL^*$ in $\F_q^n$ and
$\omega_q:=\exp(2\pi i/q)$.

\paragraph{A gradient pointing toward a shortest vector.}
Let $v$ be a shortest nonzero vector of $\cL$ and let
\[
w_v:=B^{-1}v\bmod q
\]
be the representation of $v$ in $\F_q^n$, so that $Bw_v/q = v/q$ modulo $\cL$. For an appropriately chosen width, 
the term $v/q$ dominates the primal gradient, giving
\[
G(w_v)
=
-\frac{2\pi s^2}{\rho_{1/s}(\cL)}
\sum_{x\in\cL}
\left(x+\frac{v}{q}\right)
\rho_{1/s}\left(x+\frac{v}{q}\right)
=
-\frac{2\pi s^2}{q}
\frac{\rho_{1/s}(v/q)}{\rho_{1/s}(\cL)}
(v+e),
\qquad
\norm e\le 2^{-\Omega(n)}\norm v,
\]
where $v$ comes from the term $x=0$, while $e$ collects the remaining
terms. For every $x\ne0$,
\[
\frac{\rho_{1/s}(x+v/q)}{\rho_{1/s}(v/q)}
=
\exp\left(
-\pi s^2\left(
\norm{x+v/q}^2-\norm{v/q}^2
\right)
\right)
\le \exp(-\Omega(n)),
\]
and a standard Gaussian-mass bound shows that their summation
satisfies $\norm e\le 2^{-\Omega(n)}\norm v$.
Thus $G(w_v)$ points approximately in the direction of $v$ up to the minus sign.
Thus, if $w_v$ were known, the gradient at $w_v$ would reveal the direction of $v$.

The difficulty is that $w_v$ is one of $q^n$ possible points of
$q^{-1}\cL/\cL$. Evaluating the gradient separately at all of them would be much too expensive.
The same problem appeared in \cite{Hhan26} when the complexity going below $2^n$, where they suggested to use the cosets to optimize.

\paragraph{Working with one coset.}
Let $H\le\F_q^n$ be a subspace and let
$\cL_H:=\cL+\frac{1}{q}BH$ be the corresponding intermediate lattice
between $\cL$ and $q^{-1}\cL$.
Instead of treating the points of $\F_q^n$ separately, for each coset
$\kappa+H\in\F_q^n/H$ consider the aggregate gradient
\[
G(\kappa+H)
:=
\sum_{w\in\kappa+H}G(w)
=
-\frac{2\pi s^2}{\rho_{1/s}(\cL)}
\sum_{z\in\cL_H+B\kappa/q}
z\rho_{1/s}(z).
\]
Thus summing the $q$-ary gradients over $\kappa+H$, which we call the
coset gradient, 
can be understood as a (scalar multiple of) the periodic Gaussian gradient over the corresponding
coset of the intermediate lattice $\cL_H$.
In particular, if $w_v\in\kappa+H$, then $G(w_v)$ appears as one of the
terms in $G(\kappa+H)$ and points toward $v$ up to minus sign.
The coset size will be chosen so that this term remains detectable in
the sum, as analyzed later.

On the dual side, by the character orthogonality identity
$\sum_{u\in H}\omega_q^{\ip{u}{y}}
=|H|\mathbf 1_{\{y\in H^\perp\}}$, we have
\begin{align*}
G(\kappa+H)
&=
2\pi i\,
\E_{X\sim D_{\cL^*,s}}
\left[
X\,
\omega_q^{\ip{[X]_q}{\kappa}}
\sum_{u\in H}\omega_q^{\ip{[X]_q}{u}}
\right]
\notag\\
&=
2\pi i\,|H|\,
\E_{X\sim D_{\cL^*,s}}
\left[
X\,
\omega_q^{\ip{[X]_q}{\kappa}}
\mathbf 1_{\{[X]_q\in H^\perp\}}
\right].
\end{align*}
Hence summing over a primal coset restricts the dual residue to
$H^\perp$.
The approach of \cite{Hhan26} uses the near-uniform distribution of
discrete Gaussian samples among the residue classes to obtain samples
conditioned on this event.
Since its density is approximately
$|H^\perp|/q^n=1/|H|$, this incurs a multiplicative overhead $|H|$.

\paragraph{Meet-in-the-middle.}
We instead multiply the gradient by $f$ before taking the coset sum.
By independence and $f(-w)=f(w)$, we can write $G(w)f(w)$ using two discrete Gaussian samples as follows
\begin{align*}
G(w)f(w)=G(w)f(-w)
&=
\E_{X_0\sim D_{\cL^*,s}}
\left[
2\pi iX_0\omega_q^{\ip{w}{[X_0]_q}}
\right]
\E_{X_1\sim D_{\cL^*,s}}
\left[
\omega_q^{-\ip{w}{[X_1]_q}}
\right]
\notag\\
&=
\E_{X_0,X_1\sim D_{\cL^*,s}}
\left[
2\pi iX_0
\omega_q^{\ip{w}{[X_0]_q-[X_1]_q}}
\right].
\end{align*}
Consequently, the sum over the coset $\kappa+H$ becomes
\begin{align}
\sum_{w\in\kappa+H}G(w)f(w)
&=
\E_{X_0,X_1\sim D_{\cL^*,s}}
\left[
2\pi iX_0\,
\omega_q^{\ip{\kappa}{[X_0]_q-[X_1]_q}}
\sum_{u\in H}
\omega_q^{\ip{u}{[X_0]_q-[X_1]_q}}
\right]
\notag\\
&=
|H|\,
\E_{X_0,X_1\sim D_{\cL^*,s}}
\left[
2\pi iX_0\,
\omega_q^{\ip{\kappa}{[X_0]_q-[X_1]_q}}
\mathbf 1_{\{[X_0]_q-[X_1]_q\in H^\perp\}}
\right].
\label{eqn:overview-two-sample-coset}
\end{align}
Thus the factor $f$ replaces the condition on one discrete Gaussian
sample by a condition on the difference of two ordinary samples.

More generally, a phase-weighted sum over a coset allows this difference to
lie in an arbitrary prescribed coset.
We explain this phenomenon by considering the coordinate subspaces
\[
\F_q^n=H_1\oplus H_2\oplus H_3
\]
and randomizing this decomposition using a uniform
$U\in\GL_n(\F_q)$;
here $H_2,H_3$ are introduced for comparison with the later example.
For now, we consider $H_2 \oplus H_3$ together, which parametrizes the cosets of $H_1$
Write $Y_r:=U[X_r]_q$ for each $r$.
For $\theta\in H_2\oplus H_3$ and $\delta\in H_1$, we define the phase-weighted version of \cref{eqn:overview-two-sample-coset} by
\begin{align}
A_\delta(\theta)
&:=
\sum_{\alpha\in H_1}
\omega_q^{-\ip{\alpha}{\delta}}
G\left(U^T(\theta+\alpha)\right)
f\left(U^T(\theta+\alpha)\right)
\notag\\
&=
|H_1|\,
\E_{X_0,X_1\sim D_{\cL^*,s}}
\left[
2\pi iX_0\,
\omega_q^{\ip{\theta}{Y_0-Y_1}}
\mathbf 1_{\{(Y_0-Y_1)_{H_1}=\delta\}}
\right].
\label{eqn:overview-two-sample-estimator}
\end{align}
The equality follows from 
$\sum_{\alpha\in H_1}
\omega_q^{\ip{\alpha}{z-\delta}}
=
|H_1|\mathbf 1_{\{(z)_{H_1}=\delta\}}.$
Thus a phase-weighted sum of $Gf$ over the frequency coset
$U^T(\theta+H_1)$ is equivalent to prescribing the $H_1$-component
of the sample difference.
The unweighted coset sum corresponds to $\delta=0$.

Algorithmically, we estimate $A_\delta(\theta)$ from two independent
lists of discrete Gaussian samples by matching pairs satisfying
$(Y_0-Y_1)_{H_1}=\delta$ and averaging their contributions
$2\pi iX_0\omega_q^{\ip{\theta}{Y_0-Y_1}}$. More details will follow.

\paragraph{A four-sample tree.}
We extend this single difference condition to a tiny tree with three difference conditions. We consider four discrete Gaussian samples $X_0,\cdots,X_3$.
Let $Y_r:=U[X_r]_q$ and define
\[
D_{1,0}:=Y_0-Y_1,\qquad
D_{1,1}:=Y_2-Y_3,\qquad
D_{2,0}:=D_{1,0}-D_{1,1}.
\]
Choose independently and uniformly
\[
\delta_{1,0},\delta_{1,1}\in H_1,
\qquad
\delta_{2,0}\in H_2,\qquad\boldsymbol{\delta} = (\delta_{1,0},\delta_{1,1},\delta_{2,0})
\]
and impose the following conditions
\[
(D_{1,0})_{H_1}=\delta_{1,0},
\qquad
(D_{1,1})_{H_1}=\delta_{1,1},
\qquad
(D_{2,0})_{H_2}=\delta_{2,0}.
\]

\input{fig_tree}

This gives the quantity analogous to \cref{eqn:overview-two-sample-estimator} for $\theta\in H_3$ with the three difference conditions
\begin{align*}
A_{\boldsymbol\delta}(\theta)
:={}&
|H_1|^2|H_2|\,
\E_{X_0,\ldots,X_3\sim D_{\cL^*,s}}
\left[
2\pi iX_0\,
\omega_q^{\ip{\theta}{D_{2,0}}}
\right.
\notag\\[-1mm]
&\hspace{21mm}\left.
{}\cdot
\mathbf 1_{\{(D_{1,0})_{H_1}=\delta_{1,0}\}}
\mathbf 1_{\{(D_{1,1})_{H_1}=\delta_{1,1}\}}
\mathbf 1_{\{(D_{2,0})_{H_2}=\delta_{2,0}\}}
\right].
\end{align*}
Expanding the three conditions by character orthogonality gives the following alternative representation
\begin{align}
A_{\boldsymbol\delta}(\theta)
=
\sum_{\substack{\alpha_0,\alpha_1\in H_1\\ \beta\in H_2}}
&
\omega_q^{
-\ip{\alpha_0}{\delta_{1,0}}
+\ip{\alpha_1}{\delta_{1,1}}
-\ip{\beta}{\delta_{2,0}}
}
\notag\\[-1mm]
&\quad\cdot
G\left(U^T(\theta+\beta+\alpha_0)\right)
f\left(U^T(\theta+\beta+\alpha_0)\right)
f\left(U^T(\theta+\beta+\alpha_1)\right)^2.
\label{eqn:intro_Adelta}
\end{align}

The expansion contains, for a suitable $\theta$, a contribution
associated with a shortest vector $v$.
To see this explicitly, recall that
$w_v:=B^{-1}v\bmod q$ is the $\F_q^n$ representation of $v$, and let
\[
k:=U^{-T}w_v
=
k_1+k_2+k_3,
\qquad
k_j:=(k)_{H_j}.
\]
At $\theta_*:=k_3$, the choice
$\beta=k_2$ and $\alpha_0=\alpha_1=k_1$ contributes
\[
c_{\boldsymbol\delta}(k)G(w_v)f(w_v)^3,
\qquad
c_{\boldsymbol\delta}(k)
:=
\omega_q^{
-\ip{k_1}{\delta_{1,0}}
+\ip{k_1}{\delta_{1,1}}
-\ip{k_2}{\delta_{2,0}}
},
\]
where $|c_{\boldsymbol\delta}(k)|=1$.
This particular term is approximately proportional to $v$ up to a
unit complex factor.
The components $k_1,k_2$ are handled implicitly by the sums over
$\alpha_0,\alpha_1,\beta$, while $k_3$ is the only component explicitly
indexed by $\theta$, with the desired value $\theta_*=k_3$.
The later analysis shows that the remaining terms are sufficiently
small.

More concretely, we estimate $A_{\boldsymbol\delta}(\theta)$ from four
independent lists of discrete Gaussian samples by first matching pairs
satisfying the prescribed $H_1$-conditions, then matching their
differences according to the prescribed $H_2$-condition, and summing,
with the appropriate normalization, their contributions
$2\pi iX_0\omega_q^{\ip{\theta}{D_{2,0}}}$.

\paragraph{The general tree.}
The four-sample construction extends naturally to any
$p=2^{m-1}$ samples.
For now, let $q$ be an odd prime and assume for simplicity that
$m\mid n$. Decompose
\[
\F_q^n
=
H_1\oplus\cdots\oplus H_m,
\qquad
|H_1|=\cdots=|H_m|=:Q,
\]
where the $H_j$ are coordinate subspaces, and choose a uniform
$U\in\GL_n(\F_q)$.

For independent $X_0,\ldots,X_{p-1}\sim D_{\cL^*,s}$, define
\[
D_{0,r}:=U[X_r]_q,
\qquad
D_{j,a}:=D_{j-1,2a}-D_{j-1,2a+1}
\]
for $j=1,\ldots,m-1$ and $a=0,\ldots,p/2^j-1$.
At every node $(j,a)$, choose an independent uniform
$\delta_{j,a}\in H_j$ and impose the difference constraints
\[
(D_{j,a})_{H_j}=\delta_{j,a}.
\]
Thus, at level $j$, pairs of elements are matched according to their
$H_j$-components and replaced by their difference before being passed
to the next level.

For $\theta\in H_m$, define the corresponding ideal quantity satisfying all the constraints $\boldsymbol{\delta}=(\delta_{j,a})_{j,a}$
\begin{align}
A_{\boldsymbol\delta}(\theta)
:=
Q^{p-1}
\E_{X_0,\ldots,X_{p-1}\sim D_{\cL^*,s}}
\left[
2\pi iX_0\,
\omega_q^{\ip{\theta}{D_{m-1,0}}}
\prod_{j=1}^{m-1}
\prod_{a=0}^{p/2^j-1}
\mathbf 1_{\{(D_{j,a})_{H_j}=\delta_{j,a}\}}
\right].
\label{eqn:overview-general-tree-estimator}
\end{align}
The normalization is $Q^{p-1}$ because the binary tree has $p-1=\sum_{j=1}^{m-1}\frac{p}{2^j}$ internal nodes.

Let
$\theta_*:=(U^{-T}w_v)_{H_m}.$
As in the four-sample example, the expansion of
\cref{eqn:overview-general-tree-estimator} contains the desired term approximately pointing a direction $v$
\[
c_{\boldsymbol\delta}(U^{-T}w_v)
G(w_v)f(w_v)^{p-1},
\qquad
\left|c_{\boldsymbol\delta}(U^{-T}w_v)\right|=1.
\]
Again, the components of $U^{-T}w_v$ in $H_1,\ldots,H_{m-1}$ are handled
implicitly by the character sums, while its $H_m$-component is indexed
by $\theta_*$. 
The remaining terms in the expansion are small with high probability:
\begin{align}
\left\|
A_{\boldsymbol\delta}(\theta_*)
-
c_{\boldsymbol\delta}(U^{-T}w_v)
G(w_v)f(w_v)^{p-1}
\right\|
\le
\frac{s^2\norm v}{q}Q^{-1/2}2^{o(n)}.
\label{eqn:overview-general-tree-approximation}
\end{align}
We will discuss the size of the main term later in the complexity analysis.

\paragraph{Estimation and parameter choice.}
We now explain how to estimate the ideal quantity in
\cref{eqn:overview-general-tree-estimator}.
Recall that $p=2^{m-1}$, and let $N$ be a list-size parameter to be
chosen below.
Draw independent samples
\[
X_{r,i}\sim D_{\cL^*,s},
\qquad
r\in[0,p),\quad i\in[N],
\]
viewed as $p$ lists of size $N$.
For $\boldsymbol i=(i_0,\ldots,i_{p-1})\in[N]^p$, define
$D_{j,a}(\boldsymbol i)$ recursively from
\[
D_{0,r}(\boldsymbol i):=U[X_{r,i_r}]_q,
\qquad
D_{j,a}(\boldsymbol i)
:=
D_{j-1,2a}(\boldsymbol i)-D_{j-1,2a+1}(\boldsymbol i).
\]
We then estimate $A_{\boldsymbol\delta}(\theta)$ by
\begin{align}
\widehat A_{\boldsymbol\delta}(\theta)
:=
\frac{Q^{p-1}}{N^p}
\sum_{\boldsymbol i\in[N]^p}
2\pi iX_{0,i_0}\,
\omega_q^{\ip{\theta}{D_{m-1,0}(\boldsymbol i)}}
\cdot
\prod_{j=1}^{m-1}
\prod_{a=0}^{p/2^j-1}
\mathbf 1_{\{
(D_{j,a}(\boldsymbol i))_{H_j}
=
\delta_{j,a}
\}}.
\label{eqn:overview-general-tree-empirical}
\end{align}
Note that this estimation requires the summation over $N^p$ different $\boldsymbol{i}$.
The gradient difference tree algorithm evaluates this sum without enumerating all
$N^p$ tuples.

The computation of
\cref{eqn:overview-general-tree-empirical} for a fixed
$\theta\in H_m$ proceeds as follows:
\begin{enumerate}[nosep]
\item Initialize the $p$ lists with the discrete Gaussian samples. This will be the level 0.
\item For each $j=1,\ldots,m-1$, combine pairs from level $j-1$ to form $D_{j,a}
=
D_{j-1,2a}-D_{j-1,2a+1},$
collecting only those satisfying
\[
(D_{j,a})_{H_j}=\delta_{j,a}.
\]
\item Sum
\[
2\pi iX_{0,i_0}\,
\omega_q^{\ip{\theta}{D_{m-1,0}(\boldsymbol i)}}
\]
over all tuples $\boldsymbol i$ satisfying every difference condition,
and multiply the result by $Q^{p-1}/N^p$.
\end{enumerate}
By construction, this evaluates
$\widehat A_{\boldsymbol\delta}(\theta)$ exactly.
We analyze its time and space complexity below.
To evaluate all $\theta\in H_m$ simultaneously, define and compute the following values for all $ b\in H_m$:
\begin{align*}
V_{\boldsymbol\delta}(b)
:=
\sum_{\substack{\boldsymbol i\in[N]^p\\
\text{all difference conditions}\\
(D_{m-1,0}(\boldsymbol i))_{H_m}=b}}
2\pi iX_{0,i_0}.
\end{align*}
Then, we have the representation
\[
\widehat A_{\boldsymbol\delta}(\theta)
=
\frac{Q^{p-1}}{N^p}
\sum_{b\in H_m}
V_{\boldsymbol\delta}(b)\omega_q^{\ip{\theta}{b}},\]
so one fast Fourier transform over $H_m$ computes the estimator for every
$\theta\in H_m$.

\paragraph{Complexity and parameter choice.}
Now we discuss the complexity of the above algorithm and the parameter choice for the optimized complexity.

Let $L_j$ be the maximum size of a list at level $j$, with $L_0=N$.
Since each $\delta_{j,a}$ is uniform in $H_j$, every fixed pair is
collected with probability $1/Q$. Thus, for input lists
of size at most $L_{j-1}$, the expected output size for the next level is at most
$L_{j-1}^2/Q$.
Grouping by the relevant level $j$ components in complexity\footnote{Space can be slightly reduced, which is unimportant in the asymptotic analysis.} therefore gives
\[
S,T=
\widetilde O\left(
pN+
\sum_{j=1}^{m-1}
\frac{p}{2^j}(L_{j-1}+L_j)
+
Q\log Q
\right),\]
where $pN$ is for the discrete Gaussian sampling (here we assume $N \ge 2^{0.5n}$ for the simple use of the discrete Gaussian sampling algorithm in \cite{STOC:ADRS15}), and the middle term is about collecting the difference satisfying the conditions at each level, and the last term is for the final fast Fourier transform.

To balance the first and the last terms, we choose $N=Q p^{O(1)}$.
We also assume $L_j = L_{j-1}^2/Q$ for a simple exposition, so that
\[
L_j = Q p^{O(2^j)} = Qp^{O(p)}.
\]
By choosing $p\ll n/\log n$, we have $p^{O(p)}=2^{o(n)}$ and $L_j=Q\cdot 2^{o(n)}$ for all $j$. Then the complexity becomes
\[
T=(N+Q)2^{o(n)},\qquad S=Q2^{o(n)}.
\]
Choosing $N=Q p^{O(1)}$ and recalling $Q=(q^{n})^{1/m}$ for $m=\log_2 p +1$, both bounds are $Q2^{o(n)}=2^{n/2+o(n)}$. In other words, finding the optimal complexity depends on the parameter $q$. 

The constraint on $q$ is related to the size of the desired term and the estimation error
\begin{align}
\norm{G(w_v)f(w_v)^{p-1}}
=
\frac{s^2\norm v}{q}
2^{-(4tp/q^2-o(1))n}, \qquad
(\text{estimation error})
=
\frac{s^2\norm v}{q}
Q^{-1/2}2^{o(n)}
\label{eqn:intro-estimate}
\end{align}
hold with high probability.
Here $t>0$ is a fixed constant determined by the Gaussian width and therefore does not affect the asymptotic exponent in this optimization.
The first bound (\cref{lem:odd-local}) can be shown by the standard Gaussian mass inequalities, and the second bound (\cref{lem:svp-finite-list-second-moment}), together with the above bound \cref{eqn:overview-general-tree-approximation} (i.e., \cref{lem:svp-nonprincipal-mass}), is rather complicated to prove. 

Since $Q=q^{n/m}$ and $m=\log_2p+1$, the above bounds and \cref{eqn:overview-general-tree-approximation} show that the desired term is detectable
when
\[
\frac{4tp}{q^2}
\lesssim
\frac{\log_2q}{2\log_2 p}.
\]
This asymptotically enforces $q\gtrsim \sqrt p$ or $\log_2 q \gtrsim 0.5 \log_2 p$, which makes
\[
Q = q^{n/m} \approx 2^{(\log_2 q /\log_2 p)n}\gtrsim 2^{0.5n}.
\]
We choose the parameters accordingly to have $Q=2^{0.5 n + o(n)}$, which is our optimized complexity.

\begin{remark}[Going below $2^{0.5n}$ seems difficult]
    The above optimization reaches the complexity $2^{0.5n+o(n)}$
    independently of the requirement $N\ge 2^{0.5n}$ imposed by the
    known discrete Gaussian sampling algorithm \cite{STOC:ADRS15},
    which is also used in the preprocessing for BDD \cite{SICOMP:ACKS25}.
    Thus two separate bottlenecks lead to the same exponent: balancing
    the coset size against the estimation errors, and generating the
    required discrete Gaussian samples.
    \label{rem:belowhalf}
\end{remark}

\begin{remark}[The tree algorithm may not improve the midpoint Hessian]
    See the first bound in \cref{eqn:intro-estimate}. An analogous bound
    holds for the midpoint Hessian of \cite{Hhan26}, where $q=2$ is
    fixed. Hence increasing $p$ rapidly decreases the norm of the
    desired term, whereas the error terms do not decrease at a
    comparable rate, eventually making the desired term too small to
    detect. The choice $p=2$ is already used in the essential
    single-coset optimization.
    \label{rem:midpointmaynot}
\end{remark}

\paragraph{High-level idea of the analysis.}
We briefly describe the main idea behind the most technical part of
the analysis.
The primal expansion of the ideal quantity
$A_{\boldsymbol\delta}(\theta_*)$ in
\cref{eqn:overview-general-tree-estimator} contains the desired vector
$G(w_v)f(w_v)^{p-1}$.
To bound the total squared contribution of the remaining terms, we
expand the primal representation.  
This introduces vectors $
u_{r,0},u_{r,1}\in Bz_r+q\cL$
and Gaussian exponents of the form
\[
\sum_{r=0}^{p-1}
\left(
\norm{u_{r,0}}^2+\norm{u_{r,1}}^2
\right).
\]
Here $z_r$ are coupled by the constraints of the tree, making a direct application of discrete Gaussian mass bounds difficult.
We instead repeatedly apply, following the same binary tree, the
average--difference transformation
\[
(x_L,x_R)
\longmapsto
(a,d)
:=
\left(
\frac{x_L+x_R}{2},
x_L-x_R
\right).
\]
Writing $d_{h,r}$ for the difference created at node $r$ of level $h$,
the resulting full transformation is
\[
\Psi:\quad
\left(u_{r,b}\right)_
{\substack{0\le r<p\\ b\in\{0,1\}}}
\longmapsto
\left(
a_{m-1,0},
\left(d_{h,r}\right)_
{\substack{0\le h\le m-1\\0\le r<p/2^h}}
\right),
\]
where $a_{m-1,0}$ is the final average at the root.
At each node, the average is passed to its parent and the difference
is retained as a new summation variable. The identity
$\norm{x_L}^2+\norm{x_R}^2
=
2\norm{a}^2+\frac12\norm{d}^2$ allows us to rewrite the original exponent as a (weighted) sum of the squared norms of $a$ and $d$'s.

In this representation, the gradient-tree constraints determine the affine lattice coset containing each variable.
Together with the injectivity of $\Psi$, this coset structure allows us to bound the sum by a product of discrete-Gaussian masses, yielding the desired upper
bound.
In the full proof, we apply this bound recursively through the levels of the tree to control the contribution of all terms other than $G(w_v)f(w_v)^{p-1}$; see \cref{fig:svp-mass-analysis-flow}.

\paragraph{The $\CVP$ extension.}
Now we move to the CVP problem with the distance guarantee. 
Let $y\in\R^n$ be a target, let $v\in\cL$ be a closest lattice vector
to $y$, and write $e:=v-y$ and $w_v:=B^{-1}v\bmod q$.
Assume that $\norm e \le \alpha \lambda_1(\cL)$ for $\alpha>0$ to be optimized.
We define the shifted $q$-ary periodic Gaussian and its gradient by
\begin{align*}
f_y(w)
&:=
F_{1/s}\left(\frac{Bw-y}{q}\right)
=
\E_{X\sim D_{\cL^*,s}}
\left[
\exp\left(-\frac{2\pi i}{q}\ip{X}{y}\right)
\omega_q^{\ip{[X]_q}{w}}
\right],
\notag\\
G_y(w)
&:=
\left.\nabla_zF_{1/s}(z)\right|_{z=(Bw-y)/q}
=
2\pi i\,
\E_{X\sim D_{\cL^*,s}}
\left[
X\exp\left(-\frac{2\pi i}{q}\ip{X}{y}\right)
\omega_q^{\ip{[X]_q}{w}}
\right].
\end{align*}

Since $Bw_v=v\bmod q\cL$, the periodicity of $F_{1/s}$ gives
\[
G_y(w_v)
=
\nabla F_{1/s}\left(\frac{e}{q}\right)
\approx
-\frac{2\pi s^2}{q}
\frac{\rho_{1/s}(e/q)}{\rho_{1/s}(\cL)}
e.
\]
Thus $-G_y(w_v)$ points toward $e=v-y$.

We apply the same coset gradient tree construction to $f_y$ and $G_y$,
obtaining an estimate analogous to
\cref{eqn:overview-general-tree-empirical}.
Given $\norm e\le\alpha\lambda_1(\cL)$,\footnote{Technically, we also require $\norm e$ not to be too small.
Otherwise, the closest vector is found directly by the BDD algorithm.}
the desired term satisfies
\[
\norm{G_y(w_v)f_y(w_v)^{p-1}}
\ge
\frac{s^2\norm e}{q}
2^{-(t\alpha^2+o(1))n}.
\]
The difference between the coset gradient tree estimate and
$cG_y(w_v)f_y(w_v)^{p-1}$ for some $c\in\C$ with $|c|=1$ is bounded at most about $Q^{-1/2}=2^{-n/4}$ up to small factors as above.
Therefore, the desired term is detectable
when
\[
t\alpha^2<\frac{1}{4}.
\]
The lattice-packing bound used in the analysis requires
$t>t_0=0.23147\cdots$. 
For $\alpha<1/2\sqrt{t_0}$, the whole procedure have time and
space complexity $2^{n/2+o(n)}$ as in the SVP case, with the distance guarantee
$\dist(y,\cL)\le1.039\cdots\lambda_1(\cL)$.

%% file: table.tex
\begin{center}
\begingroup
\scriptsize
\resizebox{\linewidth}{!}{%
\begin{tabular}{c|ccccccccccc}
\hline
Reference
&
\shortstack{Kan87/HS07}
&
AKS01
&
PS09
&
MV10
&
ADRS15
&
CCL18
&
ACKS25
&
\shortstack{GFH26a/Kim26}
&
Hhan26
&
GFH26b
&
\textbf{This work}
\\
\hline
\shortstack{Time}
&
$n^{n/2e}$
&
$2^{O(n)}$
&
$2^{2.465n}$
&
$2^{2n}$
&
$2^n$
&
$2^{2.05n}$
&
$2^{1.669n}$
&
$2^{0.7314n}$
&
$2^{0.6039n}$
&
$2^{0.5596n}$
&
$\mathbf{2^{n/2}}$
\\
\shortstack{Space}
&
$\mathrm{poly}(n)$
&
$2^{O(n)}$
&
$2^{1.233n}$
&
$2^n$
&
$2^n$
&
$2^{n/2}$
&
$2^{n/2}$
&
$2^{n/2}$
&
$2^{n/2}$
&
$2^{n/2}$
&
$\mathbf{2^{n/2}}$
\\
\hline
\end{tabular}%
}

{\scriptsize
Classical algorithms for the shortest vector problem. $o(n)$ factors in exponents are omitted.
}
\endgroup
\end{center}

%% file: fig_tree.tex
\begin{figure}[h]
\centering
\begin{tikzpicture}[
    x=1cm,
    y=1cm,
    value/.style={
        draw,
        rounded corners,
        inner sep=3pt,
        font=\small,
        align=center
    },
    edge/.style={->,>=latex}
]

\node[value] (y0) at (0,0) {$Y_0$};
\node[value] (y1) at (2,0) {$Y_1$};
\node[value] (y2) at (4,0) {$Y_2$};
\node[value] (y3) at (6,0) {$Y_3$};

\node[value] (d10) at (1,1.2)
{$D_{1,0}=Y_0-Y_1$\\[-1pt]
{\color{blue!70!black}\scriptsize
$(D_{1,0})_{H_1}=\delta_{1,0}$}};

\node[value] (d11) at (5,1.2)
{$D_{1,1}=Y_2-Y_3$\\[-1pt]
{\color{blue!70!black}\scriptsize
$(D_{1,1})_{H_1}=\delta_{1,1}$}};

\node[value] (d20) at (3,2.4)
{$D_{2,0}=D_{1,0}-D_{1,1}$\\[-1pt]
{\color{blue!70!black}\scriptsize
$(D_{2,0})_{H_2}=\delta_{2,0}$}};

\draw[edge] (y0) -- (d10);
\draw[edge] (y1) -- (d10);
\draw[edge] (y2) -- (d11);
\draw[edge] (y3) -- (d11);

\draw[edge] (d10) -- (d20);
\draw[edge] (d11) -- (d20);

\end{tikzpicture}
\caption{The difference tree for $G(w)f(w)^3$.
}
\label{fig:svp-difference-tree}
\end{figure}

%% file: 2.prel.tex
\section{Preliminaries}
\label{sec:preliminaries}

We write $\Z$, $\N$, $\R$, and $\C$ to denote the sets of integers, positive integers, real numbers, and complex numbers.
$q$ will always be a prime number.
Let $\omega_q=\exp(2\pi i/q)$ be the $q$-th root of unity.
$\F_q$ denotes the field of $q$ elements $\{0,1,...,q-1\}$.
We write $[m]=\{1,\ldots,m\}$ for $m\in \N$ and
$[0,m):=\{0,\ldots,m-1\}$.
For integers $a\le b$, we write $[a,b):=\{a,a+1,\ldots,b-1\}$.
The Euclidean norm is denoted by $\norm{\cdot}$.
We write $\dist(t,S):=\inf_{x\in S}\norm{t-x}$ for a set $S$ and a vector $t$.
The indicator function for an event $E$ is denoted by $\ind_E$.
All logarithms are natural unless a base is specified.

All polynomial factors in this paper are polynomial in the dimension parameter
$n$ and the bit length of the inputs, e.g., the input basis; the latter is assumed to be polynomial in $n$ and is omitted in the paper throughout.
We suppress routine finite-precision issues and describe arithmetic over
$\R$ and $\C$ as exact; polynomially many bits of precision suffice throughout,
with only polynomial overhead and errors absorbed into the stated bounds.
The dimension $n$ tends to infinity in the asymptotic equations.
The constants in the $O$ and $\Omega$ notations, as well as $o(1)$ and $o(n)$ terms, are uniform over parameters unless specified otherwise.

\subsection{Lattice and discrete Gaussian}
Let $B\in\R^{n\times n}$ be a nonsingular matrix and let
$\cL=B\Z^n$ be a full-rank lattice generated by $B$.  
The first minimum is defined by $\lambda_1(\cL)=\min_{0\ne y\in\cL}\norm y$, which is occasionally written as $\lambda$.
The dual lattice of $\cL$ is defined by $ \cL^*=B^{-T}\mathbb Z^n$.
Let $\beta=2^{0.4014\ldots}$ denote the lattice-point constant obtained from
the spherical-code bound of \cite{KL78} and
\cite[Lemma~3]{EPRINT:PujSte09}.
These bounds are slightly improved in the recent work \cite{OpenAI2026}, but we need not use that improved bound.

For $\sigma>0$ and a countable set $A$, write
$\rho_\sigma(x)=e^{-\pi\norm{x}^2/\sigma^2}$ and
$\rho_\sigma(A)=\sum_{x\in A}\rho_\sigma(x)$.  
The centered discrete Gaussian and the
smoothing parameter $\eta_\varepsilon(\cL)$ for $0<\varepsilon<1$ are defined by
\[
D_{\cL,\sigma}(x)=\rho_\sigma(x)/\rho_\sigma(\cL),\qquad
 \rho_{1/\eta_\varepsilon(\cL)}(\cL^*\setminus\{0\})=\varepsilon.
\]

For a full-rank lattice $M$, $\sigma>0$, and $z\in\R^n$,
Poisson summation formula gives
\[
 \sum_{x\in M}e^{-\pi\norm{x}^2/\sigma^2}e^{2\pi i\langle x,z\rangle}
 =\frac{\sigma^n}{\det M}
   \sum_{y\in M^*}e^{-\pi \sigma^2\norm{y-z}^2} \qquad \Longrightarrow \qquad
   \rho_\sigma(M)=\frac{\sigma^n}{\det M}\rho_{1/\sigma}(M^*)
   \]
This formula and the triangle inequality show that a centered
coset has maximal Gaussian mass.
\begin{lemma}
\label{lem:shifted-mass-maximum}
For every full-rank lattice $M$, shift $t$, and $\sigma>0$,
$ \rho_\sigma(M+t)\le\rho_\sigma(M).$
\end{lemma}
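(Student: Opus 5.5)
We prove \cref{lem:shifted-mass-maximum} by applying the Poisson summation formula to the shifted coset. This turns the coset mass into a Fourier series over the dual lattice $M^*$ whose coefficients are all nonnegative, and the triangle inequality then bounds it by the unshifted mass.

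Let $\sigma>0$ and $t\in\R^n$. First we write the Gaussian of the shifted coset as a Gaussian over $M$ shifted by $t$:
\[
\rho_\sigma(M+t)=\sum_{x\in M}e^{-\pi\norm{x+t}^2/\sigma^2}.
\]
Next we apply Poisson summation in the roles of $M$ and $M^*$ swapped, using $(M^*)^*=M$ and $\det M^*=1/\det M$. The Gaussian $e^{-\pi\norm{x}^2/\sigma^2}$ has Fourier transform $\sigma^n e^{-\pi\sigma^2\norm{y}^2}$. Translating $x$ by $t$ multiplies the transform by a unit phase, which gives
\[
\rho_\sigma(M+t)=\frac{\sigma^n}{\det M}\sum_{y\in M^*}e^{-\pi\sigma^2\norm{y}^2}\,e^{2\pi i\ip{y}{t}} .
\]
One can also read this off the displayed formula in the preliminaries, after substituting the shift there and conjugating; the left side is real, so the sign of the phase is irrelevant. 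Absolute convergence of both sums is standard, since Gaussians decay rapidly over a lattice.

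Each coefficient $e^{-\pi\sigma^2\norm y^2}$ is nonnegative, and each phase has modulus $1$. The triangle inequality therefore gives
\[
\rho_\sigma(M+t)\le\frac{\sigma^n}{\det M}\sum_{y\in M^*}e^{-\pi\sigma^2\norm y^2}=\frac{\sigma^n}{\det M}\rho_{1/\sigma}(M^*).
\]
By the second form of Poisson summation stated in the preliminaries, the right-hand side equals $\rho_\sigma(M)$. This proves the lemma.

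The only step that needs care is applying the Poisson identity correctly, in particular getting the normalization $\sigma^n/\det M$ right and the phase $e^{2\pi i\ip{y}{t}}$ with $y\in M^*$. This is routine: the lemma holds because the Fourier coefficients of a Gaussian are positive.
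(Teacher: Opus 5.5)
Your proof is correct and matches the paper's argument, which is exactly this: Poisson summation writes $\rho_\sigma(M+t)$ as a series over $M^*$ with nonnegative Gaussian coefficients and unit-modulus phases, and the triangle inequality bounds it by its value at $t=0$, namely $\rho_\sigma(M)$.
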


The following theorem shows that the discrete Gaussian sampling can be efficiently done slightly above the smoothing parameter.

\begin{theorem}[{\cite[Theorem~5.11]{STOC:ADRS15}}]\label{thm:dgs-sampling}
Let $\cL\subset\R^n$ be a full-rank lattice, $\sigma>0$, and
$\kappa=\Omega(n)$. For every requested $1\le M\le2^{n/2}$, there is a
classical algorithm $\DGS$ that outputs $M$ lattice vectors in time
$2^{n/2+\polylog(\kappa)+o(n)}$ and space $2^{n/2+o(n)}$.
If $\sigma>\sqrt2\eta_{1/2}(\cL)$,
its output distribution is $\exp(-\Omega(\kappa))$-close to the distribution of
$M$ independent samples
$X_1,\ldots,X_M$ from $D_{\cL,\sigma}$.
\end{theorem}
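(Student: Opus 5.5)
This result is imported verbatim from \cite[Theorem~5.11]{STOC:ADRS15}, so in the paper I would simply cite it. If a self-contained argument were wanted, I would reconstruct the ``discrete Gaussian combiner'' proof as follows. Fix a target width $\sigma>\sqrt2\eta_{1/2}(\cL)$. First LLL-reduce the basis. Then use the Klein/GPV sampler to produce many independent samples from $D_{\cL',\sigma_0}$, where $\cL'$ is a suitable sparse or dense sublattice of $\cL$ and $\sigma_0=2^{O(n)}\sigma$ is large enough for that sampler to be exact up to $2^{-\Omega(\kappa)}$.

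The core step is the pairing identity. If $X,Y\sim D_{\cL,\sigma'}$ are independent, then conditioned on $X\equiv Y \pmod{2\cL}$ (more generally, modulo a chosen sublattice), the vector $(X+Y)/2$ is distributed as $D_{\cL,\sigma'/\sqrt2}$. This is proved by the parallelogram identity $\norm{X}^2+\norm{Y}^2=2\norm{(X+Y)/2}^2+\tfrac12\norm{X-Y}^2$, followed by summing out the difference over the coset via \cref{lem:shifted-mass-maximum} and Poisson summation. Above the smoothing parameter the coset masses $\rho_{\sigma'}(2\cL+c)$ are nearly equal, so the conditioning is nearly uniform over cosets and introduces only $\epsilon$-small distortion.

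I would run $O(n)$ rounds of width halving, each time bucketing the current list by residue and pairing greedily inside buckets. To reach the $2^{n/2}$ regime rather than $2^n$, the coset structure would be refined through a chain of intermediate lattices of index about $2^{n/2}$ per step, in the spirit of the ADRS construction. This keeps both the list size and the number of buckets at $2^{n/2+o(n)}$, so that a constant fraction of the list survives each round.

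The main obstacle is proving independence of the outputs. Greedy pairing correlates the produced vectors, and the loss of unpaired elements biases the coset distribution. I would first try the ADRS ``square sampler'' framework: bound the output as a mixture of nearly independent draws via a coupling argument, use the smoothing hypothesis to show that the bucket-occupancy profile is close to Poisson, and accumulate the per-round statistical distance $\exp(-\Omega(\kappa))$ over the $O(n)$ rounds. The $\polylog(\kappa)$ term in the running time comes from repeating the procedure to boost the success probability.
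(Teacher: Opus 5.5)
Citing \cite[Theorem~5.11]{STOC:ADRS15} is exactly what the paper does. It gives no proof of this statement, so for the paper's purposes your proposal is complete. Your fallback reconstruction, however, would not prove the $2^{n/2+o(n)}$ bound as written.

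It rests on the ``more generally, modulo a chosen sublattice'' form of the pairing identity, and that form fails for the coarser buckets your $2^{n/2}$ refinement needs. Suppose $2A\subsetneq C\subseteq A$ and $X\equiv Y\pmod C$. Then $X+Y=(X-Y)+2Y\in C$, so $(X+Y)/2$ lies in $C/2$, which is a \emph{superlattice} of $A$ of index $[C:2A]$, not in $A$. With $[A:C]=2^{\beta n}$ buckets, the width drops by $\sqrt2$ but $\det(\cdot)^{1/n}$ drops by $2^{1-\beta}$. Hence the scale-invariant ratio $\sigma/\det(\cdot)^{1/n}$ is multiplied by $2^{1/2-\beta}$. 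Rounds with $2^{n/2}$ buckets therefore only trade width for density and leave this ratio unchanged, and $2^{(1/2+\epsilon)n}$ buckets buy only a factor $2^{-\epsilon}$ per round. Klein/GPV needs the ratio to be $\omega(\sqrt{\log n})$ on whatever lattice you start from, since the largest Gram--Schmidt norm of any basis is at least $\det^{1/n}$. At $\sigma\approx\sqrt2\eta_{1/2}(\cL)$, by contrast, the ratio can be $O(1)$: Poisson summation gives $\eta_{1/2}(\cL)\ge(2/3)^{1/n}\det(\cL)^{1/n}$, and this is tight up to constants for Haar-random lattices. So your chain, as described, has no mechanism that actually reaches the target width.

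There is also a counting obstruction that does not depend on the bucket structure. Each round maps disjoint pairs to single outputs, so the list at least halves per round. Starting from an LLL basis at width $2^{\Theta(n)}\sigma$ forces $\Theta(n)$ rounds in the worst case. The initial list would then need $2^{\Theta(n)}M$ vectors, which already exceeds $2^{n/2+o(n)}$ for $M=2^{n/2}$. Any pairing-based proof therefore needs an extra ingredient: it must start within a $2^{o(n)}$ factor of the target width and use only $o(n)$ rounds, and your sketch does not supply this.

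Finally, the hypothesis is only $\sigma>\sqrt2\eta_{1/2}(\cL)$, i.e.\ $\varepsilon=1/2$. Under it the coset masses $\rho_{\sqrt2\sigma}(2\cL+c)$ are only guaranteed to agree up to a factor of $3$, so there is no ``$\varepsilon$-small distortion'' to exploit. The argument must be exact except for how the matched pairs' coset distribution is produced. Relatedly, the $2\cL$ identity itself needs no smoothing. The half-difference $(X-Y)/2$ ranges over all of $\cL$, so summing it out gives the constant $\rho_{\sigma'/\sqrt2}(\cL)$ exactly, and \cref{lem:shifted-mass-maximum}, being an inequality, plays no role there.
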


We consider the following lattice problems. We use the search versions by default.
\begin{definition}
The $\SVP$ problem asks to find a nonzero vector $v\in\cL$ with
$\norm{v}=\lambda_1(\cL)$ given a basis of a full-rank lattice $\cL$.
\end{definition}

\begin{definition}
The $\CVP$ problem on input a basis of $\cL$ and a target vector $y$ asks to find a vector $v \in \cL$ such that $\norm{y-v}=\dist(y,\cL)$.
\end{definition}

\begin{definition}
For $0<\alpha<1/2$, the $\alpha$-$\BDD$ problem on input a basis of $\cL$ and a target vector $y$ asks to find the unique closest lattice vector to $y$, given the promise $\dist(y,\cL)<\alpha\lambda_1(\cL)$.
\end{definition}

We use a time-truncated version \cite{Hhan26} of 
the preprocessing BDD algorithm from \cite{SICOMP:ACKS25}.
\begin{theorem}
\label{thm:bdd-preprocessing}
There is a randomized classical preprocessing algorithm for the
$n^{-1/3}$-$\BDD$ problem such that:
\begin{itemize}[nosep]
    \item its preprocessing runs in worst-case time and space
    $2^{n/2+o(n)}$ and succeeds with constant probability. The
    resulting advice has size $2^{o(n)}$;
    \item given a successful preprocessing output, any instance of
    the $n^{-1/3}$-$\BDD$ problem can be deterministically solved in
    time $2^{o(n)}$ and polynomial additional space.
\end{itemize}
\end{theorem}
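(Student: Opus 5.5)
The statement is essentially a repackaging of the preprocessing BDD algorithm of \cite{SICOMP:ACKS25}, so I would follow that construction. The new work is to (i) specialize the decoding radius to $n^{-1/3}\lambda_1(\cL)$, which makes the advice subexponential, and (ii) truncate the running time so that the bounds hold in the worst case rather than in expectation.

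\textbf{Advice.} The advice is a list $W=(w_1,\ldots,w_N)$ of dual lattice vectors drawn from $D_{\cL^*,s}$, with width $s\approx \sqrt n/\lambda_1(\cL)$ up to a polynomial factor. From $W$ we form the empirical periodic Gaussian
\[
f_W(t):=\frac1N\sum_{k}\cos\bigl(2\pi\ip{w_k}{t}\bigr)
\]
and its gradient.

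\textbf{Query.} Given $t$ with $\dist(t,\cL)<n^{-1/3}\lambda_1(\cL)$, run the gradient-ascent (or Newton-type) iteration of \cite{SICOMP:ACKS25} on $f_W$ for $\mathrm{poly}(n)$ steps. This drives $t$ to within inverse-polynomial distance of the closest lattice vector, after which rounding in the basis $B$ recovers that vector exactly.

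\textbf{Correctness and sample count.} For $\dist(t,\cL)=\alpha\lambda_1$, the true periodic Gaussian satisfies $F_{1/s}(t)\ge \exp(-O(\alpha^2 n))$, and its gradient is a contraction toward the closest vector up to an error of order $2^{-\Omega(n)}$. Empirical estimates are accurate to relative error $1/\mathrm{poly}(n)$ once $N\gtrsim \mathrm{poly}(n)\exp(O(\alpha^2 n))$, by Hoeffding's inequality. To make this hold uniformly along the whole query trajectory, I would take a union bound over a fine net of the fundamental region (as \cite{SICOMP:ACKS25} does) or reuse their deterministic-advice argument. With $\alpha=n^{-1/3}$ this gives $N=2^{O(n^{1/3})}=2^{o(n)}$. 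Hence the advice has size $2^{o(n)}$ and each query costs $N\cdot\mathrm{poly}(n)=2^{o(n)}$ time and polynomial extra space.

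\textbf{Preprocessing.} Generate the $N\le 2^{n/2}$ samples with \cref{thm:dgs-sampling} applied to $\cL^*$, taking $\kappa=n$. This requires $s>\sqrt2\,\eta_{1/2}(\cL^*)$, which holds for $s=C\sqrt n/\lambda_1(\cL)$ by Banaszczyk's bound $\eta_{1/2}(\cL^*)\le O(\sqrt n)/\lambda_1(\cL)$. Since $\lambda_1(\cL)$ is unknown, I would compute a $2^{O(n)}$-approximation by LLL and try the polynomially many widths $s=2^{i}s_0$. For each candidate width, I would validate the advice on self-generated BDD instances: a lattice point plus a small random error, whose answer is known. I would keep the first width that passes. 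This validation is where the constant success probability comes from.

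\textbf{Time truncation.} Run the sampler with an explicit clock and abort at $2^{n/2+o(n)}$ steps. Any expected-time or $\exp(-\Omega(\kappa))$-statistical-distance slack is then absorbed by Markov's inequality and costs only a constant in success probability.

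\textbf{Main obstacle.} I expect the hardest step to be checking that the ACKS convergence argument really degrades only like $\exp(O(\alpha^2 n))$ in the sample count, uniformly over all queries. My first attempt would be to verify that their concentration lemma for $f_W$ and $\nabla f_W$ holds simultaneously over a net near $\cL$, with $N$ polynomial in the inverse of the minimum value of $f$ on the $\alpha\lambda_1$-neighbourhood.
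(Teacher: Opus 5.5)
Your overall route matches the paper's. The paper does not reprove this statement; it invokes the time-truncated version from \cite{Hhan26} of the preprocessing BDD algorithm of \cite{SICOMP:ACKS25}, which uses dual Gaussian advice and a periodic-Gaussian decoder in the style of \cite{CCC:DRS14}. Your count $N=\mathrm{poly}(n)\exp(O(\alpha^2n))=2^{O(n^{1/3})}$ at $\alpha=n^{-1/3}$ is the right one. The net argument is also fine, because only the logarithm of the net size, which is polynomial in $n$, enters the union bound.

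The width-selection step, however, is unsound. Validation on self-generated random instances only tests average-case behaviour, while the theorem requires that a successful advice decode \emph{every} instance. Nothing in your argument prevents the first width that passes from being a bad one:
\begin{itemize}[nosep]
\item A too-large $s$ corresponds to a guess $\tilde\lambda<\lambda_1(\cL)$. It passes its own tests at radius $n^{-1/3}\tilde\lambda$ with the same sample budget, yet can fail on targets at distance close to $n^{-1/3}\lambda_1(\cL)$.
\item For a too-small $s$, the hypothesis of \cref{thm:dgs-sampling} may fail, and you have no control over the advice at all.
\end{itemize}
Relatedly, the constant success probability does not come from validation. For the correct width it comes from the samples being $e^{-\Omega(n)}$-close to i.i.d.\ $D_{\cL^*,s}$ and from the uniform concentration event holding. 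The fix is to drop validation: store advice for all $O(n)$ candidate widths, decode with each at query time, discard outputs not in $\cL$, and return the candidate closest to the target. This is correct because the BDD solution is the unique lattice vector within distance $\lambda_1(\cL)/2$.

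The final rounding step also fails as written. Rounding coefficients in $B$ recovers $v$ only when $\norm{t'-v}<1/(2\max_k\norm{b_k^*})$, where the $b_k^*$ are the dual basis vectors. For a skewed basis, even an LLL-reduced one, this threshold can be $2^{-\Omega(n)}\lambda_1(\cL)$, so inverse-polynomial closeness is not enough. Your additive-error concentration bound by itself certifies contraction only down to distance about $\mathrm{poly}(n)\lambda_1(\cL)/\sqrt N$.

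Use the advice itself instead. Pick $n$ linearly independent samples $w_k$; these exist with high probability because $s$ is above the smoothing parameter of $\cL^*$. All of them have norm at most $Cs\sqrt n=O(n/\lambda_1(\cL))$ by \cref{lem:dgs-tail}. Once $\norm{t'-v}\le c\lambda_1(\cL)/n$ for a small constant $c$, rounding $\ip{w_k}{t'}$ gives the integers $\ip{w_k}{v}$ exactly, and $v$ is the unique solution of the resulting $n\times n$ linear system. Alternatively, prove geometric contraction all the way down using the linearization $\nabla f_W(v+\epsilon)\approx-\frac{4\pi^2}{N}\sum_k w_kw_k^T\epsilon$ together with concentration of the empirical second-moment matrix.
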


\subsection{Useful lemmas}
\begin{lemma}
\label{lem:normalization-stability}
Let $x,y\in\R^n$ with $y\ne0$, and let $0\le\varepsilon<1$. If
$\norm{x-y}\le\varepsilon\norm y,$
then $x\ne0$ and
\[
\norm{
\frac{x}{\norm x}-\frac{y}{\norm y}
}
\le
\frac{2\varepsilon}{1-\varepsilon}.\]
\end{lemma}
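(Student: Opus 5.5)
The argument uses the triangle inequality twice and then a reverse triangle inequality. Set $a:=x/\norm x$ and $b:=y/\norm y$ once $x\neq 0$ is known.

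First, nonvanishing. By the reverse triangle inequality, $\norm x\ge \norm y-\norm{x-y}\ge (1-\varepsilon)\norm y>0$, since $\varepsilon<1$ and $y\ne0$. Hence $x\neq0$, and the normalization $x/\norm x$ is defined.

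Second, split the difference by inserting $x/\norm y$:
\[
\norm{\frac{x}{\norm x}-\frac{y}{\norm y}}
\le \norm{\frac{x}{\norm x}-\frac{x}{\norm y}}+\norm{\frac{x-y}{\norm y}}.
\]
The second term is at most $\varepsilon$ by hypothesis. The first term equals
\[
\norm x\left|\frac1{\norm x}-\frac1{\norm y}\right| =\frac{|\norm y-\norm x|}{\norm y}\le\frac{\norm{x-y}}{\norm y}\le\varepsilon .
\]
This already gives the bound $2\varepsilon$.

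Finally, $2\varepsilon\le 2\varepsilon/(1-\varepsilon)$ because $0<1-\varepsilon\le1$, so the stated bound follows. (If one prefers to divide by $\norm x$ instead of $\norm y$, the split via $y/\norm x$ gives exactly $2\varepsilon\norm y/\norm x\le 2\varepsilon/(1-\varepsilon)$ using the first step; either route works.)

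I expect no real obstacle here. The only point to handle carefully is the lower bound $\norm x\ge(1-\varepsilon)\norm y$, which gives both nonvanishing and the denominator $1-\varepsilon$.
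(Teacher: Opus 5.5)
Your proof is correct and is essentially the paper's argument. The paper inserts the intermediate vector $y/\norm{x}$ (your parenthetical alternative) and gets $2\norm{x-y}/\norm{x}\le 2\varepsilon/(1-\varepsilon)$, whereas your main route via $x/\norm{y}$ gives the slightly sharper bound $2\varepsilon$ and uses $\norm{x}\ge(1-\varepsilon)\norm{y}$ only to show $x\neq 0$; either suffices for the stated inequality.
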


\begin{proof}
The triangle inequality gives
$\norm x\ge\norm y-\norm{x-y}\ge(1-\varepsilon)\norm y>0.$
Then we have
\[
\norm{
\frac{x}{\norm x}-\frac{y}{\norm y}
}
=
\norm{
\frac{x-y}{\norm x} + y\left(\frac{1}{\norm x}-\frac{1}{\norm y}\right)
}
\le
\frac{\norm{x-y}}{\norm x}
+\norm y\left|\frac1{\norm x}-\frac1{\norm y}\right|
\le
\frac{2\norm{x-y}}{\norm x}
\le
\frac{2\varepsilon}{1-\varepsilon}.\qedhere
\]
\end{proof}

%% file: 3.derivatives.tex
\section{Preparation}
We use the following parameters in the remainder of this paper.
Let $\cL$ be a full-rank $n$-dimensional lattice and let $\lambda=\lambda_1(\cL)$.
For $t_0:=\beta^2/(4e\ln 2)=0.23147\ldots$, fix $t_0<t<1/4$. Let $d>0$. We define
\[    g=\frac12\log_2(t/t_0)>0, \qquad s^2 = \frac{4nt\ln 2}{\pi d^2}.
\]
In most of our analysis, we assume that $\lambda \le d \le (1+1/n) \lambda.$ 
In the algorithm, $d$ is guessed from a polynomial-size list and the
main subroutine is run for every value in the list. Only a value
satisfying $\lambda\le d\le(1+1/n)\lambda$ is used in the correctness
analysis; outputs from the other guesses do not affect the
time complexity and correctness.
\begin{lemma}[{\cite[Corollary 3.2]{Hhan26}}]
    \label{cor:working-scale}
Let $\lambda=\lambda_1(\cL)$, let
$\lambda\le d\le(1+1/n)\lambda$, and let $t_0<t<1/4$.
Then,
\[
 \rho_{1/s}(\cL)-1\le2^{-(1/2+g-o(1))n},\qquad
 \rho_{\sqrt2/s}(\cL)-1\le2^{-(g-o(1))n},\qquad
 s>\sqrt2 \eta_{1/2}(\cL^*).
\]
\end{lemma}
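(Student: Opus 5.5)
The plan is to reduce all three bounds to one lattice-point counting estimate for $\cL$ combined with a one-variable optimization. Write $\lambda=\lambda_1(\cL)$. By the choice of $s$, for every $x$,
\[
\rho_{1/s}(x)=\exp(-\pi s^2\norm x^2)=2^{-4nt\norm x^2/d^2},\qquad \rho_{\sqrt2/s}(x)=2^{-2nt\norm x^2/d^2}.
\]
Since $d\le(1+1/n)\lambda$, we have $\norm x^2/d^2\ge(1-O(1/n))\norm x^2/\lambda^2$. This replacement of $d$ by $\lambda$ changes the exponents only by a multiplicative $1-O(1/n)$. On the range $\norm x\le \mathrm{poly}(n)\lambda$ this costs an additive $O(1)$ in the exponent. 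Beyond that range the Gaussian tail is negligible anyway. So I will work with $\lambda$ in place of $d$.

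\textbf{Counting input.} I will use the spherical-code bound of \cite{KL78} in the form of \cite[Lemma~3]{EPRINT:PujSte09}: for every $r\ge1$,
\[
\left|\{x\in\cL:\norm x\le r\lambda\}\right|\le (\beta r)^{n+o(n)},
\]
with $\beta=2^{0.4014\ldots}$. I will then split $\cL\setminus\{0\}$ into thin shells $r\lambda\le\norm x<(r+1/n)\lambda$ for $r\in\{1,1+1/n,\ldots\}$. Shells with $r\ge\mathrm{poly}(n)$ are absorbed by the tail, because the counting bound grows only like $r^n$ while the Gaussian decays like $2^{-\Omega(nr^2)}$. This leaves polynomially many shells, so
\[
\rho_{1/s}(\cL)-1\le 2^{o(n)}\max_{r\ge1}2^{n(\log_2(\beta r)-4tr^2)},
\]
and similarly for $\rho_{\sqrt2/s}(\cL)-1$ with $2t$ in place of $4t$.

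\textbf{Optimization.} Set $\phi_c(r)=\log_2\beta+\log_2 r-c\,t r^2$ and take its unconstrained maximum over $r>0$. The maximizer is $r^2=1/(2c\,t\ln2)$, with value
\[
\phi_c^{\max}=\tfrac12\log_2\!\bigl(\beta^2/(2e\,c\,t\ln 2)\bigr).
\]
For $c=4$ this gives $\tfrac12\log_2(t_0/(2t))=-\tfrac12-g$. For $c=2$ it gives $\tfrac12\log_2(t_0/t)=-g$. Here I use $t_0=\beta^2/(4e\ln2)$ and $g=\tfrac12\log_2(t/t_0)$. Restricting to $r\ge1$ can only decrease the maximum, so the first two claims follow.

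\textbf{Smoothing claim.} The condition $s>\sqrt2\,\eta_{1/2}(\cL^*)$ is equivalent to $\rho_{\sqrt2/s}((\cL^*)^*\setminus\{0\})=\rho_{\sqrt2/s}(\cL\setminus\{0\})<1/2$. This holds because of the definition of the smoothing parameter and the monotonicity of $\rho_{1/\sigma}(\cL\setminus\{0\})$ in $\sigma$. It then follows from the second bound, since $g>0$ makes $2^{-(g-o(1))n}<1/2$ for large $n$.

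\textbf{Main obstacle.} The step I expect to need the most care is applying the counting bound in exactly the form that yields the constant $\beta^{n}r^{n}$. In particular, the $o(n)$ losses must stay uniform over $r$, including the shell discretization and the use of $d$ versus $\lambda$. I also need to confirm that the \cite{KL78} bound covers all $r\ge1$ rather than only a bounded range. If it does not, I would first try to patch large $r$ with the crude packing bound $(2r+1)^n$, since that regime is dominated by the Gaussian tail anyway.
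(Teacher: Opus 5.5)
Your proposal is correct and follows essentially the route behind the cited result. In this paper the first two bounds are just \cref{lem:shell} with $k=0$ at $\sigma=1/s$ and $\sigma=\sqrt2/s$: here $n\sigma^2/\lambda^2=\Theta(1)$, and the right-hand side evaluates to $(t_0/(2t))^{n/2}(d/\lambda)^n$ and $(t_0/t)^{n/2}(d/\lambda)^n$ with $(d/\lambda)^n\le e$. That bound is precisely the unconstrained maximum of $(\beta r)^n\rho_\sigma(r\lambda)$ which your Kabatiansky--Levenshtein shell count and optimization produce, and your monotonicity translation then gives the smoothing claim.

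One small correction: replacing $d$ by $\lambda$ costs an additive $O(r^2)$, not $O(1)$, in the exponent on the shell of radius $r\lambda$. This is harmless, since only bounded $r$ matter and large $r$ is killed by the crude $(2r+1)^n$ count; alternatively, keep $d$ in the Gaussian and absorb $(d/\lambda)^n\le e$ into the lattice-point count.
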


\begin{lemma}[{\cite[Lemma~2.8]{Hhan26}}]
\label{lem:dgs-tail}
For every $c>0$, there is $C_c>0$ such that, for every lattice
$\cL$, $\sigma>0$,
\[
  \Prob_{X\sim D_{\cL,\sigma}}[\norm{X}>C_c \sigma\sqrt n]\le 2^{-2cn}\qquad
\text{and}\qquad
  \E_{X\sim D_{\cL,\sigma}}\left[
     \norm{X}^2\ind_{\{\norm{X}>C_c \sigma\sqrt n\}}
  \right]
  \le \sigma^2n\,2^{-cn}.
\]
\end{lemma}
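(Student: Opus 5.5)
The plan is to derive both bounds from a single Gaussian-mass comparison for lattice points outside a ball, the standard Banaszczyk-type rescaling argument, using only the Poisson summation identity from the preliminaries. Fix a lattice $\cL$, a width $\sigma>0$, a radius $R>0$, and a factor $0<a<1$. For every $x$ with $\norm x>R$ there is the pointwise identity
\[
\rho_\sigma(x)=\rho_{\sigma/a}(x)\,e^{-\pi(1-a^2)\norm x^2/\sigma^2}\le e^{-\pi(1-a^2)R^2/\sigma^2}\rho_{\sigma/a}(x).
\]
Summing over $\cL\setminus R\cdot\text{ball}$ bounds the tail mass by $e^{-\pi(1-a^2)R^2/\sigma^2}\rho_{\sigma/a}(\cL)$.

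The next step compares $\rho_{\sigma/a}(\cL)$ with $\rho_\sigma(\cL)$. By Poisson summation,
\[
\rho_{\sigma/a}(\cL)=\frac{(\sigma/a)^n}{\det\cL}\,\rho_{a/\sigma}(\cL^*)\le a^{-n}\frac{\sigma^n}{\det\cL}\,\rho_{1/\sigma}(\cL^*)=a^{-n}\rho_\sigma(\cL).
\]
The inequality holds termwise because $a<1$ makes $\rho_{a/\sigma}\le\rho_{1/\sigma}$ on the dual lattice. Taking $a^2=1/2$ and $R=C\sigma\sqrt n$ gives
\[
\Prob[\norm X>C\sigma\sqrt n]\le 2^{n/2}e^{-\pi C^2n/2}.
\]
This is at most $2^{-2cn}$ once $C$ is a large enough constant depending only on $c$.

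For the second-moment bound, the quadratic factor is absorbed into a wider Gaussian. Apply $y\le e^{y}$ with $y=\pi\norm x^2/(2\sigma^2)$ to get
\[
\norm x^2\rho_\sigma(x)\le\frac{2\sigma^2}{\pi}\rho_{\sqrt2\sigma}(x).
\]
Then apply the tail argument above to $\rho_{\sqrt2\sigma}$ with radius $R=C\sigma\sqrt n$ and some $a<1$, say $a^2=1/2$. This yields
\[
\E\left[\norm X^2\ind_{\{\norm X>R\}}\right]\le\frac{2\sigma^2}{\pi}\,e^{-\pi(1-a^2)C^2n/2}\,\frac{\rho_{\sqrt2\sigma/a}(\cL)}{\rho_\sigma(\cL)}.
\]
The same Poisson comparison, now with ratio $\sqrt2/a>1$, gives $\rho_{\sqrt2\sigma/a}(\cL)\le(\sqrt2/a)^n\rho_\sigma(\cL)$, which equals $2^n\rho_\sigma(\cL)$ for $a^2=1/2$. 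The resulting bound is
\[
\E\left[\norm X^2\ind_{\{\norm X>R\}}\right]\le\frac{2\sigma^2}{\pi}\,2^{n}e^{-\pi C^2n/4}\le\sigma^2n\,2^{-cn}
\]
for $C$ large depending on $c$. Finally, take $C_c$ to be the maximum of the two constants.

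There is no serious obstacle. The only point needing care is the direction of the Poisson comparison when widening the Gaussian: for $\tau\ge1$ one should use $\rho_{\tau\sigma}(\cL)\le\tau^n\rho_\sigma(\cL)$, not the reverse. One must also check that all constants are uniform in $\cL$ and $\sigma$, which holds because only ratios of masses of the same lattice ever appear.
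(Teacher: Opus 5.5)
Your proof is correct. The pointwise rescaling $\rho_\sigma(x)\le e^{-\pi(1-a^2)R^2/\sigma^2}\rho_{\sigma/a}(x)$ off the ball holds. So does the Poisson comparison $\rho_{\tau\sigma}(\cL)\le\tau^n\rho_\sigma(\cL)$ for $\tau\ge1$, as does the absorption $\norm{x}^2\rho_\sigma(x)\le\frac{2\sigma^2}{\pi}\rho_{\sqrt2\sigma}(x)$. The resulting bounds $2^{n/2}e^{-\pi C^2n/2}$ and $\frac{2\sigma^2}{\pi}2^ne^{-\pi C^2n/4}$ are uniform in $\cL$ and $\sigma$; for the second you only need $n\ge1>2/\pi$.

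The paper does not reprove this lemma; it imports it from \cite[Lemma~2.8]{Hhan26}, so there is no in-paper proof to compare against. Your argument is the standard Banaszczyk-type rescaling argument and is a complete, self-contained justification using only the Poisson summation identity from the preliminaries.
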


The following lemma is a slight extension of \cite[Lemma~3.1]{Hhan26}. The proof is unchanged.
\begin{lemma}
\label{lem:shell}
For every fixed integer $k\ge0$, uniformly
over $\cL$ and $\sigma>0$ satisfying
$n\sigma^2/\lambda^2\le n^{O(1)}$,
\[
 \sum_{x\in\cL\setminus\{0\}}
 \left(1+\frac{\norm{x}}{\lambda_1(\cL)}\right)^k
 \rho_\sigma(x)
 \le
 2^{o(n)}
 \left(\frac{\beta^2n\sigma^2}{2\pi e\lambda_1^2(\cL)}\right)^{n/2}.\]
\end{lemma}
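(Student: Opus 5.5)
The plan follows the proof of \cite[Lemma~3.1]{Hhan26}: reduce the weighted sum to a count of lattice points in shells, and bound each shell count with the spherical-code lattice-point bound, i.e.\ the constant $\beta$. Write $\lambda=\lambda_1(\cL)$. Partition $\cL\setminus\{0\}$ into shells
\[
S_j=\{x\in\cL:\ \lambda(1+1/n)^{j}\le\norm x<\lambda(1+1/n)^{j+1}\},\qquad j\ge0.
\]
The bound from \cite{KL78} and \cite[Lemma~3]{EPRINT:PujSte09} gives
\[
|\{x\in\cL:\ 0<\norm x\le r\lambda\}|\le 2^{o(n)}(\beta r)^n
\]
for every $r\ge1$. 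On $S_j$, with $r_j=(1+1/n)^j$, the weight satisfies $(1+\norm x/\lambda)^k\le(2r_{j+1})^k$ and $\rho_\sigma(x)\le e^{-\pi r_j^2\lambda^2/\sigma^2}$. The sum is therefore at most
\[
\sum_j 2^{o(n)}(2r_{j+1})^k(\beta r_{j+1})^n e^{-\pi r_j^2\lambda^2/\sigma^2}.
\]

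Next, maximize the summand over $r$. Set $h(r)=n\ln(\beta r)-\pi r^2\lambda^2/\sigma^2$. Its maximum over all $r>0$ is at $r_*^2=n\sigma^2/(2\pi\lambda^2)$, with value
\[
\frac n2\ln\!\left(\frac{\beta^2n\sigma^2}{2\pi e\lambda^2}\right).
\]
This gives exactly the claimed right-hand side. If $r_*<1$, the maximum over the allowed range $r\ge1$ is smaller than the unconstrained maximum, so the bound still holds.

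The step I expect to be the main obstacle is showing that the factors $(2r)^k$, $(1+1/n)^n$ and the number of shells are all $2^{o(n)}$. This is where the hypothesis $n\sigma^2/\lambda^2\le n^{O(1)}$ enters.
\begin{itemize}
\item Near the peak, $r\lesssim r_*\le n^{O(1)}$, so $(2r)^k=2^{O(k\log n)}=2^{o(n)}$.
\item The shifts $r_{j+1}$ versus $r_j$ and the Gaussian factor cost at most $e^{O(\pi r^2\lambda^2/(n\sigma^2))}$. Near the peak this is $e^{O(1)}$.
\item Only $n^{O(1)}$ shells have $r\le 2r_*+n$.
\end{itemize}

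For the tail $r\ge R:=\max(2r_*,n)$, the Gaussian decay dominates: $h(r)-h(R)\le-\Omega(r^2\lambda^2/\sigma^2)$. Combined with the polynomial weight $(2r)^k$, the tail sum is a geometrically decaying series bounded by $2^{-\Omega(n)}$ times the peak value. Summing both regimes yields
\[
2^{o(n)}\left(\frac{\beta^2n\sigma^2}{2\pi e\lambda^2}\right)^{n/2},
\]
uniformly in $k$ fixed, $\cL$ and $\sigma$, as required.
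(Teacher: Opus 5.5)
Your proposal is correct and is essentially the argument the paper relies on: it gives no proof of its own and says the proof of \cite[Lemma~3.1]{Hhan26} carries over unchanged, namely shell counting with the $2^{o(n)}(\beta r)^n$ lattice-point bound followed by maximizing $(\beta r)^n e^{-\pi r^2\lambda^2/\sigma^2}$ at $r_*^2=n\sigma^2/(2\pi\lambda^2)$, with the hypothesis keeping $r_*$, the weights $(2r)^k$ and the shell count polynomial. One minor slip: your tail inequality $h(r)-h(R)\le-\Omega(r^2\lambda^2/\sigma^2)$ is false near $r=R$, where the left side is $0$. It does hold for $r\ge 2R$, and the $O(n)$ shells in $[R,2R]$ are each at most $e^{h(R)}\le e^{h(r_*)-\Omega(n)}$ times polynomial factors, so the conclusion stands. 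Your second bullet is also unnecessary, since $(r_{j+1}/r_j)^n=(1+1/n)^n\le e$ uniformly.
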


We define the periodic Gaussian \cite{AR05,CCC:DRS14} for $\cL$ with the width $\sigma>0$ by
\begin{align}
F_\sigma(z):=\frac{\rho_\sigma(\cL+z)}{\rho_\sigma(\cL)}
=\E_{X\sim D_{\cL^*,1/\sigma}}
\left[e^{2\pi i\ip{X}{z}}\right]
\label{eqn:periodic-Gaussian}
\end{align}
where the last equality holds because of the Poisson summation formula.
Differentiating two representations, we have the following representations of its gradient:
\begin{align}
\nabla F_\sigma(z)
=-\frac{2\pi}{\sigma^2\rho_\sigma(\cL)}
\sum_{x\in\cL}(x+z)\rho_\sigma(x+z)
=2\pi i\E_{X\sim D_{\cL^*,1/\sigma}}\left[Xe^{2\pi i\ip{X}{z}}\right].
\label{eqn:periodic-Gaussian-gradient}
\end{align}
Recall $q$ is a prime number.
Let $B$ be the basis of $\cL$.
For $X\in\cL^*$, we use the following notation excessively:
\[
[X]_q := B^TX\bmod q\in\F_q^n.
\]
This is a residue representation of the coset $X+q\cL^*\in\cL^*/q\cL^*\cong(\Z/q\Z)^n=\F_q^n$.
Recall $s^2=\frac{4nt\ln 2}{\pi d^2}$ for $t_0<t<1/4$.
We define the $q$-ary periodic Gaussian for $w\in\F_q^n$ and its gradient by
\begin{align}
f(w) := F_{1/s}\left(\frac{Bw}{q}\right) = \E_{X\sim D_{\cL^*,s}}
\left[
\omega_q^{\ip{[X]_q}{w}}
\right],\quad
G(w):=
\nabla F_{1/s}\left(\frac{Bw}{q}\right)=2\pi i\E_{X\sim D_{\cL^*,s}}
\left[
X\omega_q^{\ip{[X]_q}{w}}
\right]
\label{eqn:qary-affine-transforms}
\end{align}
using the second representations of \cref{eqn:periodic-Gaussian,eqn:periodic-Gaussian-gradient}.
For $v\in\cL$ and $w=B^{-1}v\bmod q$, periodicity and the first representations in
\cref{eqn:periodic-Gaussian,eqn:periodic-Gaussian-gradient} give
\begin{align}
f(w)=\frac{1}{\rho_{1/s}(\cL)}\sum_{x\in\cL}\rho_{1/s}\left(x+\frac{v}{q}\right),\qquad
G(w)=-\frac{2\pi s^2}{\rho_{1/s}(\cL)}\sum_{x\in\cL}\left(x+\frac{v}{q}\right)\rho_{1/s}\left(x+\frac{v}{q}\right).
\label{eqn:odd-local-primal}
\end{align}
We call these representation of $f$ and $G$ as \emph{primal}, and the second representation in \cref{eqn:qary-affine-transforms} as \emph{dual}.
It is not hard to see that $f$ is even and $G$ is odd from the first representations of \cref{eqn:periodic-Gaussian,eqn:periodic-Gaussian-gradient}.

The following lemma shows that the gradient of the periodic Gaussian at $v/q$ has direction close to $v$.
\begin{lemma}
\label{lem:odd-local}
Let $t_0<t<1/4$ and $\lambda=\lambda_1(\cL)\le d\le(1+1/n)\lambda.$
Let $q$ be an odd prime.
Let $v\in\cL$ be a shortest vector and let
$w=B^{-1}v\bmod q$.
It holds that

\[
f(w)=
2^{-4tn\lambda^2/(d^2q^2)}
\left(1+O(2^{-\Omega(n)})\right),\qquad
\norm{
-\frac{G(w)}{f(w)}-\frac {2\pi s^2 v}q
}\le
2^{-\Omega(n)}s^2 \lambda.\]
\end{lemma}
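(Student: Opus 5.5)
The plan is to work entirely with the primal representation \cref{eqn:odd-local-primal}. I isolate the term $x=0$ and bound every other term relative to it through a parallelogram identity that is valid for every odd prime $q$, including $q=3$. Substituting $x=-y$, write $z_y:=v/q-y$ for $y\in\cL$, so that
\[
f(w)=\frac{\sum_{y\in\cL}\rho_{1/s}(z_y)}{\rho_{1/s}(\cL)},\qquad -\frac{G(w)}{f(w)}=2\pi s^2\,\frac{\sum_{y\in\cL}z_y\rho_{1/s}(z_y)}{\sum_{y\in\cL}\rho_{1/s}(z_y)} .
\]
The main term is $\rho_{1/s}(v/q)=\exp(-\pi s^2\lambda^2/q^2)$. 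Since $\pi s^2=4nt\ln 2/d^2$, it equals exactly $2^{-4tn\lambda^2/(d^2q^2)}$. By \cref{cor:working-scale}, $\rho_{1/s}(\cL)=1+2^{-\Omega(n)}$. So both claims reduce to showing
\[
R_0:=\sum_{y\ne0}\frac{\rho_{1/s}(z_y)}{\rho_{1/s}(v/q)}\le 2^{-\Omega(n)}
\quad\text{and}\quad
R_1:=\sum_{y\ne0}\norm{y}\frac{\rho_{1/s}(z_y)}{\rho_{1/s}(v/q)}\le 2^{-\Omega(n)}\lambda .
\]
Indeed, $-G/f-2\pi s^2v/q=-2\pi s^2\sum_y y\rho(z_y)/\sum_y\rho(z_y)$, whose norm is at most $2\pi s^2R_1$. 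The factor $2\pi$ is absorbed into $2^{-\Omega(n)}$.

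\textbf{Key step (parallelogram).} Apply $2\norm a^2+2\norm b^2=\norm{a+b}^2+\norm{a-b}^2$ with $a=y-v/q$ and $b=v/q$. This gives
\[
\norm{z_y}^2-\norm{v/q}^2=\tfrac12\norm y^2+\tfrac12\norm{y-2v/q}^2-\tfrac{2\lambda^2}{q^2}.
\]
For $y\notin\{0,v\}$ we have $\norm{y-v}\ge\lambda$ and $\norm{v-2v/q}=(1-2/q)\lambda$. Hence $\norm{y-2v/q}\ge\lambda-(1-2/q)\lambda=2\lambda/q$, and the exponent is at least $\tfrac12\norm y^2$. Therefore
\[
\sum_{y\notin\{0,v\}}\frac{\rho_{1/s}(z_y)}{\rho_{1/s}(v/q)}\le\rho_{\sqrt2/s}(\cL)-1\le 2^{-(g-o(1))n}
\]
by \cref{cor:working-scale}. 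The remaining term $y=v$ is a single term with
\[
\norm{v-v/q}^2-\norm{v/q}^2=(1-2/q)\lambda^2\ge\lambda^2/3 .
\]
Its ratio is therefore at most $\exp(-\pi s^2\lambda^2/3)$. Since $s^2\lambda^2=\Theta(n)$ for $\lambda\le d\le(1+1/n)\lambda$, this is $2^{-\Omega(n)}$. Together these give $R_0\le2^{-\Omega(n)}$, and hence the formula for $f(w)$.

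\textbf{Gradient.} The same pointwise bound gives
\[
R_1\le\lambda e^{-\pi s^2\lambda^2/3}+\sum_{y\ne0}\norm y\,\rho_{\sqrt2/s}(y).
\]
I bound the last sum with \cref{lem:shell} for $k=1$ and $\sigma=\sqrt2/s$; the hypothesis $n\sigma^2/\lambda^2=n^{O(1)}$ holds. The lemma gives $\lambda\,2^{o(n)}\bigl(\beta^2n\sigma^2/(2\pi e\lambda^2)\bigr)^{n/2}$. Plugging in $s^2=4nt\ln2/(\pi d^2)$ and $d\le(1+1/n)\lambda$, the base is $\frac{\beta^2d^2}{4et\ln2\,\lambda^2}=(1+O(1/n))\,t_0/t$. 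The bound is therefore $\lambda\,2^{-(g-o(1))n}$, which yields the second claim.

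The main obstacle I expect is the handling of the terms $y$ with $\langle y,v\rangle>0$ when $q$ is small. The crude estimate $\norm y^2-2\norm y\lambda/q\ge(1-2/q)\norm y^2$ loses a constant factor in the Gaussian width and would require $(1-2/q)t>t_0$. The parallelogram identity above is the device that avoids this loss. It splits off the distance from $y$ to $2v/q$, which is at least $2\lambda/q$ for every $y\notin\{0,v\}$ and so exactly cancels the $2\lambda^2/q^2$ term. This leaves the full half-width mass $\rho_{\sqrt2/s}$, already controlled by \cref{cor:working-scale}, and the single exceptional term $y=v$.
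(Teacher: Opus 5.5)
Your proof is correct and follows the paper's argument: you isolate the principal term in the primal representation, bound every other term pointwise by $\rho_{\sqrt2/s}$ using the shortest-vector property, treat the exceptional term ($x=-v$, your $y=v$) separately via $e^{-\pi s^2(1-2/q)\lambda^2}$, and finish with \cref{lem:shell} for $k=1$ and \cref{cor:working-scale}. The only difference is cosmetic: the paper squares $\norm{x+v}\ge\lambda$ (your $\norm{y-v}\ge\lambda$ with $x=-y$) to get $2\ip{x}{v}\ge-\norm{x}^2$, which directly gives the slightly stronger exponent bound $(1-1/q)\norm{x}^2$; the parallelogram identity is therefore not needed to avoid the loss you describe, since that loss only arises from the cruder Cauchy--Schwarz estimate.
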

\begin{proof}
Recall $s^2=\frac{4nt\ln 2}{\pi d^2}$.
For $x\in\cL\setminus\{0,-v\}$, it holds that $\norm{x+v}\ge\lambda$, which gives $2\ip{x}{v}\ge-\norm{x}^2$ by squaring both sides. Therefore,
\begin{align}
\norm{x+v/q}^2-\norm{v/q}^2
=\norm{x}^2+\frac{2}{q}\ip{x}{v}
\ge\left(1-\frac1q\right)\norm{x}^2.
\label{eqn:odd-local-gap}
\end{align}
By the definition of $\rho_{1/s}$ and \cref{eqn:odd-local-gap}, the following holds for $x\in\cL\setminus\{0,-v\}$
\begin{align}
\frac{\rho_{1/s}(x+v/q)}{\rho_{1/s}(v/q)}
=e^{-\pi s^2(\norm{x+v/q}^2-\norm{v/q}^2)}
\le e^{-\pi s^2(1-1/q)\norm{x}^2}
=\rho_{1/(s\sqrt{1-1/q})}(x)
\le
\rho_{\sqrt2/s}(x),
\label{eqn:odd-local-rho-comparison}
\end{align}
where the last inequality follows from $q\ge3$ and the fact that $\rho_\sigma(x)$ is increasing in $\sigma$.
\cref{lem:shell} with $\sigma=\sqrt2/s$ and $k=1$ gives, using $s^2=\frac{4nt\ln 2}{\pi d^2}$ and $g=\frac12\log_2(t/t_0)$,
\begin{align}
\sum_{x\in\cL\setminus\{0\}}
\left(1+\frac{\norm{x}}{\lambda}\right)
\rho_{\sqrt2/s}(x)
&\le
2^{o(n)}
\left(
\frac{\beta^2n}{\pi e\lambda^2s^2}
\right)^{n/2}
=
2^{o(n)}
\left(
\frac{t_0d^2}{t\lambda^2}
\right)^{n/2}
\le
2^{-(g-o(1))n}.
\label{eqn:odd-local-shell}
\end{align}
For $x=-v$, we have
\[
\frac{\rho_{1/s}(-v+v/q)}{\rho_{1/s}(v/q)}
=e^{-\pi s^2(1-2/q)\lambda^2}
\le e^{-\pi s^2\lambda^2/3}
=2^{-\Omega(n)}\]
using $q\ge 3.$
Combining this bound with \cref{eqn:odd-local-rho-comparison,eqn:odd-local-shell} gives
\begin{align}
\sum_{x\in\cL\setminus\{0\}}
\left(1+\frac{\norm{x}}{\lambda}\right)
{\rho_{1/s}(x+v/q)}
\le 2^{-\Omega(n)} \rho_{1/s}(v/q).
\label{eqn:odd-local-tail}
\end{align}
By \cref{cor:working-scale}, $\rho_{1/s}(\cL)=1+2^{-\Omega(n)}$. It follows from \cref{eqn:odd-local-primal,eqn:odd-local-tail} that
\[
f(w)
=\rho_{1/s}(v/q)\left(1+2^{-\Omega(n)}\right)
=2^{-4tn\lambda^2/(d^2q^2)}\left(1+2^{-\Omega(n)}\right)
.
\]
Moreover, $\norm{x+v/q}\le\norm{x}+\lambda=\lambda(1+\norm{x}/\lambda)$ for every nonzero $x\in\cL$.
Hence, the gradient representation in \cref{eqn:odd-local-primal} and the triangle inequality, together with \cref{eqn:odd-local-tail}, give
\[
\norm{G(w)+\frac{2\pi s^2}{\rho_{1/s}(\cL)}\frac{v}{q}\rho_{1/s}(v/q)}
=\frac{2\pi s^2}{\rho_{1/s}(\cL)} \norm{\sum_{x\in\cL\setminus\{0\}}\left(x+\frac{v}{q}\right)\rho_{1/s}\left(x+\frac{v}{q}\right)}
\le
\frac{2\pi s^2\rho_{1/s}(v/q)}{\rho_{1/s}(\cL)}
2^{-\Omega(n)}\lambda.\]
Finally, dividing the last inequality by $f(w)=\frac{\rho_{1/s}(v/q)}{\rho_{1/s}(\cL)}\left(1+2^{-\Omega(n)}\right)$ gives the second bound.
\end{proof}

%% file: 4.svp.tex
\section{The SVP algorithm}
\label{sec:svp}

We prove the following result.
\begin{theorem}
\label{thm:main-svp}
There is a randomized classical algorithm that solves $\SVP$ with probability at least $2/3$ in worst-case time and space $2^{n/2+o(n)}$.
\end{theorem}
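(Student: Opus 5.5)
We will prove \cref{thm:main-svp} by assembling the coset gradient difference tree from the overview, with the parameters fixed as follows. We take $p=2^{m-1}$ with $m\to\infty$ slowly (say $p=\Theta(\log n)$, so that $p^{O(p)}=2^{o(n)}$). We take $q$ an odd prime with $\log_2 q\approx\frac12\log_2 p\cdot(1+o(1))$, chosen so that $4tp/q^2=o(1)$; for instance $q\approx\sqrt{p}\cdot\log p$. Then $Q=q^{n/m}=2^{n/2+o(n)}$, rounding the $H_j$ dimensions to integers.

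\textbf{Step 1: setup.} The algorithm first guesses $d$ from a polynomial list, so that $\lambda\le d\le(1+1/n)\lambda$ holds for some guess. For each guess it runs the preprocessing of \cref{thm:bdd-preprocessing} once. It then samples $pN$ vectors from $D_{\cL^*,s}$ with $N=Q\,2^{o(n)}$ using \cref{thm:dgs-sampling}; this is valid because \cref{cor:working-scale} gives $s>\sqrt2\eta_{1/2}(\cL^*)$. Next it draws a uniform $U\in\GL_n(\F_q)$ and uniform $\boldsymbol\delta$. It builds the level lists by matching on $H_j$ components, accumulates $V_{\boldsymbol\delta}(b)$, and applies an FFT over $H_m$ to obtain $\widehat A_{\boldsymbol\delta}(\theta)$ for all $\theta\in H_m$.

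\textbf{Step 2: complexity.} Each fixed pair survives a level with probability $1/Q$ over $\boldsymbol\delta$. By Markov's inequality, the list sizes are therefore $L_j\le Q\,p^{O(2^j)}$ with constant probability, and we abort if this bound is exceeded. This yields time and space $2^{n/2+o(n)}$.

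\textbf{Step 3: decoding from candidates.} For each $\theta$, form a candidate direction from the real or imaginary part of $c\cdot\widehat A_{\boldsymbol\delta}(\theta)$, over a polynomial grid of unit phases $c$. Normalize it by \cref{lem:normalization-stability} and scale it by $\ell\in\{d(1+1/n)^{-k}\}$. For each scaled vector $z$, call the BDD oracle on $z$ and keep the shortest nonzero lattice vector found. This costs $Q\cdot\mathrm{poly}\cdot 2^{o(n)}$ oracle calls. Correctness needs only the following: at $\theta_*$, $\widehat A$ is within relative error $o(n^{-1/3})$ of $c_{\boldsymbol\delta}G(w_v)f(w_v)^{p-1}$. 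By \cref{lem:odd-local}, that main term equals $-\frac{2\pi s^2}{q}f(w_v)^{p}(v+e)$ with $\norm{e}\le 2^{-\Omega(n)}\lambda$, and its norm is $\frac{s^2\lambda}{q}2^{-(4tp/q^2+o(1))n}=\frac{s^2\lambda}{q}2^{-o(n)}$. Once we have a direction within $n^{-1/3}/2$ of $v/\lambda$ and a scale within factor $1+1/n$, the BDD call returns $v$.

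\textbf{Step 4: the main obstacle, bounding the error.} We must show, with constant probability over $U$, $\boldsymbol\delta$ and the samples, that
\[\norm{\widehat A_{\boldsymbol\delta}(\theta_*)-c_{\boldsymbol\delta}G(w_v)f(w_v)^{p-1}}\le\frac{s^2\lambda}{q}Q^{-1/2}2^{o(n)}.\]
Since $Q^{-1/2}=2^{-n/4}$, this error is far smaller than the main term's $2^{-o(n)}$ factor. We split the error into two parts. The first is the statistical error $\widehat A-A$. We bound it by a second moment: expanding $\E\norm{\widehat A-A}^2$ over pairs of index tuples and grouping them by their overlap pattern in the tree, we use the fact that independent uniform $\delta$'s make each shared or unshared constraint cost a factor $1/Q$. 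This gives $Q^{p-1}$-normalized variance $\lesssim s^2Q^{-1}2^{o(n)}\cdot\E\norm{X}^2$-type bounds, with \cref{lem:dgs-tail} controlling the norms. The second is the deterministic error $A-c\,Gf^{p-1}$, namely the nonprincipal terms in the primal expansion \cref{eqn:intro_Adelta}. We bound their mean square over random $\boldsymbol\delta$ by Parseval (the $\delta$-average kills cross terms), reducing the problem to a sum of products $|G f\cdots f|^2$ over tuples that are not all equal to $w_v$. Rewriting this with primal Gaussians, we apply the average--difference map $\Psi$ along the tree. This factors the sum into discrete Gaussian masses over cosets, which we bound using \cref{lem:shifted-mass-maximum}, \cref{lem:shell} and \cref{cor:working-scale}. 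Each nonprincipal tuple loses at least a $Q^{-1}$ from the random $U$ averaging, since $U$ makes other residues hit the constrained cosets with probability $\approx 1/Q$. This second part is the genuinely hard computation. I expect the bookkeeping of which variables are free at each tree level to need an induction on $m$.

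Repeating the whole procedure $O(1)$ times amplifies the success probability to $2/3$.
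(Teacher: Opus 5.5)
\textbf{Gap: the choice of $q$.} Your architecture is the paper's: the difference tree over $\F_q^n=H_1\oplus\cdots\oplus H_m$, random $U$ and $\boldsymbol\delta$, Parseval over $\boldsymbol\delta$, the second moment grouped by overlap pattern, and the average--difference map. The parameter choice in your first paragraph, however, breaks the argument.

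You take $q\approx\sqrt p\log p$, so that $t':=4tp/q^2=o(1)$ and the signal $f(w_v)^p=2^{-(t'+o(1))n}$ is $2^{-o(n)}$. But $t'$ governs every lattice vector, not just $v$. For $x\in\cL$ with $\norm x\ll q\lambda$ one has $|f(w_x)|^{2p}\approx\exp(-2\pi s^2p\norm x^2/q^2)=2^{-2t'n\norm x^2/d^2}$.

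Among the nonprincipal terms at $\theta_*$ are the ``consistent'' ones, whose leaf frequencies are all $\pm U^{-T}w_x$.
\begin{itemize}
\item Such a term is present whenever $w_x-w_v\in U^T(H_1\oplus\cdots\oplus H_{m-1})$, which has probability about $1/Q$ over $U$.
\item It contributes about $(s^2\norm x/q)^2|f(w_x)|^{2p}$.
\item Hence the $\boldsymbol\delta$-averaged squared error at $\theta_*$ is at least about $(s^2\lambda/q)^2Q^{-1}\sum_x|f(w_x)|^{2p}$.
\item This sum is essentially $\rho_\sigma(\cL)$ at width $\sigma^2=\pi d^2/(2t'n\ln 2)$.
\end{itemize}
\Cref{lem:shell} bounds that mass by $2^{o(n)}(t_0/t')^{n/2}$, which is small only when $t'>t_0$.

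This is not an artifact of the analysis. For a Haar-random lattice, $\lambda\approx V_n^{-1/n}$ and Poisson summation gives $\rho_\sigma(\cL)\ge\sigma^n$, so the sum really is of order $(c/t')^{n/2}$. With $t'=\Theta(1/\log^2p)$ this is $2^{\omega(n)}\gg Q$, so the noise at $\theta_*$ genuinely swamps the signal. Your remark that each nonprincipal tuple ``loses a $Q^{-1}$'' is true but not enough. You also need the total nonprincipal mass before that loss to be subexponential, and that is exactly the packing condition $t'>t_0$.

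The variance half of your Step~4 fails for the same reason. The $Q^{|I|-1}$ overlap bound rests on $\sum_{\alpha\in H_{j+1}}V_j(\kappa+\alpha)\le 1+2^{-\Omega(n)}$. That needs $\rho_{\sigma_{m-1}}(K_{m-1})=1+2^{-\Omega(n)}$ for the index-$Q$ sublattice $K_{m-1}$ at width $\sigma_{m-1}=q/(\sqrt p\,s)$, which fails once $q/\sqrt p\to\infty$.

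\textbf{What the argument needs.} The ratio $q^2/p$ must sit in a bounded window. The signal beats the $2^{-n/4}$ error only if $t'<1/4$, and the competing Gaussian masses are controlled only if $t'>t_0$.

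The paper takes $q=(2+o(1))\sqrt p$, i.e.\ $\log_2q=\frac{m+1}{2}+o(1)$, together with $t_0<t<1/4$, so that $t'=t+o(1)$. This identity is what makes the exponents in \cref{lem:svp-random-lattice-masses,lem:svp-total-frequency-masses,lem:svp-target-increments} negative. For example, at $j=m-1$ the term $2q^{-n/m}\rho_{\sigma_{m-1}}(\cL\setminus\{0\})$ has exponent $\frac{m-1}{m}\log_2q-\frac m2-g$. That exponent is negative only if $\log_2 q\le\frac{m+1}{2}+g+O(1/m)$, i.e.\ roughly $q^2\lesssim 4p\,t/t_0$.

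So in this framework the signal is necessarily about $2^{-tn}$ with $t>t_0$, not $2^{-o(n)}$, and the method works only because $t_0<1/4$. Once $q$ is corrected, your remaining choices are harmless: $p=\Theta(\log n)$, a phase grid instead of real and imaginary parts, and the extra length guesses.
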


We will use the following parameters in the algorithm.
Let $p$ be the largest power of two satisfying $p\le n/\log^3n$, let $q$ be the smallest prime larger than $2\sqrt p$, and let $m=\log_2p+1$. Then,
\begin{align}
p=\Theta(n/\log^3n),\qquad q=(2+o(1))\sqrt p,\qquad \log_2q=\frac{m+1}{2}+o(1)
    \label{eqn:svp-parameter}
\end{align}
where the asymptotic size of $q$ is due to the prime number theorem.\footnote{For every fixed $\eta>0$, the prime number theorem gives
$\pi((1+\eta)2\sqrt{p})-\pi(2\sqrt{p})
=(\eta+o(1))\frac{2\sqrt{p}}{\log (2\sqrt{p})}\ge 1$ for all sufficiently large $p$.}

\subsection{The first estimation target}
We first explain that it is enough to approximate $G(w)f(w)^{p-1}$ where $f(w)^{p-1}$ is a scalar, but will be essential in the algorithm. 
The estimation we will see loses the unit factor $c$, which should be taken into account.
\begin{lemma}
\label{lem:svp-recovery}
Let $\lambda=\lambda_1(\cL)\le d\le(1+1/n)\lambda$ and  $t_0<t<1/4$, let $v\in\cL$ be a shortest vector, and let $w=B^{-1}v\bmod q$. Suppose that $A\in\C^n$ satisfies for some $|c|=1$ that
\begin{align}
\norm{A-cG(w)f(w)^{p-1}}\le\frac{s^2\lambda}{q}2^{-n/4+o(n)},
\label{eqn:svp-required-accuracy}
\end{align}
then one of the well-defined\footnote{That is, one with $\norm{\Re A} \neq 0$ or $\norm{\Im A}\neq 0$.} nonzero vectors $\Re A/\norm{\Re A}$ and $\Im A/\norm{\Im A}$ is within the Euclidean distance $2^{-\Omega(n)}$ of either $v/\lambda$ or $-v/\lambda$. For the corresponding unit vector $u$ (i.e., one of $\Re A/\norm{\Re A}$ or $\Im A/\norm{\Im A}$), the following holds for all sufficiently large $n$:\[
\norm{\pm du-v}<2\lambda/n.\]
\end{lemma}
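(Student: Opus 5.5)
Throughout, write $y:=cG(w)f(w)^{p-1}\in\C^n$. The plan is to compare $A$ to $y$, and then $y$ to a real multiple of $v$. By \cref{lem:odd-local}, $G(w)=-f(w)\bigl(\frac{2\pi s^2}{q}v+e_0\bigr)$ with $\norm{e_0}\le 2^{-\Omega(n)}s^2\lambda$, and $f(w)>0$. Hence
\[
y=-c\,f(w)^p\frac{2\pi s^2}{q}\,(v+e_1),\qquad \norm{e_1}\le 2^{-\Omega(n)}\lambda,
\]
where we absorb $q/(2\pi)$, which is $2^{o(n)}$. We also need the size of the main term. Since $q=(2+o(1))\sqrt p$ and $\lambda\le d$, \cref{lem:odd-local} gives
\[
f(w)^p=2^{-4tnp\lambda^2/(d^2q^2)}(1+o(1))\ge 2^{-(t+o(1))n}.
\]
As $t<1/4$, the hypothesis \cref{eqn:svp-required-accuracy} then yields
\[
\norm{A-y}\le \frac{s^2\lambda}{q}\,2^{-n/4+o(n)}\le 2^{-\Omega(n)}\,f(w)^p\frac{2\pi s^2}{q}\lambda .
\]
This comparison is the crux: the gap $1/4-t>0$ is exactly what makes the error exponentially small relative to the signal. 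Combining the two displays, $A=-c\,R\,(v+e)$ with $R:=f(w)^p\frac{2\pi s^2}{q}>0$ and $\norm{e}\le 2^{-\Omega(n)}\lambda$ (complex error).

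Next we separate real and imaginary parts. Write $c=a+ib$ with $a^2+b^2=1$. Then
\[
\Re A=-R\bigl(a v+\Re(ce)\bigr),\qquad \Im A=-R\bigl(b v+\Im(ce)\bigr).
\]
Both $\norm{\Re(ce)}$ and $\norm{\Im(ce)}$ are at most $\norm{e}\le 2^{-\Omega(n)}\lambda$. One of $|a|,|b|$ is at least $1/\sqrt2$; say $|a|\ge 1/\sqrt2$ (the other case is symmetric). Then
\[
\norm{-\Re A/R-a v}\le 2^{-\Omega(n)}\lambda\le 2^{-\Omega(n)}\norm{av},
\]
so \cref{lem:normalization-stability} applies with $\varepsilon=2^{-\Omega(n)}$. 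It shows that $\Re A\ne 0$ and that $\Re A/\norm{\Re A}$ is within $2^{-\Omega(n)}$ of $-\operatorname{sgn}(a)\,v/\lambda$.

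A subtle point is that the other normalized vector may be well defined but meaningless. The statement only claims that one of the two vectors is good, so this is fine. Note, however, that the proof designates which one is good via the larger of $|a|,|b|$. I would therefore phrase the conclusion existentially, leaving to the algorithm the task of trying both candidates (and both signs) with the BDD oracle.

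Finally, for that good unit vector $u$, there is a sign $\pm$ such that $\norm{\pm u-v/\lambda}\le 2^{-\Omega(n)}$. Then
\[
\norm{\pm du-v}\le d\,\norm{\pm u-v/\lambda}+\norm{(d/\lambda)v-v}\le (1+1/n)\lambda\,2^{-\Omega(n)}+\frac{\lambda}{n},
\]
using $\lambda\le d\le(1+1/n)\lambda$. The right-hand side is less than $2\lambda/n$ for all sufficiently large $n$.

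The only real obstacle is the exponent bookkeeping in the first paragraph. That step must confirm that $4tp/q^2\le t+o(1)$ under \cref{eqn:svp-parameter}, and that the $2^{o(n)}$ slack in the hypothesis, along with the factors of $q$ and $2\pi$, does not eat the constant gap $1/4-t$.
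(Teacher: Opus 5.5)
Your proof is correct and follows the paper's argument essentially step for step. You use \cref{lem:odd-local} to write $G(w)f(w)^{p-1}=-\frac{2\pi s^2}{q}f(w)^p(v+e)$, and the gap $1/4-t>0$ against $f(w)^p\ge 2^{-(t+o(1))n}$ to make the hypothesized error exponentially smaller than the signal. You then pick whichever of $|\Re c|,|\Im c|$ is at least $1/\sqrt2$, apply \cref{lem:normalization-stability}, and finish with the triangle inequality using $d-\lambda\le\lambda/n$. The exponent bookkeeping you flag is immediate, since $q>2\sqrt p$ already gives $4tp\lambda^2/(d^2q^2)<t$.
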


\begin{proof}
By \cref{lem:odd-local} and the choice of $q$ that $q^2=(4+o(1))p$, we have
\begin{align}
f(w)^p=2^{-(t-o(1))n},\qquad
-G(w)f(w)^{p-1}=\frac{2\pi s^2}{q}f(w)^p(v+e)
\label{eqn:svp-direction-signal}
\end{align}
for some $e\in\R^n$ satisfying $\norm e\le2^{-\Omega(n)}\lambda$.
By \cref{eqn:svp-required-accuracy,eqn:svp-direction-signal} and the triangle inequality,
\[\norm{A+\frac{2\pi s^2}{q}cf(w)^pv}
\le
\norm{A-cG(w)f(w)^{p-1}}
+\frac{2\pi s^2}{q}f(w)^p\norm e \le
\frac{s^2\lambda}{q}f(w)^p2^{-\Omega(n)}.
\]
where we use $t<1/4$ in \cref{eqn:svp-required-accuracy} to bound the first term.
Taking the real and imaginary parts gives
\[
\norm{\Re A+\frac{2\pi s^2}{q}(\Re c)f(w)^pv},
\norm{\Im A+\frac{2\pi s^2}{q}(\Im c)f(w)^pv}
\le
\frac{s^2\lambda}{q}f(w)^p2^{-\Omega(n)}.\]
At least one of $|\Re c|$ and $|\Im c|$ is at least $1/\sqrt2$. If $|\Re c|\ge1/\sqrt2$,
\[
\norm{\frac{2\pi s^2}{q}(\Re c)f(w)^pv}
\ge
\frac{\sqrt2\pi s^2\lambda}{q}f(w)^p \gg \frac{s^2\lambda}{q}f(w)^p2^{-\Omega(n)}.\]
Thanks to \cref{lem:normalization-stability},
$\Re A$ is nonzero and
\[\norm{
\frac{\Re A}{\norm{\Re A}}
+\operatorname{sgn}(\Re c)\frac{v}{\lambda}
}
\le 2^{-\Omega(n)}.
\]
An analogous conclusion holds for $\Im A$ if $|\Im c|\ge1/\sqrt2$.
The final conclusion holds because for large $n$
\[\norm{\pm du-v}
\le |d-\lambda|+d\norm{\pm u-v/\lambda}
\le\frac\lambda n+2^{-\Omega(n)}\lambda
<2\lambda/n.\qedhere
\]
\end{proof}

\subsection{The coordinate decomposition and the estimator}
\paragraph{Coordinate decomposition.}
We first present a formula for $G(w)f(w)^{p-1}$ from independent discrete Gaussian samples $X_0,...,X_{p-1} \sim D_{\cL^*,s}$. We begin with the observation that for any $\varepsilon_r \in \{-1,1\}$ with $\varepsilon_0=1$,
\begin{align}
\label{eqn: Gfexpectation}
\E\left[
2\pi i X_0 \omega_q^{\ip{w}{\sum_{r=0}^{p-1} \varepsilon_r [X_r]_q}}
\right]
= \E\left[2\pi iX_0 \omega_q^{\ip{w}{[X_0]_q}}\right] \prod_{r=1}^{p-1} \E\left[\omega_q^{\ip{\varepsilon_rw}{[X_r]_q}}\right] 
= G(w) f(w)^{p-1}
\end{align}
where we use $\E\left[\omega_q^{\ip{\varepsilon_rw}{[X_r]_q}}\right]  = f(\varepsilon_r w) =f(w)$ because $f$ is an even function. 
While this formula works for any $\epsilon$'s, we henceforth let
\[
\varepsilon_r:=(-1)^{\operatorname{wt}_2(r)}
\qquad\text{for every }r\in[0,p),
\]
where $\operatorname{wt}_2(r)$ is the Hamming weight of the binary
representation of $r$.

For simplicity, we assume that $m=\log_2 p+1$ divides $n$ throughout in this paper.
This assumption does not affect every asymptotic in this paper much, that is, all the changes are at most $2^{o(n)}$.\footnote{For the parameters below, we may choose $Q:=|H_1|=\cdots = |H_{m-1}|=q^\ell$ for $\ell:=\lfloor n/m\rfloor$ and $Q_m:=|H_m|=q^{n-(m-1)\ell}.$ In particular, $Q,Q_m=2^{n/2+o(n)}$ for our parameter setting.}
For $j\in[m]$, let
\[
H_j:=\{x=(x_1,\ldots,x_n)\in\F_q^n:x_k=0
\text{ for every }k\notin\{(j-1)n/m+1,\ldots,jn/m\}\}.
\]
This gives a decomposition
\[
    \F_q^n=H_1\oplus\cdots\oplus H_m, \qquad Y=\sum_{j=1}^m (Y)_{H_j}
    \quad \text{for}\quad Y\in\F_q^n
\]
where $(Y)_{H_j}\in H_j$ denotes the unique component satisfying
$Y=\sum_{j=1}^m (Y)_{H_j}$.

We first define the value that we will approximate. 
Let $U$ be uniformly random in $\GL_n(\F_q)$.
For the independent samples $X_0,\ldots,X_{p-1}$ in \cref{eqn: Gfexpectation}, let $Y_{0,r}:=U[X_r]_q=U(B^TX_r\bmod q)$ for $r\in[0,p)$ and define $Y_{j,a}:=Y_{j-1,2a}-Y_{j-1,2a+1}$ for $j\in[m-1]$ and $a\in[0,p/2^j)$. 
For $r\in[0,p)$, let $\varepsilon_r:=(-1)^{\operatorname{wt}_2(r)}$, where $\operatorname{wt}_2(r)$ is the Hamming weight of the binary representation of $r$.
It is not hard to see by induction that
\begin{align}
Y_{m-1,0}
=
\sum_{r=0}^{p-1}\varepsilon_rY_{0,r},\qquad 
Y_{j,a}
=
\sum_{b=0}^{2^j-1}
\varepsilon_bY_{0,a2^j+b}
=
\varepsilon_{a2^j}
\sum_{r=a2^j}^{(a+1)2^j-1}
\varepsilon_rY_{0,r}.
\label{eqn:svp-final-difference}
\end{align}
Note that the indices for $Y_{j,a}$ satisfy $a \in [0,p/2^{j})$.

We now give an alternative representation of \cref{eqn: Gfexpectation}
for $w=B^{-1}v \bmod q$ for a shortest vector $v$. 
Let
\begin{align}
k_*=U^{-T}w,\qquad 
\theta_*:=(k_*)_{H_m},\qquad
\text{and}\qquad
\alpha_{j,a}^*
:=
\varepsilon_{a2^j}(k_*)_{H_j} \quad \text{for}\quad
j\in[m-1],a\in[0,p/2^j)\label{eqn:svp-target-alpha}
\end{align}
and let
$\boldsymbol\alpha^*:=(\alpha_{j,a}^*)_{j,a}$.
By $Y_{0,r}=U[X_r]_q$ and $k_*=U^{-T}w$, we have
\[
\ip{w}{[X_r]_q}
=\ip{U^{-T}w}{U [X_r]_q}
=
\ip{k_*}{Y_{0,r}}
\pmod q.
\]
Thus the exponent $\ip{w}{\sum_{r=0}^{p-1} \varepsilon_r [X_r]_q}$ in \cref{eqn: Gfexpectation} is
$\sum_{r=0}^{p-1}
\varepsilon_r\ip{k_*}{Y_{0,r}}.$

We decompose this sum according to
$\F_q^n=H_1\oplus\cdots\oplus H_m$. 
From the second identity in \cref{eqn:svp-final-difference}, we have
\[
\sum_{a=0}^{p/2^j-1}
\ip{\alpha_{j,a}^*}{Y_{j,a}}
=
\sum_{r=0}^{p-1}
\varepsilon_r
\ip{(k_*)_{H_j}}{Y_{0,r}}
\qquad\text{for every }j\in[m-1].
\]
The remaining $H_m$-component satisfies $\ip{\theta_*}{(Y_{m-1,0})_{H_m}}
=
\sum_{r=0}^{p-1}
\varepsilon_r
\ip{(k_*)_{H_m}}{Y_{0,r}}$ since
$\theta_*=(k_*)_{H_m}$ and
$Y_{m-1,0}=\sum_{r=0}^{p-1}\varepsilon_rY_{0,r}$.
Combining these identities gives the alternative expression of the exponents in \cref{eqn: Gfexpectation}
\begin{align}
\omega_q^{
\sum_{r=0}^{p-1}
\varepsilon_r\ip{w}{[X_r]_q}
}
=\omega_q^{\sum_{r=0}^{p-1}
\varepsilon_r\ip{k_*}{Y_{0,r}}}
=
\omega_q^{\ip{\theta_*}{(Y_{m-1,0})_{H_m}}}
\prod_{j=1}^{m-1}
\prod_{a=0}^{p/2^j-1}
\omega_q^{\ip{\alpha_{j,a}^*}{Y_{j,a}}}.
\label{eqn:svp-desired-character-factorization}
\end{align}

\paragraph{What to estimate.}
We define an ideal quantity that contains
\cref{eqn: Gfexpectation} as a dominating term. We later
approximate all its values efficiently from finite sample lists.
Let $
Q:=|H_j|=q^{n/m}=2^{n/2+o(n)}.$
For every $j\in[m-1]$ and $a\in[0,p/2^j)$, let
$\delta_{j,a}$ be independently and uniformly distributed over $H_j$,
and let $\boldsymbol\delta=(\delta_{j,a})_{j,a}$. 
For every $j$ and $a$, character orthogonality gives
\begin{align}
Q\ind_{\{(Y_{j,a})_{H_j}=\delta_{j,a}\}}
=
\sum_{\alpha_{j,a}\in H_j}
\omega_q^{-\ip{\alpha_{j,a}}{\delta_{j,a}}}
\omega_q^{\ip{\alpha_{j,a}}{Y_{j,a}}}.
\label{eqn:svp-character-orthogonality}
\end{align}
Here we used
$\ip{\alpha_{j,a}}{(Y_{j,a})_{H_j}}
=\ip{\alpha_{j,a}}{Y_{j,a}}$ because
$\alpha_{j,a}\in H_j$. 
We define the conjunction of all constraints
\begin{align}
C_{\boldsymbol\delta}(X_0,\ldots,X_{p-1})
:=
\prod_{j=1}^{m-1}
\prod_{a=0}^{p/2^j-1}
\ind_{\{(Y_{j,a})_{H_j}=\delta_{j,a}\}}.
\label{eqn:svp-ideal-indicator}
\end{align}

Multiplying \cref{eqn:svp-character-orthogonality} over all pairs
$(j,a)$ gives the following, which includes \cref{eqn:svp-desired-character-factorization} as a term,
\begin{align}
Q^{p-1}
C_{\boldsymbol\delta}(X_0,\ldots,X_{p-1})
=
\sum_{\boldsymbol\alpha}
\left(
\prod_{j=1}^{m-1}
\prod_{a=0}^{p/2^j-1}
\omega_q^{-\ip{\alpha_{j,a}}{\delta_{j,a}}}
\right)
\prod_{j=1}^{m-1}
\prod_{a=0}^{p/2^j-1}
\omega_q^{\ip{\alpha_{j,a}}{Y_{j,a}}},
\label{eqn:svp-all-characters}
\end{align}
where the sum is over $\boldsymbol{\alpha} = (\alpha_{j,a})_{j,a}$ for all
$\alpha_{j,a}\in H_j$.
This motivates the following definition. For $\theta\in H_m$, let
\begin{align}
A_{\boldsymbol\delta}(\theta)
:=
Q^{p-1}
\E\left[
C_{\boldsymbol\delta}(X_0,\ldots,X_{p-1})
2\pi iX_0
\omega_q^{\ip{\theta}{(Y_{m-1,0})_{H_m}}}
\mid U,\boldsymbol\delta
\right].
\label{eqn:svp-ideal-value}
\end{align}
Here the expectation is over the independent samples
$X_0,\ldots,X_{p-1}\sim D_{\cL^*,s}$, while
$U$ and $\boldsymbol\delta$ are fixed.
Note that for $A_{\boldsymbol\delta}$ with the fixed $U,\boldsymbol{\delta}$, the search space is of size $|H_m|=Q=2^{n/2+o(n)}$. 

We later prove the following lemma, stating that one of $A_{\boldsymbol\delta}(\theta)$ gives the direction of a shortest vector $v$.

\begin{lemma}
\label{lem:svp-ideal-identity}
Let $v\in\cL$ be a shortest vector, let $\lambda=\norm v$, let
$w=B^{-1}v\bmod q$, let $k_*=U^{-T}w$, and let
$\theta_*=(k_*)_{H_m}$. Except with probability $o(1)$ over
$U$ and $\boldsymbol\delta$, there are
$c_{\boldsymbol\delta}\in\C$ and
$e_{\boldsymbol\delta}\in\C^n$ satisfying
\[
A_{\boldsymbol\delta}(\theta_*)
=
c_{\boldsymbol\delta}G(w)f(w)^{p-1}
+
e_{\boldsymbol\delta}\quad
\text{where}\quad
|c_{\boldsymbol\delta}|=1\text{ and }
\norm{e_{\boldsymbol\delta}}
\le
\frac{s^2\lambda}{q}2^{-n/4-\Omega(n)}.
\]
\end{lemma}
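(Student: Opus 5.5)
The plan is to expand $A_{\boldsymbol\delta}(\theta_*)$ into characters, identify the single term that equals $G(w)f(w)^{p-1}$, and control all remaining terms by a second moment over $\boldsymbol\delta$.

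\emph{Step 1: character expansion.} Substitute \cref{eqn:svp-all-characters} into the definition \cref{eqn:svp-ideal-value} at $\theta=\theta_*$. This writes $A_{\boldsymbol\delta}(\theta_*)$ as a sum over $\boldsymbol\alpha=(\alpha_{j,a})$ of the phase $\prod_{j,a}\omega_q^{-\ip{\alpha_{j,a}}{\delta_{j,a}}}$ times
\[
T(\boldsymbol\alpha):=\E\Bigl[2\pi iX_0\,\omega_q^{\ip{\theta_*}{Y_{m-1,0}}+\sum_{j,a}\ip{\alpha_{j,a}}{Y_{j,a}}}\Bigr].
\]
By \cref{eqn:svp-final-difference}, the exponent is linear in the $Y_{0,r}$. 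Concretely, it equals $\sum_r\ip{\kappa_r(\boldsymbol\alpha)}{Y_{0,r}}$ with $\kappa_r(\boldsymbol\alpha)=\varepsilon_r\theta_*+\sum_{j}\varepsilon_r\varepsilon_{a2^j}\alpha_{j,\lfloor r/2^j\rfloor}$. Independence of the $X_r$, together with $\ip{\kappa}{U[X]_q}=\ip{U^T\kappa}{[X]_q}$, then factorizes the expectation:
\[
T(\boldsymbol\alpha)=G\bigl(U^T\kappa_0(\boldsymbol\alpha)\bigr)\prod_{r=1}^{p-1}f\bigl(U^T\kappa_r(\boldsymbol\alpha)\bigr).
\]
At $\boldsymbol\alpha=\boldsymbol\alpha^*$ from \cref{eqn:svp-target-alpha}, every $U^T\kappa_r$ equals $\varepsilon_r w$. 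Evenness of $f$ then gives $T(\boldsymbol\alpha^*)=G(w)f(w)^{p-1}$, as in \cref{eqn:svp-desired-character-factorization}. So we set $c_{\boldsymbol\delta}:=\prod\omega_q^{-\ip{\alpha^*_{j,a}}{\delta_{j,a}}}$, which has modulus one, and $e_{\boldsymbol\delta}:=\sum_{\boldsymbol\alpha\ne\boldsymbol\alpha^*}(\text{phase})\,T(\boldsymbol\alpha)$.

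\emph{Step 2: second moment over $\boldsymbol\delta$.} The $\delta_{j,a}$ are independent and uniform, so the phases for distinct $\boldsymbol\alpha$ are orthogonal characters. Hence
\[
\E_{\boldsymbol\delta}\norm{e_{\boldsymbol\delta}}^2=\sum_{\boldsymbol\alpha\neq\boldsymbol\alpha^*}\norm{T(\boldsymbol\alpha)}^2 .
\]
It therefore suffices to show, with probability $1-o(1)$ over $U$, that this nonprincipal mass is at most $(s^2\lambda/q)^2 2^{-n/2-\Omega(n)}$. Markov's inequality over $\boldsymbol\delta$ then gives the claimed bound with probability $1-o(1)$, after losing a $2^{o(n)}$ slack factor that is absorbed by the $\Omega(n)$.

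\emph{Step 3 (main obstacle): the nonprincipal mass.} I would write each factor $|f(\cdot)|^2$ and $\norm{G(\cdot)}^2$ in primal form \cref{eqn:odd-local-primal} as a double sum over $u_{r,0},u_{r,1}\in Bz_r+q\cL$ with $z_r$ a lift of $U^T\kappa_r$. This bounds the whole sum by a Gaussian sum with exponent $-\pi (s/q)^2\sum_r(\norm{u_{r,0}}^2+\norm{u_{r,1}}^2)$, with a polynomial weight coming from the $G$ factor. The tree couples the $z_r$: consecutive $\kappa$'s differ only in prescribed $H_j$-components. So I would apply the average--difference map $\Psi$ along the tree. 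The identity $\norm{x_L}^2+\norm{x_R}^2=2\norm a^2+\tfrac12\norm d^2$ turns the exponent into a weighted sum over the root average and the node differences. Each $d_{h,r}$ then lies in a coset of a lattice determined by the tree constraints, namely $\cL$ plus a $U$-randomized piece of $q^{-1}\cL$ corresponding to $H_1\oplus\cdots\oplus H_h$ after rescaling. Injectivity of $\Psi$ lets me bound the sum by a product of discrete Gaussian masses of these cosets. The principal configuration, where all $u$'s equal $\pm v$ with the correct signs, is excluded. Every other configuration must have some difference $d$ that is nonzero, and its mass is controlled via \cref{lem:shell,lem:shifted-mass-maximum,cor:working-scale}; the randomness of $U$ makes the intermediate lattices contain no unexpectedly short vectors except with probability $o(1)$. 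The target is that each level contributes a factor of roughly $Q^{-1}$ against the $Q$ free choices of $\alpha$, while the root contributes $f(w)^{2p}\cdot(s^2\lambda/q)^2\cdot 2^{-\Omega(n)}$. Since $f(w)^{2p}=2^{-2(t-o(1))n}$ and $t<1/4$, this meets the required $2^{-n/2-\Omega(n)}$.

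I expect this last step to be the genuine difficulty. The delicate parts are the case analysis of near-principal $\boldsymbol\alpha$, for example those that reproduce $\pm w$ at all but a few leaves, and the recursion of the mass bound level by level through the tree. For those near-principal terms I would first try to peel off the single deviating subtree and bound it using the gap \cref{eqn:odd-local-gap}.
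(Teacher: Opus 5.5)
Your Steps 1 and 2 are exactly the paper's proof of this lemma. That proof consists of the expansion \cref{eqn:svp-ideal-expansion}, identification of the $\boldsymbol\alpha^*$ summand as $c_{\boldsymbol\delta}G(w)f(w)^{p-1}$, orthogonality over $\boldsymbol\delta$ via \cref{lem:svp-offset-character-orthogonality}, and Markov's inequality. The paper then closes the argument by citing \cref{lem:svp-nonprincipal-mass}. Your Step 3, which tries to prove that bound, has a genuine gap.

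\textbf{The Step 3 mechanism cannot work.} You use $U$ only to keep the intermediate lattices free of unusually short vectors, then argue deterministically. Take any $w'\in(w+S_{m-1})\setminus\{w\}$, where $S_{m-1}=U^T(H_1\oplus\cdots\oplus H_{m-1})$, and set $\alpha'_{h,a}:=\varepsilon_{2^ha}(U^{-T}w')_{H_h}$. Then $\boldsymbol\alpha'\neq\boldsymbol\alpha^*$ and the leaf frequencies are $\varepsilon_rU^{-T}w'$. The corresponding nonprincipal summand is exactly $\norm{G(w')f(w')^{p-1}}^2$.
\begin{itemize}
\item With the sign-aligned residues $z_r=\varepsilon_rU^Tk_{\boldsymbol\alpha',\theta_*,r}=w'$, its configurations $u_{r,b}=a$ for a fixed $a\in Bw'+q\cL$ have every difference equal to zero. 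So ``every other configuration has a nonzero difference'' is false.
\item If $w'=B^{-1}v'\bmod q$ for a shortest vector $v'\neq v$, this single summand is as large as the principal one by \cref{lem:odd-local}. Nothing in \cref{lem:svp-random-lattice-masses} rules this out: the vector $v'-v\in K_{m-1}$ contributes only $\rho_{\sigma_{m-1}}(v'-v)\le 2^{-(t-o(1))n}\ll 2^{-(g/2)n}$.
\item Your accounting runs the wrong way. The target $(s^2\lambda/q)^2\,2^{-n/2-\Omega(n)}$ is smaller than the principal term $\approx(2\pi s^2\lambda/q)^2\,2^{-2tn}$ by a factor $2^{-(1/2-2t)n-\Omega(n)}$. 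The condition $t<1/4$ makes $f(w)^{2p}=2^{-2(t-o(1))n}$ larger than $2^{-n/2}$, not smaller. In the paper $t<1/4$ is used only in \cref{lem:svp-recovery}; the nonprincipal bound needs only $g>0$.
\end{itemize}

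\textbf{The missing idea.} For the fixed target $w$, the factor $2^{-n/2}$ is a probability over $U$ (random coset inclusion), not a Gaussian-mass gain. The paper proceeds as follows.
\begin{itemize}
\item It encodes the nonprincipal mass via the recursion $V_j(\kappa)=\sum_{\alpha\in H_j}V_{j-1}(\kappa+\alpha)^2$ and $T_j(\kappa)=\sum_{\alpha\in H_j}T_{j-1}(\kappa+\alpha)V_{j-1}(\kappa+\alpha)$.
\item Your average--difference and coset-mass argument is used only for the target-free totals (\cref{lem:svp-total-frequency-masses}): $\sum_{\kappa\in R_j}V_j(\kappa)\le 1+2^{((m-j-1)/2-g/2+o(1))n}$ and $\sum_{\kappa\in R_j}T_j(\kappa)\le(s^2\lambda/q)^2\,2^{((m-j-1)/2-g/2+o(1))n}$. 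These totals are exponentially large for small $j$.
\item The new wrong branches $\Delta_j,\Gamma_j$ at $k_*$ are the $S_{j-1}$-cosets inside $w+S_j$ other than $w+S_{j-1}$. Given $S_0,\dots,S_{j-1}$, each fixed such coset lands there with probability at most $2q^{-(n-jn/m)}$.
\item Combined with $\log_2 q=(m+1)/2+o(1)$ (this is where $q\approx 2\sqrt p$ enters), this gives $\E[\ind_{\{w\notin S_{m-1}\}}\Delta_j]\le 2^{-(1/2+g/2+(m-j)/(2m)-o(1))n}$, and likewise for $\Gamma_j$. Markov then gives $2^{-(1/2+g/4)n}$ (\cref{lem:svp-target-increments}).
\item The error recursion $B_j=2V_{j-1}^*B_{j-1}+B_{j-1}^2+\Delta_j$ and $E_j=E_{j-1}(V_{j-1}^*+B_{j-1})+T_{j-1}^*B_{j-1}+\Gamma_j$ propagates these increments to $E_{m-1}\le(s^2\lambda/q)^2\,2^{-(1/2+g/4-o(1))n}$.
\end{itemize}
The near-principal terms you planned to peel off with \cref{eqn:odd-local-gap} are exactly the cross terms of this recursion. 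The crude bounds $V_j^*\le 1$ and $T_j^*=O((s^2\lambda/q)^2)$ suffice for them.
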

\begin{proof}
Substituting \cref{eqn:svp-all-characters} into
\cref{eqn:svp-ideal-value} gives
\[
A_{\boldsymbol\delta}(\theta)
=
\sum_{\boldsymbol\alpha}
\left(
\prod_{j=1}^{m-1}
\prod_{a=0}^{p/2^j-1}
\omega_q^{-\ip{\alpha_{j,a}}{\delta_{j,a}}}
\right)
\E\left[
2\pi iX_0
\omega_q^{
\ip{\theta}{Y_{m-1,0}}
+
\sum_{j=1}^{m-1}\sum_{a=0}^{p/2^j-1}
\ip{\alpha_{j,a}}{Y_{j,a}}
}
\mid U
\right],
\]
where we used $\theta\in H_m$ and hence
$\ip{\theta}{(Y_{m-1,0})_{H_m}}=\ip{\theta}{Y_{m-1,0}}$.

For each $\boldsymbol\alpha$ and $\theta$, let
$k_{\boldsymbol\alpha,\theta,0},\ldots,k_{\boldsymbol\alpha,\theta,p-1}\in\F_q^n$ be the coefficients obtained by
expanding the recursive differences $Y_{j,a}$ in terms of
$Y_{0,0},\ldots,Y_{0,p-1}$, so that
\begin{align}
\ip{\theta}{Y_{m-1,0}}
+
\sum_{j=1}^{m-1}\sum_{a=0}^{p/2^j-1}
\ip{\alpha_{j,a}}{Y_{j,a}}
=
\sum_{r=0}^{p-1}\ip{k_{\boldsymbol\alpha,\theta,r}}{Y_{0,r}}.
\label{eqn:svp-leaf-frequency-identity}
\end{align}
Since $Y_{0,r}=U[X_r]_q$, independence of
$X_0,\ldots,X_{p-1}$, and the last terms in \cref{eqn:qary-affine-transforms} for $G$ and $f$ give
\begin{align}
A_{\boldsymbol\delta}(\theta)
=
\sum_{\boldsymbol\alpha}
\left(
\prod_{j=1}^{m-1}
\prod_{a=0}^{p/2^j-1}
\omega_q^{-\ip{\alpha_{j,a}}{\delta_{j,a}}}
\right)
G(U^Tk_{\boldsymbol\alpha,\theta,0})
\prod_{r=1}^{p-1}f(U^Tk_{\boldsymbol\alpha,\theta,r}).
\label{eqn:svp-ideal-expansion}
\end{align}

Let $k_*=U^{-T}w$, $\theta_*=(k_*)_{H_m}$, and $\alpha_{j,a}^*
:=
\varepsilon_{a2^j}(k_*)_{H_j}.$
For $\theta=\theta_*$ and
$\boldsymbol\alpha=\boldsymbol\alpha^*$,
\cref{eqn:svp-leaf-frequency-identity} gives
$k_{\boldsymbol\alpha^*,\theta_*,r}=\varepsilon_rk_*$ so that $U^Tk_{\boldsymbol\alpha^*,\theta_*,r} = \pm w$ for every $r\in[0,p)$. Therefore, the
corresponding term in \cref{eqn:svp-ideal-expansion} is
\[
c_{\boldsymbol\delta}G(w)f(w)^{p-1},
\qquad
c_{\boldsymbol\delta}
:=
\prod_{j=1}^{m-1}
\prod_{a=0}^{p/2^j-1}
\omega_q^{-\ip{\alpha_{j,a}^*}{\delta_{j,a}}},
\]
where $|c_{\boldsymbol\delta}|=1$. Defining
$e_{\boldsymbol\delta}$ as the sum of the remaining terms gives
\[
A_{\boldsymbol\delta}(\theta_*)
=
c_{\boldsymbol\delta}G(w)f(w)^{p-1}
+
e_{\boldsymbol\delta}.
\]
By \cref{lem:svp-offset-character-orthogonality} below with
$B_{\boldsymbol\alpha}
=
G(U^Tk_{\boldsymbol\alpha,\theta_*,0})
\prod_{r=1}^{p-1}f(U^Tk_{\boldsymbol\alpha,\theta_*,r}),$
we have
\[
\E_{\boldsymbol\delta}\left[
\norm{e_{\boldsymbol\delta}}^2
\mid U
\right]
=
\sum_{\boldsymbol\alpha\ne\boldsymbol\alpha^*}
\norm{
G(U^Tk_{\boldsymbol\alpha,\theta_*,0})
\prod_{r=1}^{p-1}f(U^Tk_{\boldsymbol\alpha,\theta_*,r})
}^2.
\]
Bounding this term is the hardest part.
We defer this part to \cref{lem:svp-nonprincipal-mass} below, which gives 
\[
\E_{\boldsymbol\delta}\left[
\norm{e_{\boldsymbol\delta}}^2
\mid U
\right]
\le
\left(\frac{s^2\lambda}{q}\right)^2
2^{-n/2-\Omega(n)}.
\]
The claimed bound on $\norm{e_{\boldsymbol\delta}}$ now follows from
Markov's inequality.
\end{proof}

\begin{lemma}
\label{lem:svp-offset-character-orthogonality}
For
$\boldsymbol\alpha=(\alpha_{j,a})_{j,a}$ with
$\alpha_{j,a}\in H_j$, let $B_{\boldsymbol\alpha}\in\C^n$ and let
$\chi_{\boldsymbol\alpha}(\boldsymbol\delta)
:=
\prod_{j=1}^{m-1}
\prod_{a=0}^{p/2^j-1}
\omega_q^{-\ip{\alpha_{j,a}}{\delta_{j,a}}}.$
Then,
\begin{align}
\E_{\boldsymbol\delta}\left[
\norm{
\sum_{\boldsymbol\alpha}
\chi_{\boldsymbol\alpha}(\boldsymbol\delta)
B_{\boldsymbol\alpha}
}^2
\right]
=
\sum_{\boldsymbol\alpha}\norm{B_{\boldsymbol\alpha}}^2.
\label{eqn:svp-offset-character-orthogonality}
\end{align}
\end{lemma}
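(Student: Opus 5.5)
The plan is to expand the squared norm as a double sum and use orthogonality of additive characters on the product group $\prod_{j,a}H_j$, with $\boldsymbol\delta$ uniform there. First I write
\[
\norm{\sum_{\boldsymbol\alpha}\chi_{\boldsymbol\alpha}(\boldsymbol\delta)B_{\boldsymbol\alpha}}^2
=\sum_{\boldsymbol\alpha,\boldsymbol\alpha'}\chi_{\boldsymbol\alpha}(\boldsymbol\delta)\overline{\chi_{\boldsymbol\alpha'}(\boldsymbol\delta)}\,\ip{B_{\boldsymbol\alpha}}{B_{\boldsymbol\alpha'}}_{\C},
\]
where $\ip{x}{y}_{\C}=\sum_k x_k\overline{y_k}$ is the Hermitian inner product on $\C^n$ and $\norm{x}^2=\ip{x}{x}_{\C}$. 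The sums are finite, so linearity of expectation applies freely.

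Next I compute the expectation of each coefficient. Since $\chi_{\boldsymbol\alpha}(\boldsymbol\delta)\overline{\chi_{\boldsymbol\alpha'}(\boldsymbol\delta)}=\prod_{j,a}\omega_q^{-\ip{\alpha_{j,a}-\alpha'_{j,a}}{\delta_{j,a}}}$ and the $\delta_{j,a}$ are independent, the expectation factors over the nodes $(j,a)$. For each node, $\E_{\delta\in H_j}\omega_q^{-\ip{\gamma}{\delta}}=\ind_{\{\gamma=0\}}$ for $\gamma\in H_j$.

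The one thing to justify is that the pairing $\ip{\cdot}{\cdot}$ restricted to $H_j\times H_j$ is nondegenerate. This holds because $H_j$ is a coordinate subspace: if $\gamma\neq0$ has a nonzero coordinate $k$, choose $\delta=e_k$. Then $\delta\mapsto\ip{\gamma}{\delta}$ is a nontrivial homomorphism $H_j\to\F_q$, hence surjective, and the sum of $\omega_q^{x}$ over each fiber averages to $0$.

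The expectation therefore equals $\ind_{\{\boldsymbol\alpha=\boldsymbol\alpha'\}}$. Only the diagonal survives, and it gives $\sum_{\boldsymbol\alpha}\norm{B_{\boldsymbol\alpha}}^2$. There is no real obstacle; the only point needing care is the nondegeneracy on $H_j$ just noted, together with the use of the Hermitian rather than bilinear inner product.
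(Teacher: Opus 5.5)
Your proposal is correct and follows the paper's proof: expand the squared norm as a double sum, factor the expectation over the independent uniform $\delta_{j,a}$, and use character orthogonality so that only the diagonal terms $\boldsymbol\alpha=\boldsymbol\alpha'$ survive. Your explicit check that the pairing is nondegenerate on the coordinate subspace $H_j$, and your use of the Hermitian inner product, fill in details the paper leaves implicit.
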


\begin{proof}
For
$\boldsymbol\alpha\ne\boldsymbol\alpha'$, there is some $(j,a)$ such
that $\alpha_{j,a}\ne\alpha'_{j,a}$. Independence and uniformity of
the vectors $\delta_{j,a}$ and character orthogonality give
\[
\E_{\boldsymbol\delta}\left[
\chi_{\boldsymbol\alpha}(\boldsymbol\delta)
\overline{\chi_{\boldsymbol\alpha'}(\boldsymbol\delta)}
\right]
=
\prod_{j,a}
\E_{\delta_{j,a}}\left[
\omega_q^{-\ip{\alpha_{j,a}-\alpha'_{j,a}}{\delta_{j,a}}}
\right]
=0.
\]
The same expectation is 1 for $\boldsymbol\alpha=\boldsymbol\alpha'$.
Expanding the squared norm and taking expectation
gives \cref{eqn:svp-offset-character-orthogonality}.
\end{proof}

We defer the proof of the following lemma to \cref{subsec:prooflem}.
\begin{lemma}
\label{lem:svp-nonprincipal-mass}
Let $\lambda=\lambda_1(\cL)\le d\le(1+1/n)\lambda$, let
$v\in\cL$ be a shortest vector, let
$w=B^{-1}v\bmod q$, let $k_*=U^{-T}w$, and let
$\theta_*=(k_*)_{H_m}$. Let $\boldsymbol\alpha^*$ be defined as in
\cref{eqn:svp-target-alpha}. For each $\boldsymbol\alpha$, let
$k_{\boldsymbol\alpha,\theta_*,0},\ldots,k_{\boldsymbol\alpha,\theta_*,p-1}$ be defined
by \cref{eqn:svp-leaf-frequency-identity} with $\theta=\theta_*$.
Except with probability $o(1)$
over $U$,
\begin{align}
\sum_{\boldsymbol\alpha\ne\boldsymbol\alpha^*}
\norm{
G(U^Tk_{\boldsymbol\alpha,\theta_*,0})
\prod_{r=1}^{p-1}f(U^Tk_{\boldsymbol\alpha,\theta_*,r})
}^2
\le
\left(\frac{s^2\lambda}{q}\right)^2
2^{-n/2-\Omega(n)}.
\label{eqn:svp-nonprincipal-mass}
\end{align}
\end{lemma}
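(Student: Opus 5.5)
The plan is to go to the primal side and read each term of \cref{eqn:svp-nonprincipal-mass} as a Gaussian weight over a tuple of lattice vectors. I would then bound the total over all $\boldsymbol\alpha\ne\boldsymbol\alpha^*$ by a product of discrete Gaussian masses, using the average--difference transformation $\Psi$ sketched in the overview.

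\textbf{Step 1: primal form of each term.} Write $k_r:=k_{\boldsymbol\alpha,\theta_*,r}$ and lift $U^Tk_r$ to $z_r\in\Z^n$. By \cref{eqn:odd-local-primal},
\[
f(U^Tk_r)=\rho_{1/s}(\cL)^{-1}\sum_{x\in Bz_r+q\cL}\rho_{1/s}(x/q),
\]
and $G$ has the same form with an extra factor $-2\pi s^2 x/q$. Squaring the norm produces two copies, so each term becomes a sum over pairs $u_{r,0},u_{r,1}\in Bz_r+q\cL$ of
\[
\exp\Bigl(-\tfrac{\pi s^2}{q^2}\sum_r\bigl(\norm{u_{r,0}}^2+\norm{u_{r,1}}^2\bigr)\Bigr),
\]
times the factor $(2\pi s^2/q)^2\ip{u_{0,0}}{u_{0,1}}/q^2$. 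I would bound this factor by $\norm{u_{0,0}}\norm{u_{0,1}}$ and absorb it with the shell bound \cref{lem:shell} at $k=2$, which costs only $2^{o(n)}$. The $\rho_{1/s}(\cL)^{-2p}$ normalization is $1+2^{-\Omega(n)}$ by \cref{cor:working-scale}.

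\textbf{Step 2: encode the constraints.} Summing over all $\boldsymbol\alpha$, the map $\boldsymbol\alpha\mapsto(k_r)_r$ is injective. Its image is exactly the set of leaf frequencies whose tree-partial sums (with signs $\varepsilon$) have prescribed $H_j$-components at level $j$, with root $H_m$-component $\theta_*$. So the full sum runs over tuples $(u_{r,b})$ that satisfy one congruence condition per internal node, of the form ``partial signed sum lies in a coset of $\cL_{H}$-type sublattices''. The shortest-vector term corresponds to $z_r\equiv\pm w$ for all $r$.

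\textbf{Step 3: change variables and bound the Gaussian sum.} Apply $\Psi$ recursively along the tree. The identity $\norm{x_L}^2+\norm{x_R}^2=2\norm a^2+\tfrac12\norm d^2$ turns the exponent into a weighted sum $c_{\mathrm{root}}\norm{a_{m-1,0}}^2+\sum_{h,r}c_h\norm{d_{h,r}}^2$, with weights doubling going up and halving for differences. The congruences put $a_{m-1,0}$ in a coset of $q\cL$ shifted by roughly $\pm 2^{m}v/2$-type elements, and each $d_{h,r}$ in a coset of $q\cL+B(\text{lift of }U^{-T}H_{h})$-type lattice. Because $\Psi$ is injective, the sum is at most a product of coset masses. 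Applying \cref{lem:shifted-mass-maximum} to each factor, each factor is at most its centered mass, except that the leading configuration must be removed.

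I would split the configurations by the first level where the pattern differs from $\alpha^*$. The non-principal condition forces either some $d_{h,r}$ to be a nonzero element of its lattice, or the root average to move away from the $v$-term. Each coset of an intermediate lattice with index $Q$ behaves like a lattice of determinant scaled by $Q^{-1}$. For uniformly random $U$, with probability $1-o(1)$ these random intermediate lattices have no unusually short vectors, by a Markov/averaging bound over $U$ in the style of Minkowski--Hlawka. Under that event, each nonzero-difference factor contributes relative mass $\approx Q^{-1}$, that is, about $2^{-n/2}$. The root contributes $f(w)^{2p}$-size mass times $2^{-\Omega(n)}$ when it deviates; this uses $t<1/4$ and the gap argument of \cref{lem:odd-local}. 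With $p=o(n/\log n)$, the $2^{p}$ combinatorial branches cost $2^{o(n)}$, so the total is $(s^2\lambda/q)^2 2^{-n/2-\Omega(n)}$. The $\Omega(n)$ saving comes from $f(w)^{2p}=2^{-(2t-o(1))n}$ versus the $2^{-n/2}$ threshold.

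\textbf{Main obstacle.} The hardest step is Step 3: making the coset-lattice mass bounds uniform over all $2^{O(p)}$ deviation patterns and all levels, while keeping the loss at $2^{o(n)}$. The key difficulty is controlling the random intermediate lattices' Gaussian masses at width $\sqrt2/s$-type scales via an expectation over $U$. I would first try computing $\E_U$ of each mass exactly, since averaging over $U$ gives a uniform random subspace and hence clean $1/Q$ densities, and then apply Markov.
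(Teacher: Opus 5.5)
Your ingredients (the primal expansion, injectivity of $\boldsymbol\alpha\mapsto(k_r)_r$, the average--difference map, and Markov bounds on $\E_U\rho_{\sigma_j}(K_j\setminus\{0\})$) are what the paper uses, but for the \emph{global} masses $\overline V_j=\sum_{\kappa\in R_j}V_j(\kappa)$ and $\overline T_j$. They do not give the factor $2^{-n/2}$ at the fixed target $\theta_*$. Your claim that ``each nonzero-difference factor contributes relative mass $\approx Q^{-1}$'' is where the argument breaks.

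\textbf{Where Step 3 fails.} A property of the random lattices alone gives only $\rho_{\sigma_h}(K_h\setminus\{0\})\le 2^{-(g/2-o(1))n}$. Averaging over $U$ only yields $\E_U\rho_{\sigma_h}(K_h\setminus\{0\})\le 2^{-(g+(m-h)/(2m)-o(1))n}$, which is far above $Q^{-1}$. At fixed $\theta_*$, the root average ranges over a coset of $K_{m-1}$ (not of $q\cL$), and its Gaussian mass is only bounded by $1+2^{-\Omega(n)}$. That coset depends on the difference variables, hence on $U$, so it cannot be averaged separately. A product-of-coset-masses bound at $\theta_*$ therefore gives only $2^{-\Omega(n)}$.

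\textbf{The missing idea.} You need the randomness of $U$ \emph{relative to the fixed} $w$, used one level at a time.
\begin{enumerate}
\item Conditional on $S_0,\dots,S_{j-1}$, each fixed $S_{j-1}$-coset other than $w+S_{j-1}$ lies in $w+S_j$ with probability at most $2q^{-(n-jn/m)}$.
\item Multiplying by $\overline V_j-1\le 2^{((m-j-1)/2-g/2+o(1))n}$ gives $\E[\Delta_j]\le 2^{-(1/2+g/2+(m-j)/(2m)-o(1))n}$, and likewise for $\Gamma_j$.
\item This requires discarding the event $w\in S_{m-1}$ (probability $o(1)$). Otherwise the zero coset, of mass at least $1$, is swept into the target coset.
\item The level-$j$ increments must then be propagated through $B_j=2V^*_{j-1}B_{j-1}+B_{j-1}^2+\Delta_j$ and $E_j=E_{j-1}(V^*_{j-1}+B_{j-1})+T^*_{j-1}B_{j-1}+\Gamma_j$. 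Your plan has no substitute for this recursion.
\end{enumerate}

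\textbf{Source of the extra saving.} You have also misidentified where the extra $2^{-\Omega(n)}$ comes from. It is $g=\tfrac12\log_2(t/t_0)>0$, i.e.\ $t>t_0$ via \cref{lem:shell}. The condition $t<1/4$ is used only later, in \cref{lem:svp-recovery}, to make the principal term exceed the error.

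The nonprincipal mass is not $\lesssim Q^{-1}f(w)^{2p}2^{o(n)}$ in general. If $\cL$ has $2^{cn}$ shortest vectors, then in expectation about $Q^{-1}2^{cn}$ of them lie in $v+K_{m-1}$. Each gives a nonprincipal term of size about $(s^2\lambda/q)^2f(w)^{2p}$. What keeps the total at $(s^2\lambda/q)^2Q^{-1}2^{-\Omega(n)}$ is a packing bound such as $\rho_{\sqrt2/s}(\cL\setminus\{0\})\le 2^{-(g-o(1))n}$.

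For the same reason, the gap argument of \cref{lem:odd-local} cannot control a deviating root. It only compares points of the single coset $v+q\cL$. The set $(v+K_{m-1})\setminus(v+q\cL)$ may contain other shortest vectors, at which $f$ is essentially $f(w)$.

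Finally, since $2t<1/2$, a bound of the form $f(w)^{2p}2^{-\Omega(n)}$ would not even fall below $2^{-n/2}$ unless the saving exceeded $(1/2-2t)n$. So your final accounting runs in the wrong direction.
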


\paragraph{Estimator.}
We now define an estimator for $A_{\boldsymbol\delta}$ and show that all its values
can be computed efficiently. Let $
N:=Qp^4=2^{n/2+o(n)}.$
Recall that $U$ is sampled uniformly from $\GL_n(\F_q)$.
We consider $N$ tuples of samples: 
For every $r\in[0,p)$ and $i\in[N]$, choose independent samples
\[
X_{r,i}\sim D_{\cL^*,s},
\qquad
Y_{r,i}:=U[X_{r,i}]_q.
\]

For 
$\boldsymbol i=(i_0,\ldots,i_{p-1})\in[N]^p$, 
we consider the following indexed versions of $Y$'s
\begin{align}
    \label{eqn: recursive Yi}
Y_{0,r}(\boldsymbol i):=Y_{r,i_r}
\text{~~for }r\in[0,p),
\text{~~and~~}
Y_{j,a}(\boldsymbol i)
:=
Y_{j-1,2a}(\boldsymbol i)
-
Y_{j-1,2a+1}(\boldsymbol i)\text{~~for }j\in[m-1],a\in[0,p/2^j).
\end{align}
Note that all vectors $Y_{j,a}(\boldsymbol i)$ are determined by
the samples $X_{r,i_r}$ for $r\in[0,p)$.
Let
\[
C_{\boldsymbol\delta}(\boldsymbol i)
:=
C_{\boldsymbol\delta}
(X_{0,i_0},\ldots,X_{p-1,i_{p-1}}).
\]
For $\theta\in H_m$, define the following estimator for \cref{eqn:svp-ideal-value}:
\begin{align}
\widehat A_{\boldsymbol\delta}(\theta)
:=
\frac{Q^{p-1}}{N^p}
\sum_{\boldsymbol i\in[N]^p}
C_{\boldsymbol\delta}(\boldsymbol i)
2\pi iX_{0,i_0}
\omega_q^{
\ip{\theta}
{(Y_{m-1,0}(\boldsymbol i))_{H_m}}
}.
\label{eqn:svp-estimator}
\end{align}

We prove the accuracy of the estimator for using \cref{lem:svp-recovery}. The first step is the following lemma, whose proof is deferred to \cref{subsec:prooflem} due to its complexity.
\begin{lemma}
\label{lem:svp-finite-list-second-moment}
Let $\lambda=\lambda_1(\cL)\le d\le(1+1/n)\lambda$.
Except with probability $2^{-\Omega(n)}$ over $U$,
simultaneously for every $\theta\in H_m$,
\[
\E_{\boldsymbol\delta,X}\left[
\norm{
\widehat A_{\boldsymbol\delta}(\theta)
-
A_{\boldsymbol\delta}(\theta)
}^2
\mid U
\right]
\le
\left(\frac{s^2\lambda}{q}\right)^2
Q^{-1}2^{o(n)}.
\]
\end{lemma}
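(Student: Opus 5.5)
The plan is a second-moment (variance) computation for a generalized U-statistic, conditioned on $U$, followed by an averaging argument over $U$. Write $h_\theta(\boldsymbol i):=Q^{p-1}C_{\boldsymbol\delta}(\boldsymbol i)\,2\pi iX_{0,i_0}\,\omega_q^{\ip{\theta}{(Y_{m-1,0}(\boldsymbol i))_{H_m}}}$, so that $\widehat A_{\boldsymbol\delta}(\theta)=N^{-p}\sum_{\boldsymbol i}h_\theta(\boldsymbol i)$.

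\textbf{Unbiasedness.} Each tuple $\boldsymbol i$ takes its $r$-th coordinate from the separate list $r$. The samples $X_{0,i_0},\ldots,X_{p-1,i_{p-1}}$ are therefore always independent copies of $D_{\cL^*,s}$, and by \cref{eqn:svp-ideal-value} we get $\E_X[h_\theta(\boldsymbol i)\mid U,\boldsymbol\delta]=A_{\boldsymbol\delta}(\theta)$ exactly. The error is thus a pure variance term:
\[
\E_X\left[\norm{\widehat A_{\boldsymbol\delta}(\theta)-A_{\boldsymbol\delta}(\theta)}^2\right]\le N^{-2p}\sum_{\boldsymbol i,\boldsymbol i'}\mathrm{Cov}\bigl(h_\theta(\boldsymbol i),h_\theta(\boldsymbol i')\bigr).
\]
Only pairs sharing at least one coordinate contribute. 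Let $S:=\{r:i_r=i'_r\}\neq\emptyset$ with $|S|=k$; there are at most $N^{2p-k}$ such pairs. The covariance is at most $\E|h_\theta(\boldsymbol i)\overline{h_\theta(\boldsymbol i')}|$, and since the phases have modulus one this is bounded independently of $\theta$:
\[
\E\left|h_\theta(\boldsymbol i)\overline{h_\theta(\boldsymbol i')}\right|\le 4\pi^2Q^{2(p-1)}\,\E\left[\norm{X_{0,i_0}}\norm{X_{0,i'_0}}\,C_{\boldsymbol\delta}(\boldsymbol i)C_{\boldsymbol\delta}(\boldsymbol i')\right].
\]
This uniformity in $\theta$ is what yields the ``simultaneously for every $\theta\in H_m$'' clause for free.

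\textbf{Counting constraints.} I would first average over $\boldsymbol\delta$. Since each $\delta_{j,a}$ is uniform on $H_j$ and independent, for any fixed data we have $\Prob_{\boldsymbol\delta}[C_{\boldsymbol\delta}(\boldsymbol i)=1]=Q^{-(p-1)}$. Given that event, $C_{\boldsymbol\delta}(\boldsymbol i')=1$ requires $(Y_{j,a}(\boldsymbol i'))_{H_j}=(Y_{j,a}(\boldsymbol i))_{H_j}$ at every node $(j,a)$. This holds automatically at nodes whose leaf set lies inside $S$; there are at most $k-1$ such internal nodes. At every other node the difference $Y_{j,a}(\boldsymbol i')-Y_{j,a}(\boldsymbol i)=U\,y$ involves a fresh sample. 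Here $y$ is a signed combination of residues $[X_{r,i_r}]_q-[X_{r,i'_r}]_q$ over $r\notin S$ in that subtree, and the event is that the $H_j$-component of $Uy$ vanishes. Processing these nodes bottom-up and conditioning on everything below, the randomness of $U$ makes each such event have probability about $Q^{-1}$, provided $y\neq0$. The expected number of surviving pairs is then $Q^{-(p-1)}\cdot Q^{-(p-k)}$ up to $2^{o(n)}$. The norm factor is $\E\norm{X}^2\le s^2n\,2^{o(n)}$, using \cref{lem:dgs-tail} to truncate. The contribution of pattern $S$ is therefore at most
\[
N^{-k}Q^{2(p-1)}Q^{-(2p-1-k)}s^2n\,2^{o(n)}=N^{-k}Q^{k-1}s^2n\,2^{o(n)}=p^{-4k}Q^{-1}s^2n\,2^{o(n)},
\]
using $N=Qp^4$. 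Summing over the at most $\binom pk$ patterns of size $k$ gives a total of $Q^{-1}s^2n\,2^{o(n)}$; the factor $p^{-4k}$ absorbs the combinatorics and the constants hidden in the bottom-up conditioning. Finally, $s^2\lambda^2=\Theta(n)$ by the choice of $s$, so $s^2n=(s^2\lambda/q)^2\cdot q^2n/(s^2\lambda^2)=(s^2\lambda/q)^2\mathrm{poly}(n)$, and the claimed bound follows.

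\textbf{Main obstacle.} The delicate step is the ``probability $\approx Q^{-1}$ per non-shared node'' claim. It needs $y\neq0$, and it needs an averaging over $U$ that is converted into a statement holding for all but a $2^{-\Omega(n)}$ fraction of $U$. Uniformity of $U y$ for fixed nonzero $y$ only controls the average over $U$, and degenerate residue coincidences ($[X]_q=[X']_q$ with $X\neq X'$, i.e.\ $X-X'\in q\cL^*\setminus\{0\}$) must be shown rare. For the latter I would first bound the collision probability $\Prob[X-X'\in q\cL^*\setminus\{0\}]$ by a discrete-Gaussian mass estimate, using \cref{lem:shifted-mass-maximum} and \cref{cor:working-scale} (since $s>\sqrt2\eta_{1/2}(\cL^*)$, the mass of $q\cL^*\setminus\{0\}$ at width $\sqrt2 s$ is tiny). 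I would then define $\Phi(U)$ as the $X$-averaged count of matching pairs and compute $\E_U\Phi(U)\le Q^{-(2p-1-k)}2^{o(n)}$. The high-probability statement over $U$ would follow either from Markov's inequality after building in an extra $2^{-\Omega(n)}$ slack, or, if that slack is not available, from a pairwise-independence/second-moment bound on $\Phi(U)$ over the $\GL_n(\F_q)$ randomness. This last conversion is where I expect most of the work to lie.
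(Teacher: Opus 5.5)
Your outer reduction is exactly the paper's proof of this lemma. It consists of:
\begin{itemize}
\item exact unbiasedness of each summand;
\item zero covariance when $I(\boldsymbol i,\boldsymbol i')=\varnothing$;
\item at most $N^{2p-|I|}$ pairs per overlap set $I$;
\item a per-pair bound $(s^2\lambda/q)^2Q^{|I|-1}2^{o(n)}$;
\item the sum $Q^{-1}[(1+Q/N)^p-1]$ with $N=Qp^4$.
\end{itemize}
The gap is the per-pair bound itself (the content of \cref{lem:svp-overlap-covariance}), which your constraint counting does not establish.

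\textbf{Fixed $U$ versus average over $U$.} The lemma needs the bound for each fixed $U$ outside a set of probability $2^{-\Omega(n)}$, with only a $2^{o(n)}$ loss. Your ``probability $\approx Q^{-1}$ per non-shared node'' comes from the randomness of $U$, so at best it bounds $\E_U\Phi(U)$. That mean is of the same order as the bound you need. Markov at threshold $2^{o(n)}\E_U\Phi(U)$ therefore only gives failure probability $2^{-o(n)}$. To get slack you must split $\Phi(U)$ into a part that is deterministic for every $U$ plus a nonnegative excess with exponentially smaller mean. Your heuristic supplies no such split, since the $Q^{-1}$ is exactly what you attribute to $U$. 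The second-moment alternative over $\GL_n(\F_q)$ is left open.

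\textbf{The average computation is also incomplete.} Nodes of the same level use the same block of rows of $U$. So besides residue collisions you must control linear dependencies among the residue combinations entering those nodes, and each such dependency can cost a factor $Q$.

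\textbf{Truncation when $0\notin I$.} The weight $\norm{X_{0,i_0}}\norm{X_{0,i'_0}}$ is not independent of the constraints. The tail cannot be handled by bounding the indicators by $1$, because after averaging over $\boldsymbol\delta$ the prefactor is $Q^{p-1}=2^{\Theta(n^2/\log^3n)}$.

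\textbf{The missing idea.} The paper uses a Fourier reduction that makes the $Q^{-1}$ per node deterministic and confines the randomness of $U$ to $O(m)$ events.
\begin{enumerate}
\item For fixed $U$, averaging over $\boldsymbol\delta$ and expanding the indicators by characters shows that your counted quantity is exactly $Q^{p-1}\E_X\bigl[\prod_{j,a}\ind_{\{(Y_{j,a}(\boldsymbol i)-Y_{j,a}(\boldsymbol i'))_{H_j}=0\}}\bigr]=V_{m-1,I}(0)$. Keeping the phases and $X_0$ inside the expectation instead gives the paper's $T_{m-1,I}(\theta)$ or $\E[\norm{2\pi X}^2]\,V_{m-1,I}(\theta)$, with no truncation needed.
\item The $\boldsymbol\alpha$ whose leaf frequencies vanish at every $r\notin I$ contribute $1$ each; they form the deterministic count.
\item All remaining terms are controlled through \cref{lem:svp-fixed-tree-masses}. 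Via the primal expansion, the average--difference transform and \cref{lem:shifted-mass-maximum}, it bounds $\sum_{\alpha\in H_{j+1}}V_j(\kappa+\alpha)$ and $\sum_{\alpha\in H_{j+1}}T_j(\kappa+\alpha)$ uniformly in $\kappa$ by products of Gaussian masses of the lattices $K_j$.
\item Only those masses depend on $U$. Their excesses $\rho_{\sigma_j}(K_j\setminus\{0\})$ have mean $2^{-(g-o(1))n}$, and this is where Markov has slack, using $t>t_0$ (\cref{lem:svp-random-lattice-masses}).
\item The three-case induction of \cref{lem:svp-omitted-factor-masses} then gives $Q^{|I|-1}$, paying a factor $Q$ only at nodes whose two subtrees both contain shared leaves.
\end{enumerate}

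\textbf{Collision estimate.} Separately, your justification here is wrong. The condition $s>\sqrt2\eta_{1/2}(\cL^*)$ does not make $\rho_{\sqrt2 s}(q\cL^*\setminus\{0\})$ small; for $\cL=\Z^n$ it exceeds $2^{cn}$ for every constant $c$ once $n$ is large. The collision probability is small only after normalization. For example, via the primal side, $\Prob[[X]_q=[X']_q]=q^{-n}\sum_{w}|f(w)|^2\le2^{-(1+g)n+o(n)}$.
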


\begin{lemma}
\label{lem:svp-estimator-accuracy}
Let $v\in\cL$ be a shortest vector, let $\lambda=\norm v$, let
$w=B^{-1}v\bmod q$, and let
$\theta_*=(U^{-T}w)_{H_m}$.
Except with probability $o(1)$ over $U$, $\boldsymbol\delta$, and the samples $X_{r,i}$,
there is $c_{\boldsymbol\delta}\in\C$ with $|c_{\boldsymbol\delta}|=1$ such that
\begin{align}
\norm{\widehat A_{\boldsymbol\delta}(\theta_*)-c_{\boldsymbol\delta}G(w)f(w)^{p-1}}
\le
\frac{s^2\lambda}{q}2^{-n/4+o(n)}.
\label{eqn:svp-estimator-accuracy}
\end{align}
\end{lemma}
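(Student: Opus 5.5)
The plan is to combine \cref{lem:svp-ideal-identity} with \cref{lem:svp-finite-list-second-moment} via the triangle inequality and Markov's inequality. Fix the shortest vector $v$ and set $w=B^{-1}v\bmod q$. Write
\[
\widehat A_{\boldsymbol\delta}(\theta_*)-c_{\boldsymbol\delta}G(w)f(w)^{p-1}
=
\left(\widehat A_{\boldsymbol\delta}(\theta_*)-A_{\boldsymbol\delta}(\theta_*)\right)
+
\left(A_{\boldsymbol\delta}(\theta_*)-c_{\boldsymbol\delta}G(w)f(w)^{p-1}\right).
\]
Take $c_{\boldsymbol\delta}$ to be the unit-modulus constant produced by \cref{lem:svp-ideal-identity}. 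Concretely, $c_{\boldsymbol\delta}=\prod_{j,a}\omega_q^{-\ip{\alpha^*_{j,a}}{\delta_{j,a}}}$, which depends only on $U$ and $\boldsymbol\delta$.

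For the second summand, \cref{lem:svp-ideal-identity} gives, except with probability $o(1)$ over $(U,\boldsymbol\delta)$,
\[
\norm{A_{\boldsymbol\delta}(\theta_*)-c_{\boldsymbol\delta}G(w)f(w)^{p-1}}=\norm{e_{\boldsymbol\delta}}\le\frac{s^2\lambda}{q}2^{-n/4-\Omega(n)}.
\]
This is already stronger than needed.

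For the first summand, apply \cref{lem:svp-finite-list-second-moment}. Except with probability $2^{-\Omega(n)}$ over $U$, it holds simultaneously for all $\theta\in H_m$, so in particular for $\theta=\theta_*$. This matters because $\theta_*=(U^{-T}w)_{H_m}$ depends on $U$; the uniformity over $\theta$ is what lets us plug in this $U$-dependent point after conditioning on $U$. We obtain
\[
\E_{\boldsymbol\delta,X}\left[\norm{\widehat A_{\boldsymbol\delta}(\theta_*)-A_{\boldsymbol\delta}(\theta_*)}^2\mid U\right]\le\left(\frac{s^2\lambda}{q}\right)^2Q^{-1}2^{o(n)}.
\]
Markov's inequality with threshold $n\cdot(\text{this bound})$ then gives, except with conditional probability $1/n=o(1)$,
\[
\norm{\widehat A_{\boldsymbol\delta}(\theta_*)-A_{\boldsymbol\delta}(\theta_*)}\le\frac{s^2\lambda}{q}Q^{-1/2}2^{o(n)}=\frac{s^2\lambda}{q}2^{-n/4+o(n)},
\]
using $Q=2^{n/2+o(n)}$.

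A union bound over the three failure events ($U$ bad for \cref{lem:svp-finite-list-second-moment}, $(U,\boldsymbol\delta)$ bad for \cref{lem:svp-ideal-identity}, and the Markov event) leaves total failure probability $o(1)$. Adding the two bounds gives \cref{eqn:svp-estimator-accuracy}, since the $2^{-n/4-\Omega(n)}$ term is absorbed into $2^{-n/4+o(n)}$.

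There is no real obstacle here, because the substantive work lies in the deferred \cref{lem:svp-nonprincipal-mass,lem:svp-finite-list-second-moment}. The only step that needs care is the measurability and conditioning: \cref{lem:svp-finite-list-second-moment} must be applied conditionally on $U$ jointly over $(\boldsymbol\delta,X)$ at the $U$-determined point $\theta_*$, and the Markov step is taken over $(\boldsymbol\delta,X)$ given a good $U$.
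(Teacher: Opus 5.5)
Your proposal is correct and follows the paper's proof essentially verbatim. The paper likewise applies \cref{lem:svp-finite-list-second-moment}, which holds uniformly in $\theta$ and so is valid at the $U$-dependent point $\theta_*$, together with Markov's inequality over $(\boldsymbol\delta,X)$ to get $\norm{\widehat A_{\boldsymbol\delta}(\theta_*)-A_{\boldsymbol\delta}(\theta_*)}\le\frac{s^2\lambda}{q}2^{-n/4+o(n)}$, and then concludes with \cref{lem:svp-ideal-identity} and the triangle inequality; your explicit Markov threshold and union bound over the three failure events simply make precise what the paper phrases as ``choosing the $o(n)$ term sufficiently large.''
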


\begin{proof}
By \cref{lem:svp-finite-list-second-moment} and
$Q=2^{n/2+o(n)}$, except with probability
$2^{-\Omega(n)}$ over $U$,
\[
\E_{\boldsymbol\delta,X}\left[
\norm{
\widehat A_{\boldsymbol\delta}(\theta_*)
-
A_{\boldsymbol\delta}(\theta_*)
}^2
\mid U
\right]
\le
\left(\frac{s^2\lambda}{q}\right)^2
2^{-n/2+o(n)}.
\]
Markov's inequality, after choosing the $o(n)$ term sufficiently
large, shows that except with probability $o(1)$ over
$\boldsymbol\delta$ and the samples,
\[
\norm{
\widehat A_{\boldsymbol\delta}(\theta_*)
-
A_{\boldsymbol\delta}(\theta_*)
}
\le
\frac{s^2\lambda}{q}2^{-n/4+o(n)}.
\]
By \cref{lem:svp-ideal-identity}, except with probability $o(1)$,
there is $c_{\boldsymbol\delta}\in\C$ with
$|c_{\boldsymbol\delta}|=1$ such that
\[
\norm{
A_{\boldsymbol\delta}(\theta_*)
-
c_{\boldsymbol\delta}G(w)f(w)^{p-1}
}
\le
\frac{s^2\lambda}{q}2^{-n/4-\Omega(n)}.
\]
The triangle inequality proves
\cref{eqn:svp-estimator-accuracy}.
\end{proof}

We conclude this section by proving an efficient batch computation of estimator at all $\theta.$

\begin{lemma}
\label{lem:svp-estimator-computation}
Except with probability $o(1)$ over
$\boldsymbol\delta$, all vectors
$\widehat A_{\boldsymbol\delta}(\theta)$ for
$\theta\in H_m$ can be computed in time and space
$2^{n/2+o(n)}$.
\end{lemma}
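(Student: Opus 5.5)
The computation follows the tree procedure of the overview: form sample lists, merge them level by level while keeping only pairs that meet the difference constraints, aggregate the surviving contributions into a table $V_{\boldsymbol\delta}$ indexed by $H_m$, and finish with one Fourier transform. The plan is to make each step precise and to bound every intermediate list size in expectation over $\boldsymbol\delta$. Markov's inequality then yields the ``except with probability $o(1)$'' clause.

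\textbf{Sampling and level-wise merging.} First draw the $pN$ samples $X_{r,i}$ by invoking \cref{thm:dgs-sampling} $O(p^5)$ times, using $s>\sqrt2\eta_{1/2}(\cL^*)$ from \cref{cor:working-scale}. Then compute $Y_{r,i}=U[X_{r,i}]_q$. This costs $2^{n/2+o(n)}$ because $N=Qp^4$.

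Each level-$j$ list entry stores three things: its vector $Y_{j,a}(\boldsymbol i)$; the accumulated weight, which is $2\pi iX_{0,i_0}$ for the leftmost branch $a=0$ and the integer count of contributing tuples otherwise; and nothing else. Tuples with the same $Y_{j,a}$ value are merged by adding their weights. This merging is legitimate because every later constraint and the final phase depend on the subtree only through $Y_{j,a}(\boldsymbol i)$, by the recursion \cref{eqn: recursive Yi}.

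To pass from level $j-1$ to level $j$, sort the right list $Y_{j-1,2a+1}$ by its $H_j$-component. For each left entry, look up the right entries whose $H_j$-component equals $(Y_{j-1,2a})_{H_j}-\delta_{j,a}$. Output the difference of each matched pair, with weight equal to the product of the two weights. At the root, accumulate the weight of each surviving entry into $V_{\boldsymbol\delta}(b)$ with $b=(Y_{m-1,0})_{H_m}$. Then
\[
\widehat A_{\boldsymbol\delta}(\theta)=\frac{Q^{p-1}}{N^p}\sum_{b\in H_m}V_{\boldsymbol\delta}(b)\,\omega_q^{\ip{\theta}{b}},
\]
and a single FFT over $H_m\cong\F_q^{n/m}$ evaluates this for all $\theta$ in time $\widetilde O(Q)$. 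Correctness is immediate from \cref{eqn:svp-estimator}, since the product indicator $C_{\boldsymbol\delta}(\boldsymbol i)$ is enforced exactly.

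\textbf{Size bound (main obstacle).} The merge at node $(j,a)$ costs, up to logarithmic factors, the sizes of its input lists plus its output list. We must therefore bound every list. Count list entries with multiplicity, as tuples of leaf indices; merging only makes lists smaller.

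Conditioned on all the other $\delta$'s, the vector $\delta_{j,a}$ is uniform on $H_j$ and independent of the lists below node $(j,a)$. Hence each fixed pair of children survives with probability exactly $1/Q$, and the expected output size is $\E[L_{j-1,2a}L_{j-1,2a+1}]/Q$. The two subtrees use disjoint leaves and disjoint $\delta$'s, so their sizes are independent and the expectation factorizes. By induction, the expected tuple count at level $j$ is $N^{2^j}/Q^{2^j-1}=Q\,p^{4\cdot2^j}$. Because $p^{4p}=2^{O(n/\log^2 n)}=2^{o(n)}$, every expected size is $Q\,2^{o(n)}$.

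There are $p-1$ internal nodes, so the total expected work and space are $p\,Q\,2^{o(n)}=2^{n/2+o(n)}$. Markov's inequality then gives that with probability $1-o(1)$ over $\boldsymbol\delta$ (for any fixed samples and $U$) the total is at most $2^{n/2+o(n)}$, after inflating the $o(n)$ term by a factor like $\log n$. For a deterministic worst-case guarantee, abort if any list exceeds this threshold; this happens only with probability $o(1)$.

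The one delicate point is the conditioning argument: independence of $\delta_{j,a}$ from the sublists must hold exactly, and it does because the $\delta$'s are sampled independently of the samples and of $U$. Everything else (sorting, hashing, FFT, and finite precision) costs only polynomial factors.
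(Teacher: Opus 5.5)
Your proposal is correct and follows essentially the paper's argument. It uses the same level-by-level matching on $H_j$-components with an abort rule, the same expected list size $N^{2^j}/Q^{2^j-1}=Qp^{4\cdot 2^j}$, and a final FFT over $H_m$; your merging of entries with equal $Y_{j,a}$ is a harmless optimization. You obtain that size by factorizing over the independent $\boldsymbol\delta$'s of the two subtrees, whereas the paper counts the $2^j-1$ independent conditions per tuple and applies Markov per list with threshold $Qp^{7\cdot 2^j}$. One small slip outside this lemma's scope: since $Q=2^{n/2+\Theta(n/\log n)}$, drawing $pN=Qp^5$ samples takes $2^{o(n)}$ rather than $O(p^5)$ calls to \cref{thm:dgs-sampling}, which does not change the bound.
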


\begin{algorithmblock}{Computation of $\widehat A_{\boldsymbol\delta}$}
\label{alg:svp-estimator-computation}
\item For every $r\in[0,p)$ and $i\in[N]$, let $Z_{0,i}=2\pi iX_{0,i}$ and $Z_{r,i}=0$ for $r\neq 0$. Construct lists
\[
\mathsf L_{0,r}:=
\{(i,Y_{r,i},Z_{r,i}):i\in[N]\}
\qquad\text{for }r\in[0,p).
\]
\item For $j=1,\ldots,m-1$ and $a\in[0,p/2^j)$, we do:
\begin{enumerate}
    \item For each $z\in H_j$, construct $\mathsf L_{j-1,2a+1,z} 
    =\{(\boldsymbol i_R,Y_R,Z_R)\in \mathsf L_{j-1,2a+1} : (Y_R)_{H_j}=z\}.
    $
    \item For every
    $(\boldsymbol i_L,Y_L,Z_L)\in\mathsf L_{j-1,2a}$,
    enumerate all elements
    $(\boldsymbol i_R,Y_R,Z_R)$ in
    $\mathsf L_{j-1,2a+1,(Y_L)_{H_j}-\delta_{j,a}}$
    and include
    $((\boldsymbol i_L,\boldsymbol i_R),Y_L-Y_R,Z_L+Z_R)$
    in $\mathsf L_{j,a}$.
    Abort immediately if
    $|\mathsf L_{j,a}|>Qp^{7\cdot2^j}$.
\end{enumerate}
\item For every $b\in H_m$, compute $
T_{\boldsymbol\delta}(b):=
\sum_{\substack{(\boldsymbol i,Y,Z)\in\mathsf L_{m-1,0}\\
(Y)_{H_m}=b}}Z.$
\item Using a vector-valued fast Fourier transform over $H_m$, compute for
every $\theta\in H_m$
\begin{align}
\widehat A_{\boldsymbol\delta}(\theta)
=
\frac{Q^{p-1}}{N^p}
\sum_{b\in H_m}
T_{\boldsymbol\delta}(b)
\omega_q^{\ip{\theta}{b}}.
\label{eqn:svp-final-transform}
\end{align}
\end{algorithmblock}

\begin{proof}
We call the first index of the list $\mathsf L$ by level.
We first describe the elements in each list. The list
$\mathsf L_{j,a}$ is obtained from the initial lists indexed by
$a2^j,\ldots,(a+1)2^j-1.$
We claim by induction on $j$ that, for every choice of indices
$i_{a2^j},\ldots,i_{(a+1)2^j-1}\in[N]$, the algorithm creates at most one element of the form
\[
\left((i_{a2^j},\ldots,i_{(a+1)2^j-1}),Y=
\sum_{\ell=0}^{2^j-1}
\varepsilon_\ell
Y_{a2^j+\ell,i_{a2^j+\ell}},Z\right)
\]
in the list $\mathsf L_{j,a}$ where $Z=2\pi iX_{0,i_0}$ if $a=0$ and $Z=0$ otherwise.
Moreover, a full index tuple
$\boldsymbol i=(i_0,\ldots,i_{p-1})$ appears in
$\mathsf L_{m-1,0}$ if and only if
$C_{\boldsymbol\delta}(\boldsymbol i)=1$.

The statement holds for $j=0$ by the definition of
$\mathsf L_{0,a}$. Suppose that it holds at level $j-1$.
An element in $\mathsf L_{j,a}$ is obtained from the elements in
$\mathsf L_{j-1,2a}$ and $\mathsf L_{j-1,2a+1}$. Its index tuple is
the concatenation of the two index tuples, its second component is
$Y_L-Y_R$, and its third component is $Z_L+Z_R$. Since
\[
\varepsilon_{2^{j-1}+\ell}=-\varepsilon_\ell
\qquad\text{for }0\le\ell<2^{j-1},
\]
the stated formulas for $Y$ and $Z$ follow. 
The last claim on $C_{\boldsymbol{\delta}}(\boldsymbol{i})$ is not hard to see, because Step 2 combines the elements with $Y_R$ and $Y_L$ if and only if $
(Y_L-Y_R)_{H_j}=\delta_{j,a},$ and a tuple appears at level
$j$ if and only if all the corresponding conditions up to level $j$
are satisfied. For $j=m-1$, this means that all the constraints satisfied, meaning that $C_{\boldsymbol{\delta}}(\boldsymbol{i})=1$.

Because of this claim, $\mathsf L_{m-1,0}$ contains $(\boldsymbol i,Y_{m-1,0}(\boldsymbol i),2\pi iX_{0,i_0})$ such that $C_{\boldsymbol\delta}(\boldsymbol i)=1.$
Hence Step 3 gives
\begin{align*}
T_{\boldsymbol\delta}(b)
=
\sum_{\substack{\boldsymbol i\in[N]^p\\
(Y_{m-1,0}(\boldsymbol i))_{H_m}=b}} C_{\boldsymbol\delta}(\boldsymbol i) 
2\pi iX_{0,i_0}.
\end{align*}
Substituting this equality into \cref{eqn:svp-final-transform} gives exactly
\cref{eqn:svp-estimator}. This gives the correctness of each $\widehat{A}_{\boldsymbol{\delta}}(\theta)$.

It remains to compute the complexity. We first bound the list sizes.
Fix $U$, all samples, and a candidate index tuple
$\boldsymbol i$. Then all vectors $Y_{h,b}(\boldsymbol i)$ are fixed.
The conditions for the corresponding tuple to appear in
$\mathsf L_{j,a}$ are $(Y_{h,b})_{H_h}=\delta_{h,b}$ for $h \in [j]$ and $a2^{j-h}\le b<(a+1)2^{j-h}$. The number of them is $2^j-1$. 
Since the vectors $\delta_{h,b}$ are independent and
uniform over $H_h$, each condition holds with
probability $1/|H_h|=1/Q$. Therefore, all the conditions hold with probability
$Q^{-(2^j-1)}$.
By the linearity of the expectation,
\[
\E_{\boldsymbol\delta}[|\mathsf L_{j,a}|]
=
\frac{N^{2^j}}{Q^{2^j-1}}
=
Qp^{4\cdot2^j}.
\]
By Markov's inequality, $\Pr_{\boldsymbol\delta}
\left[
|\mathsf L_{j,a}|>Qp^{7\cdot2^j}
\right]
\le p^{-3\cdot2^j}.$
Because there are $p/2^j$ lists $\mathsf L$ at level $j$, 
\begin{align}
\Pr_{\boldsymbol\delta}[\text{the algorithm aborts}]
\le
\sum_{j=1}^{m-1}\frac{p}{2^j}p^{-3\cdot2^j}
\le
p^{-5}\sum_{j=1}^{m-1}2^{-j}
\le p^{-5}
=o(1)
\label{eqn:svp-computation-abort}
\end{align}
by a union bound, where we use $2^j\ge2$ for every $j\ge1$ and $p\to\infty$.

Suppose that the algorithm does not abort. The lists
$\mathsf L_{j-1,2a+1,z}$ in Step~2 are constructed by sorting
$\mathsf L_{j-1,2a+1}$ according to the $H_j$-coordinate of its second
component, which dominates the second step.
Given the size bound $Qp^{7\cdot 2^j},$
the total number of elements over all levels is at most
\begin{align}
\sum_{j=0}^{m-1}\frac{p}{2^j}Qp^{7\cdot2^j}
\le
2pQp^{7p}
\le
Qp^{7p+2}
=
2^{n/2+o(n)}
\label{eqn:svp-total-computation-records}
\end{align}
because $2^{m-1}=p$ and we use $2^j\le p$ in the exponent. The final bound is due to  $p=\Theta(n/\log^3n)$.
All other steps incur only polynomial multiplicative factors.

Up to polynomial factors, 
the third step can be done in time and space about $Q+|{\mathsf L}_{m-1,0}|=2^{n/2+o(n)}$ after classifying the elements in ${\mathsf L}_{m-1,0}$ based on $(Y)_{H_m}$, and the last step is done in time and space about $|H_m|=Q=2^{n/2+o(n)}$. 
Together with
\cref{eqn:svp-computation-abort,eqn:svp-total-computation-records},
this proves the lemma.
\end{proof}

\subsection{The algorithm}
\paragraph{Parameters.}
Fix a constant $t$ satisfying $t_0<t<1/4$, and retain the parameters
from \cref{eqn:svp-parameter}: let $p$ be the largest power of two
with $p\le n/\log^3 n$, let $m=\log_2p+1$, and let $q$ be the
smallest prime larger than $2\sqrt p$. Set $N:=Qp^4$.
Then
\[
p\log p=O(n/\log^2 n)=o(n),
\qquad Q,N=2^{n/2+o(n)}.
\]

\noindent\textbf{Input.}
A basis $B\in\R^{n\times n}$ of a full-rank lattice
$\cL=B\Z^n$.

\noindent\textbf{Output.}
A shortest nonzero vector of $\cL$ with probability at least $2/3$.
\begin{algorithmblock}{The SVP algorithm}
\label{alg:main-svp}
\item Apply LLL reduction to obtain a nonzero $x\in\cL$ with
$\norm{x}\le2^{n/2}\lambda_1(\cL)$, and let
\[
\mathcal D:=
\{(1+1/n)^{-j}\norm{x}:0\le j\le n^2\}.
\]
\item For every $d\in\mathcal D$, independently repeat the following
procedure a constant number of times:
\begin{enumerate}
    \item Run the preprocessing algorithm in
    \cref{thm:bdd-preprocessing}.
    \item Let $s^2=4nt\ln2/(\pi d^2)$ and obtain $pN$ samples $X_{r,1},\ldots,X_{r,N}$ from
    $D_{\cL^*,s}$ for $r\in [0,p)$.
    Abort this repetition if a sample has
    norm larger than $Cs\sqrt n$, where $C$ is a sufficiently large
    constant.
    \item Choose random $U$ and $\boldsymbol\delta$, and run
    \cref{alg:svp-estimator-computation}. Abort this repetition if
    that algorithm aborts.
    \item For every $\theta\in H_m$ and every nonzero vector
    $V\in\{\Re\widehat A_{\boldsymbol\delta}(\theta),
    \Im\widehat A_{\boldsymbol\delta}(\theta)\}$, query the BDD
    algorithm at both $dV/\norm V$ and $-dV/\norm V$.
\end{enumerate}
\item Among all nonzero lattice vectors returned by the BDD algorithm,
return one of minimum Euclidean norm. If none was returned, report
failure.
\end{algorithmblock}

\begin{proof}[Proof of \cref{thm:main-svp}]
Fix a shortest vector $v$, let $\lambda=\norm v$, and consider $d\in\mathcal D$ satisfying $\lambda\le d\le(1+1/n)\lambda$, whose
existence follows from the LLL bound \cite{LLL82} and the definition of
$\mathcal D$. By \cref{cor:working-scale}, the parameter $s$ satisfies
the assumption of \cref{thm:dgs-sampling} with (say) $\kappa=n^2$. Hence, the required $pN$
samples are obtained using $2^{o(n)}$ calls, and their joint
distribution is inverse-exponentially close to that of independent
discrete Gaussian samples.

Let $w=B^{-1}v\bmod q$ and
$\theta_*=(U^{-T}w)_{H_m}$. Note that $w\ne0$, since otherwise
$v=qv'$ for some nonzero $v'\in\cL$, contradicting the minimality of
$v$.
By \cref{lem:svp-estimator-accuracy,lem:svp-recovery}, except with
probability $o(1)$, one of the targets queried to the BDD algorithm
for $\theta=\theta_*$ is within
$2\lambda/n<n^{-1/3}\lambda$ of $v$. Therefore, successful BDD
preprocessing returns $v$. Since $v$ is among the returned vectors
and every nonzero lattice vector has norm at least $\lambda$, Step~3
returns a shortest vector.

It remains to bound the complexity. Since
$pN=Qp^5=2^{n/2+o(n)}$, the DGS calls and
\cref{lem:svp-estimator-computation} use time and space
$2^{n/2+o(n)}$. The BDD-query step makes $Q2^{o(n)}$ queries in each
repetition, whose total time is also $2^{n/2+o(n)}$. The
$n^2+1$ choices of $d$ and the constant number of repetitions incur
only a polynomial factor.

By
\cref{lem:dgs-tail,lem:svp-estimator-accuracy,lem:svp-estimator-computation,thm:bdd-preprocessing},
each repetition for the value $d$ fixed above succeeds with constant
probability and has worst-case time and space $2^{n/2+o(n)}$. A
sufficiently large constant number of independent repetitions raises
the success probability to at least $2/3$.
\end{proof}

\input{4z.proof}

%% file: 4z.proof.tex
\subsection{Proofs of the lemmas}
\label{subsec:prooflem}

\cref{fig:svp-mass-analysis-flow} shows a high-level flow of the analysis of the proof of \cref{lem:svp-nonprincipal-mass}.

\input{fig_analysis}

\paragraph{Gaussian mass bounds.}
For $j\in\{0,\ldots,m-1\}$, let
\begin{align}
P_j&:=H_1\oplus\cdots\oplus H_j,&
S_j&:=U^TP_j,&
R_j&:=H_{j+1}\oplus\cdots\oplus H_m,\notag\\
K_j&:=\{Bx:x\in\Z^n,\ x\bmod q\in S_j\},&
\sigma_j&:=\frac{q}{2^{j/2}s},
\label{eqn:svp-mass-notation}
\end{align}
where $P_0=S_0=\{0\}$. Thus
$\F_q^n=P_j\oplus R_j$. Thus $R_j$ contains exactly one
representative of every coset of $P_j$, and the map
$\kappa+P_j\mapsto U^T\kappa+S_j$ is a bijection from
$\F_q^n/P_j$ to $\F_q^n/S_j$. Recall
$g=\frac12\log_2(t/t_0)>0$ and
$s^2=4nt\ln2/(\pi d^2)$, where $t_0<t<1/4$.
Throughout this subsection, $t$ is a fixed constant, and hence
$g=\Omega(1)$.

We use the following Gaussian mass bounds regarding the cosets. Note that the second inequality has a factor $1/\sqrt{1-1/n}$ in the width. Looking ahead, this term is, in the norm of gradient terms, to absorb the scalar factor in the exponent at the cost of small factors; e.g., $xe^{-x} \le ne^{-(1-1/n)x}$.
\begin{lemma}
\label{lem:svp-random-lattice-masses}
Let $\lambda=\lambda_1(\cL)\le d\le(1+1/n)\lambda$. Except with
probability $2^{-\Omega(n)}$ over $U$, simultaneously for every
$j\in\{0,\ldots,m-1\}$,
\begin{align}
\rho_{\sigma_j}(K_j\setminus\{0\})
&\le2^{-(g/2-o(1))n},&
\rho_{\sigma_j/\sqrt{1-1/n}}(K_j\setminus\{0\})
&\le2^{-(g/2-o(1))n}.
\label{eqn:svp-random-lattice-gradient-mass}
\end{align}
\end{lemma}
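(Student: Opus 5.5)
The plan is a first-moment computation over the random matrix $U$, followed by Markov's inequality and a union bound. Fix $j\in\{0,\ldots,m-1\}$ and write $k_j:=jn/m=\dim P_j$. Since $U$ is uniform in $\GL_n(\F_q)$, the subspace $S_j=U^TP_j$ is a uniformly random $k_j$-dimensional subspace of $\F_q^n$. Every vector of $K_j$ has the form $Bx$ with $x\in\Z^n$, so I split $K_j\setminus\{0\}$ according to whether $x\bmod q=0$.

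\emph{Vectors in $q\cL$.} The part $q\cL\setminus\{0\}$ lies in $K_j$ for every $U$. Its mass is
\[
\rho_{\sigma_j}(q\cL\setminus\{0\})=\rho_{\sigma_j/q}(\cL\setminus\{0\}).
\]
Since $\sigma_j/q=1/(2^{j/2}s)\le 1/s$, \cref{cor:working-scale} bounds this by $2^{-(1/2+g-o(1))n}$. Enlarging the width by the factor $1/\sqrt{1-1/n}$ changes the exponent by at most a factor $(1-1/n)^{-1}$. Since $\rho_{1/s}(\cL\setminus\{0\})\le 1$, this costs at most an $O(1)$ factor, so the second width obeys the same bound.

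\emph{Vectors of $\cL\setminus q\cL$.} For these, $x\bmod q$ is a fixed nonzero vector, and
\[
\Pr_U[x\bmod q\in S_j]=\frac{q^{k_j}-1}{q^n-1}\le q^{-n(1-j/m)}.
\]
By linearity of expectation,
\[
\E_U\bigl[\rho_{\sigma_j}(K_j\setminus q\cL)\bigr]\le q^{-n(1-j/m)}\,\rho_{\sigma_j}(\cL\setminus\{0\}).
\]
I bound the last factor by \cref{lem:shell} with $k=0$; its hypothesis $n\sigma_j^2/\lambda^2\le n^{O(1)}$ holds because $q^2=O(n)$ and $s^2\lambda^2=\Theta(n)$. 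Substituting $\sigma_j^2=q^2/(2^js^2)$, $s^2=4nt\ln2/(\pi d^2)$ and $d\le(1+1/n)\lambda$, the bracket in \cref{lem:shell} becomes
\[
\frac{\beta^2 n\sigma_j^2}{2\pi e\lambda^2}\le(1+o(1))\,\frac{t_0\,q^2}{2^{j+1}t}.
\]
Taking $\log_2$ of the whole expectation bound then gives
\[
\frac n2\Bigl(2\tfrac{j}{m}\log_2 q-(j+1)-2g\Bigr)+o(n).
\]

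This exponent check is the one place where the parameter choice matters; it is easy here, but I flag it as the crux. By \cref{eqn:svp-parameter}, $\log_2q=\frac{m+1}{2}+o(1)$. Hence
\[
2\tfrac{j}{m}\log_2q=j+\tfrac jm+o(1)\le j+1-\tfrac1m+o(1),
\]
using $j\le m-1$. Therefore the expected mass is at most $2^{-(g-o(1))n}$, uniformly in $j$. The same computation with width $\sigma_j/\sqrt{1-1/n}$ costs only an extra $(1-1/n)^{-n/2}=O(1)$ factor.

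Combining the two parts, $\E_U[\rho(K_j\setminus\{0\})]\le 2^{-(g-o(1))n}$ for both widths. Markov's inequality then shows that each of the two masses exceeds $2^{-(g/2-o(1))n}$ with probability at most $2^{-(g/2-o(1))n}$. A union bound over the $2m=O(\log n)$ events gives the simultaneous statement with failure probability $2^{-\Omega(n)}$, since $g=\Omega(1)$.
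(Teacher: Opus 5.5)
Your argument follows the paper's route. You bound the first moment over $U$ after splitting off $q\cL\subseteq K_j$, and use the inclusion probability $q^{-n(1-j/m)}$ for the remaining vectors. You then apply \cref{lem:shell} at width $\sigma_j$, check the exponent via $\log_2 q=\frac{m+1}{2}+o(1)$, pay $(1-1/n)^{-n/2}$ for widening, and finish with Markov and a union bound over the $2m$ events. One small difference: you bound $\rho_{\sigma_j/q}(\cL\setminus\{0\})$ by monotonicity and \cref{cor:working-scale}, whereas the paper uses \cref{lem:shell} at width $\sigma_j/q$. Your bound is weaker but sufficient.

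One justification is invalid as written: the widened-width bound for the $q\cL$ part. Pointwise,
\[
\rho_{\sigma/\sqrt{1-1/n}}(x)=\rho_\sigma(x)^{1-1/n}=\rho_\sigma(x)\,e^{\pi\norm{x}^2/(n\sigma^2)},
\]
and the extra factor is unbounded in $x$, so no termwise $O(1)$ comparison follows. You also cannot pass to $\rho_\sigma(\cL\setminus\{0\})^{1-1/n}$. For $0<\theta<1$ one has $\sum_x a_x^\theta\ge(\sum_x a_x)^\theta$, which is the wrong direction. The fact $\rho_{1/s}(\cL\setminus\{0\})\le 1$ does not rescue either version.

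The repair is immediate, and either of two fixes works.
\begin{enumerate}
\item Apply \cref{lem:shell} at width $\sigma_j/(q\sqrt{1-1/n})$, exactly as you already do for the vectors outside $q\cL$. This is what the paper does; the bound is multiplied only by $(1-1/n)^{-n/2}=2^{O(1)}$.
\item Note that for $n\ge2$,
\[
\sigma_j/(q\sqrt{1-1/n})\le 1/(s\sqrt{1-1/n})\le\sqrt2/s .
\]
Then use $\rho_{\sqrt2/s}(\cL)-1\le 2^{-(g-o(1))n}$ from \cref{cor:working-scale}. This already gives the required $2^{-(g-o(1))n}$ bound on the expectation.
\end{enumerate}
With either fix, the rest of your proof goes through unchanged.
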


\begin{proof}
Since $U$ is uniform in $\GL_n(\F_q)$, the subspace $S_j=U^TP_j$
is uniform among all $jn/m$-dimensional subspaces of $\F_q^n$.
Consequently, every fixed nonzero vector of $\F_q^n$ belongs to
$S_j$ with probability
\[
\frac{q^{jn/m}-1}{q^n-1}\le2q^{-(n-jn/m)}.
\]
The sublattice $q\cL$ is contained in $K_j$ independently of $U$,
and
$\rho_\sigma(q\cL\setminus\{0\})
=\rho_{\sigma/q}(\cL\setminus\{0\})$. Therefore, for every
$\sigma>0$,
\begin{align}
\E_U[\rho_\sigma(K_j\setminus\{0\})]
&=
\rho_\sigma(q\cL\setminus\{0\})
+
\sum_{x\in\cL\setminus q\cL}
\Pr_U[x\in K_j]\rho_\sigma(x)\notag\\
&\le
\rho_{\sigma/q}(\cL\setminus\{0\})
+2q^{-(n-jn/m)}\rho_\sigma(\cL\setminus\{0\}).
\label{eqn:svp-random-lattice-expectation}
\end{align}
We apply this inequality with $\sigma=\sigma_j$. By
\cref{lem:shell} with $k=0$, first at width $\sigma_j/q$ and then at
width $\sigma_j$,
\begin{align*}
\rho_{\sigma_j/q}(\cL\setminus\{0\})
&\le
2^{o(n)}
\left(\frac{\beta^2n}{2\pi e\lambda^2\,2^js^2}\right)^{n/2}
=2^{-(g+(j+1)/2-o(1))n},\notag\\
\rho_{\sigma_j}(\cL\setminus\{0\})
&\le
2^{o(n)}
\left(\frac{\beta^2nq^2}{2\pi e\lambda^2\,2^js^2}\right)^{n/2}
=2^{(\log_2q-(j+1)/2-g+o(1))n}.
\end{align*}
Here we used $d/\lambda=1+O(1/n)$ and
$t_0=\beta^2/(4e\ln2)$. Since
$\log_2q=(m+1)/2+o(1)$, substituting these bounds into
\cref{eqn:svp-random-lattice-expectation} gives
\[
\E_U[\rho_{\sigma_j}(K_j\setminus\{0\})]
\le
2^{-(g+(j+1)/2-o(1))n}
+2^{-(g+(m-j)/(2m)-o(1))n}
\le2^{-(g-o(1))n}.
\]
Replacing $\sigma_j$ by
$\sigma_j/\sqrt{1-1/n}$ multiplies each upper bound obtained from
\cref{lem:shell} by at most
$(1-1/n)^{-n/2}=2^{O(1)}$. Hence, the same argument gives
\[
\E_U\left[
\rho_{\sigma_j/\sqrt{1-1/n}}(K_j\setminus\{0\})
\right]
\le2^{-(g-o(1))n}.
\]
Markov's inequality and a union bound over the $2m$ estimates prove
the lemma.
\end{proof}

\paragraph{Recursive mass representations.}
For $\kappa\in\F_q^n$, we define
\begin{align}
V_0(\kappa)&:=|f(U^T\kappa)|^2,&
T_0(\kappa)&:=\norm{G(U^T\kappa)}^2.
\label{eqn:svp-frequency-mass-base}
\end{align}
For $j\in\{1,\ldots,m-1\}$ and $\kappa\in\F_q^n$, we iteratively define
\begin{align}
V_j(\kappa)
&:=\sum_{\alpha\in H_j}V_{j-1}(\kappa+\alpha)^2,&
T_j(\kappa)
&:=\sum_{\alpha\in H_j}
T_{j-1}(\kappa+\alpha)V_{j-1}(\kappa+\alpha).
\label{eqn:svp-frequency-mass-recursion}
\end{align}
Induction shows that $V_j$ and $T_j$ are even and $P_j$-periodic:
for every $\kappa\in\F_q^n$ and $\gamma\in P_j$,
\[
V_j(-\kappa)=V_j(\kappa),\quad
T_j(-\kappa)=T_j(\kappa),\quad
V_j(\kappa+\gamma)=V_j(\kappa),\quad
T_j(\kappa+\gamma)=T_j(\kappa).
\]
Consequently, $V_j(\kappa)$ and $T_j(\kappa)$ depend only on the
coset $U^T\kappa+S_j$. Under the bijection
$\kappa+P_j\mapsto U^T\kappa+S_j$, the recurrence in
\cref{eqn:svp-frequency-mass-recursion} can be seen as the recurrence
obtained by summing over the $S_{j-1}$-cosets contained in a fixed
$S_j$-coset.

We first explain the meaning of $V_j$ and $T_j$. 
For $j\in\{0,\ldots,m-1\}$, let
\begin{align*}
\mathcal A_{\le j}
:=
\left\{
(\alpha_{h,a})_{\substack{1\le h\le j\\0\le a<2^{j-h}}}:
\alpha_{h,a}\in H_h
\right\}.
\end{align*}
We write an element of $\mathcal A_{\le j}$ as
$\boldsymbol\alpha^{\le j}$. The set $\mathcal A_{\le0}$ contains
only the empty tuple. At $j=m-1$, its indices are
$1\le h\le m-1$ and $0\le a<p/2^h$, so we identify
$\boldsymbol\alpha^{\le m-1}$ with the tuple
$\boldsymbol\alpha$ used in \cref{eqn:svp-all-characters}.
For $\kappa\in\F_q^n$,
$\boldsymbol\alpha^{\le j}\in\mathcal A_{\le j}$, and
$r\in[0,2^j)$, let
\begin{align}
k_{\boldsymbol\alpha^{\le j},\kappa,r}
:=
\varepsilon_r\kappa
+
\sum_{h=1}^j
\varepsilon_{r\bmod2^h}
\alpha_{h,\lfloor r/2^h\rfloor}.
\label{eqn:svp-subtree-leaf-frequency}
\end{align}

We prove the following lemma. At $j=m-1$ and $\kappa=\theta\in H_m$, the term $k_{\boldsymbol\alpha^{\le m-1},\theta,r}$ in
\cref{eqn:svp-subtree-leaf-frequency} agree with
$k_{\boldsymbol\alpha,\theta,r}$ from
\cref{eqn:svp-leaf-frequency-identity}. Since $2^{m-1}=p$, this gives the following equality (cf. \cref{eqn:svp-nonprincipal-mass}):
\begin{align}
T_{m-1}(\theta)
=
\sum_{\boldsymbol\alpha}
\norm{
G(U^Tk_{\boldsymbol\alpha,\theta,0})
\prod_{r=1}^{p-1}f(U^Tk_{\boldsymbol\alpha,\theta,r})
}^2.
\label{eqn:svp-full-tree-mass-closed-form}
\end{align}

\begin{lemma}
\label{lem:svp-frequency-mass-closed-form}
For every $j\in\{0,\ldots,m-1\}$ and every $\kappa\in\F_q^n$,
\[
V_j(\kappa)
=
\sum_{\boldsymbol\alpha^{\le j}\in\mathcal A_{\le j}}
\prod_{r=0}^{2^j-1}
|f(U^Tk_{\boldsymbol\alpha^{\le j},\kappa,r})|^2,\qquad
T_j(\kappa)
=
\sum_{\boldsymbol\alpha^{\le j}\in\mathcal A_{\le j}}
\norm{
G(U^Tk_{\boldsymbol\alpha^{\le j},\kappa,0})
\prod_{r=1}^{2^j-1}
f(U^Tk_{\boldsymbol\alpha^{\le j},\kappa,r})
}^2.
\]
\end{lemma}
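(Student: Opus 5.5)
The plan is to prove both identities simultaneously by induction on $j$, unfolding the recursion \cref{eqn:svp-frequency-mass-recursion} one level at a time. In the base case $j=0$, $\mathcal A_{\le0}$ contains only the empty tuple and $k_{\emptyset,\kappa,0}=\varepsilon_0\kappa=\kappa$. Both sides then reduce to \cref{eqn:svp-frequency-mass-base}: the product over $r\in[0,1)$ is the single factor $|f(U^T\kappa)|^2$, and the empty product in the $T_0$ formula equals $1$.

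For the inductive step, fix $j\ge1$. The key combinatorial fact is a splitting of the index set. A tuple $\boldsymbol\alpha^{\le j}\in\mathcal A_{\le j}$ corresponds bijectively to a triple $(\alpha_{j,0},\boldsymbol\alpha^{L},\boldsymbol\alpha^{R})$ with $\alpha_{j,0}\in H_j$ and $\boldsymbol\alpha^L,\boldsymbol\alpha^R\in\mathcal A_{\le j-1}$. Here $\alpha^L_{h,a}:=\alpha_{h,a}$ and $\alpha^R_{h,a}:=\alpha_{h,a+2^{j-1-h}}$ for $h\le j-1$. I will then check from \cref{eqn:svp-subtree-leaf-frequency} that, writing $\kappa':=\kappa+\alpha_{j,0}$:
\begin{itemize}
\item for $0\le r<2^{j-1}$: $\varepsilon_{r\bmod 2^j}=\varepsilon_r$ and $\lfloor r/2^j\rfloor=0$, so $k_{\boldsymbol\alpha^{\le j},\kappa,r}=k_{\boldsymbol\alpha^L,\kappa',r}$;
\item for $r=2^{j-1}+r'$ with $0\le r'<2^{j-1}$: $\varepsilon_r=-\varepsilon_{r'}$, and for $h\le j-1$ we have $r\bmod 2^h=r'\bmod 2^h$ and $\lfloor r/2^h\rfloor=\lfloor r'/2^h\rfloor+2^{j-1-h}$, so $k_{\boldsymbol\alpha^{\le j},\kappa,r}=-\varepsilon_{r'}\kappa-\varepsilon_{r'}\alpha_{j,0}+\sum_{h<j}\varepsilon_{r'\bmod2^h}\alpha^R_{h,\lfloor r'/2^h\rfloor}$.
\end{itemize}
The second expression is not literally $\pm k_{\boldsymbol\alpha^R,\kappa',r'}$, because the $\alpha$-terms carry the sign $+$ while $\kappa'$ carries $-\varepsilon_{r'}$. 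I will fix this by reindexing $\boldsymbol\alpha^R\mapsto-\boldsymbol\alpha^R$, which is a bijection of $\mathcal A_{\le j-1}$ since each $H_h$ is a subspace. After reindexing, the right-half frequencies become exactly $-k_{\boldsymbol\alpha^R,\kappa',r'}$. Since $|f|$ and $\norm{G}$ are even (because $f$ is even and $G$ is odd), the sign is harmless.

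With this splitting the computation is short. The product over $r\in[0,2^j)$ factors into a left-subtree product and a right-subtree product, both at the shifted point $\kappa'$. Summing over $\boldsymbol\alpha^L$ and $\boldsymbol\alpha^R$ independently and applying the induction hypothesis gives $V_{j-1}(\kappa')\cdot V_{j-1}(\kappa')$ for the $V$ identity. For the $T$ identity, the factor $G$ at $r=0$ lies in the left half, so the same sums give $T_{j-1}(\kappa')V_{j-1}(\kappa')$; this uses $\norm{cG}^2=|c|^2\norm{G}^2$ for the scalar product of $f$-values. Summing over $\alpha_{j,0}\in H_j$ then reproduces \cref{eqn:svp-frequency-mass-recursion}.

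I expect the main obstacle to be the index bookkeeping in the splitting step. This means verifying the identities $\varepsilon_{2^{j-1}+r'}=-\varepsilon_{r'}$, $(2^{j-1}+r')\bmod 2^h=r'\bmod2^h$ for $h<j$, and the shift $2^{j-1-h}$ in the $a$-index. It also means handling the sign mismatch carefully, by the reindexing $\boldsymbol\alpha^R\mapsto-\boldsymbol\alpha^R$ together with evenness. Everything else is a direct application of the induction hypothesis.
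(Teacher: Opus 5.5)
Your proposal is correct and follows the paper's proof: induction on $j$, the same splitting $\boldsymbol\alpha^{\le j}\leftrightarrow(\alpha_{j,0},\boldsymbol\alpha^L,\boldsymbol\alpha^R)$, and the same index identities for the two halves. The one difference is on the right half. The paper writes those frequencies exactly as $k_{\boldsymbol\alpha^R,-\kappa-\alpha_{j,0},r'}$ and then uses that $V_{j-1}$ is even, whereas you reindex $\boldsymbol\alpha^R\mapsto-\boldsymbol\alpha^R$ and use evenness of $|f|$ pointwise, which is how the paper itself handles the CVP analogue (\cref{lem:cvp-frequency-mass-closed-form}).
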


\begin{proof}
We will prove that these terms satisfy \cref{eqn:svp-frequency-mass-recursion} by induction.
The claim for $j=0$ follows from
\cref{eqn:svp-frequency-mass-base}. 

Suppose that the statement holds at level $j-1$. 
We can uniquely decompose an element
$\boldsymbol\alpha^{\le j}\in\mathcal A_{\le j}$ into $(\alpha,\boldsymbol\beta^L,\boldsymbol\beta^R)
\in
H_j\times\mathcal A_{\le j-1}\times\mathcal A_{\le j-1}$ where we
let
\begin{align}
\alpha:=\alpha_{j,0},\qquad
\boldsymbol\beta^L=(\beta^L_{h,a}:=\alpha_{h,a})_{h,a},\qquad
\boldsymbol\beta^R=(\beta^R_{h,a}:=\alpha_{h,a+2^{j-h-1}})_{h,a}
\label{eqn:svp-left-right-alpha}
\end{align}
where the indices are over $1\le h<j$ and $0\le a<2^{j-h-1}$.

By \cref{eqn:svp-subtree-leaf-frequency}, for
$0\le r<2^{j-1}$, it is not hard to see that
\begin{align}
k_{\boldsymbol\alpha^{\le j},\kappa,r}=
k_{\boldsymbol\beta^L,\kappa+\alpha,r},\qquad
k_{\boldsymbol\alpha^{\le j},\kappa,2^{j-1}+r}
=
k_{\boldsymbol\beta^R,-\kappa-\alpha,r}
\label{eqn:svp-left-right-k}
\end{align}
because of $(2^{j-1}+r)\bmod 2^h=r\bmod 2^h,
\left\lfloor\frac{2^{j-1}+r}{2^h}\right\rfloor
=
2^{j-h-1}+\left\lfloor\frac r{2^h}\right\rfloor$ for $h<j$ and
$\varepsilon_{2^{j-1}+r}=-\varepsilon_r$.

Using \cref{eqn:svp-left-right-k} and the induction hypothesis, we
obtain
\begin{align*}
&\sum_{\boldsymbol\alpha^{\le j}\in\mathcal A_{\le j}}
\prod_{r=0}^{2^j-1}
|f(U^Tk_{\boldsymbol\alpha^{\le j},\kappa,r})|^2\notag\\
&\quad=
\sum_{\alpha\in H_j}
\left(
\sum_{\boldsymbol\beta^L\in\mathcal A_{\le j-1}}
\prod_{r=0}^{2^{j-1}-1}
|f(U^Tk_{\boldsymbol\beta^L,\kappa+\alpha,r})|^2
\right)
\left(
\sum_{\boldsymbol\beta^R\in\mathcal A_{\le j-1}}
\prod_{r=0}^{2^{j-1}-1}
|f(U^Tk_{\boldsymbol\beta^R,-\kappa-\alpha,r})|^2
\right)
\notag\\
&
\quad=
\sum_{\alpha\in H_j}
V_{j-1}(\kappa+\alpha)V_{j-1}(-\kappa-\alpha)
=
\sum_{\alpha\in H_j}V_{j-1}(\kappa+\alpha)^2
=
V_j(\kappa).
\end{align*}
Similarly,
\begin{align*}
\sum_{\boldsymbol\alpha^{\le j}\in\mathcal A_{\le j}}
\norm{
G(U^Tk_{\boldsymbol\alpha^{\le j},\kappa,0})
\prod_{r=1}^{2^j-1}
f(U^Tk_{\boldsymbol\alpha^{\le j},\kappa,r})
}^2
&=
\sum_{\alpha\in H_j}
T_{j-1}(\kappa+\alpha)V_{j-1}(-\kappa-\alpha)\notag\\
&=
\sum_{\alpha\in H_j}
T_{j-1}(\kappa+\alpha)V_{j-1}(\kappa+\alpha)
=
T_j(\kappa).
\end{align*}
Here we use the fact that $V_{j-1}$ is even.
This proves the two identities at level $j$.
\end{proof}

We need the following upper bound of the sum of $V$'s and $T$'s. 
The proof of this lemma is rather involved.
The tree structure is used in \cref{eqn:svp-all-square-decompositions}, which gives an upper bound of the first term as a product of discrete Gaussian masses. Then we use the known Gaussian masses to obtain the desired bound. For the second term, we use a trick like $xe^{-x} \le ne^{-(1-1/n)x}$ to absorb the $G$ terms in the exponent at the cost of small factor in the exponent. 
We then use the second bound of \cref{lem:svp-random-lattice-masses} instead. The same proof strategy is used when proving \cref{lem:svp-fixed-tree-masses}.

\begin{lemma}
\label{lem:svp-total-frequency-masses}
Let $\lambda=\lambda_1(\cL)\le d\le(1+1/n)\lambda$. Except with
probability $2^{-\Omega(n)}$ over $U$, simultaneously for every
$j\in\{1,\ldots,m-1\}$,
\begin{align}
\sum_{\kappa\in R_j}V_j(\kappa)
\le1+2^{((m-j-1)/2-g/2+o(1))n},\qquad
\sum_{\kappa\in R_j}T_j(\kappa)
\le
\left(\frac{s^2\lambda}{q}\right)^2
2^{((m-j-1)/2-g/2+o(1))n}.
\label{eqn:svp-total-frequency-mass-bounds}
\end{align}
\end{lemma}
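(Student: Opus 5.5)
We will bound $\sum_{\kappa\in R_j}V_j(\kappa)$ by first rewriting it, through \cref{lem:svp-frequency-mass-closed-form}, as a sum over all $\kappa\in R_j$ and $\boldsymbol\alpha^{\le j}\in\mathcal A_{\le j}$ of $\prod_{r<2^j}|f(U^Tk_{\boldsymbol\alpha^{\le j},\kappa,r})|^2$. Each factor is then expanded in the primal form. Writing $U^Tk_r=z_r$, we have $|f(z_r)|^2\rho_{1/s}(\cL)^2=\sum_{u_{r,0},u_{r,1}\in Bz_r+q\cL}\rho_{q/s}(u_{r,0})\rho_{q/s}(u_{r,1})$, since $\rho_{1/s}(x+Bz/q)=\rho_{q/s}(qx+Bz)$. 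The result is a single sum of $\exp(-\pi(s/q)^2\sum_r(\norm{u_{r,0}}^2+\norm{u_{r,1}}^2))$ over tuples of vectors whose residues are coupled by the tree structure \cref{eqn:svp-subtree-leaf-frequency}. Summing over $\kappa\in R_j$ and over all $\boldsymbol\alpha^{\le j}$ frees these couplings to exactly the following. At every internal node of level $h$, the residue difference of its two children lies in $U^TH_h$, up to the signs $\varepsilon$. The residues at the root, and the difference between the two copies $b=0,1$, are free apart from the constraint that $\kappa$ ranges over $R_j$ rather than all of $\F_q^n$.

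Next we apply the average--difference map $\Psi$ along the binary tree, starting from the pairs $(u_{r,0},u_{r,1})$ at level $0$ and then merging up level by level. We use the identity $\norm{x_L}^2+\norm{x_R}^2=2\norm a^2+\tfrac12\norm d^2$ repeatedly, with sign flips for odd $\varepsilon_r$; since $\rho$ is even, a sign flip is harmless. This turns the exponent into $2^{j+1}\norm{a_{\rm root}}^2$ plus a weighted sum $\sum_h 2^{h-1}\norm{d_{h,\cdot}}^2$ over the differences. With weight $(s/q)^2$, a difference created at level $h$ is therefore a Gaussian of width $\sigma_h$ from \cref{eqn:svp-mass-notation}, up to a harmless constant rescaling of the level index.

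The tree constraints translate into coset memberships. A difference created at level $h\ge1$ lies in $K_{h-1}$, possibly shifted by an element determined by finer-level variables, and the level-$0$ differences $u_{r,0}-u_{r,1}$ lie in $q\cL=K_0$. The root average lies in a coset of $\tfrac12K_j$, or of its image scaled by $2^{-j}$. Because $\Psi$ is injective, summing first over the root and then top-down over the differences, and bounding each shifted coset mass by the centred one via \cref{lem:shifted-mass-maximum}, yields a bound of the form
\[
\rho_{1/s}(\cL)^{2^{j+1}}\sum_{\kappa\in R_j}V_j(\kappa)\le \rho_{\sigma_j'}(K_j)\prod_{h=0}^{j-1}\rho_{\sigma_h}(K_h)^{2^{j-h}}.
\]
Here $\rho_{\sigma_j'}(K_j)$ is the root factor, of width roughly $\sigma_j$. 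We then split each factor as $1$ plus the mass of $K_h\setminus\{0\}$ and apply \cref{lem:svp-random-lattice-masses}. The product over $h$ is then $(1+2^{-(g/2-o(1))n})^{O(p)}=1+2^{-\Omega(n)}$, since $p=o(n)$. The root term has to be handled separately, because $K_j$ contains the whole space $S_j$ of dimension $jn/m$ rather than just its non-zero short vectors. Counting through \cref{lem:shell}, as in the expectation computation in the proof of \cref{lem:svp-random-lattice-masses}, should give $2^{((m-j-1)/2-g/2+o(1))n}$, and combining the factors gives the $V$ bound. The leftover $1$ accounts for $\kappa$ near the principal point.

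For $T_j$ we proceed in the same way, but one leaf carries the vector factor $(s^2/q)^2\ip{u_{0,0}}{u_{0,1}}\le (s^2/q)^2\norm{u_{0,0}}\norm{u_{0,1}}$. We absorb this factor using $x e^{-\pi x^2/\sigma^2}\le \mathrm{poly}(n)\,\sigma\, e^{-\pi(1-1/n)x^2/\sigma^2}$, which costs a $2^{o(n)}$ factor and widens one width by $1/\sqrt{1-1/n}$; this is why the second bound of \cref{lem:svp-random-lattice-masses} is needed. The prefactor $(s^2\lambda/q)^2$ arises because the relevant norms are of order $\lambda$ times polynomial factors on the dominant terms, and \cref{lem:shell} with $k=2$ controls the tail.

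The main obstacle will be the bookkeeping that identifies exactly which coset each $\Psi$-variable lives in, together with the matching widths. In particular we must check that the summation over $R_j$ and over $\boldsymbol\alpha^{\le j}$ covers each coset representative once, so that no factor $Q$ is lost. We must also extract the root factor giving the exponent $(m-j-1)/2$ without losing more than $2^{o(n)}$.
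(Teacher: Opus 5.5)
\paragraph{Verdict.} Your expansion, the average--difference transform, the coset structure you give for the difference variables, and the moment absorption for $T_j$ all match the paper. There are, however, two genuine gaps: the root average is placed in the wrong lattice, and the $T_j$ bound keeps an additive constant that the statement does not allow.

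\paragraph{The root average.} This is exactly the step that must produce the exponent $(m-j-1)/2$. In the $R_j$-sum the root is not confined to a coset of $K_j$ or of $\tfrac12K_j$. The leaf-$0$ frequency $\kappa+\sum_{h\le j}\alpha_{h,0}$ ranges over all of $R_j\oplus P_j=\F_q^n$. Moreover, a common shift of every $u_{r,b}$ by an arbitrary $x\in\cL$, which fixes all differences, is realized by changing $\kappa$ and the $\alpha_{h,a}$. Concretely, put $\eta=U^{-T}(B^{-1}x\bmod q)$, shift $\kappa$ by $(\eta)_{R_j}$, and shift each $\alpha_{h,a}$ by $\varepsilon_{a2^h}(\eta)_{H_h}$; this works because $\varepsilon_{r\bmod 2^h}\varepsilon_{2^h\lfloor r/2^h\rfloor}=\varepsilon_r$. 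So, with the differences fixed, $a_{j,0}$ ranges over a coset of $\cL$. Its coefficient $2^{j+1}$ in the exponent gives it width $\sigma_{j+1}=q/(2^{(j+1)/2}s)$.

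Consequently your displayed product with root factor $\rho_{\sigma_j'}(K_j)$ is not valid. On the event of \cref{lem:svp-random-lattice-masses} it would give $\sum_{\kappa\in R_j}V_j(\kappa)\le 1+2^{-\Omega(n)}$, which fails for random lattices already at $j=1$. There the sum equals $\sum_{w\in\F_q^n}|f(w)|^4$. Each of roughly $q^n2^{-O(n)}$ cosets $\cL+Bw/q$ meeting the ball of radius $\lambda/2$ contributes $2^{-O(n)}$, and $q\to\infty$.

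Treating the root ``separately'' via the expectation computation of \cref{lem:svp-random-lattice-masses} cannot repair this. That computation shows $\rho_{\sigma_j}(K_j\setminus\{0\})$ is exponentially \emph{small}, so it can never yield $2^{((m-j-1)/2-g/2)n}$. A root in a coset of $K_{j+1}$ is the structure of the fixed-$\kappa$ sums in \cref{lem:svp-fixed-tree-masses}, not of the $R_j$-sums.

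The correct root factor is $\rho_{\sigma_{j+1}}(\cL)$. \cref{lem:shell}, together with $\log_2q=(m+1)/2+o(1)$, bounds $\rho_{\sigma_{j+1}}(\cL\setminus\{0\})\le 2^{((m-j-1)/2-g+o(1))n}$ deterministically, with no randomness over $U$. Multiplying by the difference masses, each $1+2^{-(g/2-o(1))n}$, then gives the $V$ bound.

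\paragraph{The additive constant in the $T_j$ bound.} The stated bound has no additive constant, and your plan does not remove one. Running the $V$ argument with widened widths and prefactor $(s^2\lambda/q)^22^{o(n)}$ yields $(s^2\lambda/q)^22^{o(n)}\bigl(1+2^{((m-j-1)/2-g+o(1))n}\bigr)$. At $j=m-1$ this is only $(s^2\lambda/q)^22^{o(n)}$, not the required $(s^2\lambda/q)^22^{-(g/2-o(1))n}$. Since $g<0.06$, $j=m-1$ is precisely the case where the leading $1$ cannot be absorbed.

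The missing observation is that the all-zero transformed tuple contributes nothing, because there $u_{0,0}=0$ (reflecting $G(0)=0$). Every other tuple has a distinguished nonzero variable, in one of two ways:
\begin{enumerate}[nosep]
\item all differences vanish and $a_{j,0}\in\cL\setminus\{0\}$; or
\item the lowest-level nonzero difference $d_{h,r}$ lies in $K_0\setminus\{0\}$ if $h=0$, or in $K_{h-1}\setminus\{0\}$ if $h\ge1$. For $h\ge1$, its containing coset contains $0$, because the all-zero $u$-tuple has the same (zero) lower-level differences.
\end{enumerate}
Charge that variable to the nonzero part of its lattice, using \cref{lem:shell} or the second bound of \cref{lem:svp-random-lattice-masses}. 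Summing over the at most $2^{j+1}$ choices of the distinguished variable then gives $2^{((m-j-1)/2-g/2+o(1))n}$ with no leading $1$.
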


\begin{proof}
Recall $R_{j-1}=H_j\oplus R_j$ by definition (\cref{eqn:svp-mass-notation})
We define for $j\in \{1,\ldots,m-1\}$ 
\begin{align}
\overline V_j:=\sum_{\kappa\in R_j}V_j(\kappa)=
\sum_{\kappa\in R_{j-1}}V_{j-1}(\kappa)^2,\qquad
\overline T_j:=\sum_{\kappa\in R_j}T_j(\kappa)
=
\sum_{\kappa\in R_{j-1}}
T_{j-1}(\kappa)V_{j-1}(\kappa).
\label{eqn:svp-total-frequency-mass-identities}
\end{align}
where the last equalities are obtained by summing
\cref{eqn:svp-frequency-mass-recursion} over $\kappa\in R_j$.

We first give a bound of $\overline V_j$, whose terms are products of $2^{j}$ factors of the form $|f(z)|^2$ for some $z$ thanks to \cref{lem:svp-frequency-mass-closed-form}.
More precisely, for each summand indexed by $\kappa$ and
$\boldsymbol\alpha^{\le j}$, we define
\[
z_r
:=
\varepsilon_rU^T
k_{\boldsymbol\alpha^{\le j},\kappa,r}.
\]
Using the fact that $f$ is even and the above equation, we have 
\[
\left|
f\left(
U^Tk_{\boldsymbol\alpha^{\le j},\kappa,r}
\right)
\right|^2
=
|f(z_r)|^2
=
\frac{1}{\rho_{1/s}(\cL)^2}
\sum_{u_{r,0},u_{r,1}\in Bz_r+q\cL}
e^{-\pi s^2(\norm{u_{r,0}}^2+\norm{u_{r,1}}^2)/q^2}
\]
using the representation of $f$ in \cref{eqn:odd-local-primal}. Plugging this to \cref{lem:svp-frequency-mass-closed-form} and expanding introduces vectors $u_{r,0},u_{r,1}\in Bz_r+q\cL$ for $r\in[0,2^j)$ in the exponents,
and the denominator $\rho_{1/s}(\cL)^{2^{j+1}}$. 
That is, the summand in $\overline{V}_j$ can be indexed by $(\boldsymbol\alpha^{\le j},\kappa)$ (from \cref{lem:svp-frequency-mass-closed-form}) and $(u_{r,b})_{r,b}$ (from the expansion).

\begin{claim}
\label{clm:svp-expanded-sum-indexing}
In the expansion of $\overline V_j$, the tuple
$(u_{r,b})_{r,b}$ uniquely determines the summation indices
$(\boldsymbol\alpha^{\le j},\kappa)$. Consequently, the expanded sum
can be indexed by $(u_{r,b})_{r,b}$ without multiplicity.
\end{claim}

\begin{proof}
Since
\[
B^{-1}u_{r,0}\bmod q
=
z_r
=
\varepsilon_rU^T
k_{\boldsymbol\alpha^{\le j},\kappa,r},
\]
the tuple $(u_{r,b})_{r,b}$ determines every leaf frequency via
\[
k_{\boldsymbol\alpha^{\le j},\kappa,r}
=
\varepsilon_rU^{-T}
\left(B^{-1}u_{r,0}\bmod q\right).
\]

Given this, $\kappa$ is determined by $k_{\boldsymbol\alpha^{\le j},\kappa,0}
=
\kappa+\sum_{h=1}^j\alpha_{h,0}$ from \cref{eqn:svp-subtree-leaf-frequency};  and $\varepsilon_0=1$;
since $\kappa\in R_j$ and $\alpha_{h,0}\in H_h$, the $R_j$-component
of this frequency is $\kappa$ under the decomposition
\[
\F_q^n
=
R_j\oplus H_1\oplus\cdots\oplus H_j.
\]

Similarly, $\alpha_{h,a}$ is determined by the $H_h$-component of
$k_{\boldsymbol\alpha^{\le j},\kappa,r}$ for $r=2^ha$, which is
\[
\varepsilon_{r\bmod 2^h}
\alpha_{h,\lfloor r/2^h\rfloor}=
\varepsilon_{(2^ha)\bmod 2^h}
\alpha_{h,\lfloor 2^ha/2^h\rfloor}
=
\varepsilon_0\alpha_{h,a}
=
\alpha_{h,a}.
\]
Thus all the variables $\kappa$ and $\alpha_{h,a}$ are uniquely
determined by $(u_{r,b})_{r,b}$.
\end{proof}

We will simplify the exponent of each term. More precisely, we prove that 
\begin{align}
\sum_{r=0}^{2^j-1}
\left(\norm{u_{r,0}}^2+\norm{u_{r,1}}^2\right)
=
2^{j+1}\norm{a_{j,0}}^2
+\frac12\sum_{r=0}^{2^j-1}\norm{d_{0,r}}^2
+\sum_{h=1}^j2^{h-1}
\sum_{r=0}^{2^{j-h}-1}\norm{d_{h,r}}^2
\label{eqn:svp-all-square-decompositions}
\end{align}
where $a_{j,0}$ and $(d_{h,r})_{h,r}$ are defined below.
These
vectors are uniquely determined by the vectors $u_{r,b}$, and the
transformation is invertible. 
More precisely, for
$r\in[0,2^j)$, let
\begin{align}
    \label{eqn:average-difference}
a_{0,r}:=\frac{u_{r,0}+u_{r,1}}2,\qquad
d_{0,r}:=u_{r,0}-u_{r,1},
\end{align}
and, for $h\in\{1,\ldots,j\}$ and $r\in[0,2^{j-h})$, let
\[
a_{h,r}:=\frac{a_{h-1,2r}+a_{h-1,2r+1}}2,\qquad
d_{h,r}:=a_{h-1,2r}-a_{h-1,2r+1}.
\]
It is not hard to see that these transformations are invertible.
\cref{eqn:svp-all-square-decompositions} is proven by repeatedly applying
\[
\norm{x_L}^2+\norm{x_R}^2
=
2\norm{\frac{x_L+x_R}{2}}^2
+\frac12\norm{x_L-x_R}^2.
\]
We call the map
\[
(u_{r,b})_{\substack{r\in[0,2^j)\\b\in\bit}}
\longmapsto
\left(
a_{j,0},
(d_{h,r})_{\substack{h\in[0,j+1)\\r\in[0,2^{j-h})}}
\right)
\]
the \emph{average-difference transformation}. The following claim
provides the properties of this transformation that we need.

\begin{claim}
\label{clm:svp-average-difference}
The average-difference transformation extends to a linear bijection
\[
\Psi_j:
(\R^n)^{[0,2^j)\times\{0,1\}}
\longrightarrow
\R^n\times
\prod_{h=0}^j(\R^n)^{[0,2^{j-h})}.
\]
When restricted to the tuples $(u_{r,b})_{r,b}$ occurring in the
expansion of $\overline V_j$, this transformation has the following
additional property. 
\begin{itemize}[nosep]
\item If all difference but $a_{j,0}$ are fixed, the possible values of
$a_{j,0}$ are contained in a single coset of $\cL$.
\item Every $d_{0,r}$ belongs to $K_0$.
\item For every $h\in[j]$ and $r\in[0,2^{j-h})$, if all
lower-level differences $d_{\ell,r'}$ with $\ell<h$ are fixed, the
possible values of $d_{h,r}$ are contained in a single coset of
$K_{h-1}$.
\end{itemize}
Each coset may depend on the fixed variables.
\end{claim}

\begin{proof}
At each node, the map
\[
(x_L,x_R)
\longmapsto
\left(\frac{x_L+x_R}{2},x_L-x_R\right)
\]
has inverse
\[
(a,d)\longmapsto
\left(a+\frac d2,a-\frac d2\right),
\]
which proves the bijection.

We now restrict to the tuples occurring in the expansion of
$\overline V_j$. Fix all but one of the transformed variables and
compare two possible values of the unfixed variable. If $a_{j,0}$
changes by $x$, the inverse transformation shows that every
$u_{r,b}$ changes by $x$. Since every $u_{r,b}$ belongs to $\cL$,
we have $x\in\cL$. Moreover, the two variables $u_{r,0}$ and
$u_{r,1}$ come from the same coset of $q\cL$, and hence
\[
d_{0,r}=u_{r,0}-u_{r,1}\in q\cL=K_0.
\]

It remains to consider $d_{h,r}$ for $h\ge1$. 
Recall the relation, for every $t$ and $b$,
\[
B^{-1}u_{t,b}\bmod q
=
\varepsilon_tU^T
k_{\boldsymbol\alpha^{\le j},\kappa,t}.
\]
On the other hand, the definition of $d_{h,r}=a_{h-1,2r}-a_{h-1,2r+1}$ with the average transformations gives
\[
2^hd_{h,r}
=
\sum_{i=0}^{2^{h-1}-1}\sum_{b\in\bit}
\left(
u_{t_i^L,b}-u_{t_i^R,b}
\right),
\]
where $t_i^L:=2^hr+i,$ and $t_i^R:=2^hr+2^{h-1}+i.$
Combining these identities yields
\[
B^{-1}(2^hd_{h,r})\bmod q
=
U^T
\sum_{i=0}^{2^{h-1}-1}\sum_{b\in\bit}
\left(
\varepsilon_{t_i^L}
k_{\boldsymbol\alpha^{\le j},\kappa,t_i^L}
-
\varepsilon_{t_i^R}
k_{\boldsymbol\alpha^{\le j},\kappa,t_i^R}
\right).
\]
Expanding the two $k$'s using
\cref{eqn:svp-subtree-leaf-frequency}, the $\kappa$-terms and the
$\alpha_{\ell,a}$-terms with $\ell\ge h$ have the same indices and
coefficients and hence cancel. Every remaining term can be written as
\[
\varepsilon_{a_L}\alpha_{\ell,a_L}
-
\varepsilon_{a_R}\alpha_{\ell,a_R}
\]
for some $\ell<h$ and some $a_L,a_R$, where
$\alpha_{\ell,a_L},\alpha_{\ell,a_R}\in H_\ell$, so that
this term belongs to $H_\ell$.
Thus the expression in parentheses
belongs to
\[
H_1\oplus\cdots\oplus H_{h-1}=P_{h-1}.
\]
Consequently,
\[
B^{-1}(2^hd_{h,r})\bmod q\in U^TP_{h-1}=S_{h-1}.
\]
Since $2^hd_{h,r}$ is a sum of vectors in $\cL$, this shows that
$2^hd_{h,r}\in K_{h-1}$.

Fix all differences $d_{\ell,s}$ with $\ell<h$, and let $v$ and
$v'$ be any two possible values of $d_{h,r}$ consistent with these
fixed values. By the inverse transformation, the corresponding
vectors $u_{t,b}$ change by a common lattice vector within each of
the two groups defining $d_{h,r}$. Denoting these changes by
$x_L,x_R\in\cL$, we have
\[
v'-v=x_L-x_R\in\cL.
\]
Moreover, the preceding containment gives $2^hv,\,2^hv'\in K_{h-1},$
so $2^h(v'-v)\in K_{h-1}$. Since $q$ is odd, multiplication by
$2^h$ is invertible modulo $q$, and therefore
$v'-v\in K_{h-1}$. Thus all possible values of $d_{h,r}$ lie in a
single coset of $K_{h-1}$.
\end{proof}

By the bijection and the uniqueness observation above, every term in
the expansion of $\overline V_j$ is counted exactly once in the
transformed variables. We first sum over $a_{j,0}$ and then over the
variables $d_{h,r}$ in decreasing order of $h$. Their numbers,
containing cosets, and Gaussian widths are as follows:
\[
\begin{array}{c|c|c|c}
\text{variable}
&
\text{number}
&
\text{containing coset}
&
\text{width}
\\ \hline
a_{j,0}
&
1
&
\text{a coset of }\cL
&
\dfrac{q}{2^{(j+1)/2}s}
\\[2mm]
d_{0,r}
&
2^j
&
K_0
&
\dfrac{\sqrt2q}{s}
\\[2mm]
d_{1,r}
&
2^{j-1}
&
\text{a coset of }K_0&
\dfrac{q}{s}
\\[2mm]
d_{h+1,r}
&
2^{j-h-1}
&
\text{a coset of }K_h
&
\dfrac{q}{2^{h/2}s}=\sigma_h
\quad(1\le h<j).
\end{array}
\]
The widths follow from the coefficients in
\cref{eqn:svp-all-square-decompositions}. At each step, the values
that occur in the current sum form a subset of one coset in the
corresponding row. Extending the sum to the entire coset and applying
\cref{lem:shifted-mass-maximum} therefore gives
\begin{align}
\overline V_j
\le
\frac{\rho_{q/(2^{(j+1)/2}s)}(\cL)}
{\rho_{1/s}(\cL)^{2^{j+1}}}
\rho_{\sqrt2q/s}(K_0)^{2^j}
\rho_{q/s}(K_0)^{2^{j-1}}
\prod_{h=1}^{j-1}
\rho_{\sigma_h}(K_h)^{2^{j-h-1}}.
\label{eqn:svp-total-frequency-mass-product}
\end{align}
The denominator comes from the $2^j$ factors $|f(z_r)|^2$, each of
which contributes $\rho_{1/s}(\cL)^{-2}$.

We remove the denominator using $\rho_{1/s}(\cL)^{2^{j+1}}\ge1$, which only increases it.
By \cref{lem:shell},
\[
\rho_{q/(2^{(j+1)/2}s)}(\cL\setminus\{0\})
\le
2^{o(n)}
\left(
\frac{\beta^2nq^2}
{2\pi e\lambda^2\,2^{j+1}s^2}
\right)^{n/2}
=
2^{o(n)}
\left(
\frac{t_0q^2d^2}
{2^{j+2}t\lambda^2}
\right)^{n/2}
=
2^{((m-j-1)/2-g+o(1))n},
\]
where we use $t_0/t=2^{-2g}$, $d/\lambda = 1+O(1/n)$, and $\log_2 q = \frac{m+1}{2}+o(1)$.
Since $K_0=q\cL$, \cref{cor:working-scale} gives
\[
\rho_{\sqrt2q/s}(K_0)
=
\rho_{\sqrt2/s}(\cL)
\le
1+2^{-(g-o(1))n},
\qquad
\rho_{q/s}(K_0)
=
\rho_{1/s}(\cL)
\le
1+2^{-(1/2+g-o(1))n}.
\]
Moreover, \cref{lem:svp-random-lattice-masses} gives $\rho_{\sigma_h}(K_h)
\le
1+2^{-(g/2-o(1))n}$ for $1\le h\le j-1.$
Thus every remaining numerator mass in
\cref{eqn:svp-total-frequency-mass-product} is at most
$1+2^{-(g/2-o(1))n}$.
Multiplying all these bounds, we have the desired bound
\begin{align}
\label{eqn: overlineV_bound}
\overline V_j
\le1+2^{((m-j-1)/2-g/2+o(1))n}.
\end{align}

For $\overline T_j$, one of the factors $|f(z)|^2$ is replaced by
$\norm{G(z)}^2$. By the representation of $G$ in \cref{eqn:odd-local-primal},
\[
G(z)
=
-\frac{2\pi s^2}{q\rho_{1/s}(\cL)}
\sum_{u\in Bz+q\cL}
u\,e^{-\pi s^2\norm{u}^2/q^2}.
\]
Applying the Cauchy-Schwarz inequality gives
\begin{align}
\norm{G(z)}^2
&\le
\left(\frac{2\pi s^2}{q}\right)^2
\frac{1}{\rho_{1/s}(\cL)^2}
\left(
\sum_{u_0\in Bz+q\cL}
\norm{u_0}^2e^{-\pi s^2\norm{u_0}^2/q^2}
\right)
\left(
\sum_{u_1\in Bz+q\cL}
e^{-\pi s^2\norm{u_1}^2/q^2}
\right) \notag\\
&=
\left(\frac{2\pi s^2}{q}\right)^2
\frac{1}{\rho_{1/s}(\cL)^2}
\sum_{u_0,u_1\in Bz+q\cL}
\norm{u_0}^2
e^{-\pi s^2(\norm{u_0}^2+\norm{u_1}^2)/q^2}.
\label{eqn:svp-gradient-cauchy-schwarz}
\end{align}
Thus the expansion for $\overline T_j$ is bounded by the
expansion for $\overline V_j$ with an additional factor
$(2\pi s^2/q)^2\norm{u_{0,0}}^2$ in each summand.

From the definitions of $a_{h,r}$ and $d_{h,r}$, we have $u_{0,0}
=
a_{j,0}+\frac12\sum_{h=0}^j d_{h,0} $ which gives
\begin{align*}
\norm{u_{0,0}}^2
&\le
(j+2)\left(
\norm{a_{j,0}}^2
+\frac14\sum_{h=0}^j\norm{d_{h,0}}^2
\right) \notag
\\&
\le (j+1) \left( 2^{j+1}\norm{a_{j,0}}^2
+\frac12\sum_{r=0}^{2^j-1}\norm{d_{0,r}}^2
+\sum_{h=1}^j2^{h-1}
\sum_{r=0}^{2^{j-h}-1}\norm{d_{h,r}}^2 \right)
\end{align*}
Note that all the terms in the right hand side appear in \cref{eqn:svp-all-square-decompositions}.
We will use the inequality for all $R \ge 0$
\begin{align}
R e^{-\pi s^2R/q^2}
\le
\frac{nq^2}{\pi s^2}
e^{-\pi(1-1/n)s^2R/q^2}.
\label{eqn:svp-gradient-moment-absorption}
\end{align}
because of $e^{x+y} \ge (1+x) e^y \ge xe^y$ with $x=\pi Rs^2/nq^2$. 
From this, we have
\begin{align*}
    &\norm{u_{0,0}}^2 \exp \left(-\frac{\pi s^2 \left(2^{j+1}\norm{a_{j,0}}^2
+\frac12\sum_{r=0}^{2^j-1}\norm{d_{0,r}}^2
+\sum_{h=1}^j2^{h-1}
\sum_{r=0}^{2^{j-h}-1}\norm{d_{h,r}}^2 \right)}{q^2}\right)
\\
& \le 2^{o(n)} \frac{nq^2}{\pi s^2} \exp\left(-\frac{\pi s^2 (1-1/n) \left(2^{j+1}\norm{a_{j,0}}^2
+\frac12\sum_{r=0}^{2^j-1}\norm{d_{0,r}}^2
+\sum_{h=1}^j2^{h-1}
\sum_{r=0}^{2^{j-h}-1}\norm{d_{h,r}}^2 \right)}{q^2}\right)
\end{align*}
where we use $j=2^{o(n)}$.

The moment absorption multiplies every Gaussian width in the table
above by $1/\sqrt{1-1/n}$, but does not change the summation domains
or their containing cosets. Hence the same ordered summation argument
based on \cref{clm:svp-expanded-sum-indexing,clm:svp-average-difference} applies with these widened
widths.

Note that the summand in which $a_{j,0}=0$ and every $d_{h,r}=0$ vanishes,
since then $u_{0,0}=0$. Hence, it suffices to bound the contribution
in which at least one of these variables is nonzero.

We classify every transformed tuple other than the all-zero tuple as
follows. Either all differences $d_{h,r}$ vanish and $a_{j,0}\ne0$, or there
is a smallest level $h$ containing a nonzero difference; in the
latter case, choose $r$ such that $d_{h,r}\ne0$. Thus
$d_{\ell,r'}=0$ for every $\ell<h$ and every $r'$.

In the first case, the tuple $u_{t,b}=0$ for every $t,b$ also occurs
in the expansion and has all differences equal to zero. Hence the
coset containing the possible values of $a_{j,0}$ contains zero and
is therefore $\cL$. Since $a_{j,0}\ne0$, the corresponding sum is
bounded by the Gaussian mass of $\cL\setminus\{0\}$.

In the second case, if $h=0$, then
$d_{0,r}\in K_0\setminus\{0\}$. Suppose that $h\ge1$. The tuple
$u_{t,b}=0$ for every $t,b$ has the same fixed lower-level
differences and has $d_{h,r}=0$. Therefore, the coset of
$K_{h-1}$ containing the possible values of $d_{h,r}$ contains zero
and hence equals $K_{h-1}$. Since $d_{h,r}\ne0$, we obtain $d_{h,r}\in K_{h-1}\setminus\{0\}.$

We bound all other variables by their full Gaussian masses as above
and sum over at most $2^{j+1}=2^{o(n)}$ choices of the distinguished
nonzero variable. The second estimate in
\cref{lem:svp-random-lattice-masses} and \cref{lem:shell} then bound
the Gaussian contribution by
\[
2^{((m-j-1)/2-g/2+o(1))n}.
\]
The prefactor from the primal representation of $G$,
Cauchy-Schwarz, and moment absorption is at most
$(s^2\lambda/q)^22^{o(n)}$. Multiplying this by the preceding
Gaussian-mass bound and absorbing all subexponential factors proves the second bound in
\cref{eqn:svp-total-frequency-mass-bounds}.
\end{proof}

\paragraph{The nonprincipal mass.}
As in \cref{lem:svp-nonprincipal-mass}, let $v\in\cL$ be a shortest
vector, $w=B^{-1}v\bmod q$, $k_*=U^{-T}w$, and
$\theta_*=(k_*)_{H_m}$. For $j\in\{0,\ldots,m-1\}$, let
\[
C_j^*:=U^Tk_*+S_j=w+S_j, \qquad C_{j-1}^*\subseteq C_j^*\quad \text{ for every }j\ge1.
\]

For $j\in\{0,\ldots,m-1\}$, we define the following terms
\begin{align}
V_j^*
:=|f(w)|^{2^{j+1}},
\quad
B_j
:=V_j(k_*)-V_j^*,\quad
T_j^*
:=\norm{G(w)}^2|f(w)|^{2(2^j-1)},\quad
E_j
:=T_j(k_*)-T_j^*,
\label{eqn:svp-target-mass-errors}
\end{align}
where $V_j^*$ and $T_j^*$ are the principal terms
in $V_j(k_*)$ and $T_j(k_*)$, respectively.
Thus $B_j$ and
$E_j$ are the sums of the remaining nonnegative terms. In
particular, $U^Tk_*=w$ and hence $B_0=E_0=0$.

Since $k_*-\theta_*\in P_{m-1}$ and $T_{m-1}$ is
$P_{m-1}$-periodic, we have
$T_{m-1}(k_*)=T_{m-1}(\theta_*)$.
Plugging $j=m-1$ in
\cref{eqn:svp-target-mass-errors,eqn:svp-full-tree-mass-closed-form}
shows that $T_{m-1}(k_*)$ is the sum over all
$\boldsymbol\alpha$, and the term indexed by
$\boldsymbol\alpha^*$ is $T_{m-1}^*$. This gives the identity
\begin{align}
E_{m-1}
=
\sum_{\boldsymbol\alpha\ne\boldsymbol\alpha^*}
\norm{
G(U^Tk_{\boldsymbol\alpha,\theta_*,0})
\prod_{r=1}^{p-1}f(U^Tk_{\boldsymbol\alpha,\theta_*,r})
}^2.
\label{eqn:svp-nonprincipal-mass-identification}
\end{align}

Our goal is to bound $E_{m-1}$. We prove the bound using induction using the recursive formula for $B$ and $E$ with some new terms.
For $j\ge1$, we define
\[
\Delta_j
:=
\sum_{\alpha\in H_j\setminus\{0\}}
V_{j-1}(k_*+\alpha)^2,\qquad
\Gamma_j
:=
\sum_{\alpha\in H_j\setminus\{0\}}
T_{j-1}(k_*+\alpha)V_{j-1}(k_*+\alpha).
\]
Separating the term $\alpha=0$ in
\cref{eqn:svp-frequency-mass-recursion} gives
\begin{align}
V_j(k_*)
&=V_{j-1}(k_*)^2+\Delta_j,\qquad
T_j(k_*)
=T_{j-1}(k_*)V_{j-1}(k_*)+\Gamma_j.
\label{eqn:svp-target-mass-recursion}
\end{align}
Since
$V_j^*=(V_{j-1}^*)^2,
T_j^*=T_{j-1}^*V_{j-1}^*$
by definition, subtracting these terms from
\cref{eqn:svp-target-mass-recursion} gives
\begin{align}
B_j=2V_{j-1}^*B_{j-1}+B_{j-1}^2+\Delta_j,\qquad
E_j
=E_{j-1}(V_{j-1}^*+B_{j-1})
+T_{j-1}^*B_{j-1}+\Gamma_j.
\label{eqn:svp-target-mass-error-recursion}
\end{align}

\begin{lemma}
\label{lem:svp-target-increments}
Let $\lambda=\lambda_1(\cL)\le d\le(1+1/n)\lambda$. Except with
probability $o(1)$ over $U$, simultaneously for every
$j\in\{1,\ldots,m-1\}$,
\begin{align}
\Delta_j
&\le2^{-(1/2+g/4-o(1))n},&
\Gamma_j
&\le
\left(\frac{s^2\lambda}{q}\right)^2
2^{-(1/2+g/4-o(1))n}.
\label{eqn:svp-target-increments}
\end{align}
\end{lemma}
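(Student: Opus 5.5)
We bound $\Delta_j$ and $\Gamma_j$ with the same expansion machinery as in \cref{lem:svp-total-frequency-masses}, now for a fixed tree rooted at the coset $C_j^*=w+S_j$ rather than summed over all $\kappa\in R_j$. By \cref{lem:svp-frequency-mass-closed-form}, $V_{j-1}(k_*+\alpha)^2$ summed over $\alpha\in H_j$ is the expansion of $V_j(k_*)$. Expanding each $|f(z_r)|^2$ primally gives vectors $u_{r,b}\in Bz_r+q\cL$, and the indexing claim and average--difference transformation (\cref{clm:svp-expanded-sum-indexing,clm:svp-average-difference}) apply verbatim. There is one change. Since $\kappa=k_*$ is now fixed, the root average $a_{j,0}$ is no longer summed over a coset of $\cL$. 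Instead, every $u_{r,b}$ has $B^{-1}u_{r,b}\bmod q\in \pm(w+S_{j})$, and, after fixing the differences, $a_{j,0}$ ranges over a coset of the sublattice $K_j$. The condition $\alpha\ne 0$ (that is, $\alpha_{j,0}\neq 0$) means the top-level difference $d_{j,0}$ lies in a coset of $K_{j-1}$ that is not $K_{j-1}$ itself. Concretely, $B^{-1}(2^jd_{j,0})\bmod q$ has $U^T$-preimage with nonzero $H_j$-component $2^{j}\alpha$ up to sign.

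\textbf{Step 1.} Bound $\Delta_j$ by the product of coset masses. The factors for $d_{h,r}$ with $h<j$ are at most $1+2^{-(g/2-o(1))n}$, exactly as in the proof of \cref{lem:svp-total-frequency-masses}, using \cref{cor:working-scale} and \cref{lem:svp-random-lattice-masses}. The remaining pair $(a_{j,0},d_{j,0})$ is then controlled jointly. I would undo the last average--difference step and write the contribution as $\sum \rho(a_{j-1,0})\rho(a_{j-1,1})$ over pairs of lattice vectors, with widths $\sigma_{j-1}$-type up to $\sqrt2$. Here $a_{j-1,0}\in \pm v/q$-shifted coset structure: its class mod $q$ is $w+(\text{element of }S_{j-1})+U^T\alpha$ with $U^T\alpha\notin S_{j-1}$. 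The worst terms are those where one child sits at the principal coset $w+S_{j-1}$, with mass about $\rho_{1/s}(v/q)^{2^{j-1}\cdot 2}$, and the other in a nonprincipal coset $w+U^T\alpha+S_{j-1}$.

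\textbf{Step 2.} Control the nonprincipal coset by averaging over $U$. For fixed $x\in\cL$, the probability over $U$ that $B^{-1}x\bmod q\in w+S_j\setminus(w+S_{j-1})$, and similarly for the $-w$ branch, is at most about $q^{-(n-jn/m)}$. Hence, as in \eqref{eqn:svp-random-lattice-expectation}, $\E_U[\Delta_j]$ is bounded by $q^{-(n-jn/m)}$ times a full mass $\rho_{\sigma}(\cL\setminus\{0\})$ at width roughly $q/(2^{j/2}s)$. Vectors in $q\cL$ plus $v$-multiples cannot land there, because the coset is off $\pm w+S_{j-1}$. \cref{lem:shell} gives this mass as $2^{(\log_2 q-j/2-g+o(1))n}$. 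Multiplying by $q^{-(m-j)n/m}=2^{-(m-j)(m+1)n/(2m)+o(n)}$ yields an exponent of about $-(1/2+g)n$ once one checks the bookkeeping with $\log_2q=(m+1)/2+o(1)$. Markov's inequality with slack $2^{gn/4}$ and a union bound over $j\le m-1$ then give $2^{-(1/2+g/4-o(1))n}$, with failure probability $2^{-\Omega(n)}$.

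\textbf{Step 3.} Bound $\Gamma_j$ identically. Use Cauchy--Schwarz \eqref{eqn:svp-gradient-cauchy-schwarz} and the moment absorption \eqref{eqn:svp-gradient-moment-absorption}, which widens the widths by $1/\sqrt{1-1/n}$ and gives the prefactor $(s^2\lambda/q)^2 2^{o(n)}$. Then invoke the second bound in \cref{lem:svp-random-lattice-masses}.

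The main obstacle is Step 2: making precise which lattice vectors can reach the nonprincipal coset, including handling the contributions of the $\pm v$ and $q\cL$ terms. I also need to verify that the exponent arithmetic really reaches $1/2+g/2$ before the Markov loss. I would first try it for $j=1$ and then propagate the computation through the table of widths.
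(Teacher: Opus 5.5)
\textbf{There is a genuine gap in Step~2.} The mechanism is right: a nonprincipal $S_{j-1}$-coset falls into $w+S_j$ with probability about $q^{-(n-jn/m)}$. But the expectation bound you derive from it is false as stated, and the reason you give for discarding $q\cL$ is wrong.

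\emph{Why the bound fails.} A vector of $q\cL$ has residue $0$, and $0\in w+S_j\setminus(w+S_{j-1})$ exactly when $w\in S_j\setminus S_{j-1}$; avoiding $w+S_{j-1}$ does not prevent this. That event has probability $\Theta(q^{-(n-jn/m)})$. On it, $\alpha:=-(k_*)_{H_j}\ne0$ gives $k_*+\alpha\in P_{j-1}$, so $\Delta_j\ge V_{j-1}(0)^2\ge1$. At $j=m-1$ this forces $\E_U[\Delta_{m-1}]\ge 2^{-(1/2+o(1))n}$, since $q^{n/m}=Q=2^{n/2+o(n)}$. That already exceeds the threshold $2^{-(1/2+g/4-o(1))n}$, so no Markov argument on the unconditioned $\Delta_j$ can work.

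\emph{The missing step.} Multiply by $\ind_{\{w\notin S_{m-1}\}}$ before averaging, and add $\Pr[w\in S_{m-1}]\approx Q^{-1}$ to the failure probability. This removes exactly the zero coset, which accounts for the leading $1$ in $\overline V_j\le 1+2^{((m-j-1)/2-g/2+o(1))n}$. You must also condition on $S_0,\ldots,S_{j-1}$ and use only the fresh randomness of $S_j$. The lower-level mass bounds you invoke in Step~1 hold only on a $U$-dependent event, so averaging over all of $U$ at once is not justified.

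\textbf{Your structural picture is also off.} The top-level index $\alpha=\alpha_{j,0}$ enters every $z_r=\varepsilon_rU^Tk_{\boldsymbol\alpha^{\le j},k_*,r}$ with the same sign. It therefore cancels from $d_{j,0}$, as in \cref{clm:svp-fixed-tree-average-difference}, and is carried only by the root average $a_{j,0}$. Since $V_{j-1}$ is even, both children of $V_{j-1}(k_*+\alpha)^2$ lie in the same nonprincipal coset $w+U^T\alpha+S_{j-1}$; no configuration has one child at the principal coset. The relevant width is $\sigma_{j+1}=q/(2^{(j+1)/2}s)$, with $\rho_{\sigma_{j+1}}(\cL\setminus\{0\})\le2^{((m-j-1)/2-g+o(1))n}$. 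With the width and mass exponent you wrote, multiplying by $q^{-(m-j)n/m}$ gives exponent $1/2-g-(m-j)/(2m)$, not $-(1/2+g)$.

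\textbf{The paper's route needs no re-expansion.} $V_{j-1}$ and $T_{j-1}$ are constant on $S_{j-1}$-cosets and determined by $S_{\le j-1}$, which gives
$\E[\ind_{\{w\notin S_{m-1}\}}\Delta_j\mid S_0,\ldots,S_{j-1}]\le2q^{-(n-jn/m)}(\overline V_j-1)$,
and the analogue for $\Gamma_j$ with $\overline T_j$. Then \cref{lem:svp-total-frequency-masses} yields $2^{-(1/2+g/2+(m-j)/(2m)-o(1))n}$, times $(s^2\lambda/q)^2$ for $\Gamma_j$. Markov's inequality and a union bound over $j$ finish the proof.
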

\begin{proof}
Fix $j\in\{1,\ldots,m-1\}$. Under the bijection
$\kappa+P_{j-1}\mapsto U^T\kappa+S_{j-1}$, the terms in
$\Delta_j$ and $\Gamma_j$ correspond to the $S_{j-1}$-cosets
contained in $C_j^*$ other than $C_{j-1}^*$. Conditional on
$S_0,\ldots,S_{j-1}$, each fixed such coset is contained in $C_j^*$
with probability at most $\frac{q^{jn/m}-q^{(j-1)n/m}}{q^n-q^{(j-1)n/m}}
\le2q^{-(n-jn/m)}.$

On the event $w\notin S_{m-1}$, the zero coset is not contained in
$C_j^*$. Since $V_{j-1}(0)\ge1$, linearity of expectation and the definitions in
\cref{eqn:svp-total-frequency-mass-identities} give
\begin{align*}
\E\left[\ind_{\{w\notin S_{m-1}\}}\Delta_j
\mid S_0,\ldots,S_{j-1}\right]
&\le2q^{-(n-jn/m)}(\overline V_j-1) \le 2^{-(1/2+g/2+(m-j)/(2m)-o(1))n},
\notag\\
\E\left[\ind_{\{w\notin S_{m-1}\}}\Gamma_j
\mid S_0,\ldots,S_{j-1}\right]
&\le2q^{-(n-jn/m)}\overline T_j \le \left(\frac{s^2\lambda}{q}\right)^2
2^{-(1/2+g/2+(m-j)/(2m)-o(1))n},
\end{align*}
where we use \cref{lem:svp-total-frequency-masses} and $\log_2q=(m+1)/2+o(1)$ in the final inequalities.
Markov's inequality shows that either bound in \cref{eqn:svp-target-increments} fails for a fixed $j$ with probability at most
\[
2^{-(g/4+(m-j)/(2m)-o(1))n}.
\]
Finally, $\Pr[w\in S_{m-1}]
=
\frac{q^{n-n/m}-1}{q^n-1}
=o(1).$
A union bound shows that all the above events occur except with probability $o(1).$
\end{proof}

\begin{proof}[Proof of \cref{lem:svp-nonprincipal-mass}]
We will bound the term $E_{m-1}$ in \cref{eqn:svp-nonprincipal-mass-identification}. 
By \cref{lem:svp-target-increments}, the conclusion of that lemma holds simultaneously for every $j\in\{1,\ldots,m-1\}$
except with probability $o(1)$.

By \cref{lem:shifted-mass-maximum},
$|f(w)|
=
\frac{\rho_{1/s}(\cL+v/q)}{\rho_{1/s}(\cL)}
\le1,$ and hence $V_j^*=|f(w)|^{2^{j+1}}\le1$.
\cref{lem:odd-local} and $q=2^{o(n)}$ give
\[
\norm{G(w)}
\le
|f(w)|\left(
\frac{2\pi s^2\lambda}{q}
+2^{-\Omega(n)}s^2\lambda
\right)
\le
O\left(\frac{s^2\lambda}{q}\right)|f(w)|.
\]
This gives $T_j^*=
\norm{G(w)}^2|f(w)|^{2(2^j-1)}
\le
O\left(\frac{s^2\lambda}{q}\right)^2
|f(w)|^{2^{j+1}}
\le
O\left(\frac{s^2\lambda}{q}\right)^2.$

We first bound $B_j$. Starting with $B_0=0$, the first recurrence in
\cref{eqn:svp-target-mass-error-recursion} gives inductively
\begin{align}
B_j
\le
4^j2^{-(1/2+g/4-o(1))n} = o(1),
\label{eqn:svp-target-scalar-error-bound}
\end{align}
which can be verified by $2B_{j-1}+B_{j-1}^2
+2^{-(1/2+g/4-o(1))n}
\le
4^j2^{-(1/2+g/4-o(1))n}.$
Since $j\le m=O(\log n)$, the factor $4^j$ is $2^{o(n)}$.
Using this bound in the second recurrence in
\cref{eqn:svp-target-mass-error-recursion} gives, together with 
$V_{j-1}^*+B_{j-1}\le2$ for all sufficiently large $n$ because of \cref{eqn:svp-target-scalar-error-bound} and $V_{j-1}^* = |f(w)|^{2^j} \le 1$,
\[
E_j
\le
\left(\frac{s^2\lambda}{q}\right)^2
2^{-(1/2+g/4-o(1))n}
\]
for every $j\le m-1$.
Iterating the recurrence gives a factor $2^{O(j)}=2^{o(n)}$,
which is absorbed by the $o(1)$ term.

Plugging $j=m-1$ proves \cref{lem:svp-nonprincipal-mass} given $g>0$.
\end{proof}

\paragraph{The second moment.}
For $\boldsymbol i\in[N]^p$ and $\theta\in H_m$, let
\[
\mathcal K_{\boldsymbol i}(\theta)
:=
Q^{p-1}C_{\boldsymbol\delta}(\boldsymbol i)
2\pi iX_{0,i_0}
\omega_q^{\ip{\theta}{(Y_{m-1,0}(\boldsymbol i))_{H_m}}}.
\]
Then we have
\begin{align}
\widehat A_{\boldsymbol\delta}(\theta)
&=\frac1{N^p}\sum_{\boldsymbol i\in[N]^p}\mathcal K_{\boldsymbol i}(\theta),&
\E_X[\mathcal K_{\boldsymbol i}(\theta)\mid U,\boldsymbol\delta]
&=A_{\boldsymbol\delta}(\theta).
\label{eqn:svp-estimator-kernel-mean}
\end{align}
For $\boldsymbol i,\boldsymbol i'\in[N]^p$, let
$I(\boldsymbol i,\boldsymbol i')
:=\{r\in[0,p):i_r=i_r'\}.$
Expanding the squared norm in
\cref{eqn:svp-estimator-kernel-mean} gives
\begin{align}
&\E_{\boldsymbol\delta,X}\left[
\norm{\widehat A_{\boldsymbol\delta}(\theta)-A_{\boldsymbol\delta}(\theta)}^2
\mid U\right]
=
\frac1{N^{2p}}
\sum_{\boldsymbol i,\boldsymbol i'\in[N]^p}
\E_{\boldsymbol\delta,X}\left[
\left(\mathcal K_{\boldsymbol i}(\theta)-A_{\boldsymbol\delta}(\theta)\right)^*
\left(\mathcal K_{\boldsymbol i'}(\theta)-A_{\boldsymbol\delta}(\theta)\right)
\mid U\right].
\label{eqn:svp-estimator-second-moment-expansion}
\end{align}
We use the following bound for every term in this sum, whose proof is deferred to the end of this section.

\begin{lemma}
\label{lem:svp-overlap-covariance}
Let $\lambda=\lambda_1(\cL)\le d\le(1+1/n)\lambda$. Except with probability
$2^{-\Omega(n)}$ over $U$, the following hold for every $\theta\in H_m$
and every $\boldsymbol i,\boldsymbol i'\in[N]^p$: the expectation in
\cref{eqn:svp-estimator-second-moment-expansion} is zero if
$I(\boldsymbol i,\boldsymbol i')=\varnothing$, and otherwise its
absolute value is at most
\begin{align}
\left(\frac{s^2\lambda}{q}\right)^2
Q^{|I(\boldsymbol i,\boldsymbol i')|-1}2^{o(n)}.
\label{eqn:svp-overlap-covariance}
\end{align}
\end{lemma}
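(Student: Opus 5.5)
The plan is to treat the two cases separately, and in the nontrivial case to reduce the covariance to a joint ``hit probability'' of the two constraint systems, estimated by the same character-expansion and Gaussian-mass machinery as in \cref{lem:svp-total-frequency-masses}.

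\emph{Disjoint tuples.} If $I(\boldsymbol i,\boldsymbol i')=\varnothing$, then conditional on $U$ and $\boldsymbol\delta$ the kernels $\mathcal K_{\boldsymbol i}(\theta)$ and $\mathcal K_{\boldsymbol i'}(\theta)$ are functions of disjoint, hence independent, families of samples. By \cref{eqn:svp-estimator-kernel-mean} each has conditional mean $A_{\boldsymbol\delta}(\theta)$. So the conditional covariance vanishes for every fixed $\boldsymbol\delta$, and averaging over $\boldsymbol\delta$ gives zero.

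\emph{Overlapping tuples: reduction.} Now let $I:=I(\boldsymbol i,\boldsymbol i')\neq\varnothing$. I write the expectation as $\E_{\boldsymbol\delta}\bigl[\E_X[\mathcal K_{\boldsymbol i}^*\mathcal K_{\boldsymbol i'}\mid\boldsymbol\delta]-\norm{A_{\boldsymbol\delta}(\theta)}^2\bigr]$ and bound the two pieces separately.

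For the second piece, \cref{lem:svp-offset-character-orthogonality} applied to \cref{eqn:svp-ideal-expansion} gives $\E_{\boldsymbol\delta}\norm{A_{\boldsymbol\delta}(\theta)}^2=T_{m-1}(\theta)$. This is at most $\overline T_{m-1}\le(s^2\lambda/q)^22^{o(n)}$ by \cref{lem:svp-total-frequency-masses} with $j=m-1$. It is dominated by \cref{eqn:svp-overlap-covariance} since $|I|\ge1$.

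For the first piece, I drop the phases $\omega_q^{\ip{\theta}{\cdot}}$ by taking absolute values, which makes everything uniform in $\theta$. I split on the event $\norm{X_{0,i_0}},\norm{X_{0,i'_0}}\le Cs\sqrt n$. The tail part is negligible by \cref{lem:dgs-tail}, using the trivial bound $Q^{2(p-1)}$. On the main part, the prefactor is $4\pi^2C^2s^2n=O\bigl((s^2\lambda/q)^2\bigr)q^2=(s^2\lambda/q)^22^{o(n)}$, using $s^2\lambda^2=\Theta(n)$ and $q=2^{o(n)}$. It then remains to show
\[
Q^{2(p-1)}\Pr_{\boldsymbol\delta,X}\left[C_{\boldsymbol\delta}(\boldsymbol i)=C_{\boldsymbol\delta}(\boldsymbol i')=1\mid U\right]\le Q^{|I|-1}2^{o(n)}.
\]
A version weighted by $\norm{X_0}^2$ may be needed instead; this is handled as in the $\overline T_j$ part of \cref{lem:svp-total-frequency-masses}.

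\emph{Counting constraints.} Averaging over $\boldsymbol\delta$ with $X$ fixed, the $p-1$ constraints of $\boldsymbol i$ hold with probability exactly $Q^{-(p-1)}$. Given these, the constraint of $\boldsymbol i'$ at node $(j,a)$ becomes $(Y_{j,a}(\boldsymbol i')-Y_{j,a}(\boldsymbol i))_{H_j}=0$. This is automatic when every leaf below $(j,a)$ lies in $I$. The nodes with this property have subtrees forming a forest of full binary subtrees on at most $|I|$ leaves, so there are at most $|I|-1$ of them. The remaining at least $p-|I|$ nodes give genuine conditions, each involving at least one fresh sample $X_{r,i'_r}$ with $r\notin I$.

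\emph{Main obstacle.} The hard step is showing that these $\ge p-|I|$ conditions jointly cost $Q^{-(p-|I|)}2^{o(n)}$. This is not automatic, because $[X]_q$ is far from uniform on $\F_q^n$: the width $s$ is not above the smoothing parameter of $q\cL^*$. I would first write each surviving indicator via \cref{eqn:svp-character-orthogonality} as $Q^{-1}\sum_{\alpha\in H_j}\omega_q^{\ip{\alpha}{\cdot}}$ and take the expectation over the independent samples. This turns the probability into $Q^{-(p-|I|)}$ times a sum, over the free $\alpha$'s, of products of $f$-values at leaf frequencies, structured along a ``merged'' tree. I would bound this sum, after Cauchy--Schwarz to pass to $|f|^2$, by the fixed-tree mass bound \cref{lem:svp-fixed-tree-masses} that the paper announces. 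Its proof would repeat the primal expansion, the average--difference transformation of \cref{clm:svp-average-difference}, and the coset masses of \cref{lem:svp-random-lattice-masses}, giving a product of factors $1+2^{-\Omega(n)}$ up to $2^{o(n)}$. The principal character, where all free $\alpha=0$, contributes $1$. The union bound over the failure events for $U$ is at most $2^{-\Omega(n)}$, because the same $O(m)$ mass estimates serve every $\theta$, $\boldsymbol i$ and $\boldsymbol i'$.
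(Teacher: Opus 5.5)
\textbf{Verdict: there is a genuine gap.} The reduction is sound, but the decisive estimate is sketched with a mechanism that fails, and the tail step fails as written.

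\textbf{What is right.} Your treatment of the disjoint case agrees with the paper, as does your bound on the subtracted term $\E_{\boldsymbol\delta}\norm{A_{\boldsymbol\delta}(\theta)}^2=T_{m-1}(\theta)$. Your reduction of the main term is in fact exact. Let $a\le|I|-1$ be the number of automatic nodes and $g=p-1-a$ the number of genuine ones. Expanding the genuine indicators gives $Q^{2(p-1)}\Pr[C_{\boldsymbol\delta}(\boldsymbol i)=C_{\boldsymbol\delta}(\boldsymbol i')=1]=Q^{a}\Sigma=V_{m-1,I}(0)$, where $\Sigma$ is your character sum. So what you need is precisely the first bound of \cref{lem:svp-omitted-factor-masses}.

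\textbf{The core gap.} The problem is how you propose to prove that bound. The prefactor is $Q^{-g}$, not $Q^{-(p-|I|)}$. Moreover, $\Sigma$ is not a product of fixed-tree factors $1+2^{-\Omega(n)}$; it can be huge. Take $I=[1,p)$. The genuine nodes are then the $m-1$ ancestors of leaf $0$, and
\[
\Sigma=\sum_{z\in S_{m-1}}|f(z)|^2\ge\rho_{\sigma_1}(K_{m-1})/\rho_{1/s}(\cL)^2\ge\sigma_1^n/(\rho_{1/s}(\cL)^2\det K_{m-1}).
\]
This equals $Q^{m-1}2^{-O(n)}$ whenever $\lambda_1(\cL)=\Theta(\sqrt n\det(\cL)^{1/n})$.

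\cref{lem:svp-fixed-tree-masses} only controls the sum of a \emph{full} $V_{j-1}$ over a node's $\alpha\in H_j$. So it helps only at a node one of whose child subtrees contains no leaf of $I$; at a node both of whose child subtrees meet $I$ it gives nothing. The missing idea is the three-case induction of \cref{lem:svp-omitted-factor-masses}:
\begin{itemize}
\item If exactly one child subtree meets $I$, bound that child pointwise by induction, and sum the other (full) child over $\alpha$ using the fixed-tree bound.
\item If both child subtrees meet $I$, bound both pointwise and pay $|H_j|=Q$ for the sum over $\alpha$.
\end{itemize}
There are exactly $|I|-1$ nodes of the second kind, and your automatic nodes are only some of them. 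This is where $Q^{|I|-1}$ comes from.

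\textbf{The tail step.} Here $Q^{2(p-1)}=2^{(1+o(1))(p-1)n}$ with $p=\Theta(n/\log^3n)$. But \cref{lem:dgs-tail} gives only $2^{-cn}$ for a fixed $c$. Bounding the indicators by $1$ on the tail therefore leaves a term vastly larger than the target. Even using $\E_{\boldsymbol\delta}[C_{\boldsymbol\delta}(\boldsymbol i)]=Q^{-(p-1)}$ leaves $Q^{p-1}2^{-cn}$, still far above the target unless $|I|$ is close to $p$. There are two cases:
\begin{itemize}
\item If $0\in I$, no truncation is needed. The genuine conditions involve only leaves outside $I$, so $\E\norm{2\pi X}^2=O(ns^2)$ factors out exactly.
\item If $0\notin I$, you have two options. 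You can truncate at a radius growing like $\sqrt p\,s\sqrt n$, using a Banaszczyk-type tail $2^{-\Omega(pn)}$; this still costs only a $2^{o(n)}$ prefactor. Or you can follow the paper: skip the absolute values, compute $\E_{\boldsymbol\delta,X}[\mathcal K_{\boldsymbol i}(\theta)^*\mathcal K_{\boldsymbol i'}(\theta)]=T_{m-1,I}(\theta)$ exactly, and run the $T_{j,I}$ half of the induction, where the second bound of \cref{lem:svp-fixed-tree-masses} is invoked at most once along the path of leaf $0$.
\end{itemize}
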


\begin{proof}[Proof of \cref{lem:svp-finite-list-second-moment}]
We work on the event in \cref{lem:svp-overlap-covariance}. For every
nonempty $I\subseteq[0,p)$, there are at most $N^{2p-|I|}$ ordered
pairs $(\boldsymbol i,\boldsymbol i')$ satisfying
$I(\boldsymbol i,\boldsymbol i')=I$. Therefore,
\cref{eqn:svp-estimator-second-moment-expansion,eqn:svp-overlap-covariance}
give
\begin{align*}
\E_{\boldsymbol\delta,X}\left[
\norm{\widehat A_{\boldsymbol\delta}(\theta)-A_{\boldsymbol\delta}(\theta)}^2
\mid U\right]
&\le
\frac1{N^{2p}}
\sum_{\varnothing\ne I\subseteq[0,p)}
N^{2p-|I|}
\left(\frac{s^2\lambda}{q}\right)^2
Q^{|I|-1}2^{o(n)}\notag\\
&=
\left(\frac{s^2\lambda}{q}\right)^2Q^{-1}2^{o(n)}
\left[\left(1+\frac QN\right)^p-1\right]
\le
\left(\frac{s^2\lambda}{q}\right)^2Q^{-1}2^{o(n)},
\end{align*}
where the last inequality follows from $N=Qp^4$. The event in
\cref{lem:svp-overlap-covariance} is uniform in $\theta$, so the
conclusion holds simultaneously for every $\theta\in H_m$.
\end{proof}

The first lemma bounds the sums of $V_j$ and $T_j$ over the affine space $\kappa+H_{j+1}$.
The proof uses the same tricks as in \cref{lem:svp-total-frequency-masses}.

\begin{lemma}
\label{lem:svp-fixed-tree-masses}
Let $\lambda=\lambda_1(\cL)\le d\le(1+1/n)\lambda$. Except with
probability $2^{-\Omega(n)}$ over $U$, simultaneously for every
$j\in\{0,\ldots,m-2\}$ and every $\kappa\in\F_q^n$,
\begin{align}
\sum_{\alpha\in H_{j+1}}V_j(\kappa+\alpha)
&\le1+2^{-\Omega(n)},&
\sum_{\alpha\in H_{j+1}}T_j(\kappa+\alpha)
&\le
\left(\frac{s^2\lambda}{q}\right)^22^{o(n)}.
\label{eqn:svp-fixed-tree-masses}
\end{align}
\end{lemma}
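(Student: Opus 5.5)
The plan is to rerun the expansion argument of \cref{lem:svp-total-frequency-masses}, but now for a fixed affine space $\kappa+H_{j+1}$ rather than for a sum over all of $R_j$.

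First, by \cref{lem:svp-frequency-mass-closed-form}, $\sum_{\alpha\in H_{j+1}}V_j(\kappa+\alpha)$ is a sum over pairs $(\alpha,\boldsymbol\alpha^{\le j})$ of products of $2^j$ factors $|f(z_r)|^2$. The analogous statement holds for $T_j$, with one factor replaced by $\norm{G(z_0)}^2$. I expand each factor via the primal representation \cref{eqn:odd-local-primal} into pairs $u_{r,0},u_{r,1}\in Bz_r+q\cL$, and drop the denominators $\rho_{1/s}(\cL)^{-2}\le 1$.

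As in \cref{clm:svp-expanded-sum-indexing}, the tuple $(u_{r,b})$ determines the indices, and I check this for the new index range. The $H_h$-components of the leaf frequencies recover the $\alpha_{h,a}$ for $h\le j$. The $H_{j+1}$-component of $k_{\cdot,\kappa+\alpha,0}$ equals $(\kappa)_{H_{j+1}}+\alpha$ and so recovers $\alpha$, since $\kappa$ is fixed. Hence the expanded sum is indexed by $(u_{r,b})$ without multiplicity.

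Next I apply the average-difference transformation $\Psi_j$ and the decomposition \cref{eqn:svp-all-square-decompositions}. The differences $d_{h,r}$ satisfy exactly the coset constraints of \cref{clm:svp-average-difference}, since only the root frequency changed. The root average $a_{j,0}$ now has an extra constraint. Its residue $B^{-1}(2^{j+1}a_{j,0})\bmod q$ is the sum of the leaf residues; the $\alpha_{h,\cdot}$ parts cancel pairwise modulo $P_j$, leaving $2^{j+1}U^T(\kappa+\alpha)$ modulo $S_j$. As $\alpha$ ranges over $H_{j+1}$, this lies in $U^T\kappa+S_{j+1}$. So once the differences are fixed, $a_{j,0}$ ranges over a single coset of $K_{j+1}$ (using that $2$ is invertible mod $q$ and the same shifting argument as in the claim), rather than a coset of $\cL$.

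The width of $a_{j,0}$ is $q/(2^{(j+1)/2}s)=\sigma_{j+1}$. Summing in the same order and using \cref{lem:shifted-mass-maximum} gives the bound
\[
\rho_{\sigma_{j+1}}(K_{j+1})\,\rho_{\sqrt2q/s}(K_0)^{2^j}\,\rho_{q/s}(K_0)^{2^{j-1}}\prod_{h=1}^{j-1}\rho_{\sigma_h}(K_h)^{2^{j-h-1}}.
\]
Every factor is at most $1+2^{-(g/2-o(1))n}$ by \cref{cor:working-scale} and \cref{lem:svp-random-lattice-masses}, which applies since $j+1\le m-1$. There are at most $2^{j+1}=2^{o(n)}$ factors, so the product is $1+2^{-\Omega(n)}$. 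The $U$-probability is inherited from \cref{lem:svp-random-lattice-masses}, which is uniform in $\kappa$, so the statement holds simultaneously for all $\kappa$.

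For $T_j$ I follow the second half of \cref{lem:svp-total-frequency-masses}:
\begin{itemize}
\item apply Cauchy--Schwarz \cref{eqn:svp-gradient-cauchy-schwarz} to get the extra factor $(2\pi s^2/q)^2\norm{u_{0,0}}^2$;
\item bound $\norm{u_{0,0}}^2$ by $(j+1)$ times the exponent quadratic form;
\item absorb it via \cref{eqn:svp-gradient-moment-absorption} at a cost of $nq^2/(\pi s^2)$, with widths widened by $1/\sqrt{1-1/n}$.
\end{itemize}
The same product bound with the second estimate of \cref{lem:svp-random-lattice-masses} gives $2^{O(1)}$ times the product above. The prefactor is $(2\pi s^2/q)^2\cdot n q^2/(\pi s^2)\cdot 2^{o(n)}=s^2 2^{o(n)}$. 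Since $s^2\lambda^2=\Theta(n)$, this equals $(s^2\lambda/q)^2\cdot q^2/(s^2\lambda^2)\cdot 2^{o(n)}=(s^2\lambda/q)^22^{o(n)}$, as $q=2^{o(n)}$. So here I do not need to isolate nonzero variables; the crude bound suffices.

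The main obstacle is the root-coset step, namely verifying carefully that $a_{j,0}$ ranges over a coset of $K_{j+1}$ rather than merely of $\cL$. This needs the leaf-residue cancellation to be worked out with the signs $\varepsilon_r$, and confirmation that $2^{j+1}$ is invertible mod $q$.
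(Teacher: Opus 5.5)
Your proposal is correct and follows the paper's proof essentially step for step. It uses the same indexing claim, recovering $\alpha$ from the $H_{j+1}$-component of the $r=0$ leaf frequency once $\kappa$ is fixed, and the same observation that $a_{j,0}$ now ranges over a coset of $K_{j+1}$ with width $\sigma_{j+1}$. The product of Gaussian masses is the same, bounded via both estimates of \cref{lem:svp-random-lattice-masses} up to level $j+1\le m-1$, and so is the crude moment absorption for $T_j$ with prefactor $4\pi ns^2=(s^2\lambda/q)^2 2^{o(n)}$.

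The only differences are cosmetic. For the root coset, the paper compares the residues of the single leaf $u_{0,0}$, which avoids needing $2^{j+1}$ to be invertible mod $q$; on your route the coset should read $2^{j+1}U^T\kappa+S_{j+1}$ rather than $U^T\kappa+S_{j+1}$, which is harmless. For $\norm{G(z_0)}^2$, the paper expands the inner product $\ip{u_{0,0}}{u_{0,1}}$ exactly and bounds it by $\frac12 R$ instead of using Cauchy--Schwarz; both work equally well.
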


\begin{proof}
While the same strategy will apply, we first focus on $j=0$ for the intuition. 
By \cref{lem:svp-frequency-mass-closed-form}, the terms in the statement become
\[
\sum_{\alpha\in H_{1}}V_0(\kappa+\alpha)=
\sum_{\alpha\in H_1}|f(U^T(\kappa+\alpha))|^2
\qquad\text{and}\qquad
\sum_{\alpha\in H_{1}}T_0(\kappa+\alpha)=
\sum_{\alpha\in H_1}\norm{G(U^T(\kappa+\alpha))}^2.
\]
Expanding the first representations in
\cref{eqn:periodic-Gaussian,eqn:periodic-Gaussian-gradient} and using
\cref{eqn:svp-all-square-decompositions} recursively gives
\begin{align}
\sum_{\alpha\in H_1}V_0(\kappa+\alpha)
\le
\frac{\rho_{\sigma_1}(K_1+z)}
{\rho_{1/s}(\cL)^2}
\rho_{\sqrt2q/s}(K_0)
\le
\frac{\rho_{\sigma_1}(K_1)}
{\rho_{1/s}(\cL)^2}
\rho_{\sqrt2q/s}(K_0)
\label{eqn:svp-fixed-tree-base-product}
\end{align}
for some shift $z\in\R^n$, and the last inequality is due to \cref{lem:shifted-mass-maximum}.

We do the same calculation for $j\ge 1.$
By \cref{lem:svp-frequency-mass-closed-form}, the two sums in the
statement are
\[
\sum_{\alpha\in H_{j+1}}V_j(\kappa+\alpha)=\sum_{\alpha\in H_{j+1}}
\sum_{\boldsymbol\beta^{\le j}\in\mathcal A_{\le j}}
\prod_{r=0}^{2^j-1}
\left|
f\left(
U^Tk_{\boldsymbol\beta^{\le j},\kappa+\alpha,r}
\right)
\right|^2
\]
and
\begin{align}
\sum_{\alpha\in H_{j+1}}T_j(\kappa+\alpha)=\sum_{\alpha\in H_{j+1}}
\sum_{\boldsymbol\beta^{\le j}\in\mathcal A_{\le j}}
\norm{
G\left(
U^Tk_{\boldsymbol\beta^{\le j},\kappa+\alpha,0}
\right)
\prod_{r=1}^{2^j-1}
f\left(
U^Tk_{\boldsymbol\beta^{\le j},\kappa+\alpha,r}
\right)
}^2.
\label{eqn:svp-fixed-gradient-sum}
\end{align}

We first expand the scalar sum. For each
$\alpha\in H_{j+1}$,
$\boldsymbol\beta^{\le j}\in\mathcal A_{\le j}$, and
$r\in[0,2^j)$, set
\[
z_r
:=
\varepsilon_rU^T
k_{\boldsymbol\beta^{\le j},\kappa+\alpha,r}.
\]
By the evenness of $f$ and its primal representation,
\[
\left|
f\left(
U^Tk_{\boldsymbol\beta^{\le j},\kappa+\alpha,r}
\right)
\right|^2
=
|f(z_r)|^2
=
\frac{1}{\rho_{1/s}(\cL)^2}
\sum_{u_{r,0},u_{r,1}\in Bz_r+q\cL}
\exp\left(
-\frac{\pi s^2}{q^2}
\left(
\norm{u_{r,0}}^2+\norm{u_{r,1}}^2
\right)
\right).
\]
Expanding the product therefore is decomposed by the summand indexed by $u_{r,0},u_{r,1}\in Bz_r+q\cL$ for $r\in[0,2^j)$, that is, one with the Gaussian factor
\[
\frac{1}{\rho_{1/s}(\cL)^{2^{j+1}}}
\exp\left(
-\frac{\pi s^2}{q^2}
\sum_{r=0}^{2^j-1}
\left(
\norm{u_{r,0}}^2+\norm{u_{r,1}}^2
\right)
\right).
\]
For the corresponding average-difference variables,
\cref{eqn:svp-all-square-decompositions} gives
\[
\sum_{r=0}^{2^j-1}
\left(
\norm{u_{r,0}}^2+\norm{u_{r,1}}^2
\right)
=
2^{j+1}\norm{a_{j,0}}^2
+
\frac12\sum_{r=0}^{2^j-1}\norm{d_{0,r}}^2
+
\sum_{h=1}^j
2^{h-1}
\sum_{r=0}^{2^{j-h}-1}\norm{d_{h,r}}^2.
\]
The Gaussian widths in the table below follow from these
coefficients.

The proofs of
\cref{clm:svp-expanded-sum-indexing,clm:svp-average-difference}
give the following variants for the present fixed-$\kappa$ sum.

\begin{claim}
\label{clm:svp-fixed-tree-indexing}
Fix $\kappa\in\F_q^n$. In the expansion of
$\sum_{\alpha\in H_{j+1}}V_j(\kappa+\alpha),$
the tuple $(u_{r,b})_{r,b}$ uniquely determines
$(\alpha,\boldsymbol\beta^{\le j})$. Consequently, the expanded sum
can be indexed by $(u_{r,b})_{r,b}$ without multiplicity.
\end{claim}
\begin{proof}[Sketch of proof.]
As in the proof of \cref{clm:svp-expanded-sum-indexing}, the tuple
$(u_{r,b})_{r,b}$ determines every
$k_{\boldsymbol\beta^{\le j},\kappa+\alpha,r}$. Since $\kappa$ is
fixed, the direct-sum decomposition gives $\alpha=
\left(k_{\boldsymbol\beta^{\le j},\kappa+\alpha,0}-\kappa\right)_{H_{j+1}},$
and, for $h\in[j]$ and $a\in[0,2^{j-h})$, we also have
$\beta_{h,a}=
\left(k_{\boldsymbol\beta^{\le j},\kappa+\alpha,2^ha}
-\varepsilon_{2^ha}\kappa\right)_{H_h}.$
Thus $(u_{r,b})_{r,b}$ uniquely determines
$(\alpha,\boldsymbol\beta^{\le j})$.
\end{proof}

\begin{claim}
\label{clm:svp-fixed-tree-average-difference}
Fix $\kappa\in\F_q^n$. The average-difference transformation is a
bijection between the tuples $(u_{r,b})_{r,b}$ occurring in the
expansion of
$\sum_{\alpha\in H_{j+1}}V_j(\kappa+\alpha)$
and the transformed tuples arising from them. These transformed
tuples satisfy the following properties:
\begin{itemize}[nosep]
\item After all difference variables are fixed, the possible values
of $a_{j,0}$ are contained in a single coset of $K_{j+1}$.
\item Every $d_{0,r}$ belongs to $K_0$.
\item For every $h\in[j]$ and $r\in[0,2^{j-h})$, after all
lower-level differences $d_{\ell,r'}$ with $\ell<h$ are fixed, the
possible values of $d_{h,r}$ are contained in a single coset of
$K_{h-1}$.
\end{itemize}
Each coset may depend on the fixed variables.
\end{claim}
\begin{proof}[Sketch of proof.]
Bijectivity follows from the bijectivity of $\Psi_j$. Suppose that
all difference variables are fixed and that the final average changes
by $x$. Then every $u_{r,b}$ changes by the same vector $x\in\cL$.
Comparing the residues at $r=0$ gives
\[
B^{-1}x\bmod q
\in U^T(H_1\oplus\cdots\oplus H_{j+1})=S_{j+1},
\]
and hence $x\in K_{j+1}$. The assertions for the difference variables
follow exactly as in \cref{clm:svp-average-difference}; the additional
$\alpha\in H_{j+1}$ cancels from the two child blocks.
\end{proof}

By the two claims above, the expanded sum can be indexed by the
transformed variables without multiplicity. Their numbers, containing
cosets, and Gaussian widths are summarized below. For $a_{j,0}$, all
difference variables are fixed. For $d_{h,r}$ with $h\ge1$, all
lower-level differences are fixed.
\[
\begin{array}{c|c|c|c}
\text{variable}
&
\text{number}
&
\text{containing coset}
&
\text{width}
\\ \hline
a_{j,0}
&
1
&
\text{a coset of }K_{j+1}
&
\dfrac{q}{2^{(j+1)/2}s}=\sigma_{j+1}
\\[2mm]
d_{0,r}
&
2^j
&
K_0
&
\dfrac{\sqrt2q}{s}
\\[2mm]
d_{1,r}
&
2^{j-1}
&
\text{a coset of }K_0
&
\dfrac{q}{s}
\\[2mm]
d_{h+1,r}
&
2^{j-h-1}
&
\text{a coset of }K_h
&
\dfrac{q}{2^{h/2}s}=\sigma_h
\quad(1\le h<j).
\end{array}
\]

We first sum over $a_{j,0}$ while all difference variables are fixed.
The first item of
\cref{clm:svp-fixed-tree-average-difference} and
\cref{lem:shifted-mass-maximum} bound this sum by $
\rho_{\sigma_{j+1}}(K_{j+1}).$
We then sum over the variables $d_{h,r}$ in decreasing order of $h$.
When $d_{h,r}$ is summed, all lower-level differences remain fixed,
so the third item of
\cref{clm:svp-fixed-tree-average-difference} applies. Finally, every
$d_{0,r}$ belongs to $K_0$. Applying
\cref{lem:shifted-mass-maximum} at each step gives
\begin{align}
\sum_{\alpha\in H_{j+1}}V_j(\kappa+\alpha)
&\le
\frac{\rho_{\sigma_{j+1}}(K_{j+1})}
{\rho_{1/s}(\cL)^{2^{j+1}}}
\rho_{\sqrt2q/s}(K_0)^{2^j}
\rho_{q/s}(K_0)^{2^{j-1}}
\prod_{h=1}^{j-1}
\rho_{\sigma_h}(K_h)^{2^{j-h-1}}.
\label{eqn:svp-fixed-tree-product}
\end{align}

By \cref{lem:svp-random-lattice-masses},
\[
\rho_{\sigma_h}(K_h)
=
1+\rho_{\sigma_h}(K_h\setminus\{0\})
\le
1+2^{-(g/2-o(1))n}
\]
for every $h\in\{1,\ldots,j+1\}$. 
Moreover, since $K_0=q\cL$, we have
\begin{align*}
\rho_{\sqrt2q/s}(K_0)
&=\rho_{\sqrt2/s}(\cL)
\le1+2^{-(g-o(1))n},\notag\\
\rho_{q/s}(K_0)
&=\rho_{1/s}(\cL)
\le1+2^{-(1/2+g-o(1))n}
\le1+2^{-(g-o(1))n},
\end{align*}
by \cref{lem:shell,cor:working-scale}.
Since the denominator is at
least one and the total multiplicity of the numerator factors is at
most $2^{j+1}\le2p=2^{o(n)}$, the right-hand sides of
\cref{eqn:svp-fixed-tree-base-product,eqn:svp-fixed-tree-product}
are at most
\[
\left(1+2^{-(g/2-o(1))n}\right)^{2^{o(n)}}
=
1+2^{-\Omega(n)}.
\]

For \cref{eqn:svp-fixed-gradient-sum}, let $z_r:=U^Tk_{\boldsymbol\beta^{\le j},\kappa+\alpha,r}$ to simplify the expression.
The first representation of $G$ gives
\[
\norm{G(z_0)}^2
=
\left(\frac{2\pi s^2}
{q\rho_{1/s}(\cL)}\right)^2
\sum_{u_{0,0},u_{0,1}\in Bz_0+q\cL}
\ip{u_{0,0}}{u_{0,1}}
e^{-\pi s^2
(\norm{u_{0,0}}^2+\norm{u_{0,1}}^2)/q^2}.
\]
Here $u_{0,0}$ and $u_{0,1}$ are the variables from the two copies of $G(z_0)$ in the norm.

The remainder of the argument is the same as in the proof of the
second bound in \cref{lem:svp-total-frequency-masses}. 
We first
bound the inner product by
\[
\left|\ip{u_{0,0}}{u_{0,1}}\right|
\le
\frac12\left(
\norm{u_{0,0}}^2+\norm{u_{0,1}}^2
\right)
\le
\frac12
\sum_{r=0}^{2^j-1}
\left(
\norm{u_{r,0}}^2+\norm{u_{r,1}}^2
\right).
\]
Then, we apply the following inequality followed by \cref{eqn:svp-all-square-decompositions} to the exponent:
\[
\frac{R}{2}e^{-\pi s^2R/q^2}
\le
\frac{nq^2}{2\pi s^2}
e^{-\pi(1-1/n)s^2R/q^2}
\le
\frac{nq^2}{\pi s^2}
e^{-\pi(1-1/n)s^2R/q^2}.
\]
for $R=\sum_{r=0}^{2^j-1}\left(
\norm{u_{r,0}}^2+\norm{u_{r,1}}^2
\right).$ 
For $j=0$, the same calculation gives the right-hand side of
\cref{eqn:svp-fixed-tree-base-product} with $\sigma_1$ and
$\sqrt2q/s$ replaced by
$\sigma_1/\sqrt{1-1/n}$ and
$\sqrt2q/(s\sqrt{1-1/n})$, respectively, and multiplied by
$\left(2\pi s^2/q\right)^2nq^2/(\pi s^2)$.

For $j\ge1$, summing the resulting Gaussian
terms and applying \cref{lem:shifted-mass-maximum} gives
\begin{align}
\sum_{\alpha\in H_{j+1}}T_j(\kappa+\alpha)
&\le
\left(\frac{2\pi s^2}{q}\right)^2
\frac{nq^2}{\pi s^2}
\frac{
\rho_{\sigma_{j+1}/\sqrt{1-1/n}}(K_{j+1})
}{
\rho_{1/s}(\cL)^{2^{j+1}}
}
\notag\\
&\quad\cdot
\rho_{\sqrt2q/(s\sqrt{1-1/n})}(K_0)^{2^j}
\rho_{q/(s\sqrt{1-1/n})}(K_0)^{2^{j-1}}
\prod_{h=1}^{j-1}
\rho_{\sigma_h/\sqrt{1-1/n}}(K_h)^{2^{j-h-1}}.
\label{eqn:svp-fixed-gradient-mass-product}
\end{align}
The second estimate in \cref{lem:svp-random-lattice-masses} bounds
the factors involving
$\rho_{\sigma_h/\sqrt{1-1/n}}(K_h)$. Moreover, given $K_0=q\cL$, we have $\rho_{\sqrt2q/(s\sqrt{1-1/n})}(K_0)
=
\rho_{\sqrt2/(s\sqrt{1-1/n})}(\cL),$
which is bounded by \cref{lem:shell}. Hence the product of Gaussian
masses in \cref{eqn:svp-fixed-gradient-mass-product}, including its
denominator, is at most $2^{o(n)}$. Consequently,
\[
\sum_{\alpha\in H_{j+1}}T_j(\kappa+\alpha)\le
\left(\frac{2\pi s^2}{q}\right)^2
\frac{nq^2}{\pi s^2}2^{o(n)}=
\left(\frac{s^2\lambda}{q}\right)^2
\frac{4\pi nq^2}{s^2\lambda^2}2^{o(n)}\le
\left(\frac{s^2\lambda}{q}\right)^22^{o(n)},
\]
where we use 
$s^2\lambda^2=\Theta(n)$ and $q=n^{O(1)}$ in the last inequality.
This proves the second
bound in \cref{eqn:svp-fixed-tree-masses}.
\end{proof}

The next lemma bounds the same sums after the factors indexed by a
nonempty set $I$ have been removed.
We introduce some notations. 
For $j\in\{0,\ldots,m-1\}$, $\kappa\in\F_q^n$, and
$I\subseteq[0,2^j)$, let
\[
V_{j,I}(\kappa)
:=
\sum_{\boldsymbol\alpha^{\le j}\in\mathcal A_{\le j}}
\prod_{r\in[0,2^j)\setminus I}
\left|
f\left(
U^Tk_{\boldsymbol\alpha^{\le j},\kappa,r}
\right)
\right|^2.
\]
We similarly define the notation for $T$, where we require $0\notin I$ because the term indexed by $0$ is $G(\cdot)$ and is retained in $T$. For $0\notin I$, we define
\[
T_{j,I}(\kappa)
:=
\sum_{\boldsymbol\alpha^{\le j}\in\mathcal A_{\le j}}
\norm{
G\left(
U^Tk_{\boldsymbol\alpha^{\le j},\kappa,0}
\right)
\prod_{r\in[1,2^j)\setminus I}
f\left(
U^Tk_{\boldsymbol\alpha^{\le j},\kappa,r}
\right)
}^2.
\]
In particular, $V_{j,\varnothing}=V_j$ and
$T_{j,\varnothing}=T_j$.

\begin{lemma}
\label{lem:svp-omitted-factor-masses}
Let $\lambda=\lambda_1(\cL)\le d\le(1+1/n)\lambda$. On the event in
\cref{lem:svp-fixed-tree-masses}, simultaneously for every
$j\in\{0,\ldots,m-1\}$, every $\kappa\in\F_q^n$, and every nonempty
$I\subseteq[0,2^j)$,
\begin{align}
V_{j,I}(\kappa)
&\le
Q^{|I|-1}2^{o(n)},&
T_{j,I}(\kappa)
&\le
\left(\frac{s^2\lambda}{q}\right)^2
Q^{|I|-1}2^{o(n)}
\quad\text{if }0\notin I.
\label{eqn:svp-omitted-factor-masses}
\end{align}
\end{lemma}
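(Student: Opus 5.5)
We induct on $j$, mirroring the recursion in \cref{lem:svp-frequency-mass-closed-form}. The trivial bound we rely on is $|f(z)|\le 1$ for every $z$, which holds by \cref{lem:shifted-mass-maximum}. From this, any product of $|f|^2$ factors can only decrease when further factors are inserted. In particular $V_{j,I}\le V_{j,I'}$ whenever $I'\subseteq I$, and likewise for $T$.

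We first record the recursive structure. Split $\boldsymbol\alpha^{\le j}=(\alpha,\boldsymbol\beta^L,\boldsymbol\beta^R)$ as in \cref{eqn:svp-left-right-alpha}, and write $I=I_L\cup(2^{j-1}+I_R)$ with $I_L,I_R\subseteq[0,2^{j-1})$. By \cref{eqn:svp-left-right-k} and evenness,
\[
V_{j,I}(\kappa)=\sum_{\alpha\in H_j}V_{j-1,I_L}(\kappa+\alpha)\,V_{j-1,I_R}(\kappa+\alpha),
\qquad
T_{j,I}(\kappa)=\sum_{\alpha\in H_j}T_{j-1,I_L}(\kappa+\alpha)\,V_{j-1,I_R}(\kappa+\alpha).
\]
The base case is $j=0$ with $I=\{0\}$, where $V_{0,I}\equiv 1=Q^0$ and the $T$-bound is vacuous.

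For the inductive step we use three cases.
\begin{itemize}
\item If both $I_L$ and $I_R$ are nonempty, we bound one factor by its supremum, apply induction to it, and sum the other over $\alpha\in H_j$. The key observation is that $\sum_{\alpha\in H_j}V_{j-1,I'}(\kappa+\alpha)\le Q\cdot\sup V_{j-1,I'}$. This gives $Q\cdot Q^{|I_L|-1}Q^{|I_R|-1}2^{o(n)}=Q^{|I|-1}2^{o(n)}$, which is exactly the target.
\item If exactly one of $I_L,I_R$ is empty, say $I_R=\varnothing$, we bound $V_{j-1,I_L}$ (or $T_{j-1,I_L}$) by its supremum via induction. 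The remaining factor is then $\sum_{\alpha\in H_j}V_{j-1}(\kappa+\alpha)$, which is at most $1+2^{-\Omega(n)}$ by \cref{lem:svp-fixed-tree-masses} applied with index $j-1$ (since $H_j=H_{(j-1)+1}$).
\item In the $T$ case with $I_L=\varnothing$ and $I_R\ne\varnothing$, the sum $\sum_{\alpha}T_{j-1}(\kappa+\alpha)$ is bounded by the second estimate of \cref{lem:svp-fixed-tree-masses}, namely $(s^2\lambda/q)^22^{o(n)}$. We multiply this by $\sup V_{j-1,I_R}\le Q^{|I_R|-1}2^{o(n)}$.
\end{itemize}

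When $j-1=0$ and the nonempty index set is $\{0\}$, the factor equals $1$ and the argument above still applies. Note that $0\notin I$ is inherited by $I_L$ in the $T$ case.

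The main obstacle is controlling the accumulation of the $2^{o(n)}$ factors. Each level multiplies the bound by a constant $C$, together with $(1+2^{-\Omega(n)})$ factors, and it must not degrade the exponent of $Q$. The recursion depth is $m=O(\log n)$, so a per-level constant yields only $C^{m}=n^{O(1)}=2^{o(n)}$. To make this rigorous, we would state the induction hypothesis with an explicit factor $C^{j}2^{c_0 o(n)}$, with the $o(n)$ term fixed uniformly from \cref{lem:svp-fixed-tree-masses}. The products are then multiplicative rather than compounding, because in the first case we use the supremum of one subtree times the sum of the other, never the product of two $2^{o(n)}$ bounds at the same level more than once per level. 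Finally, since everything is deterministic given the event of \cref{lem:svp-fixed-tree-masses}, the result holds simultaneously for all $\kappa$ and $I$.
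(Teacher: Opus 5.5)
Your recursion and three-case analysis match the paper's proof. The induction bounds each factor with a nonempty omitted set pointwise. The first estimate of \cref{lem:svp-fixed-tree-masses} handles a full $V_{j-1}$ summed over $H_j$, and the second estimate handles the case $I_L=\varnothing$ of the $T$-recursion. The branching case costs exactly $Q\cdot Q^{|I_L|-1}Q^{|I_R|-1}$. Your aside that $V_{j,I}\le V_{j,I'}$ for $I'\subseteq I$ has the inequality reversed: omitting more factors, each at most one, makes the sum larger. You never use it, so it is harmless.

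The step that fails as written is your treatment of the suppressed factors, which you rightly call the main obstacle.
\begin{itemize}
\item In the case $I_L,I_R\ne\varnothing$ you multiply two inductive bounds, so the hidden factors of the two children multiply. The relevant count is therefore the number of nodes of the recursion tree (up to $2^j-1$), not the depth $m$.
\item Consequently your proposed hypothesis with an explicit factor $C^j2^{c_0o(n)}$ does not close in the branching case, since $C^{2j-2}\not\le C^{j}$ for $j\ge3$.
\item Your justification that only one product of two such bounds occurs per level overlooks that the recursion branches. These products compound over up to $|I|-1$ branching nodes.
\item This is not purely cosmetic. If the $V$-bound carried a generic $2^{o(n)}$ factor, it could be raised to roughly its $p$-th power, which need not be $2^{o(n)}$. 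A crude constant $C$ per step survives only because $C^{O(p)}=2^{O(n/\log^3 n)}$, not because $m=O(\log n)$.
\end{itemize}

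The fix, which is what the paper does, is to track where losses actually occur.
\begin{itemize}
\item The base case is exactly $1$.
\item A branching step loses nothing beyond $Q$.
\item Each single-sided step loses only the factor $1+2^{-\Omega(n)}$ from the first estimate of \cref{lem:svp-fixed-tree-masses}. There are at most $2^j\le p$ such steps, giving a total of $(1+2^{-\Omega(n)})^p=1+2^{-\Omega(n)}$.
\item The second estimate, the only genuine $2^{o(n)}$ loss, is used at most once. The $T$-factor always sits in the left child, and the chain along it stops the first time $I_L=\varnothing$.
\end{itemize}
Equivalently, write $1+\epsilon$ and $c_T(s^2\lambda/q)^2$ for the two bounds of \cref{lem:svp-fixed-tree-masses}. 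Then prove by induction that $V_{j,I}\le Q^{|I|-1}(1+\epsilon)^{2^j-1}$ and $T_{j,I}\le c_T(s^2\lambda/q)^2Q^{|I|-1}(1+\epsilon)^{2^j-1}$. This closes in all three cases because $2(2^{j-1}-1)\le 2^j-1$ and $2^{j-1}\le 2^j-1$.
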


\begin{proof}
We use induction on $j$. For $j=0$, the only nonempty set is
$I=\{0\}$, and $V_{0,\{0\}}(\kappa)=1$. The second bound does not apply, and will not be used in the later induction.

Suppose that $j\ge1$, and let $I_L:=I\cap[0,2^{j-1}),I_R:=\{r\in[0,2^{j-1}):2^{j-1}+r\in I\}.$
By the decomposition used in \cref{lem:svp-frequency-mass-closed-form} and the inductive notions in
\cref{eqn:svp-left-right-alpha,eqn:svp-left-right-k}, we have
\[
V_{j,I}(\kappa)=
\sum_{\alpha\in H_j}
V_{j-1,I_L}(\kappa+\alpha)
V_{j-1,I_R}(-\kappa-\alpha),
\qquad
T_{j,I}(\kappa)
=
\sum_{\alpha\in H_j}
T_{j-1,I_L}(\kappa+\alpha)
V_{j-1,I_R}(-\kappa-\alpha),
\]
where the second identity is used only when $0\notin I$.

We first prove the bound for $V_{j,I}(\kappa)$. If exactly one of $I_L,I_R$ is
nonempty, the induction hypothesis bounds the corresponding factor for each $\alpha$, while the first bound in
\cref{lem:svp-fixed-tree-masses} bounds the sum of the other factor
over $\alpha\in H_j$. 
If both sets are nonempty, the induction
hypothesis bounds both factors for each $\alpha$, and summing over
$|H_j|=Q$ choices of $\alpha$ gives
\[
Q\cdot Q^{|I_L|-1}Q^{|I_R|-1}
=
Q^{|I|-1}.
\]
This proves the first inequality of \cref{eqn:svp-omitted-factor-masses}.

Now suppose that $0\notin I$, and hence $0\notin I_L$. If
$I_L\ne\varnothing$ and
$I_R=\varnothing$, the induction hypothesis bounds the first factor
in the second recurrence, and the first bound in
\cref{lem:svp-fixed-tree-masses} bounds the sum of the second factor.
If $I_L=\varnothing$ and $I_R\ne\varnothing$, the second bound in
\cref{lem:svp-fixed-tree-masses} bounds the sum of the first factor,
and the first inequality of \cref{eqn:svp-omitted-factor-masses} bounds the second factor
for each $\alpha$. If both sets are nonempty, the induction hypotheses bound
both factors for each $\alpha$, and summing over $\alpha\in H_j$ gives
\[
Q\cdot
\left(\frac{s^2\lambda}{q}\right)^2
Q^{|I_L|-1}Q^{|I_R|-1}
=
\left(\frac{s^2\lambda}{q}\right)^2Q^{|I|-1}.
\]
This proves the second inequality of
\cref{eqn:svp-omitted-factor-masses}. 

The suppressed factor $2^{o(n)}$ should be carefully proven. Observe that in the scalar induction we use the first
bound in \cref{lem:svp-fixed-tree-masses} at most $2^j\le p=2^{o(n)}$ times.
Their product is
$(1+2^{-\Omega(n)})^p=1+2^{-\Omega(n)}$ given $p=2^{o(n)}$.
In the induction for $T_{j,I}$, the second bound in
\cref{lem:svp-fixed-tree-masses} is used at most once: when
$I_L=\varnothing$, it directly bounds the sum of
$T_{j-1}(\kappa+\alpha)$ over $\alpha\in H_j$, so no further
induction involving $T$ is needed.
All other factors are the factors discussed in the bound of the first term. Thus the total suppressed factor is $2^{o(n)}$. This proves the lemma.
\end{proof}

Now we are ready to prove, using the above lemmas, \cref{lem:svp-overlap-covariance} which states that
\begin{align}
    \label{eqn:svp-overlap-target}
    \left|
\E_{\boldsymbol\delta,X}\left[
\left(\mathcal K_{\boldsymbol i}(\theta)-A_{\boldsymbol\delta}(\theta)\right)^*
\left(\mathcal K_{\boldsymbol i'}(\theta)-A_{\boldsymbol\delta}(\theta)\right)
\mid U
\right]
\right|
\le
\left(\frac{s^2\lambda}{q}\right)^2
Q^{|I(\boldsymbol i,\boldsymbol i')|-1}2^{o(n)}.
\end{align}
for $I(\boldsymbol i,\boldsymbol i')\neq \varnothing$ and it being zero otherwise, where
\[
\mathcal K_{\boldsymbol i}(\theta)
:=
Q^{p-1}C_{\boldsymbol\delta}(\boldsymbol i)
2\pi iX_{0,i_0}
\omega_q^{\ip{\theta}{(Y_{m-1,0}(\boldsymbol i))_{H_m}}}.
\]

\begin{proof}[Proof of \cref{lem:svp-overlap-covariance}]
We assume that all of the bounds in
\cref{lem:svp-total-frequency-masses,lem:svp-fixed-tree-masses} hold,
which happens except with probability $2^{-\Omega(n)}$ over $U$. Fix
$\theta\in H_m$ and $\boldsymbol i,\boldsymbol i'\in[N]^p$, and let
$I:=I(\boldsymbol i,\boldsymbol i')$.

We first consider the case $I=\varnothing$.
In this case, $i_r\ne i'_r$ for every $r\in[0,p)$.
For fixed $U$ and $\boldsymbol\delta$, the definitions of
$C_{\boldsymbol\delta}(\boldsymbol i)$ and
$Y_{m-1,0}(\boldsymbol i)$ show that
$\mathcal K_{\boldsymbol i}(\theta)$ is determined by
$(X_{r,i_r})_{r\in[0,p)}$. Similarly,
$\mathcal K_{\boldsymbol i'}(\theta)$ is determined by
$(X_{r,i'_r})_{r\in[0,p)}$. These two collections of samples are
independent.
Both random variables have conditional expectation
$A_{\boldsymbol\delta}(\theta)$. Their independence therefore gives
\[
\E_X\left[
\left(\mathcal K_{\boldsymbol i}(\theta)
-A_{\boldsymbol\delta}(\theta)\right)^*
\left(\mathcal K_{\boldsymbol i'}(\theta)
-A_{\boldsymbol\delta}(\theta)\right)
\mid U,\boldsymbol\delta
\right]=0
\]
Averaging over $\boldsymbol\delta$ proves the claim for
$I=\varnothing$.

Suppose that $I\ne\varnothing$. We apply
\cref{eqn:svp-all-characters} to the entire scalar term
\[
Q^{p-1}C_{\boldsymbol\delta}(\boldsymbol i)
\omega_q^{\ip{\theta}{(Y_{m-1,0}(\boldsymbol i))_{H_m}}}
\]
and similarly to the term indexed by $\boldsymbol i'$. For each
$\boldsymbol\alpha$, we index the terms in the resulting summand by
\[
\omega_q^{\ip{k_{\boldsymbol\alpha,\theta,r}}{Y_{r,i_r}}}
\]
by $r$ for every $r\in[0,p)$.
The term with $r=0$ is additionally
multiplied by $2\pi iX_{0,i_0}$.

Averaging the product over $\boldsymbol\delta$, character
orthogonality eliminates all terms indexed by
$\boldsymbol\alpha\ne\boldsymbol\alpha'$ and only the terms with
$\boldsymbol\alpha=\boldsymbol\alpha'$ remain.
After averaging over $\boldsymbol\delta$, fix a remaining
$\boldsymbol\alpha$ and $r\in[1,p)$. The terms involving the
$r$-th samples in
$\mathcal K_{\boldsymbol i}(\theta)^*
\mathcal K_{\boldsymbol i'}(\theta)$ are
\[
\overline{
\omega_q^{\ip{k_{\boldsymbol\alpha,\theta,r}}{Y_{r,i_r}}}
}
\omega_q^{\ip{k_{\boldsymbol\alpha,\theta,r}}{Y_{r,i'_r}}}.
\]
If $r\notin I$, then $i_r\ne i'_r$, so $X_{r,i_r}$ and
$X_{r,i'_r}$, and hence $Y_{r,i_r}$ and $Y_{r,i'_r}$, are
independent. Taking the expectation over these two samples gives
\[
\overline{
f(U^Tk_{\boldsymbol\alpha,\theta,r})
}
f(U^Tk_{\boldsymbol\alpha,\theta,r})
=
|f(U^Tk_{\boldsymbol\alpha,\theta,r})|^2.
\]
If $r\in I$, then $i_r=i'_r$, so the two character terms are
evaluated at the same $Y_{r,i_r}$ and their product is
\[
\overline{
\omega_q^{\ip{k_{\boldsymbol\alpha,\theta,r}}{Y_{r,i_r}}}
}
\omega_q^{\ip{k_{\boldsymbol\alpha,\theta,r}}{Y_{r,i_r}}}
=1.
\]
If $r=0$, the resulting term is
$\norm{G(U^Tk_{\boldsymbol\alpha,\theta,0})}^2$ if $0\notin I$ and
$\E_{X\sim D_{\cL^*,s}}[\norm{2\pi X}^2]$ if $0\in I$. Consequently,
\begin{align}
\E_{\boldsymbol\delta,X}\left[
\mathcal K_{\boldsymbol i}(\theta)^*
\mathcal K_{\boldsymbol i'}(\theta)\mid U\right]
=
\begin{cases}
T_{m-1,I}(\theta),&0\notin I,\\
\E_{X\sim D_{\cL^*,s}}[\norm{2\pi X}^2]
V_{m-1,I}(\theta),&0\in I.
\end{cases}
\label{eqn:svp-overlap-joint-moment}
\end{align}
By \cref{lem:dgs-tail}, $s^2\lambda^2=\Theta(n)$, and
$q=n^{O(1)}$, we have
\[\E_{X\sim D_{\cL^*,s}}[\norm{2\pi X}^2]
\le O(ns^2)\le(s^2\lambda/q)^22^{o(n)}.\]
Therefore, \cref{lem:svp-omitted-factor-masses} bounds the absolute
value of \cref{eqn:svp-overlap-joint-moment} by
$(s^2\lambda/q)^2Q^{|I|-1}2^{o(n)}$.

Applying the same character expansion to two independent copies
gives
\begin{align}
\E_{\boldsymbol\delta}\left[
\norm{A_{\boldsymbol\delta}(\theta)}^2\mid U\right]
=T_{m-1}(\theta)
\le
\sum_{\kappa\in R_{m-1}}T_{m-1}(\kappa)
\le
\left(\frac{s^2\lambda}{q}\right)^2
2^{-(g/2-o(1))n},
\label{eqn:svp-overlap-mean-product}
\end{align}
where we used $R_{m-1}=H_m$ and
\cref{lem:svp-total-frequency-masses}. Since the conditional
expectation of every $\mathcal K_{\boldsymbol i}(\theta)$ is
$A_{\boldsymbol\delta}(\theta)$, the expectation in the statement is
the difference between \cref{eqn:svp-overlap-joint-moment} and
\cref{eqn:svp-overlap-mean-product}. Since $I\ne\varnothing$, the
triangle inequality proves the desired bound, after absorbing a
factor of two into $2^{o(n)}$.
\end{proof}

%% file: fig_analysis.tex
\begin{figure}[!tp]
\centering
\begin{adjustbox}{
  max width=0.9\linewidth,
  max totalheight=0.9\textheight,
  center
}
\begin{tikzpicture}[
  >=Latex,
  side/.style={
    draw,
    rounded corners,
    align=center,
    inner xsep=7pt,
    inner ysep=6pt,
    text width=6.3cm,
    font=\small
  },
  center/.style={
    draw,
    rounded corners,
    align=center,
    inner xsep=8pt,
    inner ysep=7pt,
    text width=8.2cm,
    font=\small
  },
  arrow/.style={
    -{Latex[length=2.8mm,width=2mm]},
    line width=0.9pt,
    shorten <=1.5pt,
    shorten >=1.5pt
  },
  flowlabel/.style={
    draw=none,
    fill=none,
    inner sep=0pt,
    font=\footnotesize\itshape
  },
  ref/.style={
    draw=none,
    fill=none,
    inner sep=0pt,
    font=\footnotesize
  }
]

\node[center] (subtree) at (0,0) {
\textbf{Recursive subtree masses}
({\footnotesize\cref{lem:svp-frequency-mass-closed-form}})\\[2mm]
$\displaystyle
V_0(\kappa)=|f(U^T\kappa)|^2,
\qquad
T_0(\kappa)=\norm{G(U^T\kappa)}^2
$\\[2mm]
$\displaystyle
V_j(\kappa)
=
\sum_{\alpha\in H_j}
V_{j-1}(\kappa+\alpha)^2
$\\[1mm]
$\displaystyle
T_j(\kappa)
=
\sum_{\alpha\in H_j}
T_{j-1}(\kappa+\alpha)
V_{j-1}(\kappa+\alpha)
$
};

\node[side,anchor=north] (global)
at ($(subtree.south)+(4.3cm,-11mm)$) {
\textbf{Global masses}
({\footnotesize\cref{lem:svp-total-frequency-masses}})\\[2mm]
$\displaystyle
\overline V_j
:=
\sum_{\kappa\in R_j}V_j(\kappa)
=
\sum_{\kappa\in R_{j-1}}
V_{j-1}(\kappa)^2
$\\[2mm]
$\displaystyle
\overline T_j
:=
\sum_{\kappa\in R_j}T_j(\kappa)
=
\sum_{\kappa\in R_{j-1}}
T_{j-1}(\kappa)V_{j-1}(\kappa)
$
};

\node[side,below=11mm of global] (globalbound) {
\textbf{Average-difference bounds}\\[0.5mm]
({\footnotesize
\cref{lem:svp-random-lattice-masses,lem:svp-total-frequency-masses}})\\[2mm]
$\displaystyle
\overline V_j-1
\le
2^{((m-j-1)/2-g/2+o(1))n}
$\\[2mm]
$\displaystyle
\overline T_j
\le
\left(\frac{s^2\lambda}{q}\right)^2
2^{((m-j-1)/2-g/2+o(1))n}
$
};

\node[side,below=11mm of globalbound] (increment) {
\textbf{New incorrect branches}
({\footnotesize\cref{lem:svp-target-increments}})\\[2mm]
$\displaystyle
\Delta_j
:=
\sum_{\alpha\in H_j\setminus\{0\}}
V_{j-1}(k_*+\alpha)^2
$\\[2mm]
$\displaystyle
\Gamma_j
:=
\sum_{\alpha\in H_j\setminus\{0\}}
T_{j-1}(k_*+\alpha)
V_{j-1}(k_*+\alpha)
$\\[2mm]
$\displaystyle
\Delta_j
\le
2^{-(1/2+g/4-o(1))n}
$\\[-1mm]
$\displaystyle
\Gamma_j
\le
\left(\frac{s^2\lambda}{q}\right)^2
2^{-(1/2+g/4-o(1))n}
$
};

\node[side,anchor=east] (target)
at ($(increment.west)+(-18mm,+50mm)$) {
\textbf{Target mass decomposition}
({\footnotesize\cref{eqn:svp-target-mass-errors}})\\[2mm]
$\displaystyle
V_j^*
=
|f(w)|^{2^{j+1}}
$\\[1mm]
$\displaystyle
T_j^*
=
\norm{G(w)}^2|f(w)|^{2(2^j-1)}
$\\[2mm]
$\displaystyle
V_j(k_*)=V_j^*+B_j,
\qquad
T_j(k_*)=T_j^*+E_j
$\\[2mm]
$V_j^*,T_j^*$: principal terms\\
$B_j,E_j$: nonprincipal masses
};

\coordinate (merge)
at ($(target.south)!0.5!(increment.south)$);

\node[center,anchor=north] (recurrence)
at ($(merge)+(0,-40mm)$) {
\textbf{Target-error recursion}
{(\footnotesize\cref{eqn:svp-target-mass-error-recursion})}\\[2mm]
$\displaystyle B_0=E_0=0$\\[2mm]
$\displaystyle
B_j
=
2V_{j-1}^*B_{j-1}
+B_{j-1}^2+\Delta_j
$\\[2mm]
$\displaystyle
E_j
=
E_{j-1}(V_{j-1}^*+B_{j-1})
+T_{j-1}^*B_{j-1}
+\Gamma_j
$
};

\node[center,below=11mm of recurrence] (final) {
\textbf{Isolation of the desired term}
({\footnotesize\cref{lem:svp-nonprincipal-mass}})\\[2mm]
$\displaystyle
T_{m-1}^*
=
\norm{G(w)f(w)^{p-1}}^2
$\\[2mm]
$\displaystyle
E_{m-1}
=
\E_{\boldsymbol\delta}
\left[
\norm{
A_{\boldsymbol\delta}(\theta_*)
-
c_{\boldsymbol\delta}(k_*)
G(w)f(w)^{p-1}
}^2
\right]
$\\[2mm]
$\displaystyle
E_{m-1}
\le
\left(\frac{s^2\lambda}{q}\right)^2
2^{-(1/2+g/4-o(1))n}
$
};

\draw[arrow]
(subtree.south west)
-| (target.north);

\coordinate (fixlabel)
at ($(subtree.south west)!0.48!(target.north)$);

\node[flowlabel,anchor=east]
at ($(fixlabel)+(-5mm,0)$)
{fix the target $k_*$};

\draw[arrow]
(subtree.south east)
-| (global.north);

\coordinate (sumlabel)
at ($(subtree.south east)!0.52!(global.north)$);

\node[flowlabel,anchor=west]
at ($(sumlabel)+(5mm,0)$)
{sum over $\kappa\in R_j$};

\draw[arrow]
(global.south)
--
(globalbound.north);

\node[flowlabel,anchor=west]
at ($(global.south)!0.5!(globalbound.north)+(5mm,0)$)
{average-difference};

\draw[arrow]
(globalbound.south)
--
(increment.north);

\node[flowlabel,anchor=west]
at ($(globalbound.south)!0.5!(increment.north)+(5mm,0)$)
{random-coset inclusion};

\draw[arrow]
(target.south)
-| (recurrence.north west);

\draw[arrow]
(increment.south)
-| (recurrence.north east);

\draw[arrow]
(recurrence.south)
--
(final.north);

\end{tikzpicture}
\end{adjustbox}

\caption{The outline of the proof of the nonprincipal-mass analysis \cref{lem:svp-nonprincipal-mass}. We need a similar but more involved proof for \cref{lem:svp-finite-list-second-moment}.}
\label{fig:svp-mass-analysis-flow}
\end{figure}

%% file: 5.cvp.tex
\section{CVP with a distance guarantee}
\label{sec:cvp}

This section proves the following result.
\begin{theorem}
\label{thm:bounded-cvp}
For every fixed constant $0< \alpha < \frac{1}{2\sqrt{t_0}}=1.03925\ldots$,
there is a randomized classical algorithm that solves $\CVP$ with probability at least $2/3$ given a lattice $\cL$ and a target $y\in\R^n$ promised that $\dist(y,\cL)\le\alpha\lambda_1(\cL)$ in worst-case time and space $2^{n/2+o(n)}$.
\end{theorem}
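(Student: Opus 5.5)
We reuse the SVP machinery of \cref{sec:svp} with the shifted functions $f_y,G_y$ from the overview and only change the parameters that depend on the target. Fix $t_0<t<1/4$ with $t\alpha^2<1/4$, which is possible because $\alpha<1/(2\sqrt{t_0})$. As before, we guess $d$ with $\lambda\le d\le(1+1/n)\lambda$ from the list $\mathcal D$. We also guess $\norm e$ for $e=v-y$, up to a factor $1+1/n$, from a polynomial list. If $\norm e<n^{-1/3}\lambda$, the BDD algorithm of \cref{thm:bdd-preprocessing} answers directly, so we may assume $\norm e\ge n^{-1/3}\lambda$.

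\textbf{Signal bound.} We take the same $p,q,m,Q,N$ as in \cref{eqn:svp-parameter} and first prove a CVP analogue of \cref{lem:odd-local}. With $w_v=B^{-1}v\bmod q$ we have $G_y(w_v)=\nabla F_{1/s}(e/q)$. Since $v$ is a closest vector, $\norm{x+e}\ge\norm e$ for all $x\in\cL$, hence $2\ip{x}{e}\ge-\norm x^2$. The same computation as \cref{eqn:odd-local-gap} then gives $\norm{x+e/q}^2-\norm{e/q}^2\ge(1-1/q)\norm x^2$ for every $x\ne0$. So the $x=0$ term dominates, and $-G_y(w_v)/f_y(w_v)=\frac{2\pi s^2}{q}e+2^{-\Omega(n)}s^2\lambda$, with $f_y(w_v)=2^{-4tn\norm e^2/(d^2q^2)}(1+2^{-\Omega(n)})$. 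Here $f_y$ is real because $F_{1/s}$ is real. For the sample identity \cref{eqn: Gfexpectation} we need to replace evenness of $f$: the characters at leaves with $\varepsilon_r=-1$ evaluate $f_y$ at $-w$. We therefore draw the samples for those leaves with the conjugate phase $\exp(+2\pi i\ip{X}{y}/q)$, which gives $\overline{f_y(w)}=f_y(w)$. Equivalently, the odd leaves use $F_{1/s}((Bw+y)/q)$, the reflection of $f_y$. Since $q^2=(4+o(1))p$, the main term has norm $\frac{s^2\norm e}{q}2^{-(t\norm e^2/\lambda^2+o(1))n}\ge\frac{s^2\norm e}{q}2^{-(t\alpha^2+o(1))n}$. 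This exceeds the error $\frac{s^2\lambda}{q}2^{-n/4+o(n)}$ by $2^{\Omega(n)}$, and the polynomial loss from $\norm e\ge n^{-1/3}\lambda$ is harmless. An analogue of \cref{lem:svp-recovery} then yields a unit vector $u$ with $\norm{\pm\widehat{\norm e}\,u-e}\le n^{-1/3}\lambda/2$, where $\widehat{\norm e}$ is the guessed value of $\norm e$. Querying BDD at $y\pm\widehat{\norm e}\,u$ returns $v$.

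\textbf{Error bounds.} Next we redo \cref{lem:svp-ideal-identity,lem:svp-finite-list-second-moment} for the shifted functions. The key point is that every bound there was obtained by taking absolute values, expanding via the primal representation, and bounding Gaussian masses over cosets through \cref{lem:shifted-mass-maximum}. Shifting all $u_{r,b}$ by $\pm y$ (that is, $u_{r,b}\in Bz_r\mp y+q\cL$) only moves the cosets in \cref{clm:svp-average-difference} and \cref{clm:svp-fixed-tree-average-difference}. Differences cancel the shifts, because the two copies in $|f_y|^2$ carry the same shift and the sign pattern $\varepsilon$ balances left and right children. The average $a_{j,0}$ lies in some coset, which is bounded by the centered mass anyway. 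Hence \cref{lem:svp-total-frequency-masses,lem:svp-fixed-tree-masses,lem:svp-omitted-factor-masses,lem:svp-overlap-covariance} hold verbatim, with $\norm{u_{0,0}}$ in the gradient absorption still controlled by \cref{eqn:svp-gradient-moment-absorption}.

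\textbf{Main obstacle.} The main difficulty is the nonprincipal-mass argument (\cref{lem:svp-target-increments} and the recursion \cref{eqn:svp-target-mass-error-recursion}). There the comparison was between the tiny principal term $V_j^*=|f(w)|^{2^{j+1}}$ and $\Delta_j,\Gamma_j$, estimated via a random-coset inclusion using $\overline V_j-1$, which relied on the zero coset being excluded. In the CVP setting the large-mass coset is the one containing $w_y$-type points near $y/q$ rather than $0$, so the subtraction of $1$ in $\overline V_j$ must be replaced. My plan is to bound $\E_U[\Delta_j]$ directly by $2q^{-(n-jn/m)}\overline V_j$, using that $\overline V_j\le 1+2^{((m-j-1)/2-g/2+o(1))n}$ and the factor $q^{-(n-jn/m)}=2^{-(1/2)(m-j)(m+1)n/m}$. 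For $j\le m-1$ this already gives $2^{-(1/2+\Omega(1))n}$, except that the constant $1$ contributes $2^{-(m-j)(m+1)n/(2m)}\le2^{-n/2-\Omega(n)}$ too. So the zero-coset exclusion is unnecessary, and the recursion closes with $E_{m-1}\le(s^2\lambda/q)^22^{-n/2-\Omega(n)}$. The remaining steps follow the proof of \cref{thm:main-svp}: run \cref{alg:svp-estimator-computation}, query BDD over all $\theta\in H_m$ and both signs, and output the returned lattice vector closest to $y$. All of this stays within time and space $2^{n/2+o(n)}$.
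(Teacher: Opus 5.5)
Your proposal is essentially the paper's proof: leaf phases $\Phi_y$ carried through the list merging, shifted primal expansions in which the common $-y$ shift survives only in the root average, the increment bound $2q^{-(n-jn/m)}\overline V_j$ without removing the zero coset, and BDD queries along the recovered direction. Guessing $\norm e$ instead of scanning $jd/(4n^{1/3})$ is harmless, and so is skipping the random shift $y+Ba$, since your increment bound never needs $w\neq0$.

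One correction: at $j=m-1$ the factor $q^{-n/m}=Q^{-1}=2^{-n/2-n/(2m)}$ leaves only an $o(n)$ margin. Also, in the shifted gradient totals the all-zero tuple no longer vanishes, so those bounds do not hold ``verbatim''. You therefore get $E_{m-1}\le(s^2\lambda/q)^2 2^{-n/2+o(n)}$, as in the paper, rather than $2^{-n/2-\Omega(n)}$. This still suffices because $t\alpha^2<1/4$.
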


We use $y$ as the target vector and $\alpha < \frac1{2\sqrt{t_0}}$ in this section throughout.
We will use the same parameters as the previous section. Formally, $p$ is the largest power of two satisfying $p\le n/\log^3n$ and $q$ is the smallest prime larger than $2\sqrt p$. Let $m=\log_2 p+1$, $Q=q^{n/m}=2^{n/2 + o(n)}$ and $N=Qp^4 = 2^{n/2 + o(n)}$.

\subsection{The first estimation target}

Let $v\in\cL$ be closest to $y$ and let $e=v-y$.
We define the shifted versions of $f$ and $G$ as follows for $w\in\F_q^n$:
\begin{align*}
f_y(w)&:=F_{1/s}\left(\frac{Bw-y}{q}\right)=\E_{X\sim D_{\cL^*,s}}\left[\omega_q^{\ip{w}{[X]_q}}e^{-2\pi i\ip{X}{y}/q}\right],\notag\\
G_y(w)&:=\nabla F_{1/s}\left(\frac{Bw-y}{q}\right)=2\pi i\E_{X\sim D_{\cL^*,s}}\left[X\omega_q^{\ip{w}{[X]_q}}e^{-2\pi i\ip{X}{y}/q}\right].
\end{align*}
These functions are well defined on $\F_q^n$ because
$F_{1/s}$ and its gradient are $\cL$-periodic.
The next lemma identifies the vector that we estimate.

\begin{lemma}
\label{lem:cvp-local}
Let $\lambda=\lambda_1(\cL)\le d\le(1+1/n)\lambda$, let $v\in\cL$ be closest to $y$, let $e=v-y$, and let $w=B^{-1}v\bmod q$. 
Let $t_0<t<1/4.$
Then, it holds that
\[
f_y(w)^p=2^{-(t+o(1))n\norm e^2/d^2}(1+2^{-\Omega(n)}),\qquad
\norm{-\frac{G_y(w)}{f_y(w)}-\frac{2\pi s^2}{q}e}\le2^{-\Omega(n)}s^2\lambda.
\]
In particular, if $\norm e\le\alpha\lambda$, then $f_y(w)^p
\ge
2^{-(t\alpha^2+o(1))n}.$
\end{lemma}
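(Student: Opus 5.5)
We follow the proof of \cref{lem:odd-local}, replacing $v/q$ by $e/q$. By $\cL$-periodicity and $Bw\equiv v \pmod{q\cL}$, the point $(Bw-y)/q$ is congruent to $e/q$ modulo $\cL$. The first representations in \cref{eqn:periodic-Gaussian,eqn:periodic-Gaussian-gradient} then give
\[
f_y(w)=\frac{1}{\rho_{1/s}(\cL)}\sum_{x\in\cL}\rho_{1/s}\left(x+\frac eq\right),\qquad
G_y(w)=-\frac{2\pi s^2}{\rho_{1/s}(\cL)}\sum_{x\in\cL}\left(x+\frac eq\right)\rho_{1/s}\left(x+\frac eq\right).
\]
The plan is to show that the $x=0$ term dominates both sums up to a factor $1+2^{-\Omega(n)}$, with an additive error $2^{-\Omega(n)}\lambda\rho_{1/s}(e/q)$ in the vector sum.

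\textbf{Key gap inequality.} Since $v$ is a closest vector to $y$, for every $x\in\cL$ we have $\norm{e+x}\ge\norm e$, that is, $2\ip xe\ge-\norm x^2$. This replaces the shortest-vector inequality used in \cref{lem:odd-local}, and it now holds for all $x$, with no exceptional $x=-v$. Hence
\[
\norm{x+e/q}^2-\norm{e/q}^2=\norm x^2+\tfrac2q\ip xe\ge(1-1/q)\norm x^2,
\]
and therefore $\rho_{1/s}(x+e/q)\le\rho_{1/s}(e/q)\rho_{\sqrt2/s}(x)$ for $q\ge3$. Applying \cref{lem:shell} with $k=1$ and $\sigma=\sqrt2/s$, as in \cref{eqn:odd-local-shell}, gives
\[
\sum_{x\ne0}\left(1+\norm x/\lambda\right)\rho_{1/s}(x+e/q)\le2^{-(g-o(1))n}\rho_{1/s}(e/q).
\]

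\textbf{Assembling the two estimates.} Using $\rho_{1/s}(\cL)=1+2^{-\Omega(n)}$ from \cref{cor:working-scale}, the scalar sum gives $f_y(w)=\rho_{1/s}(e/q)(1+2^{-\Omega(n)})$. For the gradient, we bound $\norm{x+e/q}\le\norm x+\norm e$. Here we need $\norm e\le O(\lambda)$, which holds under the promise $\norm e\le\alpha\lambda$ (and generally because $\norm e\le\mu(\cL)$, though only the promised regime matters). The error vector is then at most $2^{-\Omega(n)}\lambda\rho_{1/s}(e/q)$ times $2\pi s^2/\rho_{1/s}(\cL)$, and dividing by $f_y(w)$ yields the stated bound on $-G_y(w)/f_y(w)-2\pi s^2e/q$.

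\textbf{Exponent computation.} Next, $\rho_{1/s}(e/q)^p=\exp(-\pi s^2p\norm e^2/q^2)$. With $s^2=4nt\ln2/(\pi d^2)$ and $q^2=(4+o(1))p$ from \cref{eqn:svp-parameter}, this equals $2^{-(t+o(1))n\norm e^2/d^2}$, and raising $1+2^{-\Omega(n)}$ to the power $p=o(n)$ keeps the factor of the form $1+2^{-\Omega(n)}$. Finally, $\norm e\le\alpha\lambda\le\alpha d$ gives $f_y(w)^p\ge2^{-(t\alpha^2+o(1))n}$. Note that $f_y(w)$ is real and positive by the primal form, so taking powers is harmless.

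\textbf{Main obstacle.} The only delicate point is the gradient error, which requires $\norm e/\lambda$ to be bounded so that $\norm{x+e/q}\le\lambda(1+\norm x/\lambda)\cdot O(1)$. I would handle this by stating the bound under the standing promise $\norm e\le\alpha\lambda$, with $\alpha$ a constant. If a bound valid for general $e$ is wanted, the alternative is to absorb $\norm e/q$ into the main term, since the $x=0$ term already carries the factor $e$. The gap inequality itself, unlike in the SVP case, needs no special treatment of a neighbor $-v$.
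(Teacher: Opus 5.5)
Your primal representations, the gap inequality $2\ip{x}{e}\ge-\norm x^2$ (valid for every $x\in\cL$), the scalar tail, and the exponent computation match the paper and are correct. The gap is in the gradient bound.

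The lemma asserts $\norm{-G_y(w)/f_y(w)-2\pi s^2e/q}\le2^{-\Omega(n)}s^2\lambda$ with no hypothesis on $\norm e$. Your route produces two error terms of order $2^{-\Omega(n)}s^2\norm e/q$:
\begin{itemize}
\item the tail contribution $\frac{\norm e}{q}\sum_{x\ne0}\rho_{1/s}(x+e/q)$, coming from $\norm{x+e/q}\le\norm x+\norm e/q$;
\item the residual $\frac{2\pi s^2}{q}e\cdot O(2^{-\Omega(n)})$ left when you divide the $x=0$ term by $f_y(w)$, which equals $\rho_{1/s}(e/q)/\rho_{1/s}(\cL)$ only up to a factor $1+2^{-\Omega(n)}$.
\end{itemize}
These two terms in fact cancel, but a triangle-inequality bound does not see this. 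They are at most $2^{-\Omega(n)}s^2\lambda$ only when $\norm e\le 2^{o(n)}\lambda$.

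Your fallback $\norm e\le\mu(\cL)$ does not supply this, because the covering radius is not $O(\lambda_1(\cL))$ in general. For example, $\cL=\Z\oplus M\Z^{n-1}$ has $\lambda_1=1$ but covering radius at least $M\sqrt{n-1}/2$. So as written you prove the gradient bound only under the promise $\norm e\le\alpha\lambda$. Admittedly, that promise is all that \cref{lem:cvp-recovery} and the proof of \cref{lem:cvp-nonprincipal-mass} use, but the lemma as stated is unconditional.

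The fix is the one you gesture at in your last paragraph, and it is exactly the paper's argument; it should be the main line rather than an aside. Divide the primal forms before estimating anything:
\[
\frac{G_y(w)}{f_y(w)}+\frac{2\pi s^2}{q}e
=-2\pi s^2\,\frac{\sum_{x\in\cL}x\,\rho_{1/s}(x+e/q)}{\sum_{x\in\cL}\rho_{1/s}(x+e/q)}.
\]
The factor $\rho_{1/s}(\cL)$ and the $e/q$-component of every summand cancel exactly, and the $x=0$ term of the numerator vanishes. Only $\norm x\le\lambda(1+\norm x/\lambda)$ for $x\ne0$ remains. Your $k=1$ shell bound from \cref{lem:shell} then gives a numerator of norm at most $2^{-\Omega(n)}\lambda\,\rho_{1/s}(e/q)$, while the denominator is at least $\rho_{1/s}(e/q)$. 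This yields the stated bound for every target $y$, with no assumption on $\norm e$ and without needing \cref{cor:working-scale} for this part.
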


\begin{proof}
Since $Bw-v\in q\cL$ and $F_{1/s}$ is $\cL$-periodic, the primal representations of $f_y(w)$ and $G_y(w)$ are
\begin{align}
f_y(w)=\frac{1}{\rho_{1/s}(\cL)}\sum_{x\in\cL}\rho_{1/s}\left(x+\frac eq\right),\qquad
G_y(w)=-\frac{2\pi s^2}{\rho_{1/s}(\cL)}\sum_{x\in\cL}\left(x+\frac eq\right)\rho_{1/s}\left(x+\frac eq\right).
\label{eqn:cvp-local-primal}
\end{align}The choice of $v$ gives $\norm{x+e}\ge\norm e$ for every $x\in\cL$. Consequently,
\begin{align}
\norm{x+e/q}^2-\norm{e/q}^2=\norm x^2+\frac2q\ip{x}{e}\ge\left(1-\frac1q\right)\norm x^2.
\label{eqn:cvp-local-gap}
\end{align}
Since $\rho_{1/s}(z)=\exp(-\pi s^2\norm z^2)$,
\cref{eqn:cvp-local-gap} implies, for $x\ne0$,
\[
\frac{\rho_{1/s}(x+e/q)}{\rho_{1/s}(e/q)}\le\rho_{1/(s\sqrt{1-1/q})}(x)\le\rho_{\sqrt2/s}(x)
\]
where we use $q\ge 2$ in the last inequality.
Applying
\cref{lem:shell} with $k=0$ and width $\sqrt2/s$ gives
\begin{align}
\sum_{x\in\cL\setminus\{0\}}\rho_{1/s}(x+e/q)
&\le
\rho_{1/s}(e/q)
\sum_{x\in\cL\setminus\{0\}}\rho_{\sqrt2/s}(x)
\le
2^{-\Omega(n)}\rho_{1/s}(e/q).
\label{eqn:cvp-local-scalar-tail}
\end{align}
It follows from
\cref{cor:working-scale,eqn:cvp-local-primal,eqn:cvp-local-scalar-tail}
that
\[
f_y(w)=\rho_{1/s}(e/q)(1+2^{-\Omega(n)}).
\]
Observe that, using $s^2=4nt\ln2/(\pi d^2)$ and
$q^2=(4+o(1))p$,
\[
\rho_{1/s}(e/q)^p
=
\exp\left(-\frac{\pi s^2p}{q^2}\norm e^2\right)
=
2^{-(t+o(1))n\norm e^2/d^2}.
\]
Since $p=2^{o(n)}$ so that $(1+2^{-\Omega(n)})^p = 1+2^{-\Omega(n)}$, we obtain the first bound
\[
f_y(w)^p
=
2^{-(t+o(1))n\norm e^2/d^2}
(1+2^{-\Omega(n)}).
\]
If $\norm e\le\alpha\lambda$, then
$\norm e^2/d^2\le\alpha^2$, and the last assertion follows.

We next bound $G_y(w)$. \cref{eqn:cvp-local-primal} gives
\begin{align}
\frac{G_y(w)}{f_y(w)}+\frac{2\pi s^2}{q}e
=
-2\pi s^2
\frac{
\sum_{x\in\cL}x\rho_{1/s}(x+e/q)
}{
\sum_{x\in\cL}\rho_{1/s}(x+e/q)
}.
\label{eqn: exact_ratio}
\end{align}
We give a bound on this expression.
Since $v$ is closest to $y$ and
$e=v-y$, for every $x\in\cL$ we have
\[
\norm e\le\norm{e+x},
\qquad\text{and hence}\qquad
2\langle x,e\rangle\ge-\norm x^2.
\]
It follows that
\[
\norm{x+e/q}^2-\norm{e/q}^2
=
\norm x^2+\frac{2}{q}\langle x,e\rangle
\ge
\left(1-\frac1q\right)\norm x^2
\ge
\frac12\norm x^2.
\]
Therefore, for every $x\in\cL$,
\[
\frac{\rho_{1/s}(x+e/q)}{\rho_{1/s}(e/q)}
=
\exp\left(
-\pi s^2\left(
\norm{x+e/q}^2-\norm{e/q}^2
\right)
\right)\le
\exp\left(-\frac{\pi s^2}{2}\norm x^2\right)
=
\rho_{\sqrt2/s}(x).
\]
Since
$\norm x\le\lambda(1+\norm x/\lambda)$, applying
\cref{lem:shell} with $k=1$ and width $\sqrt2/s$ gives
\begin{align}
\sum_{x\in\cL\setminus\{0\}}
\norm{x}\rho_{1/s}(x+e/q)
\le
\lambda\rho_{1/s}(e/q)
\sum_{x\in\cL\setminus\{0\}}
\left(1+\frac{\norm x}{\lambda}\right)
\rho_{\sqrt2/s}(x)\le
2^{-\Omega(n)}\lambda\rho_{1/s}(e/q).
\label{eqn:cvp-local-tail}
\end{align}
The denominator of \cref{eqn: exact_ratio} is at least $\rho_{1/s}(e/q)$, hence
\cref{eqn:cvp-local-tail} gives
\[
\norm{
\frac{G_y(w)}{f_y(w)}+\frac{2\pi s^2}{q}e
}
\le
2^{-\Omega(n)}s^2\lambda.
\]
This proves the second bound.
\end{proof}

The length of $e$ is not known in advance.
If $\norm e<n^{-1/3}\lambda$, then $y$ already satisfies the
BDD promise. Otherwise, we can guess its length among polynomially many candidates. 
The next lemma formalizes this idea, together with the role of \cref{lem:svp-recovery} in the CVP case.

\begin{lemma}
\label{lem:cvp-recovery}
Let $\lambda=\lambda_1(\cL)\le d\le(1+1/n)\lambda$, let $v\in\cL$ be closest to $y$, let $e=v-y$, and let $w=B^{-1}v\bmod q$. 
Let $t_0<t<\min\{1/4,1/(4\alpha^2)\}.$
Assume that $\norm e\le\alpha\lambda$.
Suppose that $A\in\C^n$ satisfies
\begin{align}
\norm{A-cG_y(w)f_y(w)^{p-1}}\le\frac{s^2\lambda}{q}2^{-n/4+o(n)}
\label{eqn:cvp-required-accuracy}
\end{align}
for some $c\in\C$ with $|c|=1$.
If $\norm e\ge n^{-1/3}\lambda$, one of the well-defined nonzero vectors $\Re A/\norm{\Re A}$ and $\Im A/\norm{\Im A}$ is within distance $2^{-\Omega(n)}$ of either $e/\norm e$ or $-e/\norm e$.
For the corresponding unit vector $u$ (i.e., one of $\Re A/\norm{\Re A}$ or $\Im A/\norm{\Im A}$), the following holds for all sufficiently large $n$:
There is $j\in[\lceil4\alpha n^{1/3}\rceil]$ such that
\[
\norm{y\pm \frac{jd}{4n^{1/3}}u-v}<n^{-1/3}\lambda.
\]
Otherwise, 
if $\norm e<n^{-1/3}\lambda$, the target $y$ itself satisfies the BDD promise.
\end{lemma}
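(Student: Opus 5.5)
The proof follows the template of \cref{lem:svp-recovery}, with $e$ in place of $v$ and an extra length-guessing step. The case $\norm e<n^{-1/3}\lambda$ is trivial: the promise of \cref{thm:bdd-preprocessing} holds at $y$ itself. So assume $n^{-1/3}\lambda\le\norm e\le\alpha\lambda$.

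\textbf{Step 1 (the signal).} By \cref{lem:cvp-local} we can write
\[
-G_y(w)f_y(w)^{p-1}=\frac{2\pi s^2}{q}f_y(w)^p(e+e'),
\qquad \norm{e'}\le 2^{-\Omega(n)}\lambda,
\]
with $f_y(w)^p\ge 2^{-(t\alpha^2+o(1))n}$. Since $t<1/(4\alpha^2)$, we have $t\alpha^2<1/4$, so the right-hand side of \cref{eqn:cvp-required-accuracy} is at most $\frac{s^2\lambda}{q}f_y(w)^p2^{-\Omega(n)}$.

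\textbf{Step 2 (relative error).} We need the signal to dominate relative to $\norm e$, not just $\lambda$. Since $\norm e\ge n^{-1/3}\lambda$, any error of size $2^{-\Omega(n)}\lambda$ is also $2^{-\Omega(n)}\norm e$, because the polynomial factor $n^{1/3}$ is absorbed. The triangle inequality then gives
\[
\norm{A+\tfrac{2\pi s^2}{q}c f_y(w)^p e}\le \tfrac{s^2}{q}f_y(w)^p\norm e\,2^{-\Omega(n)}.
\]

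\textbf{Step 3 (real or imaginary part).} As in \cref{lem:svp-recovery}, take real and imaginary parts and pick the one with $|\Re c|$ or $|\Im c|$ at least $1/\sqrt2$. \cref{lem:normalization-stability} then gives a unit vector $u$ with $\norm{u\mp e/\norm e}\le 2^{-\Omega(n)}$.

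\textbf{Step 4 (guessing the length).} Set $\ell:=\norm e$ and $\pm u\approx -e/\ell$, so that $v=y+e\approx y\mp\ell u$ with the appropriate sign. Since $\ell\le\alpha\lambda\le\alpha d$, choose $j=\lceil 4n^{1/3}\ell/d\rceil\in[\lceil4\alpha n^{1/3}\rceil]$; this $j$ is at least $1$ because $\ell>0$. Then $|jd/(4n^{1/3})-\ell|\le d/(4n^{1/3})\le (1+1/n)\lambda/(4n^{1/3})$. Hence
\[
\norm{y\pm\tfrac{jd}{4n^{1/3}}u-v}\le \Bigl|\tfrac{jd}{4n^{1/3}}-\ell\Bigr|+\ell\,\norm{\pm u+e/\ell}\le \tfrac{(1+1/n)\lambda}{4n^{1/3}}+2^{-\Omega(n)}\alpha\lambda,
\]
which is $<n^{-1/3}\lambda$ for large $n$.

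\textbf{Main obstacle.} The only delicate point is Step 2. The accuracy in \cref{eqn:cvp-required-accuracy} is stated relative to $\lambda$, while the signal scales with $\norm e$. This is why the lower bound $\norm e\ge n^{-1/3}\lambda$ is needed: it ensures the $2^{-\Omega(n)}$ margin coming from $t\alpha^2<1/4$ swallows the resulting polynomial loss.
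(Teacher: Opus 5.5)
Your proposal is correct and follows essentially the same route as the paper: the signal from \cref{lem:cvp-local} dominates the accuracy bound because $t\alpha^2<1/4$, the polynomial factor coming from $\norm e\ge n^{-1/3}\lambda$ is absorbed, \cref{lem:normalization-stability} gives the direction, and a grid of $O(n^{1/3})$ candidate lengths recovers $\norm e$. The differences are cosmetic. You round $j$ up, so $j\ge1$ is automatic at the cost of a length error $d/(4n^{1/3})$ instead of the paper's $d/(8n^{1/3})$. Also, under your convention $\pm u\approx -e/\ell$ the sign in your final display should be $\mp$; this is harmless because the statement allows either sign.
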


\begin{proof}
By \cref{lem:cvp-local}, there is $e_0\in\R^n$ with $\norm{e_0}\le2^{-\Omega(n)}\lambda$ such that
\begin{align}
-G_y(w)f_y(w)^{p-1}=\frac{2\pi s^2}{q}f_y(w)^p(e+e_0).
\label{eqn:cvp-direction-signal}
\end{align}
If $\norm e\ge n^{-1/3}\lambda$, then the last statement of \cref{lem:cvp-local}, $f_y(w)^p
\ge
2^{-(t\alpha^2+o(1))n}.$, gives
\[
\norm{G_y(w)f_y(w)^{p-1}}
=\norm{\frac{2\pi s^2}{q}f_y(w)^p(e+e_0)}
\ge\frac{s^2n^{-1/3}\lambda}{q}2^{-(t\alpha^2+o(1))n}.
\]
The error in \cref{eqn:cvp-required-accuracy} is exponentially smaller because $t\alpha^2<1/4$. Since the vector in \cref{eqn:cvp-direction-signal} is real and at least one of $|\Re c|$ and $|\Im c|$ is at least $1/\sqrt2$, \cref{lem:normalization-stability} gives the claimed direction as in the proof of \cref{lem:svp-recovery}.

Let $r=\norm e$. Since $r\le\alpha\lambda\le\alpha d$, some $j\in\{0,\ldots,\lceil4\alpha n^{1/3}\rceil\}$ satisfies $|\frac{jd}{4n^{1/3}}-r|\le \frac d{8n^{1/3}}$ so that $\frac{jd}{4n^{1/3}}u$ approximates $e$ in norm.
In the current case, $\norm{e}=r\ge n^{-1/3}\lambda>\frac d{8n^{1/3}}$ for all sufficiently large $n$, so $j\ge1$. For the sign satisfying $\norm{\pm u-e/r}\le2^{-\Omega(n)}$, we have
\begin{align*}
    \norm{y\pm \frac{jd}{4n^{1/3}}u-v}
    =\norm{-e \pm ru \mp ru \pm \frac{jd}{4n^{1/3}}u}
    &
    \le r\norm{\pm u-e/r}
+
\left|\frac{jd}{4n^{1/3}}-r\right|\notag
\\&
\le \alpha \lambda 2^{-\Omega(n)} +\frac{d}{8n^{1/3}}
<
n^{-1/3}\lambda
\end{align*}
holds for sufficiently large $n$. This proves the case for $\norm{e} \ge n^{-1/3} \lambda$.
The last assertion for the other case is the definition of the $n^{-1/3}$-BDD promise.
\end{proof}

\subsection{The coordinate decomposition and the estimator}
Observe that
for any $\varepsilon\in\{-1,1\}$,
\[
\E_X\left[\omega_q^{\ip{\varepsilon w}{[X]_q}}e^{-2\pi i\varepsilon\ip{X}{y}/q}\right]=F_{1/s}\left(\varepsilon\frac{Bw-y}{q}\right)=f_y(w),
\]
where the last equality follows because $F_{1/s}$ is even.
For independent $X_0,\ldots,X_{p-1}\sim D_{\cL^*,s}$ and for
\[
\Phi_y(X_0,\ldots,X_{p-1}):=\exp\left(-\frac{2\pi i}{q}\ip{\sum_{r=0}^{p-1}\varepsilon_rX_r}{y}\right),
\]
we have the following equation because of
the independence and the symmetry of the discrete Gaussian
\[
G_y(w)f_y(w)^{p-1}=
\E\left[2\pi iX_0\Phi_y(X_0,\ldots,X_{p-1})\omega_q^{\ip{w}{\sum_{r=0}^{p-1}\varepsilon_r[X_r]_q}}\right]
\]
where we use $[X_r]_q = B^T X_r \bmod q $ and $w=B^{-1} v \bmod q$.

\paragraph{What to estimate.}
Recall the notations for the spaces $H_j$
\[
H_j:=\{x=(x_1,\ldots,x_n)\in\F_q^n:x_k=0
\text{ for every }k\notin\{(j-1)n/m+1,\ldots,jn/m\}\}, 
\]
with the assumption that $m=\log_2 p+1$ divides $n$
and the decomposition 
\[
    \F_q^n=H_1\oplus\cdots\oplus H_m, \qquad Y=\sum_{j=1}^m (Y)_{H_j}
    \quad \text{for}\quad Y\in\F_q^n
\]
where $(Y)_{H_j}\in H_j$ denotes the unique component satisfying
$Y=\sum_{j=1}^m (Y)_{H_j}$.

Consider independent discrete Gaussian samples $X_0,...,X_{p-1} \sim D_{\cL^*,s}.$
Let $U$ be uniformly random in $\GL_n(\F_q)$.
Recall the recursive definition $Y_{0,r}:=U[X_r]_q=U(B^TX_r\bmod q)$ for $r\in[0,p)$ and define $Y_{j,a}:=Y_{j-1,2a}-Y_{j-1,2a+1}$ for $j\in[m-1]$ and $a\in[0,p/2^j)$. 
For $\varepsilon_r=(-1)^{\operatorname{wt}_2(r)}$, we have
\[Y_{m-1,0}
=
\sum_{r=0}^{p-1}\varepsilon_rY_{0,r},\qquad 
Y_{j,a}
=
\sum_{b=0}^{2^j-1}
\varepsilon_bY_{0,a2^j+b}
=
\varepsilon_{a2^j}
\sum_{r=a2^j}^{(a+1)2^j-1}
\varepsilon_rY_{0,r}.\]
Recall the constraint function from 
\cref{eqn:svp-ideal-indicator}:
\[
Q^{p-1}C_{\boldsymbol\delta}(X_0,\ldots,X_{p-1})=Q^{p-1}
\prod_{j=1}^{m-1}
\prod_{a=0}^{p/2^j-1}
\ind_{\{(Y_{j,a})_{H_j}=\delta_{j,a}\}}= 
\sum_{\boldsymbol\alpha}
\left(
\prod_{j=1}^{m-1}
\prod_{a=0}^{p/2^j-1}
\omega_q^{-\ip{\alpha_{j,a}}{\delta_{j,a}}}
\right)
\prod_{j=1}^{m-1}
\prod_{a=0}^{p/2^j-1}
\omega_q^{\ip{\alpha_{j,a}}{Y_{j,a}}}.
\]

For $\theta\in H_m$, analogously to \cref{eqn:svp-ideal-value}, we define
\begin{align}
A_{y,\boldsymbol\delta}(\theta):=Q^{p-1}\E\left[C_{\boldsymbol\delta}(X_0,\ldots,X_{p-1})2\pi iX_0\Phi_y(X_0,\ldots,X_{p-1})\omega_q^{\ip{\theta}{(Y_{m-1,0})_{H_m}}}\mid U,\boldsymbol\delta\right].
\label{eqn:cvp-ideal-value}
\end{align}
The expectation is over the independent samples, while $U$ and $\boldsymbol\delta$ are fixed.

The following lemma is a CVP variant of \cref{lem:svp-ideal-identity}. The proof is analogous.

\begin{lemma}
\label{lem:cvp-ideal-identity}
Let $\lambda=\lambda_1(\cL)\le d\le(1+1/n)\lambda$, let $v\in\cL$ be closest to $y$, suppose that $\norm{v-y}\le\alpha\lambda$, let $w=B^{-1}v\bmod q$ be nonzero, and let $\theta_*=(U^{-T}w)_{H_m}$. Except with probability $o(1)$ over $U$ and $\boldsymbol\delta$, there is $c_{\boldsymbol\delta}\in\C$ with $|c_{\boldsymbol\delta}|=1$ such that
\[
\norm{A_{y,\boldsymbol\delta}(\theta_*)-c_{\boldsymbol\delta}G_y(w)f_y(w)^{p-1}}\le\frac{s^2\lambda}{q}2^{-n/4+o(n)}.
\]
\end{lemma}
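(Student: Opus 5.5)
We follow the proof of \cref{lem:svp-ideal-identity} step by step, with the shifted functions in place of $f$ and $G$.

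\textbf{Character expansion.} Substitute the character expansion of $Q^{p-1}C_{\boldsymbol\delta}$ into \cref{eqn:cvp-ideal-value}. Define the leaf frequencies $k_{\boldsymbol\alpha,\theta,r}$ by \cref{eqn:svp-leaf-frequency-identity}. The phase $\Phi_y$ factorizes over $r$ as $\prod_r e^{-2\pi i\varepsilon_r\ip{X_r}{y}/q}$. Independence of the samples then gives
\[
A_{y,\boldsymbol\delta}(\theta)=\sum_{\boldsymbol\alpha}\chi_{\boldsymbol\alpha}(\boldsymbol\delta)\,\widetilde G(U^Tk_{\boldsymbol\alpha,\theta,0})\prod_{r=1}^{p-1}\widetilde f_r(U^Tk_{\boldsymbol\alpha,\theta,r}).
\]
Here each leaf factor is $\E[\omega_q^{\ip{u}{[X]_q}}e^{-2\pi i\varepsilon_r\ip{X}{y}/q}]=F_{1/s}((Bu-\varepsilon_r y)/q)$, and the gradient factor is the analogous quantity with $2\pi iX$.

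At $\theta_*$ and $\boldsymbol\alpha^*$ we have $U^Tk_{\boldsymbol\alpha^*,\theta_*,r}=\varepsilon_r w$, so every factor equals $F_{1/s}(\varepsilon_r(Bw-y)/q)=f_y(w)$ because $F_{1/s}$ is even, and the $r=0$ factor equals $G_y(w)$. The principal term is therefore $c_{\boldsymbol\delta}G_y(w)f_y(w)^{p-1}$ with $|c_{\boldsymbol\delta}|=1$. By \cref{lem:svp-offset-character-orthogonality}, the expected squared norm of the remainder $e_{\boldsymbol\delta}$ over $\boldsymbol\delta$ equals the nonprincipal mass $\sum_{\boldsymbol\alpha\ne\boldsymbol\alpha^*}\|\cdots\|^2$. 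Markov's inequality then reduces the lemma to showing that this mass is at most $(s^2\lambda/q)^22^{-n/2+o(n)}$ except with probability $o(1)$ over $U$. Note that we only need $2^{-n/2+o(n)}$ here, not $2^{-n/2-\Omega(n)}$, since the statement allows a $2^{o(n)}$ loss.

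\textbf{Bounding the mass.} Define shifted analogues $V_j^y,T_j^y$ through the same recursion \cref{eqn:svp-frequency-mass-recursion}, with base $V_0^y(\kappa)=|F_{1/s}((BU^T\kappa-y)/q)|^2$ and $T_0^y$ defined similarly. Under the sign convention $z_r=\varepsilon_rU^Tk_{\cdot,r}$, all leaves then carry the same shift $-y/q$. The key observation is the evenness identity $|F_{1/s}(-(Bu-y)/q)|=|F_{1/s}((Bu-y)/q)|$, so the right subtree at frequency $-\kappa-\alpha$ again produces $V^y_{j-1}(\kappa+\alpha)$. This gives the analogue of \cref{lem:svp-frequency-mass-closed-form}.

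In the primal expansion the vectors become $u_{r,b}\in Bz_r-y+q\cL$, which are shifted lattice points. We apply the average-difference transformation exactly as in \cref{clm:svp-average-difference}. The differences $d_{h,r}$ do not see the common shift $-y$, so they lie in the same cosets of $K_{h-1}$ and $K_0$ as before. The average $a_{j,0}$ lies in a coset of $\cL-y$, which is again a single coset. Since \cref{lem:shifted-mass-maximum} bounds every shifted coset mass by the centered mass, \cref{lem:svp-total-frequency-masses} holds verbatim for $\overline V^y_j,\overline T^y_j$.

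There is one caveat. The "at least one nonzero variable" argument for $\overline T$ relied on the all-zero tuple vanishing. In the shifted setting we instead bound $\overline T^y_j$ directly by the full product of masses. This costs only the factor from $\rho(\cL)\le 1+2^{-\Omega(n)}$ and the $(s^2\lambda/q)^22^{o(n)}$ prefactor, which is still sufficient.

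\textbf{The random-coset step.} The target-error recursion \cref{eqn:svp-target-mass-error-recursion} and \cref{lem:svp-target-increments} carry over with $k_*=U^{-T}w$. We use $w\ne0$, which is assumed in the statement, to get $\Pr[w\in S_{m-1}]=o(1)$. We use $V_{j-1}^y(0)\ge0$ in place of $V_{j-1}(0)\ge1$, and $|f_y(w)|\le1$. By \cref{lem:cvp-local}, $\|G_y(w)\|\le O(s^2\lambda/q)|f_y(w)|$ because $\|e\|\le\alpha\lambda$, so $T_j^*\le O(s^2\lambda/q)^2$.

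I expect this step to be the main obstacle. In the SVP case, $\overline V_j-1$ was small because the zero coset contributed $1$. Here the coset contributing the large mass is $C^*$ itself, and the zero coset is no longer special. So we must bound $\Delta_j$ through $2q^{-(n-jn/m)}\overline V^y_j$, using the full $\overline V^y_j\le 1+2^{((m-j-1)/2-g/2+o(1))n}$. The additive $1$ then gives $2q^{-(n-jn/m)}=2^{-((m+1)(m-j)/(2m)-o(1))n}$, which is at most $2^{-(1/2+\Omega(1))n}$ for $j\le m-1$. Indeed, $(m+1)/(2m)>1/2$ only by $1/(2m)$, so the resulting bound is $2^{-n/2-n/(2m)}$, and the extra $2^{-n/2m}$ is subexponential. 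This is acceptable because the target precision is only $2^{-n/4+o(n)}$ in norm. The remaining steps, the recursion for $B_j,E_j$ followed by Markov's inequality, then give $E_{m-1}\le(s^2\lambda/q)^22^{-n/2+o(n)}$ and hence the claim.
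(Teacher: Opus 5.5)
Your proposal is correct and follows essentially the paper's route: the same character expansion with the principal term at $\boldsymbol\alpha^*$, offset orthogonality plus Markov's inequality, shifted mass bounds in which the common shift $-y$ is absorbed into the root average and handled by \cref{lem:shifted-mass-maximum}, and a random-coset step that uses the full total masses (zero term retained), yielding only $\Delta_{j,y}\le 2^{-n/2+o(n)}$ and $\Gamma_{j,y}\le(s^2\lambda/q)^22^{-n/2+o(n)}$, exactly as in \cref{lem:cvp-target-increments}. Two slips do not affect the argument. First, the additive term gives at worst $2^{-(1/2+1/(2m))n+o(n)}$ rather than $2^{-(1/2+\Omega(1))n}$, as you then correct yourself. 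Second, in the full-product bound for $\overline T^y_j$ the leading mass $\rho_{\sigma_{j+1}/\sqrt{1-1/n}}(\cL)$ is $1+2^{((m-j-1)/2-g+o(1))n}$, not $1+2^{-\Omega(n)}$; this still suffices after multiplying by $2q^{-(n-jn/m)}$.
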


\begin{proof}
For $\boldsymbol\alpha=(\alpha_{j,a})_{j,a}$, let
$\chi_{\boldsymbol\alpha}(\boldsymbol\delta):=\prod_{j=1}^{m-1}\prod_{a=0}^{p/2^j-1}\omega_q^{-\ip{\alpha_{j,a}}{\delta_{j,a}}}.
$
Substituting $Q^{p-1}C_{\boldsymbol\delta}$ in \cref{eqn:cvp-ideal-value} to the summand for $\boldsymbol\alpha$ (\cref{eqn:svp-all-characters}), using \cref{eqn:svp-leaf-frequency-identity} gives
\begin{align}
A_{y,\boldsymbol\delta}(\theta)=\sum_{\boldsymbol\alpha}\chi_{\boldsymbol\alpha}(\boldsymbol\delta)G_y(U^Tk_{\boldsymbol\alpha,\theta,0})\prod_{r=1}^{p-1}f_y(\varepsilon_rU^Tk_{\boldsymbol\alpha,\theta,r}).
\label{eqn:cvp-ideal-expansion}
\end{align}
Here, we use the fact that for every $k\in\F_q^n$ and
$\varepsilon\in\{-1,1\}$,
\[
\E_X\left[\omega_q^{\ip{U^Tk}{[X]_q}}e^{-2\pi i\varepsilon\ip{X}{y}/q}\right]=F_{1/s}\left(\frac{BU^Tk-\varepsilon y}{q}\right)=f_y(\varepsilon U^Tk).
\]
Let $k_*=U^{-T}w$ and let $\boldsymbol\alpha^*$ be defined by \cref{eqn:svp-target-alpha}. Then $k_{\boldsymbol\alpha^*,\theta_*,r}=\varepsilon_rk_*$ for every $r$. Hence, the summand indexed by $\boldsymbol\alpha^*$ in \cref{eqn:cvp-ideal-expansion} is
\[
c_{\boldsymbol\delta}G_y(w)f_y(w)^{p-1},\qquad c_{\boldsymbol\delta}:=\chi_{\boldsymbol\alpha^*}(\boldsymbol\delta),
\]
and $|c_{\boldsymbol\delta}|=1$. Applying \cref{lem:svp-offset-character-orthogonality} to the remaining summands and then \cref{lem:cvp-nonprincipal-mass} below gives
\[
\E_{\boldsymbol\delta}\left[\norm{A_{y,\boldsymbol\delta}(\theta_*)-c_{\boldsymbol\delta}G_y(w)f_y(w)^{p-1}}^2\mid U\right]\le\left(\frac{s^2\lambda}{q}\right)^22^{-n/2+o(n)}.
\]
Markov's inequality proves the lemma.
\end{proof}

The proof of the following estimate is deferred to \cref{subsec:cvp-proofs}.
\begin{lemma}
\label{lem:cvp-nonprincipal-mass}
Under the assumptions of \cref{lem:cvp-ideal-identity}, let $k_*=U^{-T}w$ and let $\boldsymbol\alpha^*$ be defined by \cref{eqn:svp-target-alpha}. Except with probability $o(1)$ over $U$,
\begin{align}
\sum_{\boldsymbol\alpha\ne\boldsymbol\alpha^*}\norm{G_y(U^Tk_{\boldsymbol\alpha,\theta_*,0})\prod_{r=1}^{p-1}f_y(\varepsilon_rU^Tk_{\boldsymbol\alpha,\theta_*,r})}^2\le\left(\frac{s^2\lambda}{q}\right)^22^{-n/2+o(n)}.
\label{eqn:cvp-nonprincipal-mass}
\end{align}
\end{lemma}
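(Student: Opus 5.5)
We follow the proof of \cref{lem:svp-nonprincipal-mass} verbatim, replacing $f,G$ by their shifted versions and tracking where the shift $y$ enters. The first step is to set up the analogous masses: define $V_0^y(\kappa):=|f_y(U^T\kappa)|^2$ and $T_0^y(\kappa):=\norm{G_y(U^T\kappa)}^2$, with the same recursion \cref{eqn:svp-frequency-mass-recursion}. Since $F_{1/s}$ is even, $|f_y(-z)|=|F_{1/s}((-Bz+y)/q)|$ is in general \emph{not} $|f_y(z)|$. However, the sign pattern $\varepsilon_r$ in \cref{eqn:cvp-ideal-expansion} is exactly what makes every leaf argument of the form $\varepsilon_rU^Tk_{\boldsymbol\alpha,\theta,r}$. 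Hence we define the subtree masses with leaf arguments $z_r:=\varepsilon_rU^Tk_{\boldsymbol\alpha^{\le j},\kappa,r}$, as already done in the proof of \cref{lem:svp-total-frequency-masses}. The recursion then reads $V_j(\kappa)=\sum_\alpha V_{j-1}(\kappa+\alpha)\widetilde V_{j-1}(\kappa+\alpha)$, where $\widetilde V$ is the mirrored version. The analogue of \cref{lem:svp-frequency-mass-closed-form} holds with this pair of mutually defined quantities, and \cref{eqn:svp-full-tree-mass-closed-form} identifies the left side of \cref{eqn:cvp-nonprincipal-mass} with $T_{m-1}(k_*)-T^*_{m-1}$, where $T^*_{m-1}=\norm{G_y(w)f_y(w)^{p-1}}^2$.

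The second step is the key observation that every Gaussian mass bound survives the shift. In the primal expansion, $|f_y(z_r)|^2$ becomes a sum over $u_{r,0},u_{r,1}\in Bz_r-y+q\cL$. All vectors in the average--difference transformation are therefore shifted by the common vector $-y$. Consequently the differences $d_{h,r}$ obey exactly the same coset constraints as in \cref{clm:svp-average-difference,clm:svp-fixed-tree-average-difference}, and only the final average $a_{j,0}$ lies in a shifted coset of $\cL$ (respectively $K_{j+1}$). Since \cref{lem:shifted-mass-maximum} applies to any shift, the bounds \cref{eqn:svp-total-frequency-mass-product,eqn:svp-fixed-tree-product} remain valid, so \cref{lem:svp-total-frequency-masses,lem:svp-fixed-tree-masses} hold for the shifted masses. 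This holds except in the gradient absorption step, which I treat next.

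The gradient absorption step is the main obstacle. The term $\norm{u_{0,0}}^2$ with $u_{0,0}\in Bz_0-y+q\cL$ is no longer controlled by the ``all-zero tuple vanishes'' argument, because the zero tuple need not occur. Instead of showing that the total $\overline T_j$ has the extra $2^{-gn/2}$ gain, I would first bound $\norm{u_{0,0}}^2$ crudely via \cref{eqn:svp-gradient-moment-absorption}, using the widened widths and the second estimate of \cref{lem:svp-random-lattice-masses}. This gives $\overline T_j\le (s^2\lambda/q)^2 2^{((m-j-1)/2+o(1))n}$ and $\overline V_j\le 2^{((m-j-1)/2+o(1))n}$, losing the $g$-gain. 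For the target recursion \cref{eqn:svp-target-mass-error-recursion} this suffices. In \cref{lem:svp-target-increments}, the random-coset inclusion factor $2q^{-(n-jn/m)}$ alone gives
\[
\Delta_j,\ \Gamma_j/(s^2\lambda/q)^2\le 2^{-(1/2+(m-j)/(2m)-o(1))n}\cdot 2^{o(n)},
\]
which is $2^{-n/2+o(n)}$. This uses that $w\ne0$ and that $w\notin S_{m-1}$ with probability $1-o(1)$, so the coset of $w$ is random. I must also drop the ``$-1$'' in $\overline V_j-1$, which is harmless here since the distinguished coset is excluded.

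Finally I run the recursion. We have $V_j^*=|f_y(w)|^{2^{j+1}}\le1$ by \cref{lem:shifted-mass-maximum}, and by \cref{lem:cvp-local} and $\norm e\le\alpha\lambda$ we get $T_j^*\le O(s^2\lambda/q)^2$. Then $B_j\le 4^j2^{-n/2+o(n)}$ and $E_{m-1}\le (s^2\lambda/q)^2 2^{-n/2+o(n)}$, with $2^{O(m)}$ factors absorbed. This is exactly \cref{eqn:cvp-nonprincipal-mass}, whose target $2^{-n/2+o(n)}$ (rather than $2^{-n/2-\Omega(n)}$) is precisely what this weaker, $g$-free argument delivers.
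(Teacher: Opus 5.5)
Your proposal is correct and follows essentially the paper's route: shifted masses with leaf arguments $\varepsilon_rU^Tk_{\boldsymbol\alpha^{\le j},\kappa,r}$; the observation that the common shift $-y$ survives only in the final average, so \cref{lem:shifted-mass-maximum} bounds everything by the same products of unshifted coset masses; total bounds $2^{((m-j-1)/2+o(1))n}$ without the $g$-gain, since the all-zero tuple no longer occurs; random-coset inclusion giving increments of size $2^{-n/2+o(n)}$; and the unchanged $B_j,E_j$ recursion. The one thing you missed is that your ``mirrored'' $\widetilde V_{j-1}$ is identical to $V_{j-1,y}$: since $\varepsilon_{2^{j-1}+r}k_{\boldsymbol\alpha^{\le j},\kappa,2^{j-1}+r}=\varepsilon_rk_{-\boldsymbol\beta^R,\kappa+\alpha,r}$ and $\boldsymbol\beta^R\mapsto-\boldsymbol\beta^R$ is a bijection (\cref{eqn:cvp-right-subtree-reindexing}), the recursion \cref{eqn:svp-frequency-mass-recursion} holds verbatim and no pair of quantities is needed (also note a sign slip: $f_y(-z)=F_{1/s}((Bz+y)/q)$, whereas your displayed $F_{1/s}((-Bz+y)/q)$ would equal $f_y(z)$).
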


\paragraph{Estimator.}
We now explain how to estimate the value $A_{y,\boldsymbol{\delta}}(\theta)$.
For the independent samples $X_{r,i}$ and the indexed recursive differences $Y_{j,a}(\boldsymbol i)$ defined in \cref{eqn: recursive Yi}, we define
\begin{align}
\label{def: phiy}
\Phi_y(\boldsymbol i):=\exp\left(-\frac{2\pi i}{q}\ip{\sum_{r=0}^{p-1}\varepsilon_rX_{r,i_r}}{y}\right),\qquad \boldsymbol i=(i_0,\ldots,i_{p-1})\in[N]^p.
\end{align}
For $\theta\in H_m$, our estimator is defined as follows:
\begin{align}
\widehat A_{y,\boldsymbol\delta}(\theta):=\frac{Q^{p-1}}{N^p}\sum_{\boldsymbol i\in[N]^p}C_{\boldsymbol\delta}(\boldsymbol i)2\pi iX_{0,i_0}\Phi_y(\boldsymbol i)\omega_q^{\ip{\theta}{(Y_{m-1,0}(\boldsymbol i))_{H_m}}}.
\label{eqn:cvp-estimator}
\end{align}

\begin{lemma}
\label{lem:cvp-finite-list-second-moment}
Let $\lambda=\lambda_1(\cL)\le d\le(1+1/n)\lambda$. Except with probability $2^{-\Omega(n)}$ over $U$, simultaneously for every $y\in\R^n$ and for every $\theta\in H_m$, it holds that
\begin{align}
\E_{\boldsymbol\delta,X}\left[\norm{\widehat A_{y,\boldsymbol\delta}(\theta)-A_{y,\boldsymbol\delta}(\theta)}^2\mid U\right]\le\left(\frac{s^2\lambda}{q}\right)^2Q^{-1}2^{o(n)} = \left(\frac{s^2\lambda}{q}\right)^22^{-n/2+o(n)}
\label{eqn:cvp-finite-list-second-moment}
\end{align}
The constants and the exceptional probability are uniform in $y$.
\end{lemma}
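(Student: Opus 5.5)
We follow the proof of \cref{lem:svp-finite-list-second-moment} verbatim and check that the phase $\Phi_y$ never enters any bound, so that the relevant good event for $U$ is exactly the one from the SVP analysis. Define the kernel
\[
\mathcal K_{y,\boldsymbol i}(\theta):=Q^{p-1}C_{\boldsymbol\delta}(\boldsymbol i)2\pi iX_{0,i_0}\Phi_y(\boldsymbol i)\omega_q^{\ip{\theta}{(Y_{m-1,0}(\boldsymbol i))_{H_m}}}.
\]
Then $\widehat A_{y,\boldsymbol\delta}(\theta)$ is the average of these kernels over $\boldsymbol i\in[N]^p$, and each kernel has conditional mean $A_{y,\boldsymbol\delta}(\theta)$. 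Expanding the squared error as in \cref{eqn:svp-estimator-second-moment-expansion}, it suffices to prove the analogue of \cref{lem:svp-overlap-covariance}: the covariance term vanishes when $I(\boldsymbol i,\boldsymbol i')=\varnothing$, and otherwise it is at most $(s^2\lambda/q)^2Q^{|I|-1}2^{o(n)}$ on an event over $U$ that does not depend on $y$. With this in hand, the counting argument (at most $N^{2p-|I|}$ pairs for each $I$, and $N=Qp^4$) gives the claimed bound for every $\theta$ and every $y$ simultaneously.

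For $I=\varnothing$, independence of the two sample families gives zero, exactly as before; $\Phi_y(\boldsymbol i)$ is a function of $(X_{r,i_r})_r$ alone, so it does not break this independence. For $I\ne\varnothing$, we expand both $Q^{p-1}C_{\boldsymbol\delta}$ factors by characters and average over $\boldsymbol\delta$, which leaves only the diagonal terms $\boldsymbol\alpha=\boldsymbol\alpha'$. The factor $\Phi_y$ splits over leaves as $\prod_r e^{-2\pi i\varepsilon_r\ip{X_{r,i_r}}{y}/q}$.
\begin{itemize}[nosep]
\item For a leaf $r\notin I$, the expectation over the two independent samples gives $|f_y(\varepsilon_rU^Tk_{\boldsymbol\alpha,\theta,r})|^2$.
\item For a leaf $r\in I$, both the character and the phase are evaluated at the same sample, so their product with its conjugate is exactly $1$.
\item The root contributes either $\norm{G_y(U^Tk_{\boldsymbol\alpha,\theta,0})}^2$ or $\E\norm{2\pi X}^2$.
\end{itemize}
We therefore obtain the quantities $V^y_{m-1,I}$ and $T^y_{m-1,I}$, defined as before with $f,G$ replaced by $f_y,G_y$.

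The key step, and the main obstacle, is to rerun \cref{lem:svp-fixed-tree-masses,lem:svp-omitted-factor-masses,lem:svp-total-frequency-masses} for these shifted masses, uniformly in $y$. We use the primal representation
\[
|f_y(z)|^2=\rho_{1/s}(\cL)^{-2}\sum_{u_0,u_1\in Bz-\varepsilon y+q\cL}e^{-\pi s^2(\norm{u_0}^2+\norm{u_1}^2)/q^2},
\]
which says the variables $u_{r,b}$ now lie in shifted cosets $Bz_r-\varepsilon_r y+q\cL$. The average-difference transformation $\Psi_j$ is linear, so every transformed variable still ranges over a subset of one coset of the same lattice ($\cL$, $K_0$, $K_{h-1}$, or $K_{j+1}$). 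The only change is the shift: for instance, $d_{h,r}$ picks up a multiple of $y$ from the $\varepsilon$-signs, and $a_{j,0}$ an average of the $\varepsilon_r y$. \cref{lem:shifted-mass-maximum} is insensitive to these shifts, so the product bounds \cref{eqn:svp-fixed-tree-product,eqn:svp-fixed-gradient-mass-product} hold verbatim.

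Two points need care. First, the claims that previously used integrality, namely that $2^hd_{h,r}$ determines $d_{h,r}$ modulo $K_{h-1}$ and that a change of $a_{j,0}$ lies in $K_{j+1}$, must be stated for differences of two admissible values. Those differences cancel $y$ and are lattice vectors, so the coset statements survive. Second, the gradient absorption $R e^{-cR}\le (n/c)e^{-(1-1/n)cR}$ only needs $\norm{u_{0,0}}^2\le R$, which is unaffected by the shift. Consequently every bound depends only on the masses $\rho_{\sigma}(K_h)$ and $\rho(\cL)$, that is, on the event of \cref{lem:svp-random-lattice-masses}, which does not involve $y$.

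For the subtraction of the mean, we need $\E_{\boldsymbol\delta}\norm{A_{y,\boldsymbol\delta}(\theta)}^2\le(s^2\lambda/q)^22^{o(n)}$. We take this from the fixed-coset bound with $I=\varnothing$, i.e.\ by the analogue of $T^y_{m-1}(\theta)\le\sum_{\alpha\in H_m}T^y_{m-1}(\theta+\alpha)$ via \cref{lem:svp-fixed-tree-masses} applied at the top level with $K_m$ (mass $1+o(1)$). This avoids the global bound over $R_{m-1}$, which may need the zero-coset to be centered. Combining the covariance bound with the counting argument then proves \cref{eqn:cvp-finite-list-second-moment}, with the exceptional set of $U$ and all constants independent of $y$.
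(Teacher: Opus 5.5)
\textbf{There is a genuine problem in the step you flag as the key one: the primal expansion of the leaf factors.} The covariance produces
\[
|f_y(\varepsilon_rU^Tk_{\boldsymbol\alpha,\theta,r})|^2=\bigl|F_{1/s}\bigl((BU^Tk_{\boldsymbol\alpha,\theta,r}-\varepsilon_ry)/q\bigr)\bigr|^2 .
\]
Two things follow from this.
\begin{enumerate}
\item Your displayed identity is correct only with $|f_y(\varepsilon z)|^2$ on the left. Likewise, ``defined as before with $f,G$ replaced by $f_y,G_y$'' drops the $\varepsilon_r$ you derived for the leaves. Since $f_y$ is not even, this matters.
\item The factor has two coset expansions: over $u\in BU^Tk_{\boldsymbol\alpha,\theta,r}-\varepsilon_ry+q\cL$, or, by evenness of $F_{1/s}$, over $u\in Bz_r-y+q\cL$ with $z_r=\varepsilon_rU^Tk_{\boldsymbol\alpha,\theta,r}$. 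Only the second carries the residues $z_r$ on which \cref{clm:svp-average-difference,clm:svp-fixed-tree-average-difference} depend.
\end{enumerate}

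Your text pairs the SVP residue $z_r$ with the shift $-\varepsilon_ry$. That expands $|f_y(\varepsilon_rz_r)|^2$, so the leaves with $\varepsilon_r=-1$ see target $-y$. This is not the quantity in the covariance.

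If you instead keep the residue $U^Tk_{\boldsymbol\alpha,\kappa,r}$, the factor is right, but your coset claims fail. Because $k_{\boldsymbol\alpha,\kappa,2r+1}=-k_{\boldsymbol\alpha,\kappa,2r}$, your $d_{1,r}=a_{0,2r}-a_{0,2r+1}$ equals $2\varepsilon_{2r}$ times the paper's \emph{average} $a_{1,r}$. After removing the $y$-shift, $2d_{1,r}$ has residue $4U^Tk_{\boldsymbol\alpha,\kappa,2r}$ mod $q$, which moves with $\kappa$ and $\alpha_{1,r}$. So with the $d_{0,r'}$ fixed, the admissible $d_{1,r}$ do not lie in one coset of $K_0$. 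The factor $\rho_{q/s}(K_0)^{2^{j-1}}$ in \cref{eqn:svp-total-frequency-mass-product,eqn:svp-fixed-tree-product} is then unjustified. The picture ``$d_{h,r}$ picks up a multiple of $y$'' is the wrong one.

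The missing observation, which is what the paper does, is this. Once the sign is moved onto the residue, all $u_{r,b}$ share the \emph{common} shift $-y$, so $\Psi_j((u_{r,b}+y)_{r,b})=(a_{j,0}+y,(d_{h,r})_{h,r})$. The difference variables are then literally the SVP ones, and the SVP indexing and average--difference claims apply verbatim to $u+y$. Only $a_{j,0}$ is translated, and \cref{lem:shifted-mass-maximum} absorbs that. Your shift-insensitivity principle is sound once the residues are $\varepsilon_rU^Tk$, so the repair is short. As written, however, the key step does not go through.

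\textbf{The remaining steps coincide with the paper's proof:}
\begin{itemize}
\item vanishing for $I=\varnothing$;
\item diagonalization over $\boldsymbol\delta$;
\item cancellation of character and phase for $r\in I$;
\item moment absorption via $\norm{u_{0,0}}^2\le R$;
\item the three-case $Q^{|I|-1}$ induction;
\item counting with $N=Qp^4$;
\item the $y$-independent good event from \cref{lem:svp-random-lattice-masses}.
\end{itemize}

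Your treatment of the mean is not really a different route either. For $\theta\in H_m$ one has $\theta+H_m=H_m=R_{m-1}$, so your top-level fixed-coset sum is exactly the paper's $R_{m-1}$-sum, with $K_m=\cL$. It works, but \cref{lem:svp-fixed-tree-masses} stops at $j=m-2$, so you must supply $\rho_{\sigma_m}(\cL)-1\le 2^{-(g-o(1))n}$ (and its $1/\sqrt{1-1/n}$-widened version) directly from \cref{lem:shell} and $\log_2q=(m+1)/2+o(1)$. No centering is needed in the paper either: its CVP $R_j$-bound simply keeps the zero term and gives $2^{o(n)}$ at $j=m-1$.
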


The proof of \cref{lem:cvp-finite-list-second-moment} is also deferred to \cref{subsec:cvp-proofs}.

\begin{lemma}
\label{lem:cvp-estimator-accuracy}
Let $\lambda=\lambda_1(\cL)\le d\le(1+1/n)\lambda$, let $v\in\cL$ be closest to $y$, suppose that $\norm{v-y}\le\alpha\lambda$, let $w=B^{-1}v\bmod q$ be nonzero, and let $\theta_*=(U^{-T}w)_{H_m}$.
Except with probability $o(1)$ over $U$, $\boldsymbol\delta$, and the samples, there is $c_{\boldsymbol\delta}\in\C$ with $|c_{\boldsymbol\delta}|=1$ such that
\begin{align}
\norm{\widehat A_{y,\boldsymbol\delta}(\theta_*)-c_{\boldsymbol\delta}G_y(w)f_y(w)^{p-1}}\le\frac{s^2\lambda}{q}2^{-n/4+o(n)}.
\label{eqn:cvp-estimator-accuracy}
\end{align}
\end{lemma}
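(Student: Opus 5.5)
The plan follows the proof of \cref{lem:svp-estimator-accuracy}: split the error by the triangle inequality into an estimation part and a bias part, and control each separately.

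The estimation part is $\widehat A_{y,\boldsymbol\delta}(\theta_*)-A_{y,\boldsymbol\delta}(\theta_*)$. Apply \cref{lem:cvp-finite-list-second-moment}. Except with probability $2^{-\Omega(n)}$ over $U$, it holds uniformly in $y$ and simultaneously for all $\theta\in H_m$. In particular it applies at $\theta_*=(U^{-T}w)_{H_m}$, even though $\theta_*$ depends on $U$. This gives
\[
\E_{\boldsymbol\delta,X}\left[\norm{\widehat A_{y,\boldsymbol\delta}(\theta_*)-A_{y,\boldsymbol\delta}(\theta_*)}^2\mid U\right]\le\left(\frac{s^2\lambda}{q}\right)^2 2^{-n/2+o(n)}.
\]
Markov's inequality, with the $o(n)$ term enlarged by an additive $\omega(1)$ (for instance $\log n$), then shows that, conditioned on a good $U$ and except with probability $o(1)$ over $\boldsymbol\delta$ and the samples,
\[
\norm{\widehat A_{y,\boldsymbol\delta}(\theta_*)-A_{y,\boldsymbol\delta}(\theta_*)}\le\frac{s^2\lambda}{q}2^{-n/4+o(n)}.
\]

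The bias part is $A_{y,\boldsymbol\delta}(\theta_*)-c_{\boldsymbol\delta}G_y(w)f_y(w)^{p-1}$. Invoke \cref{lem:cvp-ideal-identity}, whose hypotheses ($\norm{v-y}\le\alpha\lambda$, $w\ne0$, the range of $d$) are exactly those assumed here. Except with probability $o(1)$ over $U$ and $\boldsymbol\delta$, it yields $c_{\boldsymbol\delta}$ with $|c_{\boldsymbol\delta}|=1$ and bias at most $\frac{s^2\lambda}{q}2^{-n/4+o(n)}$. This $c_{\boldsymbol\delta}=\chi_{\boldsymbol\alpha^*}(\boldsymbol\delta)$ depends only on $U$ and $\boldsymbol\delta$, not on the samples, so the same unit scalar serves in the final statement.

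Combining the two bounds by the triangle inequality gives \cref{eqn:cvp-estimator-accuracy}. By a union bound, the exceptional probabilities, namely $2^{-\Omega(n)}$ over $U$ plus two $o(1)$ terms, total $o(1)$ over $U$, $\boldsymbol\delta$ and the samples.

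The only delicate point is bookkeeping. The Markov step must be taken conditionally on $U$ lying in the good event of \cref{lem:cvp-finite-list-second-moment}, and uniformity in $\theta$ is needed because $\theta_*$ depends on $U$. Both are supplied directly by the statement of \cref{lem:cvp-finite-list-second-moment}, so no real obstacle remains. The substantive work lies in the deferred \cref{lem:cvp-nonprincipal-mass,lem:cvp-finite-list-second-moment}.
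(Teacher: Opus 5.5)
Your proposal is correct and is the same argument as the paper's. You use the triangle inequality to separate the sampling error from the bias. The sampling error is handled by \cref{lem:cvp-finite-list-second-moment} together with Markov's inequality, the bias by \cref{lem:cvp-ideal-identity}, and a union bound combines the exceptional events. The extra bookkeeping you supply is correct and fills in what the paper's short proof leaves implicit: you use uniformity in $\theta$ because $\theta_*$ depends on $U$, you condition on a good $U$ before applying Markov, and you note that $c_{\boldsymbol\delta}$ does not depend on the samples.
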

\begin{proof}
With probability $1-o(1)$, the conclusions of both \cref{lem:cvp-ideal-identity,lem:cvp-finite-list-second-moment} hold. 
Applying Markov's inequality to \cref{eqn:cvp-finite-list-second-moment} gives that
\[
\norm{\widehat A_{y,\boldsymbol\delta}(\theta_*)-A_{y,\boldsymbol\delta}(\theta_*)}\le\frac{s^2\lambda}{q}2^{-n/4+o(n)}
\]
holds
except with probability $o(1)$. Combining this estimate with \cref{lem:cvp-ideal-identity} and applying the triangle inequality proves the lemma.
\end{proof}

We next give the batch computation of the estimator for all $\theta$ in \cref{alg:cvp-estimator-computation} similar to \cref{lem:svp-estimator-computation}.
\begin{lemma}
\label{lem:cvp-estimator-computation}
Except with probability $o(1)$ over $\boldsymbol\delta$, approximations to all vectors $\widehat A_{y,\boldsymbol\delta}(\theta)$ for $\theta\in H_m$, each with additive error at most $(s^2d/q)2^{-n/4-\Omega(n)}$, can be computed in time and space $2^{n/2+o(n)}$.
\end{lemma}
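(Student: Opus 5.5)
The plan is to mirror \cref{alg:svp-estimator-computation} and the proof of \cref{lem:svp-estimator-computation}, with one change: the leaf payload $Z$ must now carry the phase factor $\Phi_y(\boldsymbol i)$, which is multiplicative over leaves. Since $\Phi_y(\boldsymbol i)=\prod_r \exp(-2\pi i\varepsilon_r\ip{X_{r,i_r}}{y}/q)$, I attach to each leaf record $(i,Y_{r,i},\phi_{r,i},Z_{r,i})$ the scalar $\phi_{r,i}:=\exp(-2\pi i\ip{X_{r,i}}{y}/q)$. When two records are merged at level $j$, the right child's phase is conjugated and the left one kept, i.e. $\phi\leftarrow\phi_L\overline{\phi_R}$. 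By induction this matches the sign pattern $\varepsilon_{2^{j-1}+\ell}=-\varepsilon_\ell$, exactly as for $Y$. I will carry a single complex scalar $\phi$ per record together with the vector $Z$, which is $2\pi iX_{0,i_0}$ on the subtree containing leaf $0$. At the root I store the product $\phi Z$ rather than a sum. Concretely, $Z$ is nonzero only on the leftmost branch, so the stored payload is $Z_{\mathrm{root}}=\phi\cdot 2\pi iX_{0,i_0}$.

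The matching, sorting, list-size bounds and abort rule are unchanged, because the constraints depend only on $Y$ and $\boldsymbol\delta$. The argument from \cref{lem:svp-estimator-computation} therefore gives $\E_{\boldsymbol\delta}|\mathsf L_{j,a}|=Qp^{4\cdot 2^j}$. Markov's inequality bounds the abort probability by $p^{-5}=o(1)$, and the total record count is at most $Qp^{7p+2}=2^{n/2+o(n)}$. The same inductive claim shows that $\mathsf L_{m-1,0}$ contains exactly the tuples with $C_{\boldsymbol\delta}(\boldsymbol i)=1$, now with payload $2\pi iX_{0,i_0}\Phi_y(\boldsymbol i)$. I then define $T_{y,\boldsymbol\delta}(b)$ by bucketing on $(Y)_{H_m}$ and apply the vector FFT over $H_m$ as in \cref{eqn:svp-final-transform}. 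This recovers \cref{eqn:cvp-estimator} exactly in exact arithmetic, in time and space $2^{n/2+o(n)}$.

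The statement differs from the SVP one only in asking for approximations with additive error $(s^2d/q)2^{-n/4-\Omega(n)}$. I expect this to be the main, though mild, obstacle. The phases $\exp(-2\pi i\ip{X}{y}/q)$ involve real inner products with an arbitrary real $y$, so exact evaluation is not possible. I will compute each $\phi_{r,i}$ to $\mathrm{poly}(n)$ bits of precision, with per-factor error $2^{-n^2}$ say. Each root payload is a product of $p$ unit-modulus approximations, so its error is at most $p2^{-n^2}\norm{2\pi X_{0,i_0}}$. We may assume $\norm{X}\le Cs\sqrt n$, since the algorithm aborts otherwise (\cref{lem:dgs-tail}). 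The prefactor is $Q^{p-1}/N^p$, and the number of root records is at most $Qp^{7p}$. The total additive error is therefore at most
\[
\frac{Q^{p-1}}{N^p}\cdot Qp^{7p}\cdot p2^{-n^2}\cdot 2\pi Cs\sqrt n \le Q^{p}2^{-n^2}2^{o(n)}s.
\]
Here $Q^p=2^{(n/2+o(n))p}$ is enormous, since $p=\Theta(n/\log^3n)$, so I need precision of about $np$ bits. I will take precision $n^3$ bits, which is still polynomial and keeps arithmetic cost polynomial per operation. The resulting error is then at most $2^{-n^2}$ relative to $s^2d/q$, using $s d=\Theta(\sqrt n)$ and $q=n^{O(1)}$. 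The FFT adds only a polynomial factor to the rounding error. This yields the claimed additive accuracy and completes the plan.
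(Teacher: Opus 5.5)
Your proposal is correct and follows essentially the same argument as the paper. You attach the phase $\phi_{r,i}$ to each record, merge by $\phi_L\overline{\phi_R}$, and reuse the matching, list-size and abort analysis of \cref{lem:svp-estimator-computation}, since the constraints depend only on $Y$ and $\boldsymbol\delta$; the paper instead folds the phase into the payload at each merge via $Z_L\overline{\phi_R}$, which is equivalent to your single multiplication by $\phi$ at the root, and it dismisses rounding by its global finite-precision convention, where you give an explicit count (looser than necessary, since $\frac{Q^{p-1}}{N^p}\cdot Qp^{7p}=p^{3p}=2^{o(n)}$ rather than anything like $Q^p$, so far fewer bits are needed, though your choice of $n^3$ bits remains valid).
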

\begin{proof}
Induction on $j$ shows that the third component of an element in
$\mathsf L_{j,a}$ is
\[
\phi
=
\exp\left(
-\frac{2\pi i}{q}
\ip{
\sum_{r=a2^j}^{(a+1)2^j-1}
\varepsilon_{r-a2^j}X_{r,i_r}
}{y}
\right).
\]
If $a=0$, its fourth component is
$Z=2\pi iX_{0,i_0}\phi$, whereas if $a\ne0$, its fourth component is
$Z=0$. In particular, every element chosen from
$\mathsf L_{j-1,2a+1}$ has fourth component $Z_R=0$, which explains
the update $Z_L\overline{\phi_R}$ in Step~2(b).
At the root, where $j=m-1$ and necessarily $a=0$, the value $\phi$
coincides with $\Phi_y(\boldsymbol i)$ in \cref{def: phiy}.
Thus, Steps~3 and~4 compute
$\widehat A_{y,\boldsymbol\delta}(\theta)$ as defined in
\cref{eqn:cvp-estimator}.

The difference conditions and the abort bounds are exactly those in
\cref{alg:svp-estimator-computation}. The additional phase arithmetic
incurs only polynomial overhead under the finite-precision convention
in \cref{sec:preliminaries}. Hence,
\cref{eqn:svp-computation-abort,eqn:svp-total-computation-records}
give the claimed abort probability, time, and space.
\end{proof}

\begin{algorithmblock}{Computation of $\widehat A_{y,\boldsymbol\delta}$}
\label{alg:cvp-estimator-computation}
\item For every $r\in[0,p)$ and $i\in[N]$, let
$\phi_{r,i}:=e^{-2\pi i\ip{X_{r,i}}{y}/q}$, let
$Z_{0,i}:=2\pi iX_{0,i}\phi_{0,i}$, and let $Z_{r,i}:=0$ for
$r\ne0$. Construct lists
\[
\mathsf L_{0,r}:=
\{(i,Y_{r,i},\phi_{r,i},Z_{r,i}):i\in[N]\}
\qquad\text{for }r\in[0,p).
\]
\item For $j=1,\ldots,m-1$ and $a\in[0,p/2^j)$, we do:
\begin{enumerate}
    \item For each $z\in H_j$, construct
    $\mathsf L_{j-1,2a+1,z}
    :=
    \{(\boldsymbol i_R,Y_R,\phi_R,Z_R)
    \in\mathsf L_{j-1,2a+1}:(Y_R)_{H_j}=z\}.$
    \item For every
    $(\boldsymbol i_L,Y_L,\phi_L,Z_L)\in\mathsf L_{j-1,2a}$,
    enumerate all elements
    $(\boldsymbol i_R,Y_R,\phi_R,Z_R)$ in
    $\mathsf L_{j-1,2a+1,(Y_L)_{H_j}-\delta_{j,a}}$
    and include
    $((\boldsymbol i_L,\boldsymbol i_R),
    Y_L-Y_R,\phi_L\overline{\phi_R},
    Z_L\overline{\phi_R})$
    in $\mathsf L_{j,a}$. Abort immediately if
    $|\mathsf L_{j,a}|>Qp^{7\cdot2^j}$.
\end{enumerate}
\item For every $b\in H_m$, compute
$T_{y,\boldsymbol\delta}(b):=
\sum_{\substack{
(\boldsymbol i,Y,\phi,Z)\in\mathsf L_{m-1,0}\\
(Y)_{H_m}=b}}Z.$
\item Using a vector-valued fast Fourier transform over $H_m$, compute for
every $\theta\in H_m$
\[
\widehat A_{y,\boldsymbol\delta}(\theta)
=
\frac{Q^{p-1}}{N^p}
\sum_{b\in H_m}
T_{y,\boldsymbol\delta}(b)
\omega_q^{\ip{\theta}{b}}.
\]
\end{algorithmblock}

\subsection{The algorithm}
\paragraph{Parameters.}
Fix constants $\alpha>0$ and $t$ satisfying $t_0<t<\min\{1/4,1/(4\alpha^2)\}.$
Use the parameters fixed at the beginning of this section, and set
$N:=Qp^4$. Then $p\log p=o(n)$ and $Q,N=2^{n/2+o(n)}$.

\noindent\textbf{Input.}
A basis $B\in\R^{n\times n}$ of a full-rank lattice
$\cL=B\Z^n$ and a target $y\in\R^n$ satisfying $\dist(y,\cL)\le\alpha\lambda_1(\cL).$
 
\noindent\textbf{Output.}
A vector $v\in\cL$ minimizing $\norm{v-y}$, with probability at
least $2/3$.
\begin{algorithmblock}{CVP with a distance guarantee}
\label{alg:bounded-cvp}
\item Use LLL to construct the polynomial-size set $\mathcal D$ of
values $d$ used in \cref{alg:main-svp}.
\item For every $d\in\mathcal D$, independently repeat the following
procedure a constant number of times:
\begin{enumerate}
    \item Run the preprocessing algorithm in
    \cref{thm:bdd-preprocessing}.
    \item Choose $a$ uniformly from $\F_q^n$, using its representative
    in $\{0,\ldots,q-1\}^n$, and let $\widetilde y:=y+Ba$. Query the
    BDD algorithm at $\widetilde y$.
    \item Let $s^2=4nt\ln2/(\pi d^2)$ and obtain the $pN$ samples.
    Abort this repetition if a sample has norm larger than
    $Cs\sqrt n$, where $C$ is a sufficiently large constant. Choose
    $U$ and $\boldsymbol\delta$ randomly, and run
    \cref{alg:cvp-estimator-computation} for $\widetilde y$. Abort
    this repetition if that subroutine aborts.
    \item Let $h=d/(4n^{1/3})$ and
    $J=\lceil4\alpha n^{1/3}\rceil$. For every $\theta\in H_m$, every
    nonzero $V\in\{\Re\widehat A_{\widetilde y,\boldsymbol\delta}(\theta),
    \Im\widehat A_{\widetilde y,\boldsymbol\delta}(\theta)\},$
    and every $j\in[J]$, query the BDD algorithm at
    $\widetilde y+jhV/\norm V$ and
    $\widetilde y-jhV/\norm V$.
    \item Discard every failure symbol and every returned vector
    outside $\cL$.
    Subtract $Ba$ from every remaining vector and
    collect the resulting vectors.
\end{enumerate}
\item If no vector is collected, declare failure. Otherwise, return
the vector minimizing its distance to $y$.
\end{algorithmblock}
\begin{proof}[Proof of \cref{thm:bounded-cvp}]
Fix a closest vector $v$ to $y$, let $\lambda=\lambda_1(\cL)$, and
let $e=v-y$. Consider a value $d\in\mathcal D$ satisfying
$\lambda\le d\le(1+1/n)\lambda$, and fix one repetition corresponding to this value.

After the random translation $\widetilde y = y+Ba$, $\widetilde v:=v+Ba$ is closest to
$\widetilde y$, $\widetilde v-\widetilde y=e$.
If $\norm e<n^{-1/3}\lambda$, the query at $\widetilde y$ returns
$\widetilde v$ whenever the BDD preprocessing succeeds. 
We consider the other case in the analysis below.
Note that given random $a$,
\[
w:=B^{-1}\widetilde v\bmod q
=
B^{-1}v+a\bmod q
\]
is uniform in $\F_q^n$. In particular, for the independently chosen
$U$,
\[
\Pr[w\in U^T(H_1\oplus\cdots\oplus H_{m-1})]
=
Q^{-1}
=
o(1).
\]
We condition on the complementary event. Then $w\ne0$, so that we can apply \cref{lem:cvp-ideal-identity,lem:cvp-estimator-accuracy}. The excluded event contributes only $o(1)$ to the failure
probability.

By \cref{cor:working-scale,thm:dgs-sampling}, the generated discrete Gaussian samples
have inverse-exponentially small statistical distance from the
independent samples, which only changes the probabilities negligibly small. 
Let $\theta_*=(U^{-T}w)_{H_m}$. By
\cref{lem:cvp-estimator-accuracy}, except with probability $o(1)$,
the vector
$\widehat A_{\widetilde y,\boldsymbol\delta}(\theta_*)$ satisfies
\cref{eqn:cvp-required-accuracy} for some $c\in\C$ with $|c|=1$.
Applying \cref{lem:cvp-recovery}, there are a nonzero vector $V\in\{\Re\widehat A_{\widetilde y,\boldsymbol\delta}(\theta_*),
\Im\widehat A_{\widetilde y,\boldsymbol\delta}(\theta_*)\},$
an index $j\in[J]$, and a sign $\sigma\in\{-1,1\}$ such that
\[
\norm{\widetilde y+\sigma jh\frac{V}{\norm V}-\widetilde v}
<
n^{-1/3}\lambda.
\]
The algorithm queries the BDD algorithm at this point, since it
considers every such $V$, $j$, and sign. Hence, successful BDD
preprocessing returns $\widetilde v$.
After
subtracting $Ba$, the algorithm therefore obtains $v$, which will be returned (or one among the same minimal distance).

The sample generation and \cref{lem:cvp-estimator-computation} use
time and space $2^{n/2+o(n)}$ in each repetition. The shifted-query
step makes $Q2^{o(n)}$ BDD queries because
$J=O(n^{1/3})$, so its total time has the same bound. The
$n^2+1$ choices of $d$ and the constant number of repetitions incur
only a polynomial factor.

Each repetition for the target value $d$ succeeds with constant
probability by
\cref{lem:dgs-tail,lem:cvp-estimator-accuracy,lem:cvp-estimator-computation,thm:bdd-preprocessing}. 
A constant number of
repetitions raises the success probability to at least $2/3$.
\end{proof}

\subsection{Random target CVP in a random lattice}

Let $\mu_n$ denote the Haar probability measure on
$X_n:=\SL_n(\R)/\SL_n(\Z),$
the space of covolume-one lattices in $\R^n$. In this subsection,
$\cL\sim\mu_n$ and, conditional on $\cL$, the target
$y\bmod\cL$ is uniform in $\R^n/\cL$. We have the following result.

\begin{corollary}
\label{cor:random-cvp}
Let $\cL\sim\mu_n$. With probability
$1-\exp(-\Omega(n^{2/3}))$ over $\cL$,
\cref{alg:bounded-cvp} solves a uniformly random
CVP target $(y,\cL)$ with probability at least $1/2$ for all sufficiently large $n$ over $y\bmod\cL$ and the internal
randomness of the algorithm.
\end{corollary}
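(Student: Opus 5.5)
Since \cref{thm:bounded-cvp} (applied with any fixed $\alpha$ strictly between $1$ and $\frac{1}{2\sqrt{t_0}}$, say $\alpha=1.03$) already gives success probability at least $2/3$ on every instance with $\dist(y,\cL)\le\alpha\lambda_1(\cL)$, the plan is to reduce the corollary to a purely geometric statement about Haar-random lattices and uniform targets. Concretely, I would show that with probability $1-\exp(-\Omega(n^{2/3}))$ over $\cL\sim\mu_n$, the set of $y\in\R^n/\cL$ with $\dist(y,\cL)>(1+Cn^{-1/3})\lambda_1(\cL)$ has measure at most, say, $1/4$ (in fact much less), for a suitable constant $C$. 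Then for large $n$ we have $1+Cn^{-1/3}<\alpha$, and the success probability is at least $(1-1/4)\cdot 2/3=1/2$.

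\textbf{Step 1: concentrate $\lambda_1$.} Write $r_n$ for the radius of the unit-volume ball, so $\operatorname{vol}(r_nB)=1$. By Siegel's mean value theorem, the expected number of nonzero lattice points in $(1-\epsilon)r_nB$ is $(1-\epsilon)^n$. Markov's inequality with $\epsilon=n^{-1/3}$ then gives $\Pr[\lambda_1<(1-n^{-1/3})r_n]\le e^{-n^{2/3}}$. So with the required probability, $\lambda_1(\cL)\ge(1-n^{-1/3})r_n$.

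\textbf{Step 2: the covering side.} For a fixed target, $\Pr_y[\dist(y,\cL)>R]$ equals the fraction of $\R^n/\cL$ left uncovered by $\cL+RB$. I would bound this fraction via Rogers' mean value theorem for pairs, i.e. the second moment. Let $Z(y)=\#\{x\in\cL:\norm{x-y}\le R\}$, averaged over both $\cL$ and $y$. With $R=(1+n^{-1/3})r_n$, we get $\E[Z]=\operatorname{vol}(RB)=(1+n^{-1/3})^n=e^{\Theta(n^{2/3})}$. Rogers' formula should give $\E[Z^2]\le\E[Z]^2(1+o(1))+O(\E[Z])$, and the paper cites \cite{EC:PouShe26} for exactly this kind of tool. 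Paley--Zygmund or Chebyshev then gives $\Pr_{\cL,y}[Z=0]\le \operatorname{Var}Z/\E[Z]^2=e^{-\Omega(n^{2/3})}$. A Markov argument over $\cL$ converts this into: with probability $1-e^{-\Omega(n^{2/3})}$ over $\cL$, the uncovered fraction is at most $e^{-\Omega(n^{2/3})}\le1/4$.

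\textbf{Step 3: combine.} On the intersection of the two events from Steps 1 and 2, a uniform $y$ satisfies
\[
\dist(y,\cL)\le(1+n^{-1/3})r_n\le\frac{1+n^{-1/3}}{1-n^{-1/3}}\lambda_1(\cL)\le(1+3n^{-1/3})\lambda_1(\cL)<\alpha\lambda_1(\cL)
\]
except with probability $1/4$. Invoking \cref{thm:bounded-cvp} finishes the argument.

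\textbf{Main obstacle.} I expect Step 2, the second-moment computation, to be the hardest part. Rogers' formula for pairs contains the contributions of primitive pairs $(x,x')$ with $x'=\pm x$, and also of dependent pairs $x'=(a/b)x$; these must be shown to be lower order when $\E[Z]$ is only $e^{\Theta(n^{2/3})}$. For that reason I would work directly with Rogers' explicit error term for $n\ge3$, rather than with a crude bound, and I would check that every term involving dependent pairs is $O(\E[Z])$.
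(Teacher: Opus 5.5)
Your Steps 1 and 3 match the paper: Siegel plus Markov for $\lambda_1$, then Markov over $\cL$ and \cref{thm:bounded-cvp}. The difference is the covering step. You plan to use Rogers' second-moment formula for pairs and bound the linearly dependent pairs. The paper never looks at pair statistics. For fixed $\cL$ it integrates over $y$ in a fundamental domain first, which collapses the double sum into a single lattice sum
\[
\E_{y\bmod\cL}\bigl[M_R(\cL,y)^2\bigr]=\sum_{z\in\cL}\operatorname{vol}\bigl(B_R\cap(B_R+z)\bigr).
\]
Siegel's first-moment formula, applied to this overlap function, then gives $\E_{\cL,y}[M_R^2]=V^2+V$ exactly. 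Hence $\Pr[M_R=0]\le 1/V$, with no error terms and no restriction on $n$.

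Your route also works. Writing dependent pairs as $(x,\tfrac dq x)$ with $\gcd(d,q)=1$, their Rogers contribution is at most $\sum_{q\ge1,\,d\ne0}\max(q,|d|)^{-n}V\le 4\zeta(n-1)V=O(V)$ for $n\ge3$. Two points need care, however.

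First, a bound of the form $\E[Z^2]\le\E[Z]^2(1+o(1))+O(\E[Z])$ only yields $\Pr[Z=0]\le o(1)+O(1/\E[Z])$. After Markov over $\cL$, that does not give the required $1-\exp(-\Omega(n^{2/3}))$ probability. You need the independent-pair term to equal $\E[Z]^2$ exactly, i.e.\ $\operatorname{Var}Z=O(\E[Z])$; Rogers' formula does give this, so state it that way.

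Second, Rogers' formula concerns a fixed test function. It therefore gives $\Pr_{\cL}[\dist(y,\cL)>R]=O(1/V)$ for each fixed $y\in\R^n$, not directly for $y$ uniform modulo $\cL$, whose representative depends on $\cL$. You need a transfer step. For example, average $y$ over a box $[-T,T]^n$ independent of $\cL$, and let $T\to\infty$ using equidistribution of the $\cL$-periodic uncovered set together with Fatou's lemma.

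What your route buys is this stronger fixed-target statement. The paper's route is shorter, exact, and needs only Siegel's formula, precisely because the corollary's model already averages over $y\bmod\cL$. Once you write ``averaged over both $\cL$ and $y$'' as in your Step 2, the paper's computation is the natural one and Rogers is unnecessary.
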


The corollary follows from the following lemma because, for $\alpha=1.01<1/2\sqrt{t_0}=1.039\cdots$ and all sufficiently large $n$, the bound in the lemma below implies
$\dist(y,\cL)\le\alpha\lambda_1(\cL)$, under which one execution of
\cref{alg:bounded-cvp} succeeds with probability at least $2/3$.

\begin{lemma}
\label{lem:random-cvp-distance}
With probability $1-\exp(-\Omega(n^{2/3}))$ over
$\cL\sim\mu_n$, we have
\begin{align}
\Pr_{y\bmod\cL}\left[
\dist(y,\cL)\le
\frac{1+n^{-1/3}}{1-n^{-1/3}}\lambda_1(\cL)
\right]
\ge 1-\exp(-\Omega(n^{2/3})).
\label{eqn:random-cvp-distance}
\end{align}
\end{lemma}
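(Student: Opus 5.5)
The plan is to control $\lambda_1(\cL)$ from below and the covering behaviour of a random target from above, both at scale $r_n:=(\det\cL)^{1/n}\cdot\mathrm{vol}(\mathcal B_n)^{-1/n}$. Here $\det\cL=1$ and $\mathcal B_n$ is the unit ball, so $r_n$ is the radius of a ball of volume one. The claimed bound then follows by comparing the two estimates.

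\textbf{Step 1 (lower bound on $\lambda_1$).} Apply Siegel's mean value theorem to the indicator of the ball of radius $(1-n^{-1/3})r_n$. It gives
\[
\E_{\cL\sim\mu_n}\bigl[\#(\cL\cap (1-n^{-1/3})r_n\mathcal B_n\setminus\{0\})\bigr]=(1-n^{-1/3})^n\le e^{-n^{2/3}}.
\]
Markov's inequality then yields $\lambda_1(\cL)\ge(1-n^{-1/3})r_n$ except with probability $\exp(-\Omega(n^{2/3}))$.

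\textbf{Step 2 (distance of a random target).} For a fixed lattice, the key identity is
\[
\E_{y\bmod\cL}\bigl[\#\{x\in\cL:\norm{x-y}\le R\}\bigr]=\mathrm{vol}(R\mathcal B_n).
\]
This average is large for $R=(1+n^{-1/3})r_n$, but a large first moment does not by itself force the count to be positive. I will therefore use the second moment over $y$:
\[
\E_y[N_R(y)^2]=\sum_{x\in\cL}\mathrm{vol}\bigl(R\mathcal B_n\cap(R\mathcal B_n+x)\bigr).
\]
This quantity depends on $\cL$. Averaging over $\cL\sim\mu_n$ with Siegel's formula once more (the second moment via the Rogers formula, or just the first Siegel moment applied to the function $x\mapsto\mathrm{vol}(R\mathcal B_n\cap(R\mathcal B_n+x))$) gives
\[
\E_{\cL}\E_y[N_R^2]=\mathrm{vol}(R\mathcal B_n)+\int_{\R^n}\mathrm{vol}\bigl(R\mathcal B_n\cap(R\mathcal B_n+x)\bigr)\,dx=V+V^2,
\]
where $V=\mathrm{vol}(R\mathcal B_n)=(1+n^{-1/3})^n\ge e^{\Omega(n^{2/3})}$. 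Paley–Zygmund would give $\Pr_y[N_R(y)=0]\le\mathrm{Var}/\E^2$. The difficulty is that the expected variance equals $V$ only after averaging over $\cL$, and I need a high-probability-in-$\cL$ bound.

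\textbf{Step 3 (from average to high probability in $\cL$).} Set $W(\cL):=\E_y[N_R^2]-V^2=\sum_{x\ne0}\mathrm{vol}(\cdots)+V-V^2$. The map $\cL\mapsto\sum_{x\ne0}\phi(x)$, with $\phi(x)=\mathrm{vol}(R\mathcal B_n\cap(R\mathcal B_n+x))$, has mean $V^2$ by Siegel. Its fluctuation is controlled by Rogers' second moment formula in dimension $n\ge3$, which gives variance $O(\int\phi^2)$ plus lower-order terms. Since $\phi\le V$, we get $\int\phi^2\le V\int\phi=V^3$. Chebyshev then gives $|W|\le V^{2}\cdot V^{-1/4}$ except with probability $O(V^3/V^{4-1/2})=O(V^{-1/2})=\exp(-\Omega(n^{2/3}))$. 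On this event, Chebyshev in $y$ yields
\[
\Pr_y[N_R(y)=0]\le\frac{\E_y[N_R^2]-V^2}{V^2}\le V^{-1/4}+V^{-1}=\exp(-\Omega(n^{2/3})).
\]

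\textbf{Conclusion and main obstacle.} On the intersection of the events from Steps 1 and 3, $\dist(y,\cL)\le(1+n^{-1/3})r_n\le\frac{1+n^{-1/3}}{1-n^{-1/3}}\lambda_1(\cL)$, with the stated probabilities. The main obstacle is Step 3: correctly invoking Rogers' second moment formula, in particular handling the extra terms for primitive-vector multiples $x,kx$, to get concentration of $\sum_{x\ne0}\phi(x)$. If those terms prove troublesome, I would instead bound them crudely. They contribute at most $\sum_k\int\phi(x)\phi(kx)\,dx\le O(V^2)\cdot\zeta$-type factors, which is still $o(V^{3.5})$ and suffices.
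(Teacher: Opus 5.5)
Your proof is correct, and your Steps 1 and 2 coincide with the paper's: Siegel for $\lambda_1$, then the first and second moments of $N_R$ in $y$, then Siegel applied to $\phi$ to get $\E_{\cL}\E_y[N_R^2]=V^2+V$. The routes differ only in how you pass to a high-probability statement in $\cL$, which is the step you flag as the main obstacle.

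The paper needs no concentration over $\cL$ at all. Since $\E_y[N_R]=V$ for every $\cL$, the identity $\E_{\cL,y}[N_R^2]=V^2+V$ already says that the variance of $N_R$ jointly over $(\cL,y)$ is $V$. Chebyshev on the joint distribution then gives $\Pr_{\cL,y}[N_R=0]\le V^{-1}$. Together with Step 1, the joint failure probability is $\exp(-\Omega(n^{2/3}))$. Markov's inequality applied to the conditional failure probability $b(\cL)=\Pr_y[\,\text{failure}\mid\cL\,]$ converts this into the stated bound, with the constant in the exponent halved.

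Your Step 3 replaces this short averaging argument with Rogers' second-moment formula, valid for $n\ge3$. You use it to make $\sum_{x\ne0}\phi(x)$ concentrate around $V^2$, and then run Chebyshev in $y$ lattice by lattice. This gives per-lattice control of $\E_y[N_R^2]$, which is more than the lemma needs. The final bounds have the same $\exp(-\Omega(n^{2/3}))$ quality, but you pay for a much deeper tool.

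If you keep your route, fix the bookkeeping of the dependent pairs. In Rogers' formula they are $(x,\tfrac pq x)$ with $q\ge1$, $p\ne0$, $\gcd(p,q)=1$, not only integer multiples $kx$. Each such pair contributes
\[
q^{-n}\int\phi(x)\phi(px/q)\,dx\le\max(|p|,q)^{-n}V^3,
\]
so these terms are of order $V^3$, not $V^2$. They still sum to $O(V^3)$ for $n\ge3$. Hence $\operatorname{Var}_{\cL}\bigl(\sum_{x\ne0}\phi(x)\bigr)=O(V^3)=o(V^{7/2})$, and your conclusion stands.
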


\begin{proof}
Let $V_n$ be the volume of the Euclidean unit ball, let
$r_n=V_n^{-1/n}$. For $R>0$, let
$N_R(\cL)$ be the number of nonzero vectors of $\cL$ in the
Euclidean ball of radius $R$. Siegel's mean-value formula
\cite{Sie45} gives
$\E[N_R(\cL)]=V_nR^n$, and hence
\begin{align}
\Pr_{\cL}\left[
\lambda_1(\cL)<(1-n^{-1/3})r_n
\right]
\le
(1-n^{-1/3})^n.
\label{eqn:random-lattice-short-vector}
\end{align}

Let $M_R(\cL,y)$ be the number of $x\in\cL$ satisfying
$\norm{x-y}\le R$, and write $V=V_nR^n$. Let $\mathcal F$ be a
fundamental domain of $\cL$ and let $\ind_{B_R}$ denote the indicator
of the Euclidean ball $B_R$ of radius $R$. Since $\cL$ has covolume
one, $\operatorname{vol}(\mathcal F)=1$, and $M_R(\cL,y)=\sum_{x\in\cL}\ind_{B_R}(x-y).
$
Unfolding the integral over $\mathcal F$ gives, for every fixed
$\cL$,
\[
\E_{y\bmod\cL}[M_R(\cL,y)]
=
\sum_{x\in\cL}\int_{\mathcal F}\ind_{B_R}(x-y)\,dy
=
\int_{\R^n}\ind_{B_R}(u)\,du
=
V.
\]
Similarly, expanding the square and writing $z=x-x'$ gives
\begin{align*}
\E_{y\bmod\cL}[M_R(\cL,y)^2]
&=
\sum_{x,x'\in\cL}
\int_{\mathcal F}
\ind_{B_R}(x-y)\ind_{B_R}(x'-y)\,dy
\notag\\
&=
\sum_{z\in\cL}\sum_{x\in\cL}
\int_{\mathcal F}
\ind_{B_R}(x-y)\ind_{B_R}(x-z-y)\,dy
\notag\\
&=
\sum_{z\in\cL}
\int_{\R^n}
\ind_{B_R}(u)\ind_{B_R}(u-z)\,du
\notag\\
&=
\sum_{z\in\cL}
\operatorname{vol}\bigl(B_R\cap(B_R+z)\bigr).
\end{align*}
Let $h(z):=\operatorname{vol}\bigl(B_R\cap(B_R+z)\bigr).
$
The term corresponding to $z=0$ is $h(0)=\operatorname{vol}\bigl(B_R\bigr)=V$. Siegel's mean-value
formula applied to $h$ gives
\[
\E_{\cL\sim\mu_n}
\left[
\sum_{z\in\cL\setminus\{0\}}h(z)
\right]
=
\int_{\R^n}h(z)\,dz.
\]
Moreover, by Fubini's theorem and translation invariance of Lebesgue
measure,
\begin{align*}
\int_{\R^n}h(z)\,dz
&=
\int_{\R^n}\int_{\R^n}
\ind_{B_R}(u)\ind_{B_R}(u-z)\,du\,dz
\notag\\
&=
\int_{\R^n}\ind_{B_R}(u)
\left(\int_{\R^n}\ind_{B_R}(u-z)\,dz\right)du
=
V^2.
\end{align*}
Consequently,
\[
\E_{\cL,y}[M_R(\cL,y)^2]
=V^2+V,\qquad
\operatorname{Var}_{\cL,y}(M_R(\cL,y))
=V.
\]
Chebyshev's inequality gives
\begin{align}
\Pr_{\cL,y}\left[\dist(y,\cL)>R\right]
=
\Pr_{\cL,y}\left[M_R(\cL,y)=0\right]
\le V^{-1}.
\label{eqn:random-cvp-covering}
\end{align}

Taking $R=(1+n^{-1/3})r_n$ and combining
\cref{eqn:random-lattice-short-vector,eqn:random-cvp-covering} gives
\begin{align}
\Pr_{\cL,y}\left[
\frac{\dist(y,\cL)}{\lambda_1(\cL)}
>
\frac{1+n^{-1/3}}{1-n^{-1/3}}
\right]
&\le
(1-n^{-1/3})^n+(1+n^{-1/3})^{-n}
=
\exp(-\Omega(n^{2/3})).
\label{eqn:random-cvp-joint-distance}
\end{align}
Since $\frac{1+n^{-1/3}}{1-n^{-1/3}}
=
1+O(n^{-1/3}),
$
this proves the claimed bound jointly over $\cL$ and $y\bmod\cL$.

Finally, let $b(\cL)$ be the conditional probability over
$y\bmod\cL$ that \cref{eqn:random-cvp-distance} fails, or more precisely,
\[
b(\cL)
:=
\Pr_{y\bmod\cL}\left[
\frac{\dist(y,\cL)}{\lambda_1(\cL)}>\frac{1+n^{-1/3}}{1-n^{-1/3}}
\right].
\]
By
\cref{eqn:random-cvp-joint-distance}, there is a constant $c>0$ such
that, for all sufficiently large $n$,
\[
\E_{\cL}[b(\cL)]\le \exp(-c n^{2/3}).
\]
Markov's inequality therefore gives
\[
\Pr_{\cL}\left[
b(\cL)>\exp(-c n^{2/3}/2)
\right]
\le \exp(-c n^{2/3}/2).
\]
This proves the lemma.
\end{proof}

\input{5z.proof}

%% file: 5z.proof.tex
\subsection{Proofs of the CVP lemmas}
\label{subsec:cvp-proofs}

We use the notation from \cref{subsec:prooflem} as follows:
\begin{align*}
P_j&=H_1\oplus\cdots\oplus H_j,&
S_j&=U^TP_j,&
R_j&=H_{j+1}\oplus\cdots\oplus H_m,\notag\\
K_j&=\{Bx:x\in\Z^n,\ x\bmod q\in S_j\},&
\sigma_j&=\frac{q}{2^{j/2}s},\\
\mathcal A_{\le j}
&=
\left\{
(\alpha_{h,a})_{\substack{1\le h\le j\\0\le a<2^{j-h}}}:
\alpha_{h,a}\in H_h
\right\},&
k_{\boldsymbol\alpha^{\le j},\kappa,r}
&=
\varepsilon_r\kappa
+
\sum_{h=1}^j
\varepsilon_{r\bmod2^h}
\alpha_{h,\lfloor r/2^h\rfloor}.
\end{align*}
The proofs in this section follow the corresponding proofs
for SVP in that subsection with small changes, which will be clearly noted below.

\paragraph{Recursive mass representations.}
For $\kappa\in\F_q^n$, we define the shifted versions of $V$ and $T$ in \cref{eqn:svp-frequency-mass-base}
\[
V_{0,y}(\kappa):=|f_y(U^T\kappa)|^2,\qquad
T_{0,y}(\kappa):=\norm{G_y(U^T\kappa)}^2,
\]
and, for $j\in\{1,\ldots,m-1\}$, define recursively following \cref{eqn:svp-frequency-mass-recursion}
\begin{align}
V_{j,y}(\kappa)&:=\sum_{\alpha\in H_j}
V_{j-1,y}(\kappa+\alpha)^2,&
T_{j,y}(\kappa)&:=\sum_{\alpha\in H_j}
T_{j-1,y}(\kappa+\alpha)V_{j-1,y}(\kappa+\alpha).
\label{eqn:cvp-frequency-mass-recursion}
\end{align}
These functions are $P_j$-periodic.
The following lemma is analogous to \cref{lem:svp-frequency-mass-closed-form}.

\begin{lemma}
\label{lem:cvp-frequency-mass-closed-form}
For every $j\in\{0,\ldots,m-1\}$ and $\kappa\in\F_q^n$,
\begin{align*}
V_{j,y}(\kappa)
&=
\sum_{\boldsymbol\alpha^{\le j}\in\mathcal A_{\le j}}
\prod_{r=0}^{2^j-1}
\left|f_y\left(
\varepsilon_rU^Tk_{\boldsymbol\alpha^{\le j},\kappa,r}
\right)\right|^2,
\notag\\
T_{j,y}(\kappa)
&=
\sum_{\boldsymbol\alpha^{\le j}\in\mathcal A_{\le j}}
\norm{
G_y\left(U^Tk_{\boldsymbol\alpha^{\le j},\kappa,0}\right)
\prod_{r=1}^{2^j-1}
f_y\left(
\varepsilon_rU^Tk_{\boldsymbol\alpha^{\le j},\kappa,r}
\right)
}^2.
\end{align*}
\end{lemma}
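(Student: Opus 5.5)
The plan is to copy the induction in the proof of \cref{lem:svp-frequency-mass-closed-form}. The only new ingredient is the sign $\varepsilon_r$ inside $f_y$. Because of the shift by $y$, $f_y$ is no longer even, but two substitutes hold. First, $|f_y(-z)|=|f_y(z)|$: the dual representation gives $f_y(-z)=\overline{f_y(z)}$, since $D_{\cL^*,s}$ is symmetric. Second, $\varepsilon_rU^Tk$ is exactly the argument that arises from the expectation identity used for \cref{eqn:cvp-ideal-expansion}. The base case $j=0$ is the definition: $\mathcal A_{\le0}$ contains only the empty tuple, and $k_{\varnothing,\kappa,0}=\kappa$ with $\varepsilon_0=1$.

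For the induction step, I decompose $\boldsymbol\alpha^{\le j}=(\alpha,\boldsymbol\beta^L,\boldsymbol\beta^R)$ as in \cref{eqn:svp-left-right-alpha}. By \cref{eqn:svp-left-right-k},
\[
k_{\boldsymbol\alpha^{\le j},\kappa,r}=k_{\boldsymbol\beta^L,\kappa+\alpha,r},
\qquad
k_{\boldsymbol\alpha^{\le j},\kappa,2^{j-1}+r}=k_{\boldsymbol\beta^R,-\kappa-\alpha,r}
\qquad (0\le r<2^{j-1}).
\]
The signs also match. For $r<2^{j-1}$ we have $\varepsilon_r$ unchanged, and $\varepsilon_{2^{j-1}+r}=-\varepsilon_r$. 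Hence, on the right half, the argument $\varepsilon_{2^{j-1}+r}U^Tk_{\boldsymbol\alpha^{\le j},\kappa,2^{j-1}+r}$ equals $\varepsilon_rU^Tk_{\boldsymbol\beta^R,\kappa+\alpha,r}$, because $k_{\boldsymbol\beta^R,-\kappa-\alpha,r}=-k_{\boldsymbol\beta^R,\kappa+\alpha,r}$ up to negating $\boldsymbol\beta^R$. More carefully, by linearity $k_{\boldsymbol\beta^R,-\kappa-\alpha,r}=-k_{-\boldsymbol\beta^R,\kappa+\alpha,r}$, and $\boldsymbol\beta^R\mapsto-\boldsymbol\beta^R$ is a bijection of $\mathcal A_{\le j-1}$. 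So the sum over $\boldsymbol\beta^R$ of the right-half product equals $V_{j-1,y}(\kappa+\alpha)$. This step needs no evenness at all. The sum over $\boldsymbol\beta^L$ gives another $V_{j-1,y}(\kappa+\alpha)$ directly, so the $V$-identity yields $\sum_{\alpha\in H_j}V_{j-1,y}(\kappa+\alpha)^2$, which is \cref{eqn:cvp-frequency-mass-recursion}. The $T$-identity is identical: the factor $G_y$ at leaf $0$ lies in the left half, which produces $T_{j-1,y}(\kappa+\alpha)$, and the right half gives $V_{j-1,y}(\kappa+\alpha)$.

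The only delicate point is this sign bookkeeping on the right subtree. In the SVP proof it was hidden by the evenness of $f$. Here it must go through the reindexing $\boldsymbol\beta^R\mapsto-\boldsymbol\beta^R$ together with the $\varepsilon$ placement. As a fallback, one can instead use $|f_y(-z)|=|f_y(z)|$ to show that $V_{j-1,y}$ is even, exactly as in the SVP case.
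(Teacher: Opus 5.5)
Your main induction is correct and matches the paper's proof. The identity $\varepsilon_{2^{j-1}+r}k_{\boldsymbol\alpha^{\le j},\kappa,2^{j-1}+r}=\varepsilon_r k_{-\boldsymbol\beta^R,\kappa+\alpha,r}$, which you get from linearity of $k$, is exactly \cref{eqn:cvp-right-subtree-reindexing}. The bijection $\boldsymbol\beta^R\mapsto-\boldsymbol\beta^R$ then yields both recursions with no evenness needed.

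However, the side remark you open with, and reuse as your "fallback", is false and should be deleted. Symmetry of $D_{\cL^*,s}$ only shows that $f_y(w)=F_{1/s}((Bw-y)/q)$ is real and positive. Passing from $f_y(w)$ to $f_y(-w)$ flips the $w$-phase but not the $y$-phase, so $f_y(-w)=\overline{f_{-y}(w)}=F_{1/s}((Bw+y)/q)=f_{-y}(w)$. This is not $\overline{f_y(w)}=f_y(w)$ in general. Hence $|f_y(-w)|\ne|f_y(w)|$, $V_{j,y}$ is not even, and the suggested fallback would fail.

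Your main argument never uses this remark, so the proof is complete once the remark is removed.
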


\begin{proof}
We follow the induction in the proof of
\cref{lem:svp-frequency-mass-closed-form}. The case $j=0$ is obvious by definition. 
Decompose
$\boldsymbol\alpha^{\le j}$ as
$(\alpha,\boldsymbol\beta^L,\boldsymbol\beta^R)$ using
\cref{eqn:svp-left-right-alpha}. The left half is indexed by
$(\boldsymbol\beta^L,\kappa+\alpha)$. For
$0\le r<2^{j-1}$, the second identity in
\cref{eqn:svp-left-right-k} and
$\varepsilon_{2^{j-1}+r}=-\varepsilon_r$ give
\begin{align}
\varepsilon_{2^{j-1}+r}
k_{\boldsymbol\alpha^{\le j},\kappa,2^{j-1}+r}
=
\varepsilon_r
k_{-\boldsymbol\beta^R,\kappa+\alpha,r}.
\label{eqn:cvp-right-subtree-reindexing}
\end{align}
Since $\boldsymbol\beta^R\mapsto-\boldsymbol\beta^R$ is a
bijection of $\mathcal A_{\le j-1}$, the scalar sums from both
halves equal $V_{j-1,y}(\kappa+\alpha)$. This gives the first
recursion in \cref{eqn:cvp-frequency-mass-recursion}. 

For the second
recursion, the index $r=0$, at which $G_y$ appears, belongs to the
left half. Hence the left half gives
$T_{j-1,y}(\kappa+\alpha)$, while the right half gives
$V_{j-1,y}(\kappa+\alpha)$. Summing over $\alpha\in H_j$ gives the
second recursion. This proves the lemma.
\end{proof}

\paragraph{Shifted Gaussian mass bounds.}
The next lemma is the CVP counterpart of
\cref{lem:svp-total-frequency-masses,lem:svp-fixed-tree-masses}.

\begin{lemma}
\label{lem:cvp-shifted-mass-bounds}
Let $\lambda=\lambda_1(\cL)\le d\le(1+1/n)\lambda$. Except with
probability $2^{-\Omega(n)}$ over $U$, simultaneously for every
$y\in\R^n$, the following estimates hold. For every
$j\in\{1,\ldots,m-1\}$,
\begin{align}
\sum_{\kappa\in R_j}V_{j,y}(\kappa)
&\le2^{((m-j-1)/2+o(1))n},&
\sum_{\kappa\in R_j}T_{j,y}(\kappa)
&\le
\left(\frac{s^2\lambda}{q}\right)^2
2^{((m-j-1)/2+o(1))n}.
\label{eqn:cvp-total-mass-bounds}
\end{align}
Moreover, for every $j\in\{0,\ldots,m-2\}$ and every
$\kappa\in\F_q^n$,
\begin{align}
\sum_{\alpha\in H_{j+1}}V_{j,y}(\kappa+\alpha)
&\le1+2^{-\Omega(n)},&
\sum_{\alpha\in H_{j+1}}T_{j,y}(\kappa+\alpha)
&\le
\left(\frac{s^2\lambda}{q}\right)^22^{o(n)}.
\label{eqn:cvp-fixed-mass-bounds}
\end{align}
\end{lemma}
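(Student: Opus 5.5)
The plan is to rerun the proofs of \cref{lem:svp-total-frequency-masses} and \cref{lem:svp-fixed-tree-masses} with every $z_r$ replaced by its shifted analogue, and to check that the shift only moves the summation cosets. No step of those proofs used that the cosets were centered. Two consequences follow. First, the estimates hold on the \emph{same} event over $U$, namely the event of \cref{lem:svp-random-lattice-masses}. That event does not involve $y$, which gives the uniformity in $y$ for free. Second, the only loss is the ``$-g/2$'' gain in \cref{eqn:cvp-total-mass-bounds}. It came from removing the zero vector, and a shifted coset need not contain zero.

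\textbf{Expansion.} By the primal representation, $f_y(\varepsilon_r U^Tk_r)=\rho_{1/s}(\cL)^{-1}\sum_{u\in B z_r-\varepsilon_r y+q\cL}\rho_{q/s}(u)$, where $z_r=\varepsilon_rU^Tk_{\boldsymbol\alpha^{\le j},\kappa,r}$ as before. Each $|f_y|^2$ therefore expands into $u_{r,0},u_{r,1}\in Bz_r-\varepsilon_r y+q\cL$. The analogues of \cref{clm:svp-expanded-sum-indexing,clm:svp-fixed-tree-indexing} go through with the residue map $u\mapsto B^{-1}(u+\varepsilon_r y)\bmod q$, so the indices are still recovered uniquely.

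\textbf{Average--difference transform.} Apply $\Psi_j$ and the identity \cref{eqn:svp-all-square-decompositions} verbatim; the exponent decomposition is purely Euclidean. Now check the coset properties of \cref{clm:svp-average-difference,clm:svp-fixed-tree-average-difference}.
\begin{itemize}[nosep]
\item $d_{0,r}=u_{r,0}-u_{r,1}\in q\cL=K_0$ still holds, since the shift cancels.
\item For $d_{h,r}$ with $h\ge1$, the quantity $2^hd_{h,r}$ is a signed combination of $u_{t,b}$. The $y$-shifts enter with coefficients $\varepsilon_t$ in the left block and $-\varepsilon_t$ in the right block. Because $\varepsilon_{t+2^{h-1}}=-\varepsilon_t$ within a block pair, the $\varepsilon_t y$ contributions become $\sum_i(\varepsilon_{t_i^L}-\varepsilon_{t_i^R})y$ up to sign, and this need not vanish. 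It does not matter. Once lower-level differences are fixed, the difference of two admissible values is still $x_L-x_R\in\cL$ with residue in $S_{h-1}$, so all admissible values still lie in \emph{one coset} of $K_{h-1}$, possibly a non-lattice coset.
\item Likewise $a_{j,0}$ ranges over a single coset of $\cL$, or of $K_{j+1}$ in the fixed-$\kappa$ case.
\end{itemize}
\cref{lem:shifted-mass-maximum} bounds each shifted coset mass by the centered one. This yields exactly the products \cref{eqn:svp-total-frequency-mass-product} and \cref{eqn:svp-fixed-tree-product}, and their gradient counterparts, with no shifts.

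\textbf{Bounding the products.} In the fixed-$\kappa$ bound \cref{eqn:cvp-fixed-mass-bounds}, every factor is a full centered mass $\rho(K_h)\le 1+2^{-(g/2-o(1))n}$, and the rest of the argument of \cref{lem:svp-fixed-tree-masses} goes through unchanged.

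In the global bound, the $a_{j,0}$ factor is $\rho_{q/(2^{(j+1)/2}s)}(\cL)\le 1+2^{((m-j-1)/2-g+o(1))n}\le 2^{((m-j-1)/2+o(1))n}$ when $j\le m-1$. This is why the statement drops both the ``$1+$'' and the $-g/2$.

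For $T$, apply Cauchy--Schwarz as in \cref{eqn:svp-gradient-cauchy-schwarz} and the moment absorption \cref{eqn:svp-gradient-moment-absorption}. Do not try to exclude the all-zero tuple; it may not be admissible now, and no $-g/2$ gain is claimed. The widened widths $\sigma/\sqrt{1-1/n}$ are handled by the second estimate of \cref{lem:svp-random-lattice-masses} together with \cref{lem:shell}. The prefactor is $(2\pi s^2/q)^2 nq^2/(\pi s^2)$, which equals $(s^2\lambda/q)^2 2^{o(n)}$ because $s^2\lambda^2=\Theta(n)$.

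\textbf{Main obstacle.} The delicate point is verifying that the coset structure in \cref{clm:svp-average-difference} survives the signed shifts $\varepsilon_r y$. The argument must compare pairs of admissible values rather than assert membership of $d_{h,r}$ itself in $K_{h-1}$; as noted above, the shifts do not cancel pointwise, only between admissible pairs.
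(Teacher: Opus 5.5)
Your overall strategy is the paper's: fix $U$ on the event of \cref{lem:svp-random-lattice-masses}, which gives uniformity in $y$. You then bound each shifted coset mass by the centered one via \cref{lem:shifted-mass-maximum}, keep the zero tuple in the $R_j$-sum for $T$, and use the same Cauchy--Schwarz/moment-absorption prefactor.

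\textbf{The gap.} There is a sign error in your expansion, and it is exactly what produces your ``main obstacle''. Since $f_y(\xi)=F_{1/s}((B\xi-y)/q)$, the factor $f_y(\varepsilon_rU^Tk_r)=f_y(z_r)$ expands over $u\in Bz_r-y+q\cL$. Your coset $Bz_r-\varepsilon_ry+q\cL=\varepsilon_r(BU^Tk_r-y)+q\cL$ expands $f_y(U^Tk_r)$ instead. For $\varepsilon_r=-1$ this differs from $f_y(-U^Tk_r)$, because $f_y$ is not even in $w$.

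As written you therefore bound the wrong sum. At $j=1$ you get $\sum_w f_y(U^Tw)^2f_y(-U^Tw)^2$ rather than $\sum_w f_y(U^Tw)^4$. By Cauchy--Schwarz the former is the smaller of the two, so bounding it says nothing about $V_{1,y}$. The coset $BU^Tk_r-\varepsilon_ry+q\cL$ is also a valid expansion. However, those representatives differ from the SVP ones by the sign $\varepsilon_r$, and \cref{clm:svp-average-difference} fails for them: $\kappa$ no longer cancels in $2^hd_{h,r}$.

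\textbf{The fix.} With the correct coset, the shift $-y$ is common to every leaf. The tuples $(u_{r,b}+y)_{r,b}$ then range over exactly the SVP tuples, and $\Psi_j$ changes only $a_{j,0}$, by $y$. Every $d_{h,r}$ coincides with its SVP counterpart, so the SVP claims apply verbatim; this is how the paper argues. There is no non-cancelling shift in the differences to worry about.

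Your pairwise-comparison argument remains valid after the fix. By linearity of $\Psi_j$, any fixed $y$-dependent translation of the leaves translates each transformed variable by a constant. So once you replace $-\varepsilon_ry$ by $-y$, the rest of your proof goes through as you describe.
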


\begin{proof}
By \cref{lem:svp-random-lattice-masses}, except with probability
$2^{-\Omega(n)}$ over $U$, the Gaussian-mass estimates in
\cref{eqn:svp-random-lattice-gradient-mass} hold simultaneously for
every $j$. Fix any $U$ for which these estimates hold. We prove the
claimed inequalities for this $U$, simultaneously for every
$y\in\R^n$.

We first prove the two bounds involving $V_{j,y}$. We refer to
\[
\sum_{\kappa\in R_j}V_{j,y}(\kappa)
\qquad\text{and}\qquad
\sum_{\gamma\in H_{j+1}}V_{j,y}(\kappa+\gamma)
\]
as the $R_j$-sum and the $H_{j+1}$-sum, respectively.

For the $R_j$-sum,
\cref{lem:cvp-frequency-mass-closed-form} gives
\[
\sum_{\kappa\in R_j}V_{j,y}(\kappa)
=
\sum_{\kappa\in R_j}
\sum_{\boldsymbol\alpha^{\le j}\in\mathcal A_{\le j}}
\prod_{r=0}^{2^j-1}|f_y(z_r)|^2,
\]
where
$z_r
:=
\varepsilon_rU^T
k_{\boldsymbol\alpha^{\le j},\kappa,r}.$
Similarly, for $w_r
:=
\varepsilon_rU^T
k_{\boldsymbol\beta^{\le j},\kappa+\gamma,r},$
the $H_{j+1}$-sum for fixed $\kappa$ is
\[
\sum_{\gamma\in H_{j+1}}V_{j,y}(\kappa+\gamma)
=
\sum_{\gamma\in H_{j+1}}
\sum_{\boldsymbol\beta^{\le j}\in\mathcal A_{\le j}}
\prod_{r=0}^{2^j-1}|f_y(w_r)|^2.
\]

For every $\xi\in\F_q^n$, the primal representation of $f_y$ gives
\[
|f_y(\xi)|^2
=
\frac{1}{\rho_{1/s}(\cL)^2}
\sum_{x_0,x_1\in B\xi-y+q\cL}
\exp\left(
-\frac{\pi s^2}{q^2}
\left(
\norm{x_0}^2+\norm{x_1}^2
\right)
\right).
\]
Accordingly, expanding the $R_j$-sum and the $H_{j+1}$-sum introduces
\[
u_{r,0},u_{r,1}\in Bz_r-y+q\cL,
\qquad\text{and}\qquad
v_{r,0},v_{r,1}\in Bw_r-y+q\cL,
\]
respectively.
To deal with two cases simultaneously, 
we define
\[
x^R_{r,b}:=u_{r,b},
\qquad
x^H_{r,b}:=v_{r,b}.
\]
For each $X\in\{R,H\}$, define
$a^X_{j,0}$ and $(d^X_{h,r})_{h,r}$ by
\[
\Psi_j((x^X_{r,b})_{r,b})
=
\left(a^X_{j,0},(d^X_{h,r})_{h,r}\right).
\]

Adding $y$ to every entry removes the common shift:
\[
u_{r,b}+y\in Bz_r+q\cL,
\qquad
v_{r,b}+y\in Bw_r+q\cL.
\]
Moreover, the definition of the average-difference transformation gives for each $X\in \{R,H\}$
\[
\Psi_j((x^X_{r,b}+y)_{r,b})
=
\left(a^X_{j,0}+y,(d^X_{h,r})_{h,r}\right).
\]
Thus
\cref{clm:svp-expanded-sum-indexing,clm:svp-average-difference}
apply to the tuples $(u_{r,b}+y)_{r,b}$, while
\cref{clm:svp-fixed-tree-indexing,clm:svp-fixed-tree-average-difference}
apply to the tuples $(v_{r,b}+y)_{r,b}$. We obtain the following two
claims.

\begin{claim}
\label{clm:cvp-shifted-sum-indexing}
Fix $y\in\R^n$.
\begin{itemize}[nosep]
\item In the $R_j$-sum, the tuple $(u_{r,b})_{r,b}$ uniquely
determines $(\boldsymbol\alpha^{\le j},\kappa)$.
\item For fixed $\kappa$, in the $H_{j+1}$-sum, the tuple
$(v_{r,b})_{r,b}$ uniquely determines
$(\boldsymbol\beta^{\le j},\gamma)$.
\end{itemize}
Consequently, the $R_j$-sum can be indexed by $(u_{r,b})_{r,b}$,
and, for fixed $\kappa$, the $H_{j+1}$-sum can be indexed by
$(v_{r,b})_{r,b}$, without multiplicity.
\end{claim}

\begin{claim}
\label{clm:cvp-shifted-average-difference}
Fix $y\in\R^n$.
\begin{itemize}[nosep]
\item In the $R_j$-sum, if $a^R_{j,0}$ is allowed to vary while all
$d^R_{h,r}$ are fixed, its possible values are contained in a single
coset of $\cL$.
\item In the $H_{j+1}$-sum, if $a^H_{j,0}$ is allowed to vary while
all $d^H_{h,r}$ are fixed, its possible values are contained in a
single coset of $K_{j+1}$.
\item Every $d^R_{0,r}$ and $d^H_{0,r}$ belongs to $K_0$.
\item For $X\in\{R,H\}$, $h\in[j]$, and
$r\in[0,2^{j-h})$, if $d^X_{h,r}$ is allowed to vary while all
lower-level differences $d^X_{\ell,r'}$ with $\ell<h$ are fixed,
its possible values are contained in a single coset of $K_{h-1}$.
\end{itemize}
Each coset may depend on the fixed variables and on $y$.
\end{claim}

By \cref{clm:cvp-shifted-sum-indexing} and the bijectivity of
$\Psi_j$, every term in either expansion is counted exactly once in
the transformed variables. For each $X\in\{R,H\}$,
\cref{eqn:svp-all-square-decompositions} gives
\[
\sum_{r=0}^{2^j-1}\sum_{b\in\bit}\norm{x^X_{r,b}}^2
=
2^{j+1}\norm{a^X_{j,0}}^2
+
\frac12\sum_{r=0}^{2^j-1}\norm{d^X_{0,r}}^2
+
\sum_{h=1}^j
2^{h-1}
\sum_{r=0}^{2^{j-h}-1}\norm{d^X_{h,r}}^2.
\]
We first sum over $a^X_{j,0}$ and then over the variables
$d^X_{h,r}$ in decreasing order of $h$. For $j\ge1$, their numbers,
containing cosets, and Gaussian widths are as follows:
\[
\begin{array}{c|c|c|c|c}
\text{variable}
&
\text{number}
&
R_j\text{-sum}
&
H_{j+1}\text{-sum}
&
\text{width}
\\ \hline
a_{j,0}
&
1
&
\text{a coset of }\cL
&
\text{a coset of }K_{j+1}
&
\dfrac{q}{2^{(j+1)/2}s}=\sigma_{j+1}
\\[2mm]
d_{0,r}
&
2^j
&
K_0
&
K_0
&
\dfrac{\sqrt2q}{s}
\\[2mm]
d_{1,r}
&
2^{j-1}
&
\text{a coset of }K_0
&
\text{a coset of }K_0
&
\dfrac{q}{s}
\\[2mm]
d_{h+1,r}
&
2^{j-h-1}
&
\text{a coset of }K_h
&
\text{a coset of }K_h
&
\dfrac{q}{2^{h/2}s}=\sigma_h
\quad(1\le h<j).
\end{array}
\]
For $j=0$, only the $H_1$-sum is needed. Its variables are
$a^H_{0,0}$, contained in a coset of $K_1$ with width $\sigma_1$,
and $d^H_{0,0}\in K_0$, with width $\sqrt2q/s$.

Applying
\cref{clm:cvp-shifted-average-difference,lem:shifted-mass-maximum}
at each step bounds each expansion by the product of the
corresponding unshifted Gaussian masses listed above.

For the $R_j$-sum, this gives the right-hand side of
\cref{eqn:svp-total-frequency-mass-product}. The estimates following
that equation give
\[
\sum_{\kappa\in R_j}V_{j,y}(\kappa)
\le
1+2^{((m-j-1)/2-g/2+o(1))n}
\le
2^{((m-j-1)/2+o(1))n}.
\]
For the $H_{j+1}$-sum, this gives the final right-hand side of
\cref{eqn:svp-fixed-tree-base-product} when $j=0$ and of
\cref{eqn:svp-fixed-tree-product} when $j\ge1$. Therefore
\[
\sum_{\gamma\in H_{j+1}}V_{j,y}(\kappa+\gamma)
\le
1+2^{-\Omega(n)}.
\]
These estimates hold simultaneously for every $y\in\R^n$, since the
event fixed above depends only on $U$ and
\cref{lem:shifted-mass-maximum} holds for every shift.

It remains to prove the two inequalities involving $T_{j,y}$.
The same Cauchy--Schwarz argument as in
\cref{eqn:svp-gradient-cauchy-schwarz} bounds the corresponding
expansion by the expansion used above for $V_{j,y}$ with the
additional factor
\[
\left(\frac{2\pi s^2}{q}\right)^2
\norm{x^X_{0,0}}^2
\]
for $X\in\{R,H\}$. Let
\[
\mathcal E_X
:=
\sum_{r=0}^{2^j-1}\sum_{b\in\bit}\norm{x^X_{r,b}}^2.
\]
Since $\norm{x^X_{0,0}}^2\le\mathcal E_X$,
\cref{eqn:svp-gradient-moment-absorption} gives
\[
\norm{x^X_{0,0}}^2
e^{-\pi s^2\mathcal E_X/q^2}
\le
\frac{nq^2}{\pi s^2}
e^{-\pi(1-1/n)s^2\mathcal E_X/q^2}.
\]
the preceding calculation for $V$'s above applies with every Gaussian width
multiplied by $1/\sqrt{1-1/n}$ and with the additional factor
\[
\left(\frac{2\pi s^2}{q}\right)^2
\frac{nq^2}{\pi s^2}
=4\pi ns^2
\le
\left(\frac{s^2\lambda}{q}\right)^22^{o(n)},
\]
where we use $s^2\lambda^2=\Theta(n)$ and $q=n^{O(1)}$.

The remaining argument is the same as in the proofs of
\cref{lem:svp-total-frequency-masses,lem:svp-fixed-tree-masses},
with two changes. First, the common translation by $-y$ remains only
in the final average, and its Gaussian mass is bounded by the
corresponding unshifted mass using
\cref{lem:shifted-mass-maximum}. Second, for the $R_j$-sum, we retain
the zero term in this unshifted mass. Consequently, its leading
Gaussian mass is
\[
\rho_{\sigma_{j+1}/\sqrt{1-1/n}}(\cL)
\le
2^{((m-j-1)/2+o(1))n},
\]
whereas for the $H_{j+1}$-sum it is
\[
\rho_{\sigma_{j+1}/\sqrt{1-1/n}}(K_{j+1})
\le
1+2^{-\Omega(n)}.
\]
All difference-variable masses are bounded as in the corresponding
SVP proofs using
\cref{eqn:svp-random-lattice-gradient-mass} and the bounds for
$K_0$. Therefore,
\[
\sum_{\kappa\in R_j}T_{j,y}(\kappa)
\le
\left(\frac{s^2\lambda}{q}\right)^2
2^{((m-j-1)/2+o(1))n},
\]
and, including the case $j=0$,
\[
\sum_{\gamma\in H_{j+1}}T_{j,y}(\kappa+\gamma)
\le
\left(\frac{s^2\lambda}{q}\right)^22^{o(n)}.
\]
These comparisons hold for every shift, so the bounds hold
simultaneously for every $y\in\R^n$, completing the proof.
\end{proof}

\paragraph{The nonprincipal mass.}
Fix $w\in\F_q^n\setminus\{0\}$ independently of $U$, let
$k_*:=U^{-T}w$, and, for $j\ge1$, define
\[
\Delta_{j,y}
:=\sum_{\alpha\in H_j\setminus\{0\}}
V_{j-1,y}(k_*+\alpha)^2,\qquad
\Gamma_{j,y}
:=\sum_{\alpha\in H_j\setminus\{0\}}
T_{j-1,y}(k_*+\alpha)V_{j-1,y}(k_*+\alpha).
\]

\begin{lemma}
\label{lem:cvp-target-increments}
Let $\lambda=\lambda_1(\cL)\le d\le(1+1/n)\lambda$. For fixed
$y$ and $w\ne0$, except with probability $o(1)$ over $U$,
simultaneously for every $j\in\{1,\ldots,m-1\}$,
\begin{align}
\Delta_{j,y}
&\le2^{-n/2+o(n)},&
\Gamma_{j,y}
&\le
\left(\frac{s^2\lambda}{q}\right)^22^{-n/2+o(n)}.
\label{eqn:cvp-target-increment-bounds}
\end{align}
\end{lemma}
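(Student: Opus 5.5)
We follow the proof of \cref{lem:svp-target-increments}, replacing the SVP global bounds by the shifted ones in \cref{lem:cvp-shifted-mass-bounds}. Fix $j\in\{1,\ldots,m-1\}$ and condition on $S_0,\ldots,S_{j-1}$. Using $P_{j-1}$-periodicity of $V_{j-1,y}$ and $T_{j-1,y}$ and the bijection $\kappa+P_{j-1}\mapsto U^T\kappa+S_{j-1}$, the summands of $\Delta_{j,y}$ and $\Gamma_{j,y}$ are indexed by the $S_{j-1}$-cosets contained in $C_j^*=w+S_j$ other than $w+S_{j-1}$. Since $S_j$ is uniform among $jn/m$-dimensional subspaces containing $S_{j-1}$, each fixed coset $c+S_{j-1}\neq w+S_{j-1}$ lies in $w+S_j$ with probability at most $2q^{-(n-jn/m)}$.

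The one difference from the SVP case is the zero coset. In \cref{lem:svp-target-increments} it was excluded by the event $w\notin S_{m-1}$ and then controlled through $\overline V_j-1$. Here I will again restrict to the event $w\notin S_{m-1}$, which has probability $1-o(1)$ because $w\neq0$ is fixed independently of $U$. On this event the zero coset, i.e.\ the term $\kappa\in P_{j-1}$, never contributes. Every other $S_{j-1}$-coset is the image of a unique $\kappa\in R_{j-1}$. Linearity of expectation together with the identities $\sum_{\kappa\in R_{j-1}}V_{j-1,y}(\kappa)^2=\sum_{\kappa\in R_j}V_{j,y}(\kappa)$ and $\sum_{\kappa\in R_{j-1}}T_{j-1,y}V_{j-1,y}=\sum_{\kappa\in R_j}T_{j,y}$, obtained by summing \cref{eqn:cvp-frequency-mass-recursion}, then gives
\[
\E\bigl[\ind_{\{w\notin S_{m-1}\}}\Delta_{j,y}\mid S_0,\ldots,S_{j-1}\bigr]\le 2q^{-(n-jn/m)}\sum_{\kappa\in R_j}V_{j,y}(\kappa),
\]
and the analogous inequality holds for $\Gamma_{j,y}$. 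We do not subtract $1$ here, and we do not need to: the shifted bound \cref{eqn:cvp-total-mass-bounds} already includes the zero term.

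Plugging in \cref{eqn:cvp-total-mass-bounds} and $\log_2 q=(m+1)/2+o(1)$ gives
\[
q^{-(n-jn/m)}2^{((m-j-1)/2)n}=2^{-\frac{(m+1)(m-j)}{2m}n+\frac{m-j-1}{2}n+o(n)}=2^{-\frac{n}{2}-\frac{m-j}{2m}n+o(n)}\le 2^{-n/2-n/(2m)+o(n)}.
\]
Since $m=O(\log n)$, the extra saving $n/(2m)$ is $\omega(\log n)$ but is only $o(n)$. This is the main point to check. The claimed bound $2^{-n/2+o(n)}$ has no $\Omega(n)$ slack, so I will apply Markov's inequality with threshold $2^{-n/2+o(n)}$, taking the $o(n)$ term large enough, e.g.\ $n/(4m)$ above the expectation bound. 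Each fixed $j$ then fails with probability $2^{-\Omega(n/m)}$. The $o(1)$ terms in the exponents are uniform, so this dominates them; if not, I would enlarge the $o(n)$ allowance by an explicit additive $\log^2 n$.

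A union bound over the $m-1=O(\log n)$ values of $j$ and the two quantities gives total failure probability $O(\log n)\,2^{-\Omega(n/\log n)}=o(1)$. Adding the $o(1)$ probability of $w\in S_{m-1}$ and the $2^{-\Omega(n)}$ exceptional set of \cref{lem:cvp-shifted-mass-bounds} proves the lemma. The main obstacle is only that the bookkeeping must carry the weaker $n/(2m)$ saving instead of the $g$-dependent $\Omega(n)$ saving used in the SVP argument.
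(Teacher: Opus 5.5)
Your proposal is correct and follows the paper's proof essentially step for step. The shared steps are: condition on $S_0,\ldots,S_{j-1}$; use the $2q^{-(n-jn/m)}$ inclusion probability for each coset other than $w+S_{j-1}$; bound by the full total $\sum_{\kappa\in R_j}V_{j,y}(\kappa)$ (zero term included, no subtraction of $1$) from \cref{eqn:cvp-total-mass-bounds}; obtain the exponent $-(1/2+(m-j)/(2m))n$; apply Markov with a slightly enlarged $o(n)$ threshold; and finish with a union bound plus $\Pr[w\in S_{m-1}]=o(1)$.

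Two small remarks. First, the paper puts the indicator of the event $\mathcal G_j$ that the level-$j$ total bound holds inside the conditional expectation; this event is determined by $S_0,\ldots,S_{j-1}$, and you should do the same rather than plugging in the bound directly. Second, the $(m-j)n/(2m)$ saving you single out as the main point is not actually needed: the $o(n)$ allowance in the claimed bound already lets you inflate the Markov threshold by, say, $2^{\sqrt n}$.
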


\begin{proof}
We follow the proof of \cref{lem:svp-target-increments}, using the total bounds in \cref{eqn:cvp-total-mass-bounds}. Under the
bijection
$\kappa+P_{j-1}\mapsto U^T\kappa+S_{j-1}$, the terms in $\Delta_{j,y}$ and $\Gamma_{j,y}$ above with
$\alpha\ne0$ correspond to the $S_{j-1}$-cosets other than
$w+S_{j-1}$ that are contained in $w+S_j$. Conditional on
$S_0,\ldots,S_{j-1}$, each fixed such coset is included with
probability at most $2q^{-(n-jn/m)}$.

Let $\mathcal G_j$ be the event that
\cref{eqn:cvp-total-mass-bounds} holds at level $j$. As in the SVP
proof, summing \cref{eqn:cvp-frequency-mass-recursion} over $R_j$
shows that the total weights assigned to the $S_{j-1}$-cosets are
$\sum_{\kappa\in R_j}V_{j,y}(\kappa)$ and
$\sum_{\kappa\in R_j}T_{j,y}(\kappa)$. These weights and
$\mathcal G_j$ are fixed after conditioning on
$S_0,\ldots,S_{j-1}$. Hence linearity of conditional expectation
gives
\begin{align*}
\E[\ind_{\mathcal G_j}\ind_{\{w\notin S_{m-1}\}}\Delta_{j,y}
\mid S_0,\ldots,S_{j-1}]
&\le
2q^{-(n-jn/m)}2^{((m-j-1)/2+o(1))n},\notag\\
\E[\ind_{\mathcal G_j}\ind_{\{w\notin S_{m-1}\}}\Gamma_{j,y}
\mid S_0,\ldots,S_{j-1}]
&\le
\left(\frac{s^2\lambda}{q}\right)^2
2q^{-(n-jn/m)}2^{((m-j-1)/2+o(1))n}.
\end{align*}
Since
$\log_2q=(m+1)/2+o(1)$, the common factor on the right is
\[
q^{-(n-jn/m)}2^{((m-j-1)/2+o(1))n}
=2^{-(1/2+(m-j)/(2m)-o(1))n}.
\]
Markov's inequality, with a slightly larger $o(n)$ term in the
threshold, and a union bound over $j$ prove
\cref{eqn:cvp-target-increment-bounds} on
$\bigcap_j\mathcal G_j$. Finally,
$\Pr[w\in S_{m-1}]=Q^{-1}(1+o(1))=o(1)$, and
$\Pr[\bigcap_j\mathcal G_j]=1-2^{-\Omega(n)}$ by
\cref{lem:cvp-shifted-mass-bounds}.
\end{proof}

\begin{proof}[Proof of \cref{lem:cvp-nonprincipal-mass}]
Let $k_*=U^{-T}w$ and define
\begin{align*}
V_{j,y}^*&:=|f_y(w)|^{2^{j+1}},&
B_{j,y}&:=V_{j,y}(k_*)-V_{j,y}^*,\notag\\
T_{j,y}^*&:=\norm{G_y(w)}^2|f_y(w)|^{2(2^j-1)},&
E_{j,y}&:=T_{j,y}(k_*)-T_{j,y}^*.
\end{align*}
We follow the proof of \cref{lem:svp-nonprincipal-mass}. The
choice $\alpha=0$ at every level gives the displayed
terms above, so $B_{0,y}=E_{0,y}=0$. Separating that choice in
\cref{eqn:cvp-frequency-mass-recursion} gives
\begin{align*}
B_{j,y}
&=2V_{j-1,y}^*B_{j-1,y}+B_{j-1,y}^2+\Delta_{j,y},\notag\\
E_{j,y}
&=E_{j-1,y}(V_{j-1,y}^*+B_{j-1,y})
+T_{j-1,y}^*B_{j-1,y}+\Gamma_{j,y}.
\end{align*}
By the primal representation and \cref{lem:shifted-mass-maximum}, $0<f_y(w)\le1$. Moreover,
\cref{lem:cvp-local} and $\norm{v-y}\le\alpha\lambda$ give
\[
\norm{G_y(w)}
\le O\left(\frac{s^2\lambda}{q}\right)f_y(w),
\qquad
T_{j,y}^*
\le O\left(\left(\frac{s^2\lambda}{q}\right)^2\right).
\]
Using \cref{lem:cvp-target-increments} in the two recurrences, the
same induction as in the SVP proof gives
\[
B_{j,y}\le2^{-n/2+o(n)},\qquad
E_{j,y}\le
\left(\frac{s^2\lambda}{q}\right)^22^{-n/2+o(n)}
\]
for every $j\le m-1$. The factor accumulated over
$m=O(\log n)$ levels is absorbed into the $o(n)$ terms. Finally,
$k_*-\theta_*\in P_{m-1}$, and the term obtained by choosing
$\alpha=0$ in every recurrence at $k_*$ is the term indexed by
$\boldsymbol\alpha^*$ at $\theta_*$. Hence
\cref{lem:cvp-frequency-mass-closed-form} identifies
$E_{m-1,y}$ with the sum in
\cref{eqn:cvp-nonprincipal-mass}, proving the lemma.
\end{proof}

\paragraph{The second moment bound.}
We now introduce the CVP version of the quantities used in
\cref{lem:svp-omitted-factor-masses}. For
$I\subseteq[0,2^j)$, let
\[
V_{j,I,y}(\kappa)
:=
\sum_{\boldsymbol\alpha^{\le j}\in\mathcal A_{\le j}}
\prod_{r\in[0,2^j)\setminus I}
\left|f_y\left(
\varepsilon_rU^Tk_{\boldsymbol\alpha^{\le j},\kappa,r}
\right)\right|^2.
\]
For $0\notin I$, define
\[
T_{j,I,y}(\kappa)
:=
\sum_{\boldsymbol\alpha^{\le j}\in\mathcal A_{\le j}}
\norm{
G_y\left(U^Tk_{\boldsymbol\alpha^{\le j},\kappa,0}\right)
\prod_{r\in[1,2^j)\setminus I}
f_y\left(
\varepsilon_rU^Tk_{\boldsymbol\alpha^{\le j},\kappa,r}
\right)
}^2.
\]
Thus $V_{j,\varnothing,y}=V_{j,y}$ and
$T_{j,\varnothing,y}=T_{j,y}$.

\begin{lemma}
\label{lem:cvp-omitted-factor-masses}
Let $\lambda=\lambda_1(\cL)\le d\le(1+1/n)\lambda$. On the event
in \cref{lem:cvp-shifted-mass-bounds}, simultaneously
for every $y$, $j\in\{0,\ldots,m-1\}$,
$\kappa\in\F_q^n$, and every nonempty
$I\subseteq[0,2^j)$,
\[
V_{j,I,y}(\kappa)
\le Q^{|I|-1}2^{o(n)},\qquad
T_{j,I,y}(\kappa)
\le
\left(\frac{s^2\lambda}{q}\right)^2
Q^{|I|-1}2^{o(n)}
\quad\text{if }0\notin I.
\]
\end{lemma}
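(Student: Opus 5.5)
We mirror the induction on $j$ from the proof of \cref{lem:svp-omitted-factor-masses}, substituting the shifted mass bounds of \cref{lem:cvp-shifted-mass-bounds} for those of \cref{lem:svp-fixed-tree-masses}.

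The base case is $j=0$. The only nonempty $I$ is $\{0\}$, so $V_{0,\{0\},y}(\kappa)=1$. The $T$-bound is vacuous there, because it requires $0\notin I$.

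For the inductive step, split $I$ into $I_L:=I\cap[0,2^{j-1})$ and $I_R:=\{r:2^{j-1}+r\in I\}$, and decompose $\boldsymbol\alpha^{\le j}=(\alpha,\boldsymbol\beta^L,\boldsymbol\beta^R)$ via \cref{eqn:svp-left-right-alpha}. The first step is to establish the recursions
\[
V_{j,I,y}(\kappa)=\sum_{\alpha\in H_j}V_{j-1,I_L,y}(\kappa+\alpha)V_{j-1,I_R,y}(\kappa+\alpha),\qquad
T_{j,I,y}(\kappa)=\sum_{\alpha\in H_j}T_{j-1,I_L,y}(\kappa+\alpha)V_{j-1,I_R,y}(\kappa+\alpha).
\]
The left half is indexed by $(\boldsymbol\beta^L,\kappa+\alpha)$, exactly as in \cref{lem:cvp-frequency-mass-closed-form}. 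For the right half, use the reindexing \cref{eqn:cvp-right-subtree-reindexing} together with the bijection $\boldsymbol\beta^R\mapsto-\boldsymbol\beta^R$. Unlike the SVP case, the right factor is not evaluated at $-\kappa-\alpha$ using evenness; instead the sign is absorbed into $\varepsilon_r$ inside $f_y$. Deleting the factors indexed by $I$ commutes with this reindexing, since the positions $r$ are preserved.

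Given the recursions, the case analysis is the same as in the SVP proof.

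\textbf{Case 1: exactly one of $I_L,I_R$ is nonempty.} The nonempty side is bounded pointwise by the induction hypothesis. The other side is the full $V_{j-1,y}$ (or $T_{j-1,y}$), and its sum over $\alpha\in H_j$ is bounded by \cref{eqn:cvp-fixed-mass-bounds}, giving $1+2^{-\Omega(n)}$ or $(s^2\lambda/q)^22^{o(n)}$ respectively. For $T$ with $I_L=\varnothing$, the full $T_{j-1,y}$ appears on the left. Its $\alpha$-sum is bounded by the second bound of \cref{eqn:cvp-fixed-mass-bounds}, and the right factor is bounded pointwise by $Q^{|I_R|-1}2^{o(n)}$.

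\textbf{Case 2: both $I_L,I_R$ are nonempty.} Bound both factors pointwise and sum over $Q$ values of $\alpha$. This gives $Q\cdot Q^{|I_L|-1}Q^{|I_R|-1}=Q^{|I|-1}$, with the extra factor $(s^2\lambda/q)^2$ in the $T$ case.

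The cases use \cref{eqn:cvp-fixed-mass-bounds} at the level $j-1\in\{0,\ldots,m-2\}$, which is exactly the range covered by that lemma.

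The main point needing care is the accumulation of the suppressed $2^{o(n)}$ factors. We will argue as in the SVP proof. In the scalar recursion, the factor $1+2^{-\Omega(n)}$ is invoked at most $2^j\le p$ times, so the product is $1+2^{-\Omega(n)}$. The $T$-type fixed-mass bound, which costs $2^{o(n)}$, is invoked at most once along any recursion path, since after it no $T$ remains. The pointwise induction-hypothesis bounds just multiply, and each contributes the same bounded number of $o(n)$ constants per leaf of the recursion. To avoid compounding, we will state the hypothesis with an explicit factor, namely $(1+2^{-cn})^{|\text{nodes}|}$ times one $2^{o(n)}$ for $T$.

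Uniformity in $y$ is automatic. Every input bound holds simultaneously for all $y$ on the event of \cref{lem:cvp-shifted-mass-bounds}, and the induction is purely algebraic given those bounds.
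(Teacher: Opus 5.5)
Your proposal is correct and follows the paper's proof essentially step for step. You use the same two recursions obtained from the reindexing \cref{eqn:cvp-right-subtree-reindexing}, so evenness of $f_y$ is never needed, and the same three-case analysis with \cref{eqn:cvp-fixed-mass-bounds} at level $j-1$. Your bookkeeping also matches the paper's: the first fixed-mass bound is invoked at most $2^j\le p$ times and the second at most once. Carrying an explicit factor of the form $(1+2^{-cn})^{O(p)}$, times a single $2^{o(n)}$ for $T$, in the induction hypothesis is just a more formal version of the paper's closing remark on the suppressed factors.
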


\begin{proof}
We follow the three-case induction in the proof of
\cref{lem:svp-omitted-factor-masses}. For $j=0$, the only nonempty
choice is $I=\{0\}$ and $V_{0,\{0\},y}=1$. Suppose that $j\ge1$ and let
$I_L:=I\cap[0,2^{j-1})$ and
$I_R:=\{r\in[0,2^{j-1}):2^{j-1}+r\in I\}$. The reindexing in
\cref{eqn:cvp-right-subtree-reindexing} gives
\begin{align*}
V_{j,I,y}(\kappa)
&=\sum_{\alpha\in H_j}
V_{j-1,I_L,y}(\kappa+\alpha)
V_{j-1,I_R,y}(\kappa+\alpha),\notag\\
T_{j,I,y}(\kappa)
&=\sum_{\alpha\in H_j}
T_{j-1,I_L,y}(\kappa+\alpha)
V_{j-1,I_R,y}(\kappa+\alpha),
\end{align*}
where the second identity is used only when $0\notin I$.

If exactly one of $I_L,I_R$ is nonempty, apply the induction
hypothesis for each $\alpha$ to the term whose omitted set is
nonempty. The other term is
$V_{j-1,\varnothing,y}(\kappa+\alpha)=V_{j-1,y}(\kappa+\alpha)$,
whose sum over $\alpha\in H_j$ is bounded by the first inequality in
\cref{eqn:cvp-fixed-mass-bounds}. If both $I_L$ and $I_R$ are
nonempty, apply the induction hypothesis to both terms for each
$\alpha$ and sum over the $Q$ choices of $\alpha$. The resulting
power of $Q$ is
\[
Q^{|I_L|-1}Q^{|I_R|-1}Q=Q^{|I|-1}.
\]

For the bound on $T_{j,I,y}$, the assumption $0\notin I$ implies
$0\notin I_L$. If $I_L\ne\varnothing$ and $I_R=\varnothing$, apply
the induction hypothesis to
$T_{j-1,I_L,y}(\kappa+\alpha)$ for each $\alpha$ and use the first
inequality in \cref{eqn:cvp-fixed-mass-bounds} to sum
$V_{j-1,y}(\kappa+\alpha)$ over $\alpha$. If
$I_L=\varnothing$ and $I_R\ne\varnothing$, use the second inequality
in \cref{eqn:cvp-fixed-mass-bounds} to sum
$T_{j-1,y}(\kappa+\alpha)$ over $\alpha$, and apply the induction
hypothesis to $V_{j-1,I_R,y}(\kappa+\alpha)$ for each $\alpha$. If
both $I_L$ and $I_R$ are nonempty, apply the induction hypotheses to
$T_{j-1,I_L,y}(\kappa+\alpha)$ and
$V_{j-1,I_R,y}(\kappa+\alpha)$ for each $\alpha$, and then sum over
the $Q$ choices of $\alpha$.

For completeness, the first inequality in
\cref{eqn:cvp-fixed-mass-bounds} is invoked at most
$2^j\le p$ times, and the resulting multiplicative contribution is
\[
(1+2^{-\Omega(n)})^p=1+2^{-\Omega(n)}.
\]
At every level of the recursion for $T_{j,I,y}$, exactly one term is
of the form $T_{j',I',y}$. If $I'=\varnothing$, its sum is bounded
directly by the second inequality in
\cref{eqn:cvp-fixed-mass-bounds}; otherwise the induction continues
through this term. Hence the second inequality is invoked at most
once, and all accumulated factors are absorbed into $2^{o(n)}$.
\end{proof}

For $\boldsymbol i\in[N]^p$ and $\theta\in H_m$, denote the
corresponding summand of \cref{eqn:cvp-estimator} by
\[
\mathcal K^y_{\boldsymbol i}(\theta)
:=
Q^{p-1}C_{\boldsymbol\delta}(\boldsymbol i)
2\pi iX_{0,i_0}\Phi_y(\boldsymbol i)
\omega_q^{\ip{\theta}{(Y_{m-1,0}(\boldsymbol i))_{H_m}}}.
\]
Then
$\widehat A_{y,\boldsymbol\delta}(\theta)
=N^{-p}\sum_{\boldsymbol i}\mathcal K^y_{\boldsymbol i}(\theta)$
and
$\E_X[\mathcal K^y_{\boldsymbol i}(\theta)\mid
U,\boldsymbol\delta]=A_{y,\boldsymbol\delta}(\theta)$.

\begin{lemma}
\label{lem:cvp-overlap-bound}
Let $\lambda=\lambda_1(\cL)\le d\le(1+1/n)\lambda$. On the event
in \cref{lem:cvp-shifted-mass-bounds}, simultaneously
for every $y$, $\theta\in H_m$, and
$\boldsymbol i,\boldsymbol i'\in[N]^p$, let
$I:=\{r\in[0,p):i_r=i'_r\}$. If $I=\varnothing$, then
\[
\E_{\boldsymbol\delta,X}\left[
(\mathcal K^y_{\boldsymbol i}(\theta)-A_{y,\boldsymbol\delta}(\theta))^*
(\mathcal K^y_{\boldsymbol i'}(\theta)-A_{y,\boldsymbol\delta}(\theta))
\mid U\right]=0.
\]
If $I\ne\varnothing$, the absolute value of this expectation is at
most
\begin{align}
\left(\frac{s^2\lambda}{q}\right)^2
Q^{|I|-1}2^{o(n)}.
\label{eqn:cvp-overlap-bound}
\end{align}
\end{lemma}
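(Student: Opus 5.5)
The plan is to follow the proof of \cref{lem:svp-overlap-covariance}, with the SVP masses replaced by the shifted masses of \cref{lem:cvp-shifted-mass-bounds,lem:cvp-omitted-factor-masses}.

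\textbf{The case $I=\varnothing$.} Here $\mathcal K^y_{\boldsymbol i}(\theta)$ is a function of $(X_{r,i_r})_r$ alone. Indeed, $\Phi_y(\boldsymbol i)$ also depends only on these samples, and $y$ is deterministic. Similarly, $\mathcal K^y_{\boldsymbol i'}(\theta)$ is a function of $(X_{r,i'_r})_r$, and the two collections are independent. Both have conditional mean $A_{y,\boldsymbol\delta}(\theta)$ given $U,\boldsymbol\delta$, so the covariance vanishes. Averaging over $\boldsymbol\delta$ then gives zero.

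\textbf{The case $I\ne\varnothing$.} First expand $Q^{p-1}C_{\boldsymbol\delta}(\boldsymbol i)\omega_q^{\ip{\theta}{(Y_{m-1,0}(\boldsymbol i))_{H_m}}}$ by \cref{eqn:svp-all-characters} and \cref{eqn:svp-leaf-frequency-identity}. Do the same for $\boldsymbol i'$. Factor $\Phi_y$ leafwise as $\prod_r e^{-2\pi i\varepsilon_r\ip{X_{r,i_r}}{y}/q}$. Averaging over $\boldsymbol\delta$ removes the cross terms $\boldsymbol\alpha\ne\boldsymbol\alpha'$, exactly as in \cref{lem:svp-offset-character-orthogonality}.

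For a surviving $\boldsymbol\alpha$ and a leaf $r\ge1$, there are two possibilities.
\begin{itemize}
\item If $r\notin I$, the two factors come from independent samples. Their expectation is $\overline{f_y(\varepsilon_rU^Tk_{\boldsymbol\alpha,\theta,r})}\,f_y(\varepsilon_rU^Tk_{\boldsymbol\alpha,\theta,r})=|f_y(\varepsilon_rU^Tk_{\boldsymbol\alpha,\theta,r})|^2$, using the identity from the proof of \cref{lem:cvp-ideal-identity}.
\item If $r\in I$, the factors are conjugate unimodular numbers evaluated at the same sample, so their product is $1$. This step needs the $y$-phase to cancel too, which it does because the phase attached to a leaf is unimodular and identical for $\boldsymbol i$ and $\boldsymbol i'$.
\end{itemize}
The leaf $r=0$ gives $\norm{G_y(U^Tk_{\boldsymbol\alpha,\theta,0})}^2$ when $0\notin I$, and $\E\norm{2\pi X}^2$ when $0\in I$. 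Hence the joint moment equals $T_{m-1,I,y}(\theta)$ if $0\notin I$, and $\E_X[\norm{2\pi X}^2]\,V_{m-1,I,y}(\theta)$ if $0\in I$. By \cref{lem:cvp-omitted-factor-masses} and the bound $\E\norm{2\pi X}^2\le O(ns^2)\le(s^2\lambda/q)^22^{o(n)}$ from \cref{lem:dgs-tail}, both cases are at most $(s^2\lambda/q)^2Q^{|I|-1}2^{o(n)}$.

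\textbf{Subtracting the mean term.} The covariance equals this joint moment minus $\E_{\boldsymbol\delta}\norm{A_{y,\boldsymbol\delta}(\theta)}^2$. By the same character expansion, this mean term is $T_{m-1,y}(\theta)$. It is bounded by the sum over $\kappa\in R_{m-1}=H_m$, which \cref{eqn:cvp-total-mass-bounds} at $j=m-1$ controls by $(s^2\lambda/q)^22^{o(n)}$. Since $|I|\ge1$, this is within the target bound, and the triangle inequality finishes the proof.

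\textbf{Main obstacle.} The delicate point is to make sure the shifted phases reorganize into $f_y(\varepsilon_r\,\cdot)$ consistently with the indexing of $V_{j,I,y}$ and $T_{j,I,y}$, including the signs $\varepsilon_r$. Uniformity in $y$ is immediate, because the event in \cref{lem:cvp-shifted-mass-bounds} depends only on $U$.
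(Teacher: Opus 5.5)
Your proposal is correct and follows the paper's proof essentially step for step. The shared steps are: independence when $I=\varnothing$; for $I\ne\varnothing$, the character expansion with $\boldsymbol\delta$-orthogonality removing the cross terms; the leafwise factorization into $|f_y(\varepsilon_r\,\cdot)|^2$ via evenness of $F_{1/s}$, with unimodular factors cancelling on shared leaves; identification of the joint moment with $T_{m-1,I,y}(\theta)$ or $\E\norm{2\pi X}^2\,V_{m-1,I,y}(\theta)$, bounded by \cref{lem:cvp-omitted-factor-masses}; and subtraction of $\E_{\boldsymbol\delta}\norm{A_{y,\boldsymbol\delta}(\theta)}^2=T_{m-1,y}(\theta)$, controlled by the $j=m-1$ case of \cref{eqn:cvp-total-mass-bounds}.
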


\begin{proof}
We follow the proof of \cref{lem:svp-overlap-covariance} and spell
out the target phases. If $I=\varnothing$, then, for fixed
$U$ and $\boldsymbol\delta$, the two collections
$(X_{r,i_r})_{r\in[0,p)}$ and
$(X_{r,i'_r})_{r\in[0,p)}$ are independent. Each displayed
factor of the form $\mathcal K^y_{\boldsymbol i}(\theta)
-A_{y,\boldsymbol\delta}(\theta)$ has expectation zero over its
collection, so the expectation of their product is zero.

Suppose that $I\ne\varnothing$. Expand
\[
Q^{p-1}C_{\boldsymbol\delta}(\boldsymbol i)
\omega_q^{\ip{\theta}{(Y_{m-1,0}(\boldsymbol i))_{H_m}}}
\]
and the analogous expression indexed by $\boldsymbol i'$ using
\cref{eqn:svp-all-characters}. 
Averaging over $\boldsymbol\delta$ and applying
\cref{eqn:svp-leaf-frequency-identity,lem:svp-offset-character-orthogonality} makes every term with
$\boldsymbol\alpha\ne\boldsymbol\alpha'$ vanish. Hence only the terms
with $\boldsymbol\alpha=\boldsymbol\alpha'$ remain.

Fix $\boldsymbol\alpha$ and consider the term with
$\boldsymbol\alpha'=\boldsymbol\alpha$. For $r\in[0,p)$, let $
z_{\boldsymbol\alpha,r}
:=
U^Tk_{\boldsymbol\alpha,\theta,r}$
and define
\[
\psi_{\boldsymbol\alpha,r}(X)
:=
\omega_q^{\ip{z_{\boldsymbol\alpha,r}}{[X]_q}}
\exp\left(
-\frac{2\pi i\varepsilon_r}{q}\ip{X}{y}
\right).
\]
Then, it holds that
\[
\E_X[\psi_{\boldsymbol\alpha,r}(X)]
=f_y(\varepsilon_rz_{\boldsymbol\alpha,r}),\qquad
\E_X[2\pi iX\psi_{\boldsymbol\alpha,0}(X)]
=G_y(z_{\boldsymbol\alpha,0}).
\]
Here the first identity uses the evenness of $F_{1/s}$. If
$r\in[1,p)\setminus I$, the two samples are independent and hence
give $|f_y(\varepsilon_rz_{\boldsymbol\alpha,r})|^2$. If
$r\in I\setminus\{0\}$, the same sample occurs in both copies and
$|\psi_{\boldsymbol\alpha,r}(X)|^2=1$; thus both its character and
its target phase cancel. At $r=0$, the resulting factor is
$\norm{G_y(z_{\boldsymbol\alpha,0})}^2$ when $0\notin I$, and
$\E_{X\sim D_{\cL^*,s}}[\norm{2\pi X}^2]$ when $0\in I$.
Consequently,
\begin{align}
\E_{\boldsymbol\delta,X}\left[
(\mathcal K^y_{\boldsymbol i}(\theta))^*
\mathcal K^y_{\boldsymbol i'}(\theta)\mid U\right]
=
\begin{cases}
T_{m-1,I,y}(\theta),&0\notin I,\\
\E_{X\sim D_{\cL^*,s}}[\norm{2\pi X}^2]
V_{m-1,I,y}(\theta),&0\in I.
\end{cases}
\label{eqn:cvp-overlap-joint-moment}
\end{align}
Splitting the expectation according to
$\norm X\le C s\sqrt n$ and its complement, and applying
\cref{lem:dgs-tail}, gives
\[
\E_{X\sim D_{\cL^*,s}}[\norm{2\pi X}^2]
\le O(ns^2)
\le
\left(\frac{s^2\lambda}{q}\right)^22^{o(n)}.
\]
Therefore, \cref{lem:cvp-omitted-factor-masses} bounds the absolute
value of \cref{eqn:cvp-overlap-joint-moment} by the right-hand side
of \cref{eqn:cvp-overlap-bound}.

Applying the same character orthogonality to two independent copies
gives
\begin{align}
\E_{\boldsymbol\delta}\left[
\norm{A_{y,\boldsymbol\delta}(\theta)}^2\mid U\right]
=T_{m-1,y}(\theta)
\le
\sum_{\kappa\in R_{m-1}}T_{m-1,y}(\kappa)
\le
\left(\frac{s^2\lambda}{q}\right)^22^{o(n)},
\label{eqn:cvp-overlap-mean-product}
\end{align}
where $R_{m-1}=H_m$ and we used
\cref{eqn:cvp-total-mass-bounds}. Expanding the two subtractions in
the expectation in the lemma shows that it is the difference between
\cref{eqn:cvp-overlap-joint-moment,eqn:cvp-overlap-mean-product}.
Since $I\ne\varnothing$, the triangle inequality proves
\cref{eqn:cvp-overlap-bound}, after absorbing a factor of two into
$2^{o(n)}$.
\end{proof}

\begin{proof}[Proof of \cref{lem:cvp-finite-list-second-moment}]
This is the same counting argument as in the proof of
\cref{lem:svp-finite-list-second-moment}. Work on the event in
\cref{lem:cvp-shifted-mass-bounds}, on which
\cref{lem:cvp-overlap-bound} holds. For a fixed nonempty
$I\subseteq[0,p)$, at most $N^{2p-|I|}$ ordered pairs
$(\boldsymbol i,\boldsymbol i')$ have overlap set $I$. Expanding the
squared norm of the estimator and applying
\cref{eqn:cvp-overlap-bound} give
\begin{align*}
&\E_{\boldsymbol\delta,X}\left[
\norm{\widehat A_{y,\boldsymbol\delta}(\theta)
-A_{y,\boldsymbol\delta}(\theta)}^2\mid U\right]\notag\\
&\quad\le
\frac1{N^{2p}}
\sum_{\varnothing\ne I\subseteq[0,p)}
N^{2p-|I|}
\left(\frac{s^2\lambda}{q}\right)^2
Q^{|I|-1}2^{o(n)}\notag\\
&\quad=
\left(\frac{s^2\lambda}{q}\right)^2Q^{-1}2^{o(n)}
\left[\left(1+\frac QN\right)^p-1\right]
\le
\left(\frac{s^2\lambda}{q}\right)^2Q^{-1}2^{o(n)},
\end{align*}
where the last inequality follows from $N=Qp^4$. The event is
uniform in $y$ and $\theta$, proving the lemma.
\end{proof}